\newcounter{exo}
\newenvironment{example}[1][]
    {\refstepcounter{exo}%
\begin{center}
   \begin{tabular}{|>{\columncolor{gray!10}}p{0.967\textwidth}|}
    \hline \\[-3mm] {\bfseries{Example \theexo}: #1}\\ \hline 
    }
    { 
    \\[1.5mm] \hline
  \end{tabular} 
    \end{center}
    }
\newcommand{\overbar}[1]{\mkern 1.05mu\overline{\mkern-1.05mu#1\mkern-1.05mu}\mkern 1.05mu}
\newcommand{\downVdash}{\overbar{\top}}
\newtheorem{theorem}{Theorem}
\newtheorem{corollary}{Corollary}%[theorem]
\newtheorem{lemma}{Lemma}
\newcommand{\vset}{{{U}}} %% vertex set
\newcommand{\eset}{{{F}}} %% edge set
\newcommand{\vere}{{{u}}} %% vertex
\newcommand{\ede}{{{e}}}  %% edge
\newcommand{\elec}{\mathcal{E}} %%% election
\newcommand{\gamend}{$h$-amendment}
\newcommand{\Gamend}{$h$-Amendment}
\newcommand{\famend}{full-amendment}
\newcommand{\Famend}{Full-Amendment}
\newcommand{\bigo}[1]{O(#1)}
\newcommand{\smallo}[1]{o(#1)}
\newcommand{\bigos}[1]{O^*(#1)}
\newcommand{\yes}{\mbox{Yes}}
\newcommand{\no}{\mbox{No}}
\newcommand{\yesins}{{\yes}-instance}
\newcommand{\noins}{{\no}-instance}
\newcommand{\poly}{\textsf{P}}
\newcommand{\np}{\textsf{\mbox{NP}}}
\newcommand{\nph}{{\np}-{hard}}
\newcommand{\nphns}{\np-hardness}
\newcommand{\npc}{{\np}-{complete}}
\newcommand{\wa}{\textsf{\mbox{W[1]}}}
\newcommand{\wb}{\textsf{W[2]}}
\newcommand{\wah}{{\textsf{W[1]}}-hard}
\newcommand{\wahshort}{{\textsf{W[1]}}-h}
\newcommand{\wbh}{{\textsf{W[2]}}-hard}
\newcommand{\wbhshort}{{\textsf{W[2]}}-h}
\newcommand{\wac}{\wa-complete}
\newcommand{\wbc}{\wb-complete}
\newcommand{\wbhns}{{\textsf{W[2]}}-hardness}
\newcommand{\fpt}{\textsf{\mbox{FPT}}}
\newcommand{\paranph}{para{\nph}}
\newcommand{\paranphns}{para{\nphns}}
\newcommand{\xp}{\textsf{XP}}
\newcommand{\prob}[1]{{\textsc{#1}}}
\newcommand{\setmid}{:}
\newcommand{\abs}[1]{|#1|}
\newcommand{\edge}[2]{\{#1, #2\}}
\newcommand{\Succ}{\,\ }
\newcommand{\wahns}{\textsf{W[1]}-hardness}
\newcommand{\kav}{k_\text{AV}}
\newcommand{\kdv}{k_{\text{DV}}}
\newcommand{\kac}{k_{\text{AC}}}
\newcommand{\kdc}{k_{\text{DC}}}
\newcommand{\memph}[1]{\emph{#1}}
\newcommand{\dtime}{\textsf{DTIME}}
\newcommand{\arc}[2]{(#1, #2)}
\newcommand{\muplus}{\cup}
\newcommand{\EP}[3]{
\begin{center}
\smallskip
{\small
\begin{tabularx}{0.98\columnwidth}{ll}
\toprule
\multicolumn{2}{l}{\textsc{#1}} \\
\midrule
{\bf Given:}   & \parbox[t]{0.85\columnwidth}{#2\vspace*{2mm}}  \\
{\bf Question:}& \parbox[t]{0.85\columnwidth}{#3\vspace*{0mm}} \\
\bottomrule
\end{tabularx}
}
\smallskip
\end{center}
}
\newtheorem{definition}{Definition}
\theoremstyle{thmstylefive}%
\newtheorem{reductionrule}{Reduction Rule}%
\begin{document}
	
%	\begin{frontmatter}
		
\title{On the Parameterized Complexity of Controlling Amendment and Successive Winners\thanks{A $3$-page extended abstract summarizing the results for the amendment procedure and the successive procedure has appeared in the Proceedings of the 21st International Conference on Autonomous Agents and Multi-Agent Systems (AAMAS~2022)~\protect\cite{DBLP:conf/atal/aamas2022}.}}
%\tnoteref{mytitlenote}}

%\tnotetext[mytitlenote]{A $3$-page extended abstract summarizing the results for the amendment procedure and the successive procedure has appeared in the Proceedings of the 21st International Conference on Autonomous Agents and Multi-Agent Systems (AAMAS~2022)~\protect\cite{DBLP:conf/atal/aamas2022}.}

%\author[]{Yongjie Yang}
%\ead{yyongjiecs@gmail.com}
		
%\address{Chair of Economic Theory, Saarland University, Saarbr\"{u}cken, Germany}
\author{Yongjie Yang}
%\affiliation{
%  \institution{Chair of Economic Theory, Saarland University}
 % \city{Saarb\"{u}cken}
 % \country{Germany}}
%\email{yyongjiecs@gmail.com}

\date{\small{Chair of Economic Theory, Saarland University, Saarb\"{u}cken 66123, Germany \\
yyongjiecs@gmail.com}}

\maketitle

\begin{abstract}
The successive and the amendment procedures have been widely employed in parliamentary and legislative decision making and have undergone extensive study in the literature from various perspectives. However, investigating them through the lens of computational complexity theory has not been as thoroughly conducted as for many other  prevalent voting procedures heretofore. To the best of our knowledge, there is only one paper which explores the complexity of several strategic voting problems under these two procedures, prior to our current work. 
To provide a better understanding of to what extent the two procedures resist strategic behavior, we study the parameterized complexity of constructive/destructive control by adding/deleting voters/candidates for both procedures. To enhance the generalizability of our results, we also examine a more generalized form of the amendment procedure. Our exploration yields a comprehensive (parameterized) complexity landscape of these problems with respect to numerous parameters.
\end{abstract}
		
%\begin{keyword}
%	amendment procedure\sep successive procedure\sep election control\sep {\nph}\sep {\wah}\sep {\wbh}\sep {\fpt}
%\end{keyword}
		
%\end{frontmatter}
	
%%%%%%%%%%%%%%%%%%%%%%%%%%%%%%%%%%%%%%%%%%%%%%%%%%%%%%%%%%%%%%%%%%%%%%%%

\section{Introduction}
The amendment and the successive voting procedures are two fundamental methods commonly employed in parliamentary and legislative decision-making. The amendment procedure is particularly prevalent in countries such as the United States, the United Kingdom, Finland, and Switzerland, among others. Meanwhile, the successive procedure is widely utilized across many European nations, including the Czech Republic, Denmark, Germany, Hungary, and Iceland. For a more detailed discussion, refer to~\cite{Rasch2000}. Since the pioneering works of Black~\cite{Black1958} and Farquharson~\cite{Farquharson1969}, these two procedures have been extensively and intensively investigated in the literature 
(see, e.g., the works of Apesteguia, Ballester, and Masatlioglu~\cite{DBLP:journals/geb/ApesteguiaBM14}, Horan~\cite{Horan2021}, Miller~\cite{Miller1977}, Rasch~\cite{Rasch2000,Rasch2014}).
Each of these procedures takes as input a set of candidates, a group of voters with linear preferences over these candidates, and an agenda (also referred to as a priority), which is specified as a linear order of the candidates. The output is a single candidate selected as the winner. 
The agenda specifies the order in which candidates are considered in the decision making process. 
The amendment procedure takes~$m$ rounds to determine the winner, each determining a temporary winner, where~$m$ denotes the number of candidates. The winner of the first round is the first candidate in the agenda. For the $i^{\text{th}}$ round, the winner is either the~$i^{\text{th}}$ candidate in the agenda or the winner of the $(i-1)^{\text{th}}$ round, determined by the head-to-head comparison between the two candidates, with the winning candidate becoming the winner of the $i^{\text{th}}$ round. The amendment winner is the winner of the last round.
For the successive procedure, the winner is the first candidate in the agenda for whom there is a majority of voters, with each voter preferring this candidate to all successors of the candidate.

In practice, several factors can influence the outcome of an election. For instance, it is known that given the same voting profile, different agendas may result in different winners. Notably, according to the works of Black~\cite{Black1958} and Miller~\cite{Miller1977}, the successive procedure is more vulnerable to agenda control than the amendment procedure, in the sense that for the same profile, any candidate that can be made the amendment winner by some agenda can  also be made the successive winner by some agenda (see also the work of Barber\'{a} and Gerber~\cite{BarberaGerbeer2017} for an extension of this result). To gain a more nuanced understanding of how election outcomes under the amendment and the successive procedures can be affected by different factors,  Bredereck~et~al.~\cite{DBLP:journals/jair/BredereckCNW17} studied the complexity of several related combinatorial problems under these two procedures. Concretely, they studied problems such as {\prob{Agenda Control}}, {\prob{Coalition Manipulation}}, {\prob{Possible Winner}}, {\prob{Necessary Winner}}, and  weighted variants of these problems. Their results reveal that the amendment procedure is more resistant to agenda control  and manipulation than the successive procedure.

To the best of our knowledge, the work of Bredereck~et~al.~\cite{DBLP:journals/jair/BredereckCNW17} is so far the sole exploration into the complexity of strategic problems under these two procedures, leaving many other types of strategic operations for these two procedures unexplored. In an effort to fill these gaps and significantly expand our understanding of the extent to which the two procedures resist other types of strategic operations, we investigate several standard control problems. Particularly, we delve into the problems of constructive control by adding/deleting voters/candidates, initially proposed in the pioneering paper by Bartholdi, Tovey, and Trick~\cite{Bartholdi92howhard}. These problems model scenarios where we are given a number of registered voters and candidates, and an election controller aims to make a distinguished candidate the winner by either adding a limited number of unregistered voters/candidates or deleting a limited number of registered voters/candidates. Additionally, we study the destructive counterparts of these problems, first proposed by Hemaspaandra, Hemaspaandra, and Rothe~\cite{DBLP:journals/ai/HemaspaandraHR07}. In the destructive control problems, the goal of the controller is to prevent the distinguished candidate from becoming the winner. 

To make our results more general, in addition to the two procedures, we also study a generalization of the amendment procedure, which we term $h$-amendment, where~$h$ is a positive integer.  
Generally speaking, this procedure involves multiple rounds, where in each round, a candidate~$c$ is compared with her next~$h$ successors to determine whether~$c$ or the next~$h$ successors of~$c$ in the agenda are eliminated. The winner is the one who survives the last round. (We refer to Section~\ref{sec-pre} for the formal definition.) 
Notably, the standard amendment procedure is exactly the $1$-amendment procedure. 
Furthermore, another intriguing special case of the $h$-amendment procedures arises when~$h$ equals the number of candidates minus one. 
We refer to this special case as the full-amendment procedure. An advantage of the full-amendment procedure is that it selects a superior winner compared to the amendment procedure and the successive procedure. Specifically, for a fixed election and a given agenda, the full-amendment winner either coincides with or is preferred by a majority of voters over the amendment/successive winner.

For these procedures, we provide a comprehensive understanding of the parameterized complexity of election control problems, including many intractability results ({\nphns} results, {\wahns} results, {\wbhns} results, and {\paranphns} results), and numerous tractability results ({\poly}-results and {\fpt}-results). The parameters examined in the paper include the number of predecessors/successors of the distinguished candidate in the agenda, the number of added/deleted voters/candidates, the number of registered voters/candidates, and the number of  voters/candidates not deleted. 
Table~\ref{tab-resulst-summary} summarizes our main results. We also examine two natural parameters:  the number of voters and the number of candidates. Many of the results concerning these parameters are either straightforward to observe or implicitly derived from the results for the aforementioned parameters, with only a few cases remaining open. For more details, we refer the reader to Subsection~\ref{subsec-parameters}. 
Additionally, we note that beyond the results presented in Table~\ref{tab-resulst-summary}, our hardness reductions  also imply many kernelization and approximation lower bounds, which shall be discussed in detail in Section~\ref{sec-lowerbounds}. 

The general message conveyed by our study is that the amendment procedure and the successive procedure exhibit distinct behaviours regarding their resistance to the eight standard control problems. We will discuss this issue at the end of the paper in greater detail. 

As a side note, for readers interested in the topic, it is worth mentioning that similar control problems have been investigated in many other settings, such as  multiwinner voting (see, e.g., the works of Meir~et~al.~\cite{DBLP:journals/jair/MeirPRZ08}, Yang~\cite{DBLP:conf/ijcai/Yang19}), judgment  aggregation (see, e.g., the works of Baumeister~et~al.~\cite{DBLP:conf/stairs/BaumeisterEER12}, Baumeister~et~al.~\cite{DBLP:journals/jcss/BaumeisterEERS20}), group identification (see, e.g., the works of Erd{\'{e}}lyi, Reger, and Yang~\cite{DBLP:journals/aamas/ErdelyiRY20}, Yang and Dimitrov~\cite{DBLP:journals/aamas/YangD18}), tournament solutions (see, e.g., the work of Brill, Schmidt-Kraepelin, and Suksompong~\cite{DBLP:journals/ai/BrillSS22}). 

\subsection{Motivation}
Our motivation for the study is twofold. 

First, the control types (operations) of adding/deleting voters/candidates are highly relevant for practical applications. In the context of parliamentary decision-making, for example, members of parliament have the right to abstain from voting. Deleting voters represents the scenario where a strategic controller may persuade particular voters to abstain, though these voters initially planned not to do so. On the other hand, adding voters represents a scenario where the strategic controller persuades certain voters who initially planned to abstain to participate in the voting procedure. Similarly, adding or deleting candidates is relevant in scenarios where a strategic controller has the power to decide which motions, bills, proposals, etc., are eligible to be put on the agenda. 

Second, investigating the complexity of the aforementioned election control problems for many traditional voting procedures (those that do not require an agenda to determine a winner) had partially dominated the early development of computational social choice for several years, culminating in an almost complete complexity landscape of these problems (see the book chapters~\cite{BaumeisterR2016,handbookofcomsoc2016Cha7FR} for important progress up to 2016, and the papers~\cite{DBLP:journals/aamas/ErdelyiNRRYZ21,DBLP:journals/ai/NevelingR21,DBLP:conf/atal/Yang17,AAMAS17YangBordaSinlgePeaked,DBLP:conf/ecai/000120} for recent results).  The motivation behind the extensive research is rooted in the belief that complexity serves as a robust barrier against strategic behavior. Given the broad applicability of the amendment and the successive procedures, we believe it is of great importance to address the gaps in this regard for the two procedures.  

It is worth noting that several experimental studies have demonstrated that many computationally hard election problems can often be solved efficiently on practical instances, providing empirical evidence that mitigates the perception of complexity as a barrier to strategic voting. 
However, we emphasize that establishing the theoretical complexity of a problem is a foundational step toward a comprehensive understanding of its nature and limitations. This theoretical groundwork is indispensable for systematically addressing the challenges and opportunities presented by these problems in both theoretical and practical domains.

\subsection{Organization}
In Section~\ref{sec-pre}, we provide important notations used in our study. Following this, we unfold our main results in Section~\ref{sec-main-results}, which is further divided into three subsections respectively covering our results for the $h$-amendment procedures, the $(m-h)$-amendment procedures where~$m$ is the total number of candidates and~$h$ is a constant, and the successive procedure. Moving on, we discuss algorithmic lower bounds implied by our reductions in Section~\ref{sec-lowerbounds}. Finally, in Section~\ref{sec-conclusion}, we briefly conclude our results, discuss their implications on the usage of these procedures, and lay out several topics for future research.   

\begin{table}
\renewcommand{\tabcolsep}{0.6mm}
%\renewcommand\arraystretch{1.8}
%\captionsetup{singlelinecheck=off}
\caption{A summary of the parameterized complexity of election control problems under the amendment procedure, the full-amendment procedure, and the successive procedure. 
Results with the superscripts~$\top$ and~$\bot$ respectively mean that they hold when the distinguished candidate is the first and the last candidate in the agenda.  
Results with the superscript~${\downVdash}$ mean that they hold as long as the distinguished candidate is not the first one in the agenda. 
{\fpt}-results with the superscripts~$\leftarrow$ and~$\rightarrow$ are respectively with respect to the number of predecessors and the number of successors of the distinguished candidate. 
All {\wahns} and {\wbhns} results for {\prob{CCAV}}-$\tau$ and {\prob{DCAV}}-$\tau$ are with respect to the combined parameter of the number of added votes and the number of registered votes. 
All {\wahns} and {\wbhns} results for {\prob{CCDV}}-$\tau$ and {\prob{DCDV-$\tau$}} are with respect to both the number of deleted votes and the number of remaining votes. 
The {\wahns} of \prob{CCDC}-Successive and \prob{DCDC}-Successive holds with respect to both the number of deleted candidates and the number of remaining candidates. 
All {\wbhns} results of {\prob{CCAC}}-$\tau$ are with respect to the number of added candidates. 
The {\wbhns} of {\prob{CCAC}}-Successive holds even if we add one additional restriction that there are only two registered candidates. 
The {\wbhns} of {\prob{DCAC}}-Successive holds even when the distinguished candidate is the only registered candidate. 
}
\centering
{
\begin{tabular}{lllll} \toprule
&{\prob{CCAV}}-$\tau$
&{\prob{CCDV}}-$\tau$
&{\prob{CCAC-$\tau$}}
&{\prob{CCDC-$\tau$}}
\\ \midrule

{amendment}
&{\wahshort}$^{\top}$ \cite{DBLP:journals/tcs/LiuFZL09}
&{\wahshort}$^{\top}$ \cite{DBLP:journals/tcs/LiuFZL09,DBLP:journals/tcs/LiuZ13}
&immune$^{\top}$ \cite{Bartholdi92howhard}
&{{\poly} (Thm.~\ref{thm-ccdc-amd-p})}
\\

&{\wbhshort}$^{\bot}$ (Thm.~\ref{thm-ccav-amd-np})
&{\wbhshort}$^{\bot}$ (Thms.~\ref{thm-ccdv-amd-np},~\ref{thm-ccdv-amd-wbh-remaining-votes})
&{\poly} (Thm.~\ref{thm-ccac-amd-p})
& \\ \midrule

{full-}
&{\wahshort}$^{\top}$ \cite{DBLP:journals/tcs/LiuFZL09}
&{\wahshort}$^{\top}$ \cite{DBLP:journals/tcs/LiuFZL09,DBLP:journals/tcs/LiuZ13}
&immune$^{\top}$ \cite{Bartholdi92howhard}
&{\poly} (Thm.~\ref{thm-ccdc-suc-p})
\\

amendment
&{\wahshort}$^{\bot}$  (Thm.~\ref{thm-ccav-suc-np})
&{\wbhshort}$^{\bot}$ (Thms.~\ref{thm-ccdv-suc-np},~\ref{thm-ccdv-famend-wbh-remaining})
&{\fpt}$^{\leftarrow}$ (Thm.~\ref{thm-ccac-suc-fpt})
& \\

&
&
&{\wbhshort}$^{\bot}$ (Thm.~\ref{thm-ccac-suc-np})
&\\ \midrule

{successive}
&{\fpt}$^{\leftarrow}$ (Cor.~\ref{cor-ccav-ccdv-suc-fpt-predecessors})
&{\fpt}$^{\leftarrow}$ (Cor.~\ref{cor-ccav-ccdv-suc-fpt-predecessors})
& immune$^{\top}$ (Cor.~\ref{cor-immune-suc-first})
& {\wahshort}$^{\top}$ (Thms.~\ref{thm-ccdc-suc-wah-solution-size}, \ref{label-ccdc-suc-wah-dual-ssize})
\\

&{\wahshort}$^{\bot}$ (Thm.~\ref{thm-ccav-suc-wah-last})
&{\wbhshort}$^{\bot}$ (Thms.~\ref{thm-ccdv-suc-wah-last},~\ref{thm-ccdv-suc-wah-remainning-last})
&{\wbhshort}$^{\downVdash}$ (Thm.~\ref{thm-ccac-suc-wbh-solution-size})
&{\fpt}$^{\rightarrow}$ (Thm.~\ref{thm-ccdc-suc-fpt-sucessors}) \\ \bottomrule \toprule
%%%%%%%%%%%%%%%%%%%%%%%%%%%%
%%%%%%%%%%%%%%%%%%%%%%%%%%
%%%%%%%%%%%%%%%%%%%%%%%%%%
&\prob{DCAV-$\tau$}
&\prob{DCDV-$\tau$}
&\prob{DCAC-$\tau$}
&\prob{DCDC-$\tau$}
\\ \midrule

{amendment}
&{\fpt}$^{\leftarrow}$ (Cor.~\ref{cor-dcav-dcdv-amd-fpt})
&{\fpt}$^{\leftarrow}$ (Cor.~\ref{cor-dcav-dcdv-amd-fpt})
&{{\poly} (Cor.~\ref{cor-dcac-dcdc-amd-p})}
&immune$^{\top}$ \cite{Bartholdi92howhard}\\

&{\wahshort}$^{\bot}$ (Thm.~\ref{thm-dcav-amd-np})
&{\wbhshort}$^{\bot}$ (Thms.~\ref{thm-dcdv-amd-np},~\ref{thm-dcdv-amd-wbh-remaining})
&
&{\poly} (Cor.~\ref{cor-dcac-dcdc-amd-p}) \\ \midrule

full-
&{{\poly}$^{\top}$  (Cor.~\ref{cor-dcav-dcdv-amd-suc-p})}
&{{\poly}$^{\top}$ (Cor.~\ref{cor-dcav-dcdv-amd-suc-p})}
&\multirow{2}{*}{{\poly} (Thm.~\ref{thm-dcac-suc-p})}
&immune$^{\top}$ \cite{Bartholdi92howhard}\\

amendment
&{\wahshort}$^{\downVdash}$ (Thm.~\ref{thm-dcav-suc-np})
&{\wbhshort}$^{\downVdash}$ (Thms.~\ref{thm-dcdv-suc-np},~\ref{thm-dcdv-amend-wbh-remaining})
&
&{\poly} (Cor.~\ref{cor-dcdc-suc-p})
\\ \midrule

{successive}
&{\poly~(Cor.~\ref{cor-dcav-dcdv-suc-p})}
&{\poly~(Cor.~\ref{cor-dcav-dcdv-suc-p})}
&{\wbhshort}$^{\top}$ (Thm.~\ref{thm-dcac-suc-wbh})
&immune$^{\top}$ (Cor.~\ref{cor-suc-immune-dcdc-first})
\\

&
&
&{\fpt}$^{\rightarrow}$ (Thm.~\ref{thm-dcac-suc-fpt-successors})
&{\wahshort}$^{\downVdash}$ (Thms.~\ref{thm-dcdc-suc-wah-dis-last-solution},~\ref{thm-dcdc-suc-wah-remaining-one-predecessor})
\\  \bottomrule
\end{tabular}
}
\label{tab-resulst-summary}
\end{table}

\section{Preliminaries}
\label{sec-pre}
We assume that the reader has a basic understanding of (parameterized) complexity theory and graph theory. This section provides the necessary notations for our study, while terms not explicitly defined in the paper are referenced in the monographs edited by Bang-Jensen and Gutin~\cite{DBLP:books/sp/BG2018} and the work of West~\cite{Douglas2000}. For a gentle introduction to complexity theory, we recommend the paper by Tovey~\cite{DBLP:journals/interfaces/Tovey02}.

For an integer~$i$,~$[i]$ denotes the set of all positive integers less than or equal to~$i$.

\subsection{Elections}
An election is defined as a tuple $\elec = (C, V)$, where~$C$ is a set of candidates and $V$ is a multiset of votes cast by the voters. 
In particular, each $\succ \in V$ is a linear order on $C$, representing the preference of a corresponding voter over the candidates in~$C$. Specifically, $a \succ b$ for $a, b \in C$ indicates that the voter prefers $a$ to $b$. We also say that the vote~$\succ$ ranks~$a$ before~$b$ if $a\succ b$. 
For notational brevity, we sometimes write a preference in the format of a sequence of candidates from the most preferred one to the least preferred one.  For instance, stating a vote with the preference $a\Succ b\Succ c$ means that, according to the vote,~$a$ is ranked before~$b$, and~$b$ ranked before~$c$.
For two disjoint subsets $X, Y\subseteq C$, $X\succ Y$ means that all candidates in~$X$ are ranked before all candidates in~$Y$ in the vote~$\succ$. 
For $C'\subseteq C$, $V|_{C'}$ is the multiset of votes obtained from those in~$V$ by removing all candidates in $C\setminus C'$. Thus, $(C', V|_{C'})$ is the election $(C, V)$ restricted to~$C'$. For notational convenience, we use $(C', V)$ to denote $(C', V|_{C'})$.

A subset $C'\subseteq C$ is called a block with respect to $V$ if its members are ranked consecutively in all votes from~$V$, i.e., for each $\succ\in V$ and each $c\in C\setminus C'$, it holds that either $\{c\}\succ C'$ or $C'\succ \{c\}$. An ordered block on a subset~$C'$ of candidates is a linear order on~$C'$ such that~$C'$ is a block and, moreover, all votes of~$V$ restricted to~$C'$ are consistent with the linear order. For instance, if $(a, b, c)$ is an ordered block, all votes rank $\{a, b, c\}$ together, and they all prefer~$a$ to~$b$ to~$c$.

Let~$n_{V}(a, b)$ denote the number of votes in~$V$ ranking~$a$ before~$b$, i.e., $n_V(a, b)=\abs{\{\succ\in V \setmid a\succ b\}}$. We say that~$a$ beats~$b$ with respect to~$V$ if $n_V(a, b) > n_V(b, a)$, and that~$a$ ties with~$b$ with respect to~$V$ if $n_V(a, b) = n_V(b, a)$.   
For a candidate $a\in C$ and a subset $C'\subseteq C\setminus \{a\}$, we say that~$a$ majority-dominates~$C'$ with respect to~$V$ if there exists $V'\subseteq V$ such that $\abs{V'}> \abs{V}/2$, and $\{a\}\succ C'$ holds for all $\succ \in V'$. For uniformity, we define that every candidate $a\in C$ majority-dominates the empty set~$\emptyset$.

A candidate~$c\in C$ is the Condorcet winner of~$(C, V)$ if~$c$ beats every candidate from $C\setminus \{c\}$ with respect to~$V$. A candidate~$c$ is a weak Condorcet winner of~$(C, V)$ if~$c$ is not beaten by any other candidate from $C$ with respect to~$V$. A voting procedure is Condorcet-consistent if it always selects the Condorcet winner of an election as the winner whenever the election admits a Condorcet winner. 

An oriented graph is a directed graph (digraph) in which there is at most one arc between any two vertices. The majority graph of an election $(C, V)$ is an oriented graph with vertex set~$C$, where there is an arc from candidate $a$ to candidate $b$ if and only if $a$ beats $b$ with respect to $V$. An arc from vertex $a$ to vertex $b$ in a digraph is denoted by $\arc{a}{b}$. By assigning a weight of $n_{V}(a, b)$ to each arc $\arc{a}{b}$ in the majority graph, we obtain the weighted majority graph of $(C, V)$.

An agenda is a linear order~$\rhd$ on~$C$. We use~$\rhd[i]$ to denote the~$i^{\text{th}}$ candidate in~$\rhd$. Let $m=\abs{C}$. For two integers~$i$ and~$j$, we define $\rhd[i, j] = \{\rhd[x] \setmid x \in [m], i \leq x \leq j\}$. Note that if $j > m$, then $\rhd[i, j] = \rhd[i, m]$, and if $i > j$, then $\rhd[i, j] = \emptyset$. 
%Besides, we use $\overrightarrow{\rhd}[i,j]$ to denote~$\rhd$ restricted to~$\rhd[i,j]$. For a subset $C'\subseteq C$, $\overrightarrow{\rhd}[C']$ denotes~$\rhd$ restricted to~$C'$. 
For a candidate $c\in C$, we call candidates before~$c$ in the agenda the predecessors of~$c$, and call those after~$c$ her successors. In particular, for two candidates~$\rhd[i]$ and~$\rhd[j]$ such that $j>i$, we call~$\rhd[j]$ the $(j-i)^{\text{th}}$-successor of $\rhd[i]$, or say that~$\rhd[j]$ is a $t_{\leq}^{\text{th}}$-successor of $\rhd[i]$ where~$t$ is an integer such that $t\geq j-i$. The number of predecessors of~$c$ plus one is called the position of~$c$ in the agenda~$\rhd$.
The successive  winner and the {\gamend} winner with respect to $(C, V)$ and~$\rhd$ are determined as follows.
\begin{itemize}
\item {\bf{Successive}}\footnote{The successive procedure has also been studied under the name Euro-Latin procedure (see, e.g., the work of Farquharson~\cite{Farquharson1969}).} Let $i\in [m]$ be the smallest integer such that~$\rhd[i]$ majority-dominates $\rhd[i+1,m]$ with respect to~$V$. 
    The successive winner is~$\rhd[i]$.

\item {\bf{\Gamend}} This procedure comprises several rounds, where in each round either the first candidate in the current agenda is eliminated, or up to~$h$ candidates following the first candidate in the agenda are eliminated. Specifically, in round~$r$ where $r = 1, 2, \dots$, if the first candidate~$\rhd[1]$ in the current agenda beats every candidate in $\rhd[2, h+1]$, then all candidates in $\rhd[2, h+1]$ are eliminated from~$\rhd$. Otherwise,~$\rhd[1]$ is eliminated from~$\rhd$. The procedure continues until only one candidate remains in the agenda; this candidate is then declared the winner.

We emphasize that when a set $\rhd[x, y]$ of candidates is eliminated, each~$\rhd[j]$ for $j > y$ is automatically shifted upward by $y - x + 1$ positions in~$\rhd$. As a result, the first candidate in the agenda is always denoted by~$\rhd[1]$.

Furthermore, if the election involves at most $m \leq h$ candidates, the {\gamend} winner is determined using the $(m-1)$-amendment procedure.
\end{itemize}

The amendment procedure is exactly the $1$-amendment procedure\footnote{The amendment procedure has also  been studied under the names Anglo-American procedure, elimination procedure, and ordinary committee procedure (see, e.g., the works of Black~\cite{Black1958}, Farquharson~\cite{Farquharson1969}).}. 
The full-amendment procedure is the $(m-1)$-amendment procedure. Notice that as each round in the amendment procedure eliminates exactly one candidate, the amendment procedure takes $m-1$ rounds. In contrast, the full-amendment procedure may take anywhere from~$1$ to $m-1$ rounds depending on the situation.

From the definition, it is easy to see that the $h$-amendment procedure is Condorcet-consistent for all possible values of~$h$. On the other hand, the successive procedure is not Condorcet-consistent (see the work of Miller~\cite{Miller1977} for detailed discussion). To verify this, consider an election with three votes: \underline{$a\Succ b\Succ c\Succ d$}, \underline{$d\Succ a\Succ b\Succ c$}, and \underline{$b\Succ c\Succ a\Succ d$}. Obviously,~$a$ is the Condorcet winner, but with respect to the agenda $(a, b, c, d)$, the successive winner is~$b$ ($a$ is not the successive winner because only one out of three votes ranks~$a$ before all her successors). Nevertheless, it is easy to see that if an election admits a Condorcet winner, then, as long as the Condorcet winner is the last one in the agenda, this Condorcet winner is also the successive winner of the election (see the work of Rasch~\cite{Rasch2000} for detailed discussion). 
In addition, it holds that the $h$-amendment/successive winner~$c$ of each election is the Condorcet winner of the election restricted to~$c$ and all successors of~$c$.
As pointed out earlier, the full-amendment winner of an election with respect to an agenda beats both the amendment winner and the successive winner of the same election with respect to the same agenda, if they do not coincide. One can also observe that, in the same election, the full-amendment winner can be neither a successor of the amendment winner nor a successor of the successive winner.

Given an oriented graph~$G$ with the vertex set~$C$ and an agenda~$\rhd$ on~$C$, the $h$-amendment/successive winner of~$G$ with respect to~$\rhd$ is the $h$-amendment/successive winner of an election whose majority graph is~$G$ with respect to~$\rhd$.
See Example~\ref{ex-a} for an illustration.

\begin{example}[An illustration of different voting procedures.] 
\label{ex-a}
Consider an election $(C, V)$, where $C=\{a, b, c, d\}$, and~$V$ consists of three votes. The three votes in~$V$ (left side) and the majority graph of~$(C, V)$ together with three different agendas are given below.

\smallskip
\begin{minipage}{0.2\textwidth}
%\begin{center}
vote 1: $b\Succ d\Succ c\Succ a$

vote 2: $c\Succ a\Succ b\Succ d$

vote 3: $a\Succ d\Succ b\Succ c$
%\end{center}
\end{minipage}\begin{minipage}{0.8\textwidth}
\begin{center}
\includegraphics[width=0.95\textwidth]{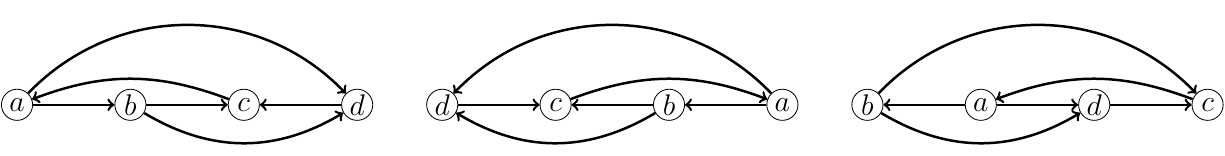}
\end{center}
\end{minipage}
\smallskip

With respect to the agenda $(a, b, c, d)$,~$b$ is the $h$-amendment winner for $h\in \{2, 3\}$, and~$d$ is both the amendment winner and the successive winner.
With respect to the agenda $(d, c, b, a)$,~$a$ is both the successive winner and the $h$-amendment winner for all $h\in [3]$.
With respect to the agenda $(b, a, d, c)$,~$c$ is the amendment winner, and $d$ is the successive winner and the $h$-amendment winner for $h\in \{2, 3\}$.
\end{example}

We remark that, at first glance, it may seem that the $(h+1)$-amendment winner of an election with respect to an agenda cannot be a successor of the election's $h$-amendment winner under the same agenda. However, this is not the case, as the $h$-amendment winner may be eliminated by some of its predecessors (which may also be eliminated later) when the $(h+1)$-amendment procedure is applied. To illustrate this, consider the election derived from the one in Example~\ref{ex-a} by introducing an additional candidate,~$e$, who is ranked first in the second vote and immediately after~$d$ in the other two votes, such that~$e$ is beaten by~$d$ but beats both~$a$ and~$b$. With respect to the agenda $(e, a, d, b, c)$, $b$ is the amendment winner, while~$c$ is the $2$-amendment winner.

\subsection{Election Control Problems}
Let~$\tau$ be a voting procedure.
We study eight standard control problems which are special cases of the problems defined below.

\EP
{Constructive Multimode Control {\textnormal{for}} $\tau$ (CMC-$\tau$)}
{A set~$C$ of registered candidates, a distinguished candidate $p\in C$, a set~$D$ of unregistered candidates, a multiset~$V$ of registered votes over $C\cup D$, a multiset~$W$ of unregistered votes over $C\cup D$, an agenda~$\rhd$ on~$C\cup D$, and four nonnegative integers~$\kav$,~$\kdv$,~$\kac$, and~$\kdc$.}
{Are there $C'\subseteq C\setminus \{p\}$, $D'\subseteq D$, $V'\subseteq V$, and $W'\subseteq W$ such that\\
(1) $\abs{C'}\leq \kdc$, $\abs{D'}\leq \kac$, $\abs{V'}\leq \kdv$, $\abs{W'}\leq \kav$, and\\
(2) $p$ is the~$\tau$ winner of $((C\setminus C')\cup D', (V\setminus V')\muplus W')$ with respect to the agenda~${\rhd}$ restricted to $(C\setminus C')\cup D'$?
}

To put it in plain words, in the {\prob{CMC-${\tau}$}} problem, an election controller attempts to make a given distinguished candidate the winner by deleting a limited number of registered candidates and votes and adding a limited number of unregistered candidates and votes.

\prob{Destructive Multimode Control {\textnormal{for}}~$\tau$} ({\prob{DMC-$\tau$}}) is defined similar to {\prob{CMC-$\tau$}} with only the difference that the goal of the controller is to make the given distinguished candidate not the winner (i.e., we require in requirement~(2) in the above definition that~$p$ is not the~$\tau$ winner of $((C\setminus C')\cup D', (V\setminus V')\muplus W')$).\footnote{Many traditional voting procedures, such as Borda, Copeland, Maximin, etc., do not need an agenda to determine the winners. In these cases,~$\rhd$ is dropped in the definitions of {\prob{CMC-$\tau$}} and {\prob{DMC-$\tau$}}.}

The eight standard control problems are special cases of {\prob{CMC-$\tau$}} and {\prob{DMC-$\tau$}}. The abbreviations of the problem names and their specifications are provided in Table~\ref{tab-problem-specifications}. For simplicity, when we write an instance of a standard control problem, we omit its components of constant values~$0$ and~$\emptyset$, and use~$k$ to denote the one in $\{k_{\text{AV}}, k_{\text{DV}}, k_{\text{AC}}, k_{\text{DC}}\}$ not requested to be zero. For example, an instance of {\prob{CCAV}}-$\tau$ is written as $(C, p, V, W, \rhd, k)$, where~$k$ represents~$k_{\text{AV}}$.  

\begin{table}
\caption{Specifications of some election control problems. In the abbreviations, the letter~X is either CC standing for ``constructive control'' or DC standing for ``destructive control''. For $\text{X}=\text{CC}$ the problems are special cases of {\prob{CMC-$\tau$}}, and for $\text{X}=\text{DC}$ the problems are special cases of {\prob{DMC-$\tau$}}. The second letters~A and~D in the abbreviations respectively stand for ``adding'' and ``deleting'', and the third letters~V and~C  respectively stand for ``voters'' and ``candidates''. 
}
\centering
\begin{tabular}{ll}\toprule
abbreviations & restrictions\\ \midrule

\prob{XAV-$\tau$}  & $\kdv=\kac=\kdc=0$, and $D=\emptyset$\\

\prob{XDV-$\tau$}   &  $\kav=\kac=\kdc=0$, and $W=D=\emptyset$\\

\prob{XAC-$\tau$}   & $\kav=\kdv=\kdc=0$, and $W=\emptyset$ \\

\prob{XDC-$\tau$}  & $\kav=\kdv=\kac=0$, and $W=D=\emptyset$ \\ \bottomrule
\end{tabular}
\label{tab-problem-specifications}
\end{table}

To improve readability and avoid repetitive phrasing, we adopt the following simplifications in discussions concerning an election $(C, V)$ restricted to a subset~$C'$ of candidates. (Such discussions are particularly relevant to election control problems involving the addition or deletion of candidates.)  Instead of explicitly stating that a candidate~$c$ is the $\tau$ winner of $(C, V)$ with respect to an agenda~$\rhd$ restricted to~$C'$, we simply say that~$c$ is the~$\tau$ winner of $(C', V)$ with respect to $\rhd$. When the context makes the agenda clear, we may further simplify by stating that~$c$ is the $\tau$ winner of $(C', V)$.  

In the paper, we establish numerous hardness results for restricted cases of the control problems where the distinguished candidate~$p$ occupies a specific position (e.g., the first or the last) in the agenda. Additionally, when we assert that a hardness result holds as long as~$p$ is not the first candidate in the agenda, we mean that the result holds when restricted to instances where~$p$ is in the~$i^{\text{th}}$ position in the given agenda, for all positive integers~$i\geq 2$. 

A voting procedure is immune to a constructive control type if it is impossible to turn a nonwinning candidate into a winner by performing the corresponding controlling operation (adding/deleting voters/candidates). Analogously, a voting procedure is immune to a destructive control type if it is impossible to turn a winner into a nonwinning candidate by performing the corresponding control operation. 
%We say that a voting procedure is susceptible to a control type if it is not immune to it. 

\subsection{Parameterized Complexity}
A parameterized problem is a subset of $\Sigma^* \times \mathbb{N}$, where $\Sigma$ is a finite alphabet. A complexity hierarchy has been developed to classify parameterized problems into numerous classes:
\[\fpt\subseteq \wa\subseteq \wb\subseteq \cdots \subseteq \xp.\] 
For a detailed discussion of these classes, we refer the reader to the monograph by Downey and Fellows~\cite{DBLP:series/txcs/DowneyF13}. 

For each $X\in \Sigma^*$, let~$\abs{X}$ denote the size of~$X$. 
A parameterized problem is fixed-parameter tractable ({\fpt}) if there is an algorithm which solves each instance $(X, \kappa)$ of the problem in time $f(\kappa)\cdot \abs{X}^{\bigo{1}}$, where~$f$ can be any computable function in the parameter~$\kappa$. A parameterized problem is in the class {\xp} if there is an algorithm which solves each instance $(X, \kappa)$ of this problem in time $\abs{X}^{f(\kappa)}$, where~$f$ can be any computable function in the parameter~$\kappa$.

\begin{definition}[Parameterized Reduction]
A parameterized problem~$P$ is parameterized reducible to a parameterized problem~$Q$ if there is an algorithm which takes as input an instance $(X, \kappa)$ of~$P$ and outputs an instance $(X', \kappa')$ of~$Q$ such that
\begin{itemize}
\item the algorithm runs in time $f(\kappa)\cdot \abs{X}^{\bigo{1}}$;
\item $(X, \kappa)$ is a {\yesins} of~$P$ if and only if $(X', \kappa')$ is a {\yesins} of~$Q$; and
\item $\kappa'\leq g(\kappa)$ for some computable function~$g$ in~$\kappa$.
\end{itemize}
\end{definition}

A parameterized problem is {\wah} (respectively, {\wbh}) if all problems in {\wa} (respectively, {\wb}) are parameterized reducible to this problem. A {\wah} (respectively, {\wbh}) problem that is in {\wa} (respectively, {\wb}) is {\wac} (respectively, {\wbc}). {\wah} and {\wbh} problems do not admit any {\fpt}-algorithms unless the above hierarchy collapses to some level. 
To prove that a new problem is {\wah} (respectively, {\wbh}), one needs to select a known {\wah} (respectively, {\wbh}) problem and establish a parameterized reduction from the known problem to the new one.

Beyond the class of {\xp}, there exists another significant class of parameterized problems known as {\paranph}. A parameterized problem is classified as {\paranph} if it remains {\nph} for some fixed value of the parameter.  
For a more comprehensive exploration of parameterized complexity theory, we recommend the textbooks by Cygan~et~al.~\cite{Cygan2015}, Downey and Fellows~\cite{DBLP:series/txcs/DowneyF13}, and Niedermeier~\cite{rolf06}.

\subsection{Parameters We Consider} 
\label{subsec-parameters}

Our results for the eight election control problems encompass the following natural parameters.

\begin{description}
    \item[Solution size~$k$] The parameter $k$, representing the number of voters or candidates allowed to be added or deleted, is especially relevant in scenarios where the election controller faces resource constraints or seeks to minimize the exertion of influence. For instance, consider a university senate deciding on a critical policy change, such as revising tuition fees. 
In this scenario:
\begin{itemize}
    \item Adding voters corresponds to persuading external experts, alumni, or student representatives who were not initially planning to participate, to join the voting process. However, this might entail significant communication efforts, logistical arrangements, or financial costs (e.g., compensating for their time or travel expenses).
    \item Deleting voters may reflect attempts to convince certain stakeholders (e.g., administrators or faculty members) to abstain from voting, requiring delicate negotiations and justification to avoid reputational damage or conflict.
\end{itemize}

Suppose the election controller supports the policy change but operates under a limited budget, allowing them to influence the participation of at most $k$ voters. Here, minimizing $k$ aligns with practical constraints such as limited time to communicate with potential participants, a fixed budget for compensating or incentivizing voters, and the controller's preference to avoid significant disruptions or visible interventions.

    In another example, consider a national election where $k$ represents the number of candidates to be added or removed from the ballot to favor a preferred candidate. Adding candidates may involve recruiting new participants, which can be administratively challenging and resource-intensive. Similarly, removing candidates might require legal actions, political negotiations, or significant financial resources, all of which are constrained by $k$.
    
We show that most hard control problems exhibit fixed-parameter intractability concerning this parameter. Notably, for the problems of election control by adding voters/candidates, the intractability persists even when combined parameterized by the solution size and the number of registered voters/candidates, or when there is only one or two registered candidates.
    
    \item[Number of voters/candidates not deleted] This parameter becomes particularly relevant in scenarios where the election controller is tasked with selecting a small subset of voters/candidates to participate in the decision-making procedure, with the unselected voters/candidates effectively being considered as deleted. For example, consider the context of corporate board elections or committee decisions in organizations. In such settings, the controller might wish to identify a small representative group of voters—such as key stakeholders, senior employees, or domain experts—who will cast votes on behalf of a larger group, thereby streamlining the decision-making process. Here, the parameter captures the practical constraint of selecting a manageable number of voters while ensuring that the election outcome remains aligned with the controller’s objectives. 

    For problems involving voter or candidate deletions, our results demonstrate that these hardness control problems are fixed-parameter intractable with respect to this parameter. This intractability underscores the challenge of achieving the controller’s goals when the number of retained voters or candidates is tightly constrained, reinforcing the significance of understanding such scenarios both theoretically and in practice.
    
    \item[Number of predecessors/successors of the distinguished candidate] Previous works indicate that candidates may benefit from their positions on the agenda. For example, in a scenario with three candidates and a decision determined by the amendment procedure, it becomes evident that the candidate listed last on the agenda is more likely to be selected.\footnote{If a Condorcet winner exists, the Condorcet winner will be selected regardless of her position. In all other cases, the procedure consistently designates the last candidate on the agenda as the winner.} Hence, it is intriguing to explore if the control problem faced by the election controller becomes more manageable when the distinguished candidate has only a few predecessors or successors. Our findings demonstrate that concerning this parameter, certain problems are fixed-parameter intractable, while others are {\fpt}. 
\end{description}

For the $h$-amendment and the $(m-h)$-amendment procedures, we also analyze the parameter~$h$. We prove that many control problems remain computationally hard even for constant values of~$h$, thereby identifying their classification within the hierarchy of {\paranph} problems with respect to~$h$.

Two additional natural parameters are the number~$m$ of candidates and the number~$n$ of voters. It is straightforward to verify that the candidate control problems (CCAC, CCDC, DCAC, DCDC) are {\fpt} with respect to~$m$. For the voter control problems parameterized by~$m$, one of our results (Theorem~\ref{thm-mgcav-fpt}) establishes their {\fpt} status. Detailed discussions on this matter will follow the presentation of Theorem~\ref{thm-mgcav-fpt}. Consequently, our investigation implicitly provides a comprehensive understanding of the parameterized complexity of the eight standard control problems for the parameter~$m$. 

Regarding the parameter~$n$, it is easy to see that the voter control problems (CCAV, CCDV, DCAV, DCDV) are {\fpt}. Furthermore, several polynomial-time solvability results for these problems are summarized in Table~\ref{tab-resulst-summary}. However, beyond these tractability results, the {\fpt} status of certain candidate control problems parameterized by~$n$ remains an open question. Specifically, this includes {\prob{CCAC}}-{\Famend}, {\prob{CCAC}}-Successive, {\prob{CCDC}}-Successive, {\prob{DCAC}}-Successive, and {\prob{DCDC}}-Successive. 
We note that a systematic study of the candidate control problems with respect to~$n$ for many traditional voting procedures has been conducted by Chen~et~al.~\cite{DBLP:journals/jair/ChenFNT17}.

\subsection{Auxiliary Problems}
\label{subsec-supporting-problems}
We elaborate on problems used for establishing our hardness results.
Let $G=(\vset, \eset)$ be a simple undirected graph. For a vertex $\vere\in \vset$, $N_G(\vere)=\{\vere'\in \vset \setmid \edge{\vere}{\vere'}\in \eset\}$ is the set of neighbors of~$\vere$ in~$G$, and $N_G[\vere]=N_G(\vere)\cup \{\vere\}$ is the set of closed neighbors of~$\vere$ in~$G$. For $\vset'\subseteq \vset$, let $N_G(\vset')=\bigcup_{\vere\in \vset'} N_G(\vere)\setminus \vset'$, and let $N_G[\vset']=N_G(\vset')\cup  \vset'$. In addition, $G-\vset'$ is the graph obtained from~$G$ by deleting all vertices in~$\vset'$. Two vertices are adjacent if there exists an edge between them. An independent set of~$G$ is a subset of vertices that are pairwise nonadjacent. A clique of $G$ is a subset of pairwise adjacent vertices. A vertex~$\vere$ dominates a vertex~$\vere'$ if either they are the same vertex or they are adjacent. 
For two subsets $S, S'\subseteq \vset$, we say that~$S$ dominates~$S'$ if every vertex in~$S'$ is dominated by at least one vertex in~$S$.  A dominating set of~$G$ is a subset of vertices which dominate all vertices of~$G$.
 A perfect code of~$G$ is a dominating set~$P$ such that every vertex in~$G$ is dominated by exactly one vertex in~$P$, i.e.,~$P$ is a dominating set of~$G$ such that for every two distinct $\vere, \vere'\in P$ it holds that $N_G[\vere]\cap N_G[\vere']=\emptyset$. Note that every perfect code is also an independent set.

For two positive integers~$i$ and~$j$,~$K_{i, j}$ denotes the complete bipartite graph with~$i$ vertices in one component and~$j$ vertices in the other component of the bipartition of~$K_{i, j}$.

\EP
{Perfect Code}
{A graph~$G$ and an integer~$\kappa$.}
{Does~$G$ have a perfect code of size~$\kappa$?}

\EP
{Clique}
{A graph~$G$ and an integer~$\kappa$.}
{Does~$G$ have a clique of size~$\kappa$?}

\EP
{Biclique}
{A bipartite graph~$G$ and an integer~$\kappa$.}
{Does~$G$ contain a complete bipartite graph $K_{\kappa,\kappa}$ as a subgraph?}

\EP
{Red-Blue Dominating Set (RBDS)}
{A bipartite graph~$G$ with the vertex bipartition $(R, B)$ and an integer~$\kappa$.}
{Is there a subset $B'\subseteq B$ such that $\abs{B'}= \kappa$ and~$B'$ dominates~$R$ in~$G$?}

In the above definition, we call vertices in~$R$ red vertices and call those in~$B$ blue vertices.

It is known that {\prob{Perfect Code}}, {\prob{Clique}}, {\prob{Biclique}}, and {\prob{RBDS}} are {\npc} and, moreover, with respect to~$\kappa$, {\prob{Perfect Code}}, \prob{Clique}, and  {\prob{Biclique}}\footnote{{\prob{Biclique}} is originally defined over general graphs, but we use its restricted version to bipartite graphs to establish several of our {\wahns} results. Whether {\prob{Biclique}} is {\wah} had remained as a significant long-standing open question until Lin~\cite{DBLP:conf/soda/Lin15} resolved it in the affirmative in 2015. More importantly, Lin's reduction~\cite[Theorem~1.1, Corollary~1.3]{DBLP:conf/soda/Lin15} applies to the special case where the input graph is a bipartite graph.} are
 {\wac} (see, e.g., the works of Cesati~\cite{DBLP:journals/ipl/Cesati02}, Downey and Fellows~\cite{DBLP:journals/tcs/DowneyF95}, Lin~\cite{DBLP:conf/soda/Lin15}), and {\prob{RBDS}} is {\wbc} (see, e.g., the works of Downey, Fellows, and Stege~\cite{fellows2001}, and Garey and Johnson~\cite{garey}).

We note that in several of our reductions where the {\prob{RBDS}} problem is used, we make the following assumptions:
\begin{enumerate}
    \item[(1)] $\abs{B} > \kappa > 1$, 
    \item[(2)] the graph $G$ does not contain any isolated vertices (i.e., vertices without any neighbors),
    \item[(3)] all vertices in~$R$ have the same degree, which we denote~$\ell$, and
    \item[(4)] $\ell + \kappa \leq \abs{B}$.
\end{enumerate}
These assumptions do not affect the {\wbhns} of the problem with respect to $\kappa$. The first assumption, $\abs{B} > \kappa > 1$, is straightforward, as otherwise the problem is polynomial-time solvable. The second assumption is also easy to verify. If an instance does not satisfy the third assumption, we can obtain an equivalent instance by the following operation: Let $\ell$ be the maximum degree of vertices in~$R$. For each red vertex $r \in R$ whose degree is strictly smaller than~$\ell$, we add new degree-$1$ vertices adjacent only to~$r$ until $r$ reaches degree~$\ell$. An important observation is that there is an optimal solution (a subset $B' \subseteq B$ dominating~$R$ with the minimum cardinality) of the new instance that does not contain any of the newly introduced degree-$1$ vertices. This ensures the equivalency of the two instances. Lastly, note that if $\ell + \kappa > \abs{B}$, the given instance must be a {\yesins} (any subset of~$\kappa$ blue vertices dominates all red vertices).

\subsection{Some General Notations Used in Our Reductions}
In our hardness reductions, we will consistently use the following notations. For a set~$X$, we use $\overrightarrow{X}$ to denote an arbitrary but fixed linear order on~$X$, unless stated otherwise. For a subset $X'\subseteq X$, $\overrightarrow{X}[X']$ denotes the order~$\overrightarrow{X}$ restricted to~$X'$. In addition, $\overrightarrow{X}\setminus X'$ is $\overrightarrow{X}$ without the elements of~$X'$. For two linear orders $\overrightarrow{X}$ and $\overrightarrow{Y}$, the notation $(\overrightarrow{X}, \overrightarrow{Y})$ represents their concatenation, where $\overrightarrow{X}$ precedes $\overrightarrow{Y}$. For example, if $\overrightarrow{X} = (a, b, c)$ and $\overrightarrow{Y} = (1, 2, 3)$, then $(\overrightarrow{X}, \overrightarrow{Y}) = (a, b, c, 1, 2, 3)$.

\subsection{Some Remarks}
All the parameterized reductions derived in the paper run in polynomial time. Since all eight standard control problems under the $h$-amendment and the successive procedures are in~{\np}, in the following sections, if we prove that a problem is {\wah} or {\wbh}, it implies that the problem is also {\npc}. Particularly, for all our new {\wahns} or {\wbhns} results shown in Table~\ref{tab-resulst-summary}, the corresponding problems are {\npc}. To avoid repetitiveness, we will not assert these implied results in the corresponding theorems. 

\section{Our Results}
\label{sec-main-results}
In this section, we study the complexity of the eight standard election control problems. 
%In particular, we explore how the position of the distinguished candidate in the given agenda shapes the parameterized complexity landscape of the problems.
Our exploration starts with the following lemma.

\begin{lemma}
\label{lem-a}
Let $E = (C, V)$ be an election, and let~$\rhd$ be an agenda on~$C$.  
For all positive integers~$h$, $\rhd[1]$ is the {\gamend} winner of $E$ with respect to~$\rhd$ if and only if~$\rhd[1]$ is the Condorcet winner of $E$.  
\end{lemma}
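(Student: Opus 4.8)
The claim is a clean "if and only if," and the natural strategy is to prove each direction by examining how the $h$-amendment procedure operates when $\rhd[1]$ is (respectively is not) the Condorcet winner.

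For the ``if'' direction, suppose $\rhd[1]$ is the Condorcet winner of $E$. The plan is to track $\rhd[1]$ across the rounds of the $h$-amendment procedure and show it is never eliminated. In round $1$, $\rhd[1]$ is the first candidate in the current agenda, so it is compared to the set $\rhd[2,h+1]$; since $\rhd[1]$ beats \emph{every} candidate in $E$ (Condorcet winner), it certainly beats every candidate in $\rhd[2,h+1]$, so that whole block is eliminated rather than $\rhd[1]$. After elimination, $\rhd[1]$ remains first in the shortened agenda, and — crucially — it is still the Condorcet winner of the election restricted to the surviving candidates, because beating relations among the survivors are unchanged by deleting candidates. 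Hence the same argument applies in every subsequent round, so $\rhd[1]$ survives until it is the only candidate left; it is the $h$-amendment winner. One should note the boundary case $m\le h$, where by definition the winner is computed via the $(m-1)$-amendment procedure — but the same inductive argument (or induction on $m$) covers this, so it is not a real obstacle.

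For the ``only if'' direction, suppose $\rhd[1]$ is the $h$-amendment winner; we must show it is the Condorcet winner. Here I would use the fact, already stated in the preliminaries, that the $h$-amendment winner $c$ of an election is the Condorcet winner of the election restricted to $c$ together with all successors of $c$. Applied to $c=\rhd[1]$, whose successors are \emph{all} other candidates, this immediately says $\rhd[1]$ beats every other candidate, i.e.\ $\rhd[1]$ is the Condorcet winner of $E$. If one prefers a self-contained argument not invoking that remark, the alternative is: if $\rhd[1]$ does \emph{not} beat some candidate $b$, consider the first round in which $\rhd[1]$ is compared against a block containing $b$ (if $\rhd[1]$ is ever first in a round — and it is, in round $1$ already); since $\rhd[1]$ fails to beat $b\in\rhd[2,h+1]$ in that round, $\rhd[1]$ is eliminated, contradicting that it is the winner. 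Actually this is cleanest: in round $1$ specifically, $\rhd[1]$ is eliminated unless it beats all of $\rhd[2,h+1]$, and if $\rhd[1]$ survives to win it must have beaten everyone it was ever pitted against — but I should check that over the course of the procedure $\rhd[1]$ is compared, directly or transitively, with every candidate, which follows because elimination of a block by $\rhd[1]$ only happens when $\rhd[1]$ beats all of it.

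The main obstacle, such as it is, lies in the ``only if'' direction if one avoids citing the prior remark: one must argue carefully that survival of $\rhd[1]$ through all rounds forces it to beat every candidate, handling the fact that a candidate $b$ might be eliminated in some round \emph{before} ever being the direct opponent of $\rhd[1]$ — but this can only happen when $\rhd[1]$ beats $b$ (as $\rhd[1]$ was the one doing the eliminating, since it stayed first throughout), so no case is lost. Invoking the already-established remark sidesteps this entirely and is the route I would take. I expect the write-up to be short.
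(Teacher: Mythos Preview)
Your proposal is correct and follows essentially the same approach as the paper: both directions are argued by tracking the elimination rounds, noting that if $\rhd[1]$ is the Condorcet winner it survives every round, and conversely that if $\rhd[1]$ survives every round then (since it stays first throughout) every other candidate was eliminated by $\rhd[1]$ and hence beaten by it. The paper's write-up is simply a terser version of your self-contained argument for the ``only if'' direction; your alternative of invoking the preliminary remark about the winner being Condorcet over itself and its successors is also valid and not circular, though the paper does not take that route.
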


\begin{proof}
    Let $E = (C, V)$ and $\rhd$ be as stipulated in the lemma.  
    Suppose $\rhd[1]$ is the Condorcet winner of $E$, i.e., $\rhd[1]$ beats every other candidate with respect to~$V$. Consequently, in each round of the {\gamend} procedure, all candidates ranked after $\rhd[1]$ (up to $h$ in number) are eliminated. This ensures that $\rhd[1]$ remains the {\gamend} winner of $E$ with respect to~$\rhd$.  

    Conversely, assume that $\rhd[1]$ is the {\gamend} winner of $E$. By the definition of the {\gamend} procedure, $\rhd[1]$ is not eliminated in any round. This implies that $\rhd[1]$ beats every other candidate with respect to~$V$, establishing that $\rhd[1]$ is the Condorcet winner of $E$.  
\end{proof}

It is known that {\prob{CCAV}}-Condorcet and {\prob{CCDV}}-Condorcet are {\npc} (see the work of Bartholdi, Tovey, and Trick~\cite{Bartholdi92howhard}). Moreover, {\prob{CCAV}}-Condorcet is {\wah} with respect to the number of registered votes plus the number of added votes, {\prob{CCDV}}-Condorcet is {\wah} with respect to the number of deleted votes (see the work of Liu~et~al.~\cite{DBLP:journals/tcs/LiuFZL09}), {\prob{CCDV}}-Condorcet is {\wbh} with respect to the number of votes not deleted (see the work of Liu and Zhu~\cite{DBLP:journals/tcs/LiuZ13})\footnote{Although not explicitly stated in the paper of Liu and Zhu~\cite{DBLP:journals/tcs/LiuZ13} that {\prob{CCDV}}-Condorcet is {\wbh} with respect to the number of votes not deleted, the reduction in the proof of Theorem~1 in~\cite{DBLP:journals/tcs/LiuZ13} for a {\wbhns} of {\prob{CCDV}}-Maximin can be directly used to show the result. Particularly, in the proof of Theorem~1 in~\cite{DBLP:journals/tcs/LiuZ13}, the authors established an instance of {\prob{CCDV}}-Maximin from the {\wbh} problem {\prob{Dominating Set}}. They showed that if the instance of {\prob{Dominating Set}} is a {\yesins}, there are $2\kappa+1$ votes in the constructed election with respect to which the distinguished candidate beats all the other candidates, where~$\kappa$ is the parameter of the {\prob{Dominating Set}} instance. For the other direction, they showed that if there are $2\kappa+1$ votes with respect to which the distinguished candidate is the Maximin winner, the Maximin winner is the Condorcet winner.}. These results and Lemma~\ref{lem-a} offer us the following corollary.

\begin{corollary}[\cite{DBLP:journals/tcs/LiuFZL09,DBLP:journals/tcs/LiuZ13}]
\label{cor-many-np}
For any {\gamend} procedure $\tau$, the following hold:
\begin{itemize}
\item {\prob{CCAV}}-$\tau$ is {\memph\wah} with respect to~$\abs{V} + k$, the number of registered votes plus the maximum number of unregistered votes allowed to be added.
\item {\prob{CCDV}}-$\tau$ is {\memph\wah} with respect to~$k$, the maximum number of votes that can be deleted.
\item {\prob{CCDV}}-$\tau$ is {\memph\wbh} with respect to~$\abs{V} - k$, the minimum number of votes that cannot be deleted.
\end{itemize}
Moreover, the above results hold even when the distinguished candidate is ranked first on the agenda.
\end{corollary}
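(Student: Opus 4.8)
The plan is to give a trivial, parameter-preserving reduction from the corresponding control problems for the Condorcet rule, exploiting Lemma~\ref{lem-a}. Recall that Bartholdi, Tovey, and Trick~\cite{Bartholdi92howhard}, Liu~et~al.~\cite{DBLP:journals/tcs/LiuFZL09}, and Liu and Zhu~\cite{DBLP:journals/tcs/LiuZ13} established exactly the three hardness statements we want, but for \prob{CCAV}-Condorcet and \prob{CCDV}-Condorcet, which are agenda-free problems. So the only thing a reduction has to do is equip the Condorcet instance with an agenda in which the distinguished candidate comes first, and then appeal to Lemma~\ref{lem-a} to transfer equivalence.

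Concretely, given an instance $(C, p, V, W, k)$ of \prob{CCAV}-Condorcet (respectively, an instance $(C, p, V, k)$ of \prob{CCDV}-Condorcet), I would output the instance $(C, p, V, W, \rhd, k)$ of \prob{CCAV}-$\tau$ (respectively, $(C, p, V, \rhd, k)$ of \prob{CCDV}-$\tau$), where~$\rhd$ is any agenda on~$C$ with $\rhd[1] = p$, say $\rhd = (p, \overrightarrow{C\setminus\{p\}})$. Nothing else changes: the candidate set, the registered and unregistered votes, and the budget~$k$ are all copied verbatim, so the parameter $\abs{V}+k$ (for \prob{CCAV}) and the parameters~$k$ and $\abs{V}-k$ (for \prob{CCDV}) are preserved, and the reduction clearly runs in polynomial time. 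Note that in all of \prob{CCAV}-$\tau$ and \prob{CCDV}-$\tau$ no candidates are added or deleted, so after any feasible operation the candidate set is still~$C$ and~$p$ is still the first candidate of the (restricted) agenda.

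For correctness, fix any admissible choice of added votes $W'\subseteq W$ with $\abs{W'}\le k$ (respectively, deleted votes $V'\subseteq V$ with $\abs{V'}\le k$), and let $V^\star = V \muplus W'$ (respectively, $V^\star = V\setminus V'$). Since $p = \rhd[1]$ in the resulting election $(C, V^\star)$, Lemma~\ref{lem-a} tells us that $p$ is the $\tau$ winner of $(C,V^\star)$ with respect to~$\rhd$ if and only if $p$ is the Condorcet winner of $(C,V^\star)$. Therefore the constructed instance is a \yesins{} of \prob{CCAV}-$\tau$ (respectively, \prob{CCDV}-$\tau$) exactly when the original one is a \yesins{} of \prob{CCAV}-Condorcet (respectively, \prob{CCDV}-Condorcet), and the three hardness results carry over. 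Because this reduction produces instances in which the distinguished candidate is ranked first on the agenda, the ``moreover'' clause follows immediately. (One should also recall that when $m\le h$ the $h$-amendment winner is defined via the $(m-1)$-amendment procedure, which is still Condorcet-consistent, so Lemma~\ref{lem-a} applies uniformly for every~$h$.)

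There is essentially no technical obstacle here; the only point requiring a moment's care is to make sure the cited results really are \emph{parameterized} hardness results for the stated parameters (in particular the $\wb$-hardness of \prob{CCDV}-Condorcet with respect to $\abs{V}-k$, which, as the footnote in the excerpt explains, is implicit in the reduction of Liu and Zhu~\cite{DBLP:journals/tcs/LiuZ13}), and that these parameters are left untouched by the reduction above. Both are straightforward, so the corollary follows directly.
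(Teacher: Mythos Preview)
Your proposal is correct and matches the paper's approach exactly: the paper simply states that the known parameterized hardness results for \prob{CCAV}-Condorcet and \prob{CCDV}-Condorcet together with Lemma~\ref{lem-a} yield the corollary, and your reduction (attach an agenda with $p$ first, leave everything else unchanged) is precisely the intended trivial transfer. You have spelled out a few more details than the paper does, but there is no substantive difference.
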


In addition, as {\prob{DCAV}-Condorcet and {\prob{DCDV}}-Condorcet are polynomial-time solvable (see the work of Bartholdi, Tovey, and Trick~\cite{Bartholdi92howhard}), we can deduce the following corollary concerning destructive control by adding or deleting voters.

\begin{corollary}[\cite{Bartholdi92howhard}]
\label{cor-dcav-dcdv-amd-suc-p}
For $\tau$ being any {\gamend} procedure, {\prob{DCAV}}-$\tau$ and {\prob{DCDV}}-$\tau$ are polynomial-time solvable when the distinguished candidate is the first one in the agenda.
\end{corollary}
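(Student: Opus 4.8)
The plan is to reduce both problems, restricted to instances where $p=\rhd[1]$, to their Condorcet counterparts, and then invoke the known polynomial-time algorithms for destructive voter control under Condorcet.

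First I would note that in any instance of {\prob{DCAV}}-$\tau$ or {\prob{DCDV}}-$\tau$ the candidate set is never altered (here $D=\emptyset$ and $\kac=\kdc=0$), so after any feasible control action the agenda is still~$\rhd$ and $p$ is still its first candidate. Hence, for every multiset $V^*$ of votes reachable from $V$ by the allowed operation, Lemma~\ref{lem-a} applies to $(C, V^*)$ with agenda~$\rhd$ and yields: $p$ is the $\tau$ winner of $(C, V^*)$ with respect to~$\rhd$ if and only if $p$ is the Condorcet winner of $(C, V^*)$. Since in the destructive setting the controller wants $p$ \emph{not} to win, this is equivalent to wanting $p$ not to be the Condorcet winner of the resulting election. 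Therefore an instance of {\prob{DCAV}}-$\tau$ (resp.\ {\prob{DCDV}}-$\tau$) with $p=\rhd[1]$ is a {\yesins} exactly when the same instance, with the agenda discarded, is a {\yesins} of {\prob{DCAV}}-Condorcet (resp.\ {\prob{DCDV}}-Condorcet).

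It then remains to recall that {\prob{DCAV}}-Condorcet and {\prob{DCDV}}-Condorcet are polynomial-time solvable~\cite{Bartholdi92howhard}: $p$ fails to be the Condorcet winner as soon as some $c\neq p$ satisfies $n_{V^*}(c,p)\geq n_{V^*}(p,c)$, so one simply checks, for each candidate $c\neq p$ separately, whether $k$ vote additions (from~$W$, each ranking $c$ before $p$) or $k$ vote deletions (of registered votes ranking $p$ before $c$) suffice to erase $p$'s advantage over $c$; devoting the whole budget to a single such $c$ is optimal, and the test is clearly polynomial.

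There is essentially no real obstacle here beyond verifying that the equivalence is exact --- in particular that the candidate set, and hence the head of the agenda, is invariant under voter operations, and that ``destructive'' translates faithfully to ``not the Condorcet winner'' through the two-directional Lemma~\ref{lem-a} (a one-way implication would not suffice).
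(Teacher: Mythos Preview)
Your proposal is correct and follows essentially the same argument as the paper: the paper derives the corollary directly from Lemma~\ref{lem-a} together with the known polynomial-time solvability of {\prob{DCAV}}-Condorcet and {\prob{DCDV}}-Condorcet, which is exactly the reduction you describe. Your write-up additionally spells out the greedy per-candidate test for the Condorcet case, which the paper leaves to the cited reference.
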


Regarding candidate control, it is known that Condorcet is immune to {\prob{CCAC}} and {\prob{DCDC}}, and {\prob{CCDC}}-Condorcet and {\prob{DCAC}}-Condorcet are polynomial-time solvable (see the work of Bartholdi, Tovey, and Trick~\cite{Bartholdi92howhard}).
The following results follow.

\begin{corollary}[\cite{Bartholdi92howhard}]
\label{cor-ccdc-dcac-amd-suc-immune}
Let~$\tau$ be an $h$-amendment procedure. If the distinguished candidate is the first one in the agenda, then~{$\tau$} is immune to {\prob{CCAC}} and {\prob{DCDC}}, and {\prob{CCDC}}-$\tau$ and {\prob{DCAC}}-$\tau$ are polynomial-time solvable.
\end{corollary}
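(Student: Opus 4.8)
The plan is to leverage Lemma~\ref{lem-a} to reduce the $h$-amendment candidate-control problems, restricted to instances where the distinguished candidate~$p$ heads the agenda, to the corresponding control problems for the Condorcet rule, and then invoke the known results of Bartholdi, Tovey, and Trick~\cite{Bartholdi92howhard}.

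First I would record two elementary but crucial observations about candidate-control operations. First, when $p = \rhd[1]$ on the full candidate set $C\cup D$, then for every subset $C'\subseteq C\cup D$ with $p\in C'$, the candidate $p$ is still the first one in~$\rhd$ restricted to~$C'$; hence adding candidates from~$D$ or deleting candidates from $C\setminus\{p\}$ never changes the fact that $p$ heads the (restricted) agenda. Second, restricting the votes to any subset of candidates leaves the head-to-head comparison between any two surviving candidates untouched, so whether~$p$ beats a candidate~$c$ does not depend on which other candidates are present. Combining these two observations with Lemma~\ref{lem-a}, for any $C'$ with $p\in C'$ the candidate~$p$ is the {\gamend} winner of $(C', V)$ with respect to~$\rhd$ if and only if~$p$ is the Condorcet winner of $(C', V)$, i.e., if and only if~$p$ beats every candidate in $C'\setminus\{p\}$ with respect to~$V$.

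Next I would deduce each of the four claims from this equivalence. For immunity to {\prob{CCAC}}: if some instance witnessed non-immunity, then by the equivalence above~$p$ would fail to be the Condorcet winner of $(C, V)$ and yet become the Condorcet winner of $((C\cup D'), V)$ for some added set~$D'$, contradicting the known immunity of Condorcet to {\prob{CCAC}}. The argument for immunity to {\prob{DCDC}} is symmetric, using immunity of Condorcet to {\prob{DCDC}}. For the tractability statements, an instance of {\prob{CCDC}}-$\tau$ (respectively {\prob{DCAC}}-$\tau$) with~$p$ ranked first is a {\yesins} if and only if the syntactically identical instance of {\prob{CCDC}}-Condorcet (respectively {\prob{DCAC}}-Condorcet) is a {\yesins}, again by the equivalence above; since these Condorcet problems are polynomial-time solvable, so are the corresponding {\gamend} problems restricted to the case where~$p$ is first. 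Concretely, for {\prob{CCDC}} one simply checks whether the set of candidates not beaten by~$p$ has size at most~$\kdc$, and for {\prob{DCAC}} one checks whether~$p$ already fails to be the Condorcet winner of $(C, V)$, or whether $\kac\geq 1$ and some unregistered candidate is not beaten by~$p$.

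There is essentially no hard step here; the only point I would be careful to spell out is the first observation above — namely that the distinguished candidate continues to occupy the first position of the (restricted) agenda after any admissible candidate-control operation — since it is precisely this fact that permits Lemma~\ref{lem-a} to be applied to the post-control election rather than only to the original one.
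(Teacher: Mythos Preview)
Your proposal is correct and follows essentially the same approach as the paper: the paper derives this corollary directly from Lemma~\ref{lem-a} together with the known immunity and tractability results for Condorcet from~\cite{Bartholdi92howhard}, without spelling out any further details. Your write-up simply makes explicit the two observations (that~$p$ remains first in the restricted agenda and that pairwise comparisons are preserved under candidate restriction) that justify applying Lemma~\ref{lem-a} to the post-control election, which is exactly the implicit reasoning behind the paper's one-line derivation.
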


In the following subsections, we show  that some of the {\poly}-results in Corollary~\ref{cor-ccdc-dcac-amd-suc-immune} hold irrespective of the position of the distinguished candidate, while some others, although generally intractable, are {\fpt} when parameterized by the number of predecessors of the distinguished candidate.

We note that throughout the paper, for all results regarding {\prob{CCAC}}, the position of the distinguished candidate is counted with respect to candidates in $C\cup D$.

The remainder of this section focuses on cases not covered by the above corollaries. We begin by examining the $h$-amendment procedures and the $(m-h)$-amendment procedures (Section~\ref{sec-amd-h} and Section~\ref{sec-amd-m-h}). Following this, we explore the successive procedure (Section~\ref{sec-suc}).  
Within each section, we first consider control by adding or deleting voters and then study control by adding or deleting candidates. We note that $h$ is assumed to be positive and integral. 
For readability, this assumption will not be repeated throughout the discussion.  

\subsection{The h-Amendment Procedure for Constant~h}
\label{sec-amd-h}
By Corollary~\ref{cor-many-np}, {\prob{CCAV}}-$h$-Amendment is intractable if the distinguished candidate is the first one in the agenda. 
In the following, we show a similar result for the case where the distinguished candidate is the last one in the agenda.

\begin{theorem}
\label{thm-ccav-amd-np}
For every constant~$h$, {\prob{CCAV}}-$h$-Amendment is {\memph\wbh} with respect to the number of added votes plus the number of registered votes. This holds even if the distinguished candidate is the last candidate in the agenda.
\end{theorem}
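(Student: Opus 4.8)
The plan is to give a parameterized reduction from \prob{RBDS}, parameterized by~$\kappa$, to \prob{CCAV}-$h$-Amendment with the distinguished candidate last, while keeping the total number of registered plus added votes bounded by a function of~$\kappa$ (and~$h$). Since \prob{RBDS} is \wbc\ with respect to~$\kappa$, and since the standard control problems under $h$-amendment lie in \np\ (as remarked in the excerpt), such a reduction establishes \wbhns. I would work directly with the majority graph / weighted majority graph of the constructed election, invoking the characterization that the $h$-amendment winner of an election is determined by head-to-head comparisons, and exploiting Lemma~\ref{lem-a}-style reasoning near the top of the agenda.

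The construction: given a \prob{RBDS} instance $(G, (R,B), \kappa)$ with the simplifying assumptions listed in the excerpt (in particular all red vertices have the same degree~$\ell$, no isolated vertices, $\ell + \kappa \le |B|$), I would build an agenda whose \emph{first} few candidates encode the ``selection gadget'' and whose \emph{last} candidate is the distinguished candidate~$p$. Each blue vertex $b \in B$ becomes an unregistered voter $w_b$ (so that choosing $B' \subseteq B$ of size~$\kappa$ corresponds exactly to adding the votes $\{w_b : b \in B'\}$, and $\kav = \kappa$). The registered candidate set will contain~$p$, a small constant-size set of ``guard'' candidates placed in the front of the agenda (with $h$ chosen in mind: a block of roughly $h+O(1)$ candidates so that a single round of the $h$-amendment procedure compares the front candidate against a full batch that includes the red-vertex encoders), and candidates encoding the red vertices~$R$. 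The registered votes~$V$ — only $O(h)$ of them, a constant for constant~$h$ — are arranged so that, \emph{before} any addition, some front guard candidate $g$ beats the entire batch $\rhd[2, h+1]$, survives its round, and then goes on to win; in particular~$p$ does not win. Adding the votes $\{w_b : b \in B'\}$ should tip exactly the comparisons ``$g$ vs.\ (red-vertex candidate $r$)'' in favor of~$r$ precisely when $b$ dominates~$r$ in~$G$ — so that $g$ fails its first-round test if and only if $B'$ dominates every red vertex, i.e.\ iff $B'$ is a red-blue dominating set. Once~$g$ is eliminated, the remaining front candidates are set up (Condorcet-style, via Lemma~\ref{lem-a}) so that control cascades down to make~$p$, sitting last, the eventual $h$-amendment winner. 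The role of the assumption ``all red vertices have common degree~$\ell$'' is the usual one: it lets a single threshold in the registered votes work uniformly for every red-vertex comparison, so that ``$B'$ dominates~$r$'' translates into a clean majority inequality independent of~$r$; the assumption $\ell + \kappa \le |B|$ keeps the majorities from degenerating.

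The correctness argument splits as usual. For the forward direction, if $B'$ is a size-$\kappa$ dominating set, adding $\{w_b\}_{b \in B'}$ makes the guard~$g$ lose to every red-vertex candidate (each red vertex is dominated), so $g$ is eliminated in its round; I then trace the $h$-amendment procedure through the remaining rounds and show the surviving chain of comparisons hands the win to~$p$. For the backward direction, suppose adding a set $W' \subseteq W$ with $|W'| \le \kappa$ makes~$p$ the winner. Since~$p$ is last in the agenda, by the ``winner is Condorcet winner of itself and its successors'' remark together with Lemma~\ref{lem-a}-type analysis, $p$ can win only if the front guard~$g$ is eliminated in its first round, which forces $g$ to lose to \emph{every} red-vertex candidate it is batched against, which in turn forces the multiset $\{b : w_b \in W'\}$ to dominate all of~$R$; with $|W'| \le \kappa$ this yields a \prob{RBDS} solution of size at most~$\kappa$ (padding to exactly~$\kappa$ using the $\ell + \kappa \le |B|$ slack if needed). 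The parameter bound is immediate: $|V| + \kav = O(h) + \kappa$, which is a function of~$\kappa$ for fixed~$h$, as required by the definition of parameterized reduction.

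\textbf{Main obstacle.} The delicate point — and where I expect the real work to be — is engineering the front block of the agenda so that a \emph{single} round of the $h$-amendment procedure performs the ``AND over all red vertices'' test (the guard~$g$ must be compared in one batch against candidates whose fate collectively encodes ``every red vertex is dominated''), while simultaneously ensuring that (i) before control,~$g$ genuinely wins the whole election, and (ii) after~$g$ is knocked out, the elimination dynamics of the $h$-amendment procedure among the remaining front candidates are completely controlled and funnel the win to~$p$ at the bottom — all using only $O(h)$ registered votes so the combined parameter stays bounded. Getting the head-to-head weights to satisfy all these constraints simultaneously, uniformly over the (unknown) structure of~$G$ and robustly for every constant~$h$, is the crux; the assumptions imported for \prob{RBDS} (uniform red degree, the $\ell+\kappa \le |B|$ inequality) are exactly what make it possible to pick consistent thresholds.
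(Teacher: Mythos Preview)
Your plan has a genuine structural gap. You want the guard $g$'s first-round elimination to encode ``$B'$ dominates every red vertex'', but the $h$-amendment semantics point the wrong way: $g$ is eliminated in its first round as soon as it fails to beat \emph{some} candidate in $\rhd[2,h+1]$ (an OR over the batch), whereas domination of all of $R$ is an AND over $|R|$ conditions. With $h$ constant and $|R|$ unbounded, the batch cannot even contain all red-vertex candidates, and a single pairwise margin $g$ vs.\ $c$ is a linear threshold on the added votes, not a conjunction over $R$. Your ``Main obstacle'' paragraph correctly flags this, but nothing in the proposal resolves it; as stated, the forward direction would go through whenever $B'$ dominates at least one red vertex sitting in the batch, and the backward claim that elimination of $g$ ``forces $g$ to lose to \emph{every} red-vertex candidate it is batched against'' is simply not what $h$-amendment says.

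The paper sidesteps this by inverting the roles. It places a single guard $q$ first, all red-vertex candidates in the middle, and $p$ last, and uses $\kappa+3$ registered votes arranged so that (i) every $r\in R$ beats all its successors regardless of which $\le\kappa$ unregistered votes are added, (ii) after $\kappa$ unregistered votes are added $p$ beats $q$, and (iii) $q$ beats a given $r$ after the addition iff at least one added vote corresponds to a blue vertex dominating $r$. Then $p$ wins iff $q$ \emph{survives} every round until it meets $p$ --- i.e.\ iff $q$ beats every red-vertex candidate in sequence --- which is exactly the AND over $R$, realized not in one batch but by the procedure's sequential march down the agenda. For $h>1$ each red-vertex candidate is replaced by an ordered block of $h$ copies. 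The key conceptual difference from your sketch is that the guard is supposed to \emph{survive}, not die, and the conjunction is implemented by the length of the agenda rather than by a single round.
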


\begin{proof}
We first prove the theorem for the case where $h = 1$ (i.e., the amendment procedure) using a reduction from {\prob{RBDS}}. Subsequently, we demonstrate how to modify the reduction for other constant values of~$h$.  

Let $(G, \kappa)$ be an instance of {\prob{RBDS}}, where $G$ is a bipartite graph with bipartition $(R, B)$.  
We construct an instance of {\prob{CCAV}}-Amendment as follows.   
For each vertex in~$R$, we create a corresponding candidate, denoted by the same symbol for simplicity. Additionally, we introduce two candidates,~$q$ and~$p$, designating~$p$ as the distinguished candidate. Let $C = R \cup \{p, q\}$, and let $m = \abs{C}$ denote the total number of candidates.  
We create $\kappa+3$ registered votes as follows:
\begin{itemize}
\item $\kappa+1$ votes with the preference $q \Succ \overrightarrow{R} \Succ p$;
\item one vote with the preference $p\Succ \overrightarrow{R} \Succ q$; and
\item one vote with the preference $\overrightarrow{R} \Succ p\Succ q$.
\end{itemize}
Let~$V$ be the multiset of the  registered votes created above. 
Let $\rhd = (q, \overrightarrow{R}, p)$ be an agenda, where $q$ and $p$ are the first and last candidates, respectively, and $\overrightarrow{R}$ represents the ordered set of candidates between $q$ and $p$. 
Figure~\ref{fig-amd-ccav-hard} depicts the weighted majority graph of $(C, V)$ and the agenda.

\begin{figure}
    \centering
    \includegraphics[width=0.4\textwidth]{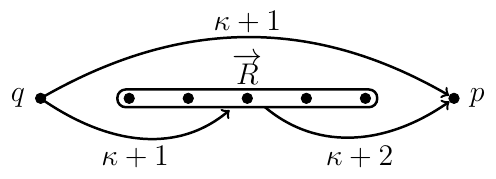}
    \caption{The weighted majority graph of $(C, V)$  and the agenda as used in the proof of Theorem~\ref{thm-ccav-amd-np}. All arcs among vertices in $R$ are forward arcs with a uniform weight of $\kappa+3$. The agenda is represented by the left-to-right ordering of the vertices.}
    \label{fig-amd-ccav-hard}
\end{figure}

The unregistered votes are created according to the blue vertices. Precisely, for each blue vertex $b\in B$, we create one vote~$\succ_b$ with the preference
\[\left(\overrightarrow{R} \setminus N_G(b)\right) \Succ p\Succ q \Succ \left(\overrightarrow{R}[N_G(b)]\right).\]
Let $W=\{\succ_b\, \setmid b\in B\}$. 
Let $k=\kappa$. The instance of \prob{CCAV}-Amendment is $(C, p, V,  W, \rhd, k)$.  
We prove the correctness of the reduction as follows.

$(\Rightarrow)$ Assume that there exists a subset $B'\subseteq B$ such that $\abs{B'}=\kappa$ and~$B'$ dominates~$R$ in~$G$. Let $W'=\{\succ_b\, \setmid b\in B'\}$ be the multiset of the~$\kappa$ unregistered votes corresponding to~$B'$. Let $\elec=(C, V\muplus W')$. We show that~$p$ is the amendment winner of~$\elec$ with respect to~$\rhd$. First, as~$B'$ dominates~$R$, from the above construction, for every candidate $r\in R$, the vote~$\succ_b$, where $b\in B'$ and~$b$ dominates~$r$, ranks~$q$ before~$r$. Therefore, in~$\elec$, at least $\kappa+2$ votes rank~$q$ before~$r$. This means that~$q$ beats all candidates from~$R$ in~$\elec$. As~$q$ is the first candidate of the agenda,~$q$ is the winner of the $(m-1)^{\text{th}}$ round. Second, as all unregistered votes rank~$p$ before~$q$,~$p$ beats~$q$ in~$\elec$. Consequently,~$p$ is the winner of the last round and hence is the amendment winner of~$\elec$ with respect to~$\rhd$.

$(\Leftarrow)$ Assume that there is a subset $W'\subseteq W$ of at most~$k$ unregistered votes such that~$p$ wins $\elec=(C, V\muplus W')$ with respect to~$\rhd$. Observe first that $\abs{W'}\geq k-1$, since otherwise~$q$ still beats all the other candidates and hence remains as the winner in~$\elec$, a contradiction. Second, observe that no matter which at most~$k$ votes are contained in~$W'$, every red vertex beats all her successors in~$\elec$. Let $B'=\{b\in B\setmid\ \succ_b\in W'\}$ be the set of blue vertices corresponding to~$W'$. We claim that~$B'$ dominates~$R$ in~$G$. Assume, for the sake of contradiction, that there exists a red vertex~$r\in R$ not dominated by any vertex from~$B'$. From the construction of the unregistered votes, it follows that all votes in~$W'$ rank~$r$ before~$q$. Consequently, there are exactly $2 + \abs{W'}$ votes in $V \cup W'$ that rank~$r$ before~$q$, which implies that~$r$  beats or ties with~$q$ in~$\elec$. Let~$r'$ denote the leftmost red vertex in the agenda that beats or ties with~$q$ in~$\elec$. Since~$r'$ beats all of its successors,~$r'$ wins~$\elec$. However, this contradicts the assumption that~$p$ is the winner of~$\elec$ with respect to~$\rhd$. Therefore,~$B'$ dominates~$R$ in~$G$.

Now we show how to modify the above reduction for the case where $h>1$. For each candidate corresponding to a red vertex in~$R$, if we make $h-1$ copies of this candidate and replace the candidate with a fixed order of these newly created candidates in the above votes, the reduction works for other values of~$h$. Precisely, for each red vertex $r\in R$, we create an ordered block of~$h$ candidates, and we replace the candidate created for~$r$ with this ordered block in all created votes. 
By using similar arguments, one can check that the {\prob{RBDS}} instance is a {\yesins} if and only if we can make~$p$ the $h$-amendment winner by adding at most~$k$ unregistered votes.
\end{proof}

For constructive control by deleting voters, we have a similar result.

\begin{theorem}
\label{thm-ccdv-amd-np}
For every constant~$h$, {\prob{CCDV}}-$h$-Amendment is {\emph\wbh} with respect to the number of deleted votes. This holds even if the distinguished candidate is the last candidate in the agenda.
\end{theorem}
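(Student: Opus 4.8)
The plan is to reduce from {\prob{RBDS}}, reusing the skeleton of the proof of Theorem~\ref{thm-ccav-amd-np} but in a ``dual'' form: where that proof flipped the head-to-head comparison between $p$ and its predecessor $q$ by \emph{adding} votes ranking $p$ above $q$, here I would flip it by \emph{deleting} votes ranking $q$ above $p$. Start from an {\prob{RBDS}} instance $(G,\kappa)$ with bipartition $(R,B)$ meeting assumptions~(1)--(4). I create a candidate for each red vertex (same symbol), two fresh candidates $q$ and $p$ with $p$ distinguished, and the agenda $\rhd=(q,\overrightarrow{R},p)$; since $p$ is last, the amendment winner is $p$ exactly when the temporary winner entering the last round is $q$ and $q$ does not beat $p$, and the votes will make every red candidate beat $p$ and make $R$ a transitive tournament consistent with $\overrightarrow{R}$, both by margins exceeding $\kappa$, so these are the \emph{only} ways $p$ can win. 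For each blue vertex $b$ I use the vote
\[\succ_b\ =\ \overrightarrow{R}[N_G(b)]\Succ q\Succ \overrightarrow{R}[R\setminus N_G(b)]\Succ p,\]
which ranks $q$ above $p$, ranks every red vertex above $p$, and satisfies $r\succ q$ iff $r\in N_G(b)$. Then I add padding votes of only the two shapes $\overrightarrow{R}\Succ p\Succ q$ and $p\Succ q\Succ\overrightarrow{R}$ --- crucially, neither ranks $q$ above $p$ --- in multiplicities chosen so that over the whole profile $q$ beats $p$ by exactly $\kappa-1$, $q$ beats every red vertex by exactly $\kappa-1$, every red vertex beats $p$ by more than $\kappa$, and every ``forward'' comparison inside $R$ has margin more than $\kappa$. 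I set $k=\kappa$.

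For the forward direction, given a dominating set $B'\subseteq B$ with $\abs{B'}=\kappa$ I delete $\{\succ_b\setmid b\in B'\}$: this removes $\kappa$ votes ranking $q$ above $p$, so $q$ no longer beats $p$; and for each red vertex $r$ the margin of $q$ over $r$ becomes $(\kappa-1)+\bigl(\abs{B'\cap N_G(r)}-\abs{B'\setminus N_G(r)}\bigr)=2\abs{B'\cap N_G(r)}-1\ge1$ because $B'$ dominates $r$, so $q$ still beats all of $R$; the remaining (robust) conditions are untouched, hence the amendment procedure eliminates the red candidates one by one against $q$ and then eliminates $q$ against $p$, making $p$ the winner. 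For the converse, suppose deleting some $W'$ with $\abs{W'}\le k$ makes $p$ the winner. Since $q$ beats $p$ by $\kappa-1$ and the only votes that rank $q$ above $p$ are the blue ones, $W'$ must contain at least $\kappa-1$ blue votes; and because every red vertex still beats $p$ and $R$ is still forward, $p$ can win only if $q$ beats every red vertex after the deletion. Writing $B'=\{b\setmid\succ_b\in W'\}$, the same margin computation shows $q$ beats $r$ iff $\abs{B'\cap N_G(r)}\ge1$, so $B'$ dominates $R$; and $\abs{B'}\le\kappa$, which (using $\abs{B}>\kappa$) yields a dominating set of size exactly $\kappa$, i.e.\ a {\yesins} of {\prob{RBDS}}. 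For a general constant $h$, I replace each red candidate by an ordered block of $h$ candidates throughout the votes, exactly as in the last paragraph of the proof of Theorem~\ref{thm-ccav-amd-np}; in every vote $q$ lies entirely before or entirely after each block, so ``$q$ beats all of the block of $r$'' is equivalent to ``$q$ beats $r$'', and the whole argument carries over.

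The crux is the simultaneous calibration of the padding multiplicities. Because deleting a blue vote that helps $q$ against one successor simultaneously hurts $q$ against the non-neighbour successors, the reduction only works if $q$ starts with a lead of \emph{exactly} $\kappa-1$ over $p$ and over \emph{every} one of its successors at once; this forces all red vertices to have a common degree $\ell$ (assumption~(3)), and one must check that the two padding shapes --- the only ones available that do not rank $q$ above $p$, so as not to spoil the forcing argument in the converse --- can be mixed to hit all four numerical targets with nonnegative counts, which is precisely where $\ell+\kappa\le\abs{B}$ (assumption~(4)) is consumed to keep ``every red vertex beats $p$'' and ``$R$ is forward'' robust against $k$ deletions. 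Once those counts are fixed, everything else is a routine check of the amendment procedure, as above.
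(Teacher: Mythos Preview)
Your proposal is correct and is essentially the paper's proof: the same candidate set $R\cup\{p,q\}$ with agenda $(q,\overrightarrow{R},p)$, the same blue-vertex votes $\overrightarrow{R}[N_G(b)]\Succ q\Succ\overrightarrow{R}[R\setminus N_G(b)]\Succ p$, and the same two padding shapes; solving your calibration constraints yields exactly the paper's multiplicities $|B|-\kappa+1-\ell$ and $\ell$. One small point to tighten in the converse: from the $q$-versus-$p$ inequality you get not just ``at least $\kappa-1$ blue deletions'' but in fact that \emph{all} deleted votes are blue (any non-blue deletion would push the needed blue count to $\kappa$ while leaving total $>\kappa$), which is what makes your margin formula $2|B'\cap N_G(r)|-1$ (or $2|B'\cap N_G(r)|$ when $|B'|=\kappa-1$) valid without extra terms.
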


\begin{proof}
Similar to the proof of Theorem~\ref{thm-ccav-amd-np}, we provide a detailed proof for the case where $h = 1$, based on a reduction from {\prob{RBDS}}. We then describe how to adapt this reduction for other values of~$h$.

Let $(G, \kappa)$ be an instance of {\prob{RBDS}}, where $G$ is a bipartite graph with the bipartition $(R, B)$. We make the following assumptions which do not change the {\wbhns} of {\prob{RBDS}}, as discussed in Subsection~\ref{subsec-supporting-problems}. First, we assume that $\abs{B}> \kappa>1$. Second, we assume that~$G$ does not contain any isolated vertices. Third, we assume that all vertices of~$R$ have the same degree~$\ell$ for some positive integer $\ell\leq \abs{B}$. Lastly, we assume that $\ell+\kappa\leq \abs{B}$.   
We construct an instance $(C, p, V, \rhd, k)$ of {\prob{CCDV}}-Amendment as follows.

The candidate set and the agenda are exactly the same as in the proof of Theorem~\ref{thm-ccav-amd-np}, i.e., $C=R\cup \{p, q\}$ and~$\rhd=(q, \overrightarrow{R}, p)$. We create $2\abs{B}-\kappa+1$ votes in~$V$:
\begin{itemize}
\item $\abs{B}-\kappa+1-\ell$ votes with the preference $\overrightarrow{R} \Succ p\Succ q$;
\item $\ell$ votes with the preference $p\Succ q\Succ \overrightarrow{R}$; and
\item for each $b\in B$, one vote~$\succ_b$ with the preference
\[\left(\overrightarrow{R}[N_G(b)]\right) \Succ q\Succ \left(\overrightarrow{R} \setminus N_{G}(b)\right) \Succ p.\]
\end{itemize} 
Let $V_B=\{\succ_b\, \setmid b\in B\}$. 
Figure~\ref{fig-amd-ccdv-wbh-k} depicts the weighted majority graph of $(C, V)$ and the agenda. 
\begin{figure}
    \centering
    \includegraphics[width=0.4\textwidth]{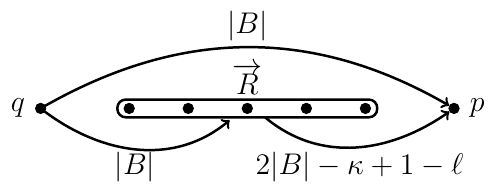}
    \caption{The weighted majority graph of $(C, V)$ and the agenda as used in the proof of Theorem~\ref{thm-ccdv-amd-np}. All arcs among the vertices in $R$ are forward arcs with a weight of at least $\abs{B} - \kappa + 1$. The agenda is represented by the left-to-right ordering of the vertices.}
    \label{fig-amd-ccdv-wbh-k}
\end{figure}
Let $k=\kappa$. One can check that under the assumptions $\kappa>1$ and $\ell+\kappa\leq \abs{B}$, all candidates beat~$p$, implying that~$p$ cannot be the amendment winner of~$(C,V)$. 
We show the correctness of the reduction as follows.

$(\Rightarrow)$ Suppose that there is a subset $B'\subseteq B$ of cardinality~$\kappa$ which dominates~$R$ in~$G$. Let $V'=\{\succ_b\, \setmid b\in B'\}$, and let $\elec=(C, V\setminus V')$. We show below that in~$\elec$,~$q$ beats all candidates from~$R$, but is beaten by~$p$, and hence~$p$ becomes the winner in~$\elec$. First, let~$r$ be any arbitrary candidate of~$R$. In~$V$ there are exactly $\ell+(\abs{B}-\ell)=\abs{B}$ votes ranking~$q$ before~$r$. As~$B'$ dominates~$R$, there exists at least one $b\in B'$ such that~$b$ dominates~$r$. According to the construction of the votes, the $\succ_b$, which is from $V'$, ranks~$r$ before~$q$, implying that there are at most $\kappa-1$ votes in~$V'$ ranking~$q$ before~$r$. As a result, in $V\setminus V'$ there are at least $\abs{B}-\kappa+1$ votes ranking~$q$ before~$r$. As $\abs{V\setminus V'}=2\abs{B}-2\kappa+1$, it holds that~$q$ beats~$r$ in~$\elec$. As all votes in~$V'$ rank~$q$ before~$p$, there are exactly $\abs{B}-\kappa+1$ votes in $V\setminus V'$ ranking~$p$ before~$q$, implying that~$p$ beats~$q$ in $\elec$.

$(\Leftarrow)$ Assume that there is a $V'\subseteq V$ such that $\abs{V'}\leq \kappa$ and~$p$ wins $\elec=(C, V\setminus V')$ with respect to~$\rhd$. Observe first that as $\ell+\kappa\leq \abs{B}$, no matter which at most~$\kappa$ votes are in~$V'$, every candidate from~$R$ beats all of its successors with respect to $V\setminus V'$. Therefore, it must be that~$q$ beats all candidates of~$R$, and~$p$ beats or ties with~$q$ in~$\elec$. The latter implies that $\abs{V'}\geq \kappa-1$ and $V'\subseteq V_B$. Let $B'=\{b\in B \setmid\: \succ_b\in V'\}$ be the set of blue vertices corresponding to~$V'$. We claim that~$B'$ dominates~$R$ in~$G$. Suppose, for the sake of contradiction, that there exists $r\in R$ not dominated by any vertex from~$B'$. From the construction of the votes,~$q$ is ranked before~$r$ in all votes in~$V'$. So, there are $(\abs{B}-\kappa+1-\ell)+\ell=\abs{B}-\kappa+1$ votes ranking~$r$ before~$q$ in~$\elec$, contradicting that~$q$ beats~$r$ in~$\elec$. So, we know that~$B'$ dominates~$R$.

To prove the {\wbhns} for other values of~$h$, we modify the above reduction in the same way as we did in the proof of Theorem~\ref{thm-ccav-amd-np}. That is, we replace in each of the above created vote the candidate created for~$r$ with an ordered block of~$h$ candidates.
\end{proof}

Now we show that the {\wbhns} of {\prob{CCDV}}-$h$-Amendment remains with respect to the dual parameter of the solution size. 

\begin{theorem}
\label{thm-ccdv-amd-wbh-remaining-votes}
For every constant~$h$, {\prob{CCDV}}-$h$-Amendment is {\emph\wbh} with respect to the number of votes not deleted. This holds even if the distinguished candidate is the last one in the agenda.
\end{theorem}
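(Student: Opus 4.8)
The plan is to establish the result by a reduction from \prob{RBDS} parameterized by $\kappa$, reusing and modifying the construction from the proof of Theorem~\ref{thm-ccdv-amd-np} so that the \emph{number of votes not deleted} (rather than the number of deleted votes) becomes the quantity bounded by a function of $\kappa$. In the construction of Theorem~\ref{thm-ccdv-amd-np}, the total number of votes is $2\abs{B}-\kappa+1$ and we delete $\kappa$ of them, so the number of votes not deleted is roughly $2\abs{B}-2\kappa$, which is \emph{not} bounded in $\kappa$. To fix this, I would redesign the vote multiset so that the ``bulk'' of the votes is what gets deleted and only a bounded-in-$\kappa$ number of votes survives a successful control action. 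Concretely, I would still use candidate set $C = R\cup\{p,q\}$ and agenda $\rhd=(q,\overrightarrow{R},p)$, but now introduce, for each blue vertex $b\in B$, a vote $\succ_b$ that ranks $q$ behind the neighbors of $b$ (so deleting $\succ_b$'s complement contributes to pushing $q$ ahead of the red vertices it dominates), together with a bounded number of ``fixed'' votes that encode the head-to-head relations involving $p$, and set things up so that to make $p$ win we must \emph{keep} only the $\kappa$ votes $\{\succ_b : b\in B'\}$ for a dominating set $B'$, plus $O(1)$ auxiliary votes, deleting everything else. Then $\abs{V}-k = O(\kappa)$.

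The correctness argument would mirror the two directions in Theorem~\ref{thm-ccdv-amd-np}. In the forward direction, given a dominating set $B'\subseteq B$ with $\abs{B'}=\kappa$, I delete all votes except $\{\succ_b : b\in B'\}$ and the auxiliary votes; I then verify that (i) every red vertex still beats all of its successors in the surviving profile (so no red vertex can be the amendment winner), (ii) $q$ beats every red vertex, because each red vertex $r$ is dominated by some $b\in B'$ and $\succ_b$ ranks $q$ ahead of $r$, combined with the auxiliary votes, and (iii) $p$ beats $q$. Then $q$ is the winner of round $m-1$ and $p$ wins the final round. In the backward direction, given any set of at most $k$ deleted votes making $p$ win, I argue first that the surviving votes must be almost exactly $\{\succ_b : b\in B'\}$ plus the auxiliary votes for some $B'\subseteq B$ (the count forces this), and then that $B'$ must dominate $R$: if some $r\in R$ were undominated, the auxiliary votes alone would make $r$ beat or tie $q$, whence the leftmost such red vertex would be the amendment winner instead of $p$, a contradiction. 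Finally, the extension to general constant $h$ is handled exactly as in Theorems~\ref{thm-ccav-amd-np} and~\ref{thm-ccdv-amd-np}: replace the candidate for each $r\in R$ by an ordered block of $h$ candidates in every vote, and observe the same arguments go through with the red blocks playing the role of the single red candidates.

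The main obstacle I anticipate is calibrating the multiplicities of the auxiliary (non-$\succ_b$) votes so that three constraints hold \emph{simultaneously}: (a) before any deletion, $p$ loses (so the instance is nontrivial and $p$ is not already the winner); (b) after deleting everything but $\kappa$ of the $\succ_b$ votes plus the auxiliaries, $q$ beats every red vertex that is dominated yet a suitable red vertex beats/ties $q$ when some red vertex is left undominated; and (c) the bound $\abs{V}-k$ stays $O(\kappa)$, which means the auxiliary votes must number $O(\kappa)$ while still being numerous enough to dominate the $\succ_b$ contributions in the relevant head-to-head comparisons. The assumptions imported from Subsection~\ref{subsec-supporting-problems} --- that all red vertices have the same degree $\ell$, that $\ell+\kappa\le\abs{B}$, that $\abs{B}>\kappa>1$, and that $G$ has no isolated vertices --- should give exactly the uniformity needed to make these counts work out cleanly; getting the precise arithmetic right (likely a vote count of the form $c\kappa + O(1)$ with a small explicit constant $c$) is the one genuinely delicate step, and everything else is a routine adaptation of the previous proof.
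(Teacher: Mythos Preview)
Your plan is sound and would yield a valid proof, though with two small corrections: (a) you write that $\succ_b$ ``ranks $q$ behind the neighbors of $b$,'' but step~(ii) of your forward argument needs $q$ \emph{ahead} of $N_G(b)$; and (b) the auxiliary votes must number $\Theta(\kappa)$, not $O(1)$ (your later ``$c\kappa+O(1)$'' is the right order). A concrete instantiation is $V_1=\kappa$ copies of $p\Succ q\Succ\overrightarrow{R}$, $V_2$ one vote $\overrightarrow{R}\Succ p\Succ q$, and $\succ_b=(\overrightarrow{R}\setminus N_G(b))\Succ q\Succ\overrightarrow{R}[N_G(b)]\Succ p$, with $\abs{V}-k=2\kappa+1$; the uniform-degree assumption from Subsection~\ref{subsec-supporting-problems} is unnecessary here. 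The paper takes a slightly different route: it adds a sentinel candidate $q'$ (agenda $(q,q',\overrightarrow{R},p)$), ranked first in every $\succ_b$ and last in $V_1,V_2$, which forces $\abs{V'\cap V_B}\le\kappa$ in one stroke (else $q'$ is the Condorcet winner) and hence pins the surviving blue votes to exactly $\kappa$. Your $q'$-free route works too, but ``the count forces this'' is too quick---you must also dispatch the subcase $\abs{V'}=2\kappa+2$ with $\kappa+1$ blue votes and all auxiliaries, where $p$ merely ties $q$ and wins only if every $r$ is doubly dominated by the surviving $B'$; any $\kappa$-subset of $B'$ then dominates $R$. So the paper's extra candidate buys a one-line backward direction; your construction keeps $C$ smaller at the cost of one short extra case.
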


\begin{proof}
We prove the theorem via reductions from {\prob{RBDS}}. We consider first the case where $h=1$. Let $(G, \kappa)$ be an instance of {\prob{RBDS}}, where~$G$ is a bipartite graph with the bipartition $(R, B)$. We construct an instance of {\prob{CCDV}}-Amendment as follows. Let $C=R\cup \{q, q', p\}$, and let $\rhd=(q, q', \overrightarrow{R}, p)$. We create the following $\abs{B}+\kappa+1$ votes:
\begin{itemize}
\item a multiset~$V_1$ of~$\kappa$ votes, each with the preference $p\Succ q\Succ \overrightarrow{R}\Succ q'$;
\item a singleton~$V_2$ of one vote with the preference $\overrightarrow{R}\Succ p\Succ q\Succ q'$; and
\item for each blue vertex $b\in B$ one vote~$\succ_b$ with the preference
\[q'\Succ \left(\overrightarrow{R}\setminus N_G(b)\right) \Succ q\Succ \left(\overrightarrow{R}[N_G(b)]\right)\Succ p.\]
\end{itemize}
For a given $B'\subseteq B$, let $V_{B'}=\{\succ_b\, \setmid b\in B'\}$ be the multiset of votes created for vertices in~$B'$. Let $V=V_1\muplus V_2\muplus V_B$. In total, we have $\abs{B}+\kappa+1$ votes. Let $k=\abs{B}-\kappa$. The instance of {\prob{CCDV}}-Amendment is $(C, p, V, \rhd, k)$. Observe that after deleting at most~$k$ votes from the election~$(C, V)$, at least $2\kappa + 1$ votes remain. In the following, we establish the correctness of the reduction.

$(\Rightarrow)$ Assume that there is a $B'\subseteq B$ of cardinality~$\kappa$ such that~$B'$ dominates~$R$ in~$G$. Let $V'=V_1\muplus V_2\muplus V_{B'}$, and let $\elec=(C, V')$. Recall that $\abs{V_1}+\abs{V_2}=\kappa+1$. We claim that~$p$ is the amendment winner of~$\elec$ with respect to~$\rhd$. Clearly,~$q$ beats $q'$ in~$\elec$. For each $r\in R$, let $b\in B'$ be a vertex dominating~$r$ in~$G$. From the construction of the votes,~$q$ is ranked before~$r$ in~$\succ_b$. Therefore, there are at least $\abs{V_1}+1$ votes in~$V'$ ranking~$q$ before~$r$, implying that~$q$ beats~$r$ in $\elec$. This ensures that all candidates in $R\cup \{q'\}$ are eliminated before the last round. Furthermore, since all $\kappa+1$ votes in $V_1\cup V_2$ rank~$p$ before~$q$,~$q$ is eliminated in the last round, leaving~$p$ as the amendment winner of~$\elec$. Hence, the constructed instance of {\prob{CCDV}}-Amendment is a {\yesins}.

$(\Leftarrow)$ Suppose that there is a $V'\subseteq V$ of at least $2\kappa+1$ votes such that~$p$ is the amendment winner of~$(C, V')$ with respect to~$\rhd$. Let $\elec=(C, V')$. Observe that~$V'$ cannot contain more than $\kappa+1$ votes from~$V_B$, since otherwise~$q'$ beats all the other candidates and hence it is impossible that~$p$ is the amendment winner of~$\elec$. If~$V'$ contains exactly $\kappa+1$ votes from~$V_B$, then either~$q'$ is the amendment winner of~$\elec$ (when~$V'$ contains exactly~$\kappa$ votes from $V_1\cup V_2$), or someone in~$R$ is the amendment winner (when $(V_1\muplus V_2)\subseteq V'$), both of which contradict the winning of~$p$ in~$\elec$. Then, as $\abs{V'}\geq 2\kappa+1$, and $\abs{V_1}+\abs{V_2}=\kappa+1$, it must hold that~$V'$ contains exactly~$\kappa$ votes from~$V_B$. As a consequence,~$V'$ contains all the $\kappa+1$ votes from $V_1\muplus V_2$. Let~$V_{B'}$, $B'\subseteq B$, be the intersection of~$V'$ and~$V_{B}$. We claim that~$B'$ dominates~$R$ in~$G$. For the sake of contradiction, assume that there is a red vertex $r\in R$ not dominated by any vertex from~$B'$. Then, by the construction of the votes, all votes in~$V_{B'}$ rank~$r$ before~$q$, implying that~$r$ beats~$q$ in~$\elec$. Moreover, it is easy to see that~$q$ beats~$q'$. By the definition of the agenda~$\rhd$, this implies that the winner of the second-to-last round is from~$R$. However, as everyone in~$R$ beats~$p$ in~$\elec$, this contradicts the winning of~$p$ in~$\elec$. Since~$B'$ dominates~$R$ in~$G$, and $\abs{B'}=\abs{V_{B'}}=\kappa$, the {\prob{RBDS}} instance is a {\yesins}.

To prove the {\wbhns} for other values of~$h$, we modify the above reduction by replacing each candidate in~$R\cup \{q'\}$ with an ordered block of~$h$ candidates.
\end{proof}

Now we study destructive control by adding/deleting voters. 
Particularly, we extend the {\poly}-results for the $h$-amendment procedures in Corollary~\ref{cor-dcav-dcdv-amd-suc-p} to a unified {\fpt}-algorithm, considering the number of predecessors of the distinguished candidate in the agenda.
To achieve this, we examine an intermediate problem and formulate it using integer linear programming (ILP).

\EP
{Majority Graph Control by Editing Voters (MGCEV)}
{A set~$C$ of candidates,  two multisets~$V$ and~$W$ of votes over~$C$, an oriented graph~$G$ with the vertex set~$C$, and two integers~$k$ and~$k'$.}
{Are there $V'\subseteq V$ and $W'\subseteq W$ such that $\abs{V'}=k$, $\abs{W'}=k'$, and the majority graph of $(C, V'\muplus W')$ is exactly~$G$?}

The following lemma is well-known.

\begin{lemma}[\cite{{Frank1987,kannan87a,DBLP:journals/mor/Lenstra83}}]
\label{lem-ilp-fpt}
{\memph{ILP}} is {\memph\fpt} with respect to the number of variables.
\end{lemma}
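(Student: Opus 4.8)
The plan is to prove the (equivalent, for our purposes) statement that the \emph{feasibility} problem for integer linear programming --- given $A\in\mathbb{Q}^{m\times n}$, $b\in\mathbb{Q}^m$, decide whether $\{x\in\mathbb{Z}^n\setmid Ax\le b\}\ne\emptyset$ --- can be solved in time $f(n)\cdot L^{\bigo{1}}$, where $L$ is the bit-length of the input and $f$ is computable; the optimization version then follows by binary search on the objective value, whose range is singly exponential in $L$ and hence contributes only an $L^{\bigo{1}}$ factor. So fix the polyhedron $P=\{x\setmid Ax\le b\}$. We may assume $P$ is bounded (a feasible integer program has a solution of norm bounded in terms of $L$, so one can add such box constraints) and full-dimensional (otherwise $P$ lies in a proper affine subspace, and after a unimodular change of coordinates the problem reduces to one in dimension at most $n-1$).

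The core is an induction on the dimension $n$ driven by the flatness theorem (Khinchin): there is a function $c(n)$ such that every convex body $K\subseteq\mathbb{R}^n$ with $K\cap\mathbb{Z}^n=\emptyset$ has lattice width at most $c(n)$, i.e.\ there is a nonzero $d\in\mathbb{Z}^n$ with $\max_{x\in K}d^{\top}x-\min_{x\in K}d^{\top}x\le c(n)$. The algorithm first computes, in polynomial time via the ellipsoid method, a linear transformation $T$ making $T(P)$ ``round'': sandwiched between two concentric Euclidean balls of radii $\rho$ and $R\rho$ with $R$ depending only on $n$ (an approximate L\"owner--John ellipsoid). It then runs LLL lattice-basis reduction on the lattice $T(\mathbb{Z}^n)$. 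If the shortest reduced basis vector is short enough that the inner ball of radius $\rho$ must contain a lattice point, we conclude $P\cap\mathbb{Z}^n\ne\emptyset$ and stop. Otherwise a short vector of the \emph{dual} reduced basis yields a direction $d\in\mathbb{Z}^n$ that approximates the true lattice-width direction up to a factor $2^{\bigo{n}}$ --- still bounding the width of $P$ along $d$ by some $c'(n)$. We enumerate the (at most $c'(n)+1$) integers $t$ with $\min_{x\in P}d^{\top}x\le t\le\max_{x\in P}d^{\top}x$, and for each recurse on the $(n-1)$-dimensional integer program $P\cap\{x\setmid d^{\top}x=t\}$ (after a unimodular transformation turning $d$ into a coordinate vector).

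The recursion tree has depth at most $n$ and branching at most $c'(n)+1$, hence at most $(c'(n)+1)^n$ leaves, each requiring only polynomially many arithmetic operations (ellipsoid method, LLL). Keeping all intermediate numbers polynomially bounded in the original $L$ is handled by the Frank--Tardos preprocessing, which replaces the given coefficients by equivalent ones of bit-length polynomial in $n$; this controls the bit-complexity throughout. Altogether the running time is $g(n)\cdot L^{\bigo{1}}$ for a computable $g$, which is exactly fixed-parameter tractability with parameter~$n$.

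The main obstacle is making the flatness/rounding step rigorous: one needs the quantitative flatness theorem from the geometry of numbers, a polynomial-time approximate-John-ellipsoid routine, and a careful argument that the $2^{\bigo{n}}$ slack introduced by LLL is absorbed into $g(n)$ and never pollutes the polynomial factor. The remaining ingredients --- the degenerate-dimension reduction, the binary search for the optimization variant, and the bit-length bookkeeping --- are routine once this core is in place, and for the full technical details we refer to Lenstra~\cite{DBLP:journals/mor/Lenstra83}, Kannan~\cite{kannan87a}, and Frank and Tardos~\cite{Frank1987}.
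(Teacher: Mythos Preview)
The paper does not give its own proof of this lemma: it is stated as a well-known result, attributed to the cited references, and used as a black box (immediately after the lemma the text moves on to ``Based on Lemma~\ref{lem-ilp-fpt}, we can show the following result''). So there is no paper proof to compare against.

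Your sketch is a faithful high-level outline of Lenstra's algorithm --- rounding via an approximate L\"owner--John ellipsoid, LLL to find a short flat direction, then branching on the $O(c'(n))$ parallel hyperplanes and recursing in one lower dimension --- together with the standard add-ons (binary search for optimization, Frank--Tardos preprocessing for bit bounds). That is exactly what the cited references establish, so your proposal is correct and in fact supplies more than the paper does. One small quibble: the flatness theorem in the form you need (with an explicit computable $c(n)$) is usually credited to Khinchin only for the original qualitative statement; the quantitative bounds used in Lenstra-type algorithms are later, but this does not affect the argument.
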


Based on Lemma~\ref{lem-ilp-fpt}, we can show the following result.

\begin{theorem}
\label{thm-mgcav-fpt}
{\prob{MGCEV}} is  {\memph\fpt} with respect to the number of candidates.
\end{theorem}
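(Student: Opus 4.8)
The plan is to reduce \prob{MGCEV} to \textsf{ILP} and invoke Lemma~\ref{lem-ilp-fpt}. The key observation is that for the purpose of building the target majority graph, two votes in $V\muplus W$ that induce the same linear order on $C$ are interchangeable; since there are at most $m!$ distinct linear orders on a set of $m$ candidates, we can group the votes accordingly. So first I would fix an enumeration $\pi_1,\dots,\pi_{m!}$ of all linear orders on $C$, and for each $i\in[m!]$ let $a_i$ (respectively $b_i$) be the number of votes in $V$ (respectively $W$) whose preference is exactly $\pi_i$; these counts are computable in polynomial time. Then introduce integer variables $x_i$ and $y_i$ for $i\in[m!]$, where $x_i$ is intended to be $\abs{\{\succ\in V'\setmid\, \succ\ =\pi_i\}}$ and $y_i$ the analogous count for $W'$. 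The number of variables is $2\cdot m!$, which is a function of $m$ alone, so an \fpt-time solver applies.

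Next I would write down the constraints. The bound and selection constraints are $0\le x_i\le a_i$, $0\le y_i\le b_i$ for every $i$, together with $\sum_{i} x_i = k$ and $\sum_i y_i = k'$. It remains to encode ``the majority graph of $(C, V'\muplus W')$ is exactly $G$''. For an ordered pair of candidates $(c,d)$ with $c\neq d$, let $S_{c,d}\subseteq[m!]$ be the set of indices $i$ such that $\pi_i$ ranks $c$ before $d$; then $n_{V'\muplus W'}(c,d)=\sum_{i\in S_{c,d}}(x_i+y_i)$. For each pair $\{c,d\}$ the target graph $G$ dictates exactly one of three situations: an arc $\arc{c}{d}$ (so we need $\sum_{i\in S_{c,d}}(x_i+y_i) \ge \sum_{i\in S_{d,c}}(x_i+y_i)+1$), an arc $\arc{d}{c}$ (the symmetric strict inequality), or no arc between $c$ and $d$ (so $G$ being an oriented graph means $c$ ties with $d$, i.e.\ $\sum_{i\in S_{c,d}}(x_i+y_i)=\sum_{i\in S_{d,c}}(x_i+y_i)$). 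Each of these is a linear (in)equality over the $x_i,y_i$, and there are $O(m^2)$ of them. A feasible integral assignment yields $V'$ and $W'$ by picking, for each $i$, arbitrary $x_i$ votes of type $\pi_i$ from $V$ and $y_i$ of type $\pi_i$ from $W$; conversely any valid solution $(V',W')$ yields a feasible assignment via the counts. Hence the \prob{MGCEV} instance is a \yesins{} iff the ILP is feasible, and since the number of variables is $2\cdot m!$, Lemma~\ref{lem-ilp-fpt} gives an \fpt{}-algorithm parameterized by $m=\abs{C}$.

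There is essentially no deep obstacle here; the construction is routine once one sees the grouping-by-preference-order trick. The only point that requires a little care is the handling of ties: because $G$ is required to be \emph{exactly} the majority graph and is an oriented graph, a non-arc between $c$ and $d$ must be translated into an equality constraint rather than merely the absence of a strict inequality, and I would make sure the write-up states this explicitly. A secondary bookkeeping point is that the sizes $\abs{V'}=k$ and $\abs{W'}=k'$ are demanded with equality in the problem definition, so I would use equality (not $\le$) in the corresponding ILP constraints. I would close by noting, as the paper's surrounding text indicates, that this theorem is the engine behind the \fpt-results for voter control parameterized by $m$: the eight control problems for $h$-amendment and successive procedures can be handled by guessing, in \fpt\ time in $m$, the target majority graph together with the relevant structural data, and then calling \prob{MGCEV}.
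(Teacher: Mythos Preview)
Your proposal is correct and follows essentially the same approach as the paper: group votes by their linear order, introduce one integer variable per (order, multiset) pair, and encode the cardinality, bound, and majority-graph constraints as an ILP with $2\cdot m!$ variables, then invoke Lemma~\ref{lem-ilp-fpt}. The only cosmetic difference is that the paper phrases the arc and non-arc constraints relative to the threshold $(k+k')/2$ rather than as a direct comparison of the two sums, which is equivalent since $\sum_i(x_i+y_i)=k+k'$.
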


\begin{proof}
Let $I=(C, V, W, G, k, k')$ be an instance of {\prob{MGCEV}}. Let $m=\abs{C}$ denote the number of candidates. Let~$\mathcal{L}$ be the set of all linear orders on~$C$. It holds that $\abs{\mathcal{L}}=m!$. For each $\succ\in \mathcal{L}$, we use $n_V^{\succ}$ (respectively, $n_W^{\succ}$) to denote the number of votes in~$V$ (respectively,~$W$) with the preference~$\succ$. For each $\succ \in \mathcal{L}$, we introduce two nonnegative integer variables, $x_{\succ}$ and $y_{\succ}$, which represent the number of votes in $V$ and $W$, respectively, with the preference $\succ$ that are included in a desired feasible solution. The constraints are as follows.
\begin{itemize}
\item As we aim to identify two multisets $V'\subseteq V$ and $W'\subseteq W$, with cardinalities~$k$ and~$k'$, respectively, we have the following constraints: $\sum_{\succ\in \mathcal{L}}x_{\succ}=k$ and $\sum_{\succ\in \mathcal{L}} y_{\succ}=k'$.
\item To ensure that the majority graph of $(C, V'\muplus W')$ is exactly~$G$, we create the following two classes of constraints:
\begin{itemize}
\item for every arc from a candidate~$c$ to another candidate~$c'$ in the graph~$G$, we have that
\[
\sum_{\succ \in \mathcal{L} : c \succ c'} x_{\succ} + \sum_{\succ \in \mathcal{L} : c \succ c'} y_{\succ} > \frac{k + k'}{2},
\]
\item for every two candidates~$c$ and~$c'$ in~$G$ without an arc between them, we require that
\[\sum_{\succ\in \mathcal{L} : c\succ c'}x_{\succ}+\sum_{\succ\in\mathcal{L} : c\succ c'}y_{\succ}=\frac{k+k'}{2}.\]
\end{itemize}
\item Lastly, for each variable~$x_{\succ}$ and~$y_{\succ}$, we have that $n_V^{\succ}\geq x_{\succ}\geq 0$ and $n_W^{\succ}\geq y_{\succ}\geq 0$.
\end{itemize}
This completes the construction of the ILP instance. Clearly, the given instance~$I$ is a {\yesins} if and only if the above ILP has a feasible solution. The theorem follows from Lemma~\ref{lem-ilp-fpt} and the fact that the number of variables of the above ILP is bounded from above by a function in~$m$.
\end{proof}

Before presenting our parameterized results concerning the number of predecessors of the distinguished candidate, we note that Theorem~\ref{thm-mgcav-fpt} can be straightforwardly leveraged to derive {\fpt}-algorithms for the four standard voter control problems under the $h$-amendment procedures and the successive procedure, with respect to the number~$m$ of candidates. To verify this, first observe that the majority graph of an election and an agenda suffices to determine the winner. Based on this observation, solving a voter control problem can be reduced to enumerating all oriented graphs over the candidate set as vertices, such that the distinguished candidate $p$ is either the winner (in the case of constructive control) or not the winner (in the case of destructive control) for the enumerated graph and the given agenda.  
For each enumerated graph, we verify whether it is possible to add or delete a given number of voters so that the majority graph of the resulting election matches the enumerated graph. This verification is equivalent to solving the {\prob{MGCEV}} problem. Theorem~\ref{thm-mgcav-fpt} guarantees that this can be performed in {\fpt}-time with respect to~$m$. The original instance is a {\yesins} if and only if at least one enumeration yields a {\yes}-instance of the {\prob{MGCEV}} problem. Since the number of oriented graphs with $m$ vertices is bounded above by a function of $m$, it follows that the voter control problems are {\fpt} when parameterized by~$m$.

Returning to our main focus, we now present stronger {\fpt}-results for the destructive voter control problems by leveraging Theorem~\ref{thm-mgcav-fpt}. Specifically, we establish the fixed-parameter tractability of a generalized problem that subsumes both the {\prob{DCAV}} problem and  the {\prob{DCDV} problem as special cases. This result is parameterized by the number of predecessors of the distinguished candidate, which is strictly smaller than~$m$.

\EP
{\prob{Exact Destructive Control by Editing Voters for~$\tau$} (\prob{E-DCEV-$\tau$})}
{A set~$C$ of candidates, a distinguished candidate $p\in C$, a multiset~$V$ of registered votes over~$C$, a multiset $W$ of unregistered votes over~$C$, an agenda~$\rhd$ on~$C$, and two nonnegative integers~$k$ and~$k'$.}
{Are there $V'\subseteq V$ and $W'\subseteq W$ such that $\abs{V'}=k$, $\abs{W'}=k'$, and~$p$ is not the~$\tau$ winner of $(C, V'\muplus W')$ with respect to~$\rhd$?}

\begin{theorem}
\label{thm-edcadv-amd-p}
{\prob{E-DCEV}}-$h$-Amendment is {\memph\fpt} with respect to~$h$ plus the number of predecessors of the distinguished candidate.
\end{theorem}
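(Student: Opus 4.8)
The plan is to reduce the problem, via a structural analysis of the $h$-amendment procedure, to two independently tractable subproblems. Write the agenda as $\rhd=(c_1,\dots,c_t,p,d_1,d_2,\dots)$, where $c_1,\dots,c_t$ are the predecessors of $p$ and $d_1,d_2,\dots$ its successors, and let $R^*=\{c_1,\dots,c_t\}\cup\{p\}\cup\{d_1,\dots,d_{h-1}\}$ (omitting any $d_i$ that does not exist), so that $R^*$ is a prefix of $\rhd$ of length at most $t+h$. The first thing I would prove is a structural claim: when the $h$-amendment procedure is run on \emph{any} profile $U$ over $C$ with agenda $\rhd$, every round whose current front candidate is a predecessor of $p$ compares only candidates lying in $R^*$, and no successor of $p$ is eliminated in such a round unless $p$ itself is eliminated in that very round. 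From this the run on $U$ splits into exactly two cases: either $p$ is eliminated before ever reaching the front of the current agenda, or $p$ survives to the front with \emph{all} of its successors still present, in which case Lemma~\ref{lem-a} says $p$ is the $h$-amendment winner iff $p$ beats every successor. Hence $p$ is \emph{not} the $h$-amendment winner of $(C,U)$ iff \textnormal{(i)}~some successor of $p$ beats or ties $p$ in $U$, or \textnormal{(ii)}~$p$ is eliminated before reaching the front; and crucially, whether (ii) happens depends only on the majority graph of $(C,U)$ restricted to $R^*$ together with $\rhd$ restricted to $R^*$.

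Given this characterization, the algorithm decides (i) and (ii) separately. For (i): for each successor $d$ of $p$, greedily compute the maximum of $n_{V'\uplus W'}(d,p)-n_{V'\uplus W'}(p,d)$ over all $V'\subseteq V$ with $\abs{V'}=k$ and $W'\subseteq W$ with $\abs{W'}=k'$ (from each of $V$ and $W$, take as many votes ranking $d$ before $p$ as the quota allows, then fill the quota with votes ranking $p$ before $d$), and accept if this quantity is nonnegative for some $d$; this is polynomial time. For (ii): enumerate all oriented graphs $G^*$ on vertex set $R^*$ --- at most $3^{\binom{\abs{R^*}}{2}}\le 3^{\binom{t+h}{2}}$ of them --- and for each, simulate the $h$-amendment procedure on $(R^*,G^*,\rhd|_{R^*})$ up to the moment $p$ is eliminated or reaches the front, checking in polynomial time which occurs. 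For each $G^*$ that eliminates $p$, it remains to test whether $V',W'$ of the prescribed sizes can be chosen so that the majority graph of $(C,V'\uplus W')$ restricted to $R^*$ equals $G^*$; I would encode this as an integer linear program with variables $x_\rho$ and $y_\rho$ counting, for each linear order $\rho$ on $R^*$, how many votes of $V$ and of $W$ whose restriction to $R^*$ equals $\rho$ are selected, subject to $0\le x_\rho\le n_V^\rho$, $0\le y_\rho\le n_W^\rho$, $\sum_\rho x_\rho=k$, $\sum_\rho y_\rho=k'$, and, for every pair $\{a,b\}\subseteq R^*$, the constraint $2\sum_{\rho:\,a\succ_\rho b}(x_\rho+y_\rho)\ge k+k'+1$, its mirror, or $2\sum_{\rho:\,a\succ_\rho b}(x_\rho+y_\rho)=k+k'$, according to whether $G^*$ has the arc $\arc{a}{b}$, the arc $\arc{b}{a}$, or no arc between $a$ and $b$. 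This ILP has at most $2\cdot(t+h)!$ variables, so by Lemma~\ref{lem-ilp-fpt} it is solvable in {\fpt}-time with respect to $t+h$; since both the number of graphs $G^*$ and the ILP size are bounded by functions of $t+h$, the entire procedure is {\fpt} with respect to $h$ plus the number of predecessors of $p$.

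For correctness, the structural claim ensures that the behaviour of $(C,V'\uplus W')$ up to $p$'s elimination or arrival at the front is identical to that of $(R^*,G^*,\rhd|_{R^*})$ when $G^*$ is the restriction of its majority graph to $R^*$; ranging over all $G^*$ and taking the union of those causing $p$'s elimination therefore captures exactly the profiles witnessing (ii), while the greedy step captures (i), and together these are precisely the profiles under which $p$ fails to win. I expect the main obstacle to be the structural claim --- in particular the assertion that a successor of $p$ cannot be eliminated in a predecessor's round without $p$ going out in the same round --- since this is what licenses invoking Lemma~\ref{lem-a} with the \emph{full} successor set once $p$ reaches the front. The only other point needing care is restricting the ILP variables to orders on $R^*$ rather than on all of $C$, which is legitimate because $n_{V'\uplus W'}(a,b)$ for $a,b\in R^*$ depends on the selected votes only through their restrictions to $R^*$.
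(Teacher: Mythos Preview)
Your proof is correct and follows essentially the same approach as the paper: both reduce to enumerating all oriented graphs on a bounded-size prefix of the agenda (your $R^*$ of size at most $t+h$; the paper uses $\rhd[1,\ell']$ with $\ell'=\min\{\ell+h+1,m\}$, one candidate larger) and then testing realizability via an ILP whose variables are indexed by linear orders on that prefix, which is exactly the content of the paper's Theorem~\ref{thm-mgcav-fpt} on {\prob{MGCEV}}. The only difference is in how successors of $p$ beyond the prefix are treated: the paper guesses a single such successor $q$, adjoins it to the prefix, and enumerates configurations on the enlarged set, whereas you separate this out as condition~(i) and dispatch it with an independent greedy maximization per successor---a slightly cleaner decomposition that avoids the extra $\abs{C}$ factor in the configuration enumeration, but otherwise the arguments coincide.
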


\begin{proof}
Let $I=(C, p, V, W, \rhd, k, k')$ be an instance of {\prob{E-DCEV}}-$h$-Amendment. 
Let~$\ell$ be the number of predecessors of~$p$ in the agenda~$\rhd$. Therefore, $p=\rhd[\ell+1]$. Let $\ell'=\min \{\ell+h+1, m\}$.
In the following, we derive an algorithm running in {\fpt}-time in~$\ell'$.

Observe that~$p$ is not the $h$-amendment winner of an election $(C, U)$  where $U\subseteq V\cup W$ with respect to~$\rhd$ if and only if one of the following two cases occurs:
\begin{enumerate}
\item[(1)] $p$ is not the {\gamend} winner of $(C, U)$ restricted to $\rhd[1,\ell']$ (with respect to the agenda~${\rhd}$ restricted to $\rhd[1, \ell']$).
\item[(2)] $p$ is the {\gamend} winner of $(C, U)$ restricted to $\rhd[1,\ell']$, but some candidate after~$\rhd[\ell']$ in the agenda beats or ties with~$p$ (in this case it must be that $\ell'<m$).
\end{enumerate}

For the first case, only the head-to-head comparisons among candidates in $\rhd[1, \ell']$ matter. For the second case, in addition to $\rhd[1, \ell']$, we need to guess one candidate $q\in \rhd[\ell'+1, m]$ who is supposed to beat or tie with~$p$ in the final election. Once such a candidate~$q$ is fixed, we need only to focus on the head-to-head comparisons among candidates in $\rhd[1, \ell']\cup \{q\}$.

Our algorithm proceeds as follows.
We define a configuration as an oriented graph whose vertices are a subset of candidates. 
We enumerate all possible configurations, with our focused candidates being the vertices, and check if at least one of the enumerated configurations leads to the occurrence of at least one of the above two cases. Precisely, the enumerated configurations consist of the following oriented graphs:
\begin{itemize}
\item all oriented graphs whose vertex set is $\rhd[1, \ell']$; and
\item for each $q\in C\setminus \rhd[1, \ell']$, all oriented graphs whose vertex set is $\rhd[1, \ell']\cup \{q\}$.
\end{itemize}
It is clear that there are {\fpt}-many configurations to enumerate with respect to~$\ell'$.
For each configuration~$H$, we check in polynomial time whether one of the two cases occurs, based on the definition of the $h$-amendment procedure. More precisely, we first compute the $h$-amendment winner of~$H$ with respect to the agenda~$\rhd$ restricted to only our focused candidates (corresponding to vertices of~$H$). If~$H$ has vertex set $\rhd[1, \ell']$ and the winner is not~$p$, this configuration results in Case~(1). If~$H$ has vertex set $\rhd[1, \ell']\cup \{q\}$, where~$q\in C\setminus \rhd[1, \ell']$, such that~$p$ is the $h$-amendment winner of~$H-\{q\}$ with respect to~$\rhd$, and there is no arc from~$p$ to~$q$ in~$H$,~then this configuration results in Case~(2). If a configuration results in neither Case~(1) nor Case~(2), we discard it. Otherwise, we utilize the {\fpt}-algorithm in Theorem~\ref{thm-mgcav-fpt} to check if there are desired $V'\subseteq V$ and $W'\subseteq W$ that achieve the configuration.
To be precise, let~$C'$ be the vertex set of~$H$. 
Then, we solve an instance $(C', V|_{C'}, W|_{C'}, H, k, k')$ of {\prob{MGCEV}} by Theorem~\ref{thm-mgcav-fpt}, and return the output for the configuration.

If we obtain a {\yes}-answer from at least one of the enumerated configurations, we conclude that~$I$ is a {\yesins}; otherwise we conclude that~$I$ is a {\noins}.
\end{proof}

The following result is a consequence of Theorem~\ref{thm-edcadv-amd-p}.

\begin{corollary}
\label{cor-dcav-dcdv-amd-fpt}
{\prob{DCAV}}-$h$-Amendment and {\prob{DCDV}}-$h$-Amendment are {\memph\fpt} with respect to the combined parameter of~$h$ and the number of predecessors of the distinguished candidate.
\end{corollary}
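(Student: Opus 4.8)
The plan is to reduce both {\prob{DCAV}}-$h$-Amendment and {\prob{DCDV}}-$h$-Amendment to {\prob{E-DCEV}}-$h$-Amendment and then invoke Theorem~\ref{thm-edcadv-amd-p}. The main (and essentially only) subtlety is that {\prob{E-DCEV}} demands the \emph{exact} cardinalities $\abs{V'}=k$ and $\abs{W'}=k'$, whereas the destructive control problems allow \emph{at most} $k$ added or deleted votes; I would bridge this by trying every admissible exact value, of which there are only polynomially many, so that no monotonicity property of destructive control is needed.

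For {\prob{DCAV}}-$h$-Amendment, take an instance $(C, p, V, W, \rhd, k)$. Since no registered vote is ever removed, in {\prob{E-DCEV}} terms all of $V$ must be kept, so I would fix the ``kept registered'' cardinality to $\abs{V}$ and let the number of added unregistered votes be an exact value $k_2 \in \{0, 1, \dots, \min\{k, \abs{W}\}\}$. For each such $k_2$ I would build the {\prob{E-DCEV}}-$h$-Amendment instance $(C, p, V, W, \rhd, \abs{V}, k_2)$ and solve it by Theorem~\ref{thm-edcadv-amd-p}. Any feasible $W'$ for the original instance satisfies $\abs{W'}\le\min\{k,\abs{W}\}$ and witnesses the {\prob{E-DCEV}} instance with $k_2=\abs{W'}$; conversely any {\prob{E-DCEV}} witness whose kept-registered part is all of $V$ is a feasible solution for {\prob{DCAV}}. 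Hence the {\prob{DCAV}} instance is a \yesins{} iff at least one of the (at most $\abs{W}+1$) constructed instances is.

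The treatment of {\prob{DCDV}}-$h$-Amendment is symmetric. Given $(C, p, V, \rhd, k)$ with no unregistered votes, deleting at most $k$ votes amounts to keeping an exact number $k_1$ of registered votes with $\max\{0, \abs{V}-k\}\le k_1\le \abs{V}$; for each such $k_1$ I would form the instance $(C, p, V, \emptyset, \rhd, k_1, 0)$ of {\prob{E-DCEV}}-$h$-Amendment, and the original is a \yesins{} iff one of these (at most $\abs{V}+1$) instances is.

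It remains to track the parameter: in every constructed {\prob{E-DCEV}} instance the candidate set, the distinguished candidate, and the agenda are copied verbatim, so the number of predecessors of $p$ and the value of $h$ are unchanged. By Theorem~\ref{thm-edcadv-amd-p} each call runs in {\fpt}-time with respect to $h$ plus the number of predecessors of $p$, and since only polynomially many calls are made, the overall procedure is {\fpt} with respect to the same combined parameter, as claimed. I do not expect any genuine obstacle here; the only point needing (mild) care is the exact-versus-at-most bookkeeping described above.
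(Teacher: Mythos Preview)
Your proposal is correct and follows essentially the same approach as the paper, which simply states the corollary as a consequence of Theorem~\ref{thm-edcadv-amd-p} without spelling out details. Your handling of the exact-versus-at-most cardinality issue by iterating over the polynomially many admissible exact values is the natural way to make the implied reduction precise.
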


In contrast to these tractability results, we show that if the distinguished candidate moves to the last position of the agenda, both problems become intractable from the parameterized complexity point of view.

\begin{theorem}
\label{thm-dcav-amd-np}
For every constant~$h$, {\prob{DCAV}}-$h$-Amendment is {\memph\wah} with respect to the number of added votes plus the number of registered votes. This holds even if the distinguished candidate is the last one in the agenda.
\end{theorem}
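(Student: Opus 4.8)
The plan is to prove Theorem~\ref{thm-dcav-amd-np} by a parameterized reduction from a \wac{} problem whose parameter translates into a bounded number of votes — \prob{Clique}, or \prob{Biclique}, which underlies several of our other \wahns{} results — so that in the produced instance both the number of registered votes and the solution size~$k$ are bounded by a function of the source parameter~$\kappa$, which is precisely what the combined parameter in the statement requires. As in Theorems~\ref{thm-ccav-amd-np} and~\ref{thm-ccdv-amd-np}, I would first carry out the reduction for the amendment procedure (the case $h=1$) and then lift it to every constant~$h$ by replacing each relevant candidate with an ordered block of~$h$ copies; since this substitution only enlarges the candidate set and leaves the multiset of (rewritten) votes with the same cardinality, it preserves the parameter bound.

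For $h=1$, I would take the candidates to be one per vertex of the input graph (for \prob{Biclique}, one per vertex on one side) together with the distinguished candidate~$p$, and fix the agenda so that~$p$ is the \emph{last} candidate, with the graph-candidates occupying the prefix $\rhd[1, m-1]$. The structural fact I would build on is that, when~$p$ is last, $p$ is the $h$-amendment winner if and only if $p$ beats the $h$-amendment winner $c^{\star}$ of the election restricted to $\rhd[1, m-1]$ — the ``champion'' produced by the king-of-the-hill on the prefix — so $p$ can be dethroned exactly when some candidate simultaneously (i) survives the prefix as that champion and (ii) beats~$p$. Accordingly I would calibrate the registered votes so that in the base election $p$ is the Condorcet winner (hence, by Condorcet-consistency of $h$-amendment, the $h$-amendment winner), beating each graph-candidate by a carefully chosen small margin, and I would let the unregistered votes encode the graph: roughly one vote per edge (for \prob{Clique}), each lifting the two endpoint-candidates just above~$p$ while keeping every other graph-candidate below~$p$. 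Adding the votes that encode a solution then hands exactly $\kappa-1$ such ``liftings'' to each member of a block of~$\kappa$ candidates — just enough to push one of them above~$p$ and, through the order those votes impose on the graph-candidates among themselves, to make that same candidate the prefix champion, so that~$p$ loses.

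The direction that carries the weight is the converse, together with the coupling of requirements~(i) and~(ii): a set of at most~$k$ added votes simultaneously shifts the head-to-head margin of \emph{every} pair of candidates in $\rhd[1, m-1]$, so a ``lopsided'' selection — say one that piles many edges onto a single high-degree vertex — could a priori both raise some graph-candidate above~$p$ and make it the prefix champion without encoding a genuine clique (respectively biclique). Ruling out such false positives is precisely why the source problem should be \prob{Clique}/\prob{Biclique} rather than a covering-type problem like \prob{RBDS}, and it is what forces the margins in the registered votes — and, if needed, one or two auxiliary ``gate'' candidates placed immediately before~$p$ in the agenda — to be set with precision. Establishing that under \emph{every} admissible set of added votes the prefix champion is controlled, and beats~$p$ only when a true solution has been selected, is the technical heart of the proof; the parameter accounting and the $h>1$ lift (checking that the key ``beats~$p$'' and ``order among graph-candidates'' relations survive the block substitution) are then routine.
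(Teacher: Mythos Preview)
Your overall scaffolding is right: reduce from a \wac{} problem so that both $|V|$ and $k$ are bounded in~$\kappa$, put $p$ last in the agenda, and lift from $h=1$ to general constant~$h$ by ordered blocks. Where your plan diverges from the paper — and where I see a genuine gap — is in the choice of source problem and the placement of the auxiliary candidate(s).

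The paper does \emph{not} try to make a graph-candidate become the prefix champion. Instead it reduces from \prob{Perfect Code}, introduces a single auxiliary candidate~$q$ placed at the \emph{front} of the agenda (not immediately before~$p$), and sets up the registered votes so that~$p$ beats every candidate in $X\cup Y$ with a margin that no $k$~added votes can overturn. Consequently the only way to dethrone~$p$ is for~$q$ to win, and since~$q$ is first in the agenda, $q$ wins iff $q$ beats everyone. The converse direction then becomes a pure counting argument: for each vertex~$\vere_i$ the margins are calibrated so that~$q$ beats~$x_i$ iff \emph{at most one} selected vote lifts~$x_i$ over~$q$, and~$q$ beats~$y_i$ iff \emph{at least one} selected vote lifts~$q$ over~$y_i$. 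These ``at most one / at least one'' conditions are exactly the definition of a perfect code, and they are what \prob{Clique} does not give you.

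Your proposal, by contrast, leaves the identity of the prefix champion genuinely up for grabs among the graph-candidates. With one unregistered vote per edge and budget~$\binom{\kappa}{2}$, a lopsided selection (all edges incident to a single high-degree vertex~$v$) gives~$v$ up to~$\binom{\kappa}{2}$ lifts over~$p$, far more than the $\kappa-1$ you intend, and nothing in the sketch prevents such a~$v$ from also being the prefix champion. You acknowledge this is ``the technical heart of the proof'' but offer no mechanism to block it; placing a gate \emph{immediately before}~$p$ does not help, because the gate then faces whoever survived the graph prefix and cannot distinguish a clique-witness from a star-witness. The missing idea is precisely the paper's: move the auxiliary to the front so that ``dethroning~$p$'' collapses to a Condorcet condition on one fixed candidate, and pick a source problem (\prob{Perfect Code}) whose structure matches tight two-sided margin constraints.
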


\begin{proof}
We begin with a proof for the amendment procedure (i.e., when $h=1$), based on the {\prob{Perfect Code}} problem. Let $(G, \kappa)$ be an instance of the {\prob{Perfect Code}} problem, where~$G$ is a graph with vertex set $\vset$. We denote the vertex set as $\vset = \{\vere_1, \dots, \vere_n\}$. We create an instance of {\prob{DCAV}}-Amendment as follows.

For each $\vere_i \in \vset$, we create two candidates, denoted by $x_i$ and $y_i$, respectively. Let $X=\{x_i \setmid i\in [n]\}$, and let $Y=\{y_i \setmid i\in [n]\}$. In addition, we introduce two candidates,~$p$ and~$q$, where~$p$ is the distinguished candidate. Let $C=X\cup Y\cup \{p, q\}$. The agenda is~$\rhd=(q, \overrightarrow{X}, \overrightarrow{Y}, p)$. 
We create the following $3\kappa+3$ registered votes:
\begin{itemize}
\item one vote with the preference $p\Succ q\Succ \overrightarrow{X} \Succ \overrightarrow{Y}$;
\item $\kappa+2$  votes with the preference $q\Succ p\Succ \overrightarrow{X} \Succ \overrightarrow{Y}$;
\item $\kappa-2$ votes with the preference $p\Succ \overrightarrow{X}\Succ q\Succ \overrightarrow{Y}$; and
\item $\kappa+2$ votes with the preference $p\Succ \overrightarrow{X}\Succ \overrightarrow{Y} \Succ q$.
\end{itemize}
Let~$V$ denote the multiset of the above registered votes. Figure~\ref{fig-amd-dcav-hard} shows the weighted majority graph of $(C, V)$ and the agenda.
As~$p$ beats every other candidate,~$p$ is the amendment winner of~$(C, V)$.

\begin{figure}
    \centering
    \includegraphics[width=0.45\textwidth]{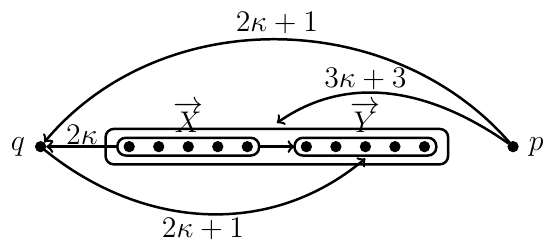}
    \caption{An illustration of the weighted majority graph of $(C, V)$ and the agenda as used in the proof of Theorem~\ref{thm-dcav-amd-np}. All arcs among the vertices in $X$ (respectively,~$Y$) are forward arcs with a uniform weight of $3\kappa + 3$. The agenda is represented by the left-to-right ordering of the vertices.}
    \label{fig-amd-dcav-hard}
\end{figure}

For each $\vere_i\in \vset$, let $N_X[\vere_i]$ (respectively,~$N_Y[\vere_i]$) be the set of candidates from~$X$ (respectively,~$Y$) corresponding to the closed neighbors of~$\vere_i$ in~$G$. Precisely, $N_X[\vere_i]=\{x_j\in X \setmid \vere_j\in N_G[\vere_i]\}$ and $N_Y[\vere_i]=\{y_j\in Y \setmid \vere_j\in N_G[\vere_i]\}$.
For each $\vere_i\in \vset$, we create one unregistered vote~$\succ_{i}$ with the preference
\[\overrightarrow{X}[N_X[\vere_i]] \Succ \left(\overrightarrow{Y}\setminus N_Y[\vere_i]\right) \Succ q\Succ p\Succ \left(\overrightarrow{X}\setminus N_X[\vere_i]\right)\Succ \overrightarrow{Y}[N_Y[\vere_i]].\]
Let $W=\{\succ_i\, \setmid \vere_i\in \vset\}$.

Let $k=\kappa$. The instance of {\prob{DCAV}}-Amendment is  $(C, p, V, W, \rhd, k)$. 
It remains to prove the correctness of the reduction.

$(\Rightarrow)$ Suppose that~$G$ has a perfect code~$S$ of size~$\kappa$. Let~$W'$ be the set of unregistered votes corresponding to~$S$. Let $\elec=(C, V\muplus W')$. We show below that~$q$ wins~$\elec$ by proving that~$q$ beats every other candidate with respect to $V\muplus W'$. First, as all unregistered votes rank~$q$ before~$p$, there are exactly $2\kappa+2$ votes in~$\elec$ ranking~$q$ before~$p$, implying that~$q$ beats~$p$ in~$\elec$. Let $x_i$ be a candidate from~$X$. Because~$S$ is a perfect code, there is exactly one $u_j\in S$ such that $x_i\in N_X[u_j]$. From the construction of the unregistered votes, we know that there are exactly $\kappa-1$ votes in~$W'$ ranking~$q$ before~$x_i$, resulting in $\kappa+3+\kappa-1=2\kappa+2$ votes ranking~$q$ before~$x_i$ in the election~$\elec$. Similarly, we can show that for every $y_i\in Y$, exactly $1+(\kappa+2)+(\kappa-2)+1=2\kappa+2$ votes ranking~$q$ before~$y_i$ in~$\elec$. from $\abs{V\muplus W'}=4\kappa+3$, it follows that~$q$ beats all candidates from $X\cup  Y$ in~$\elec$.

$(\Leftarrow)$ Suppose that $W'\subseteq W$ is a subset of~$k$ unregistered votes such that~$p$ is not the amendment winner of $\elec=(C, V\muplus W')$ with respect to~$\rhd$. Observe that no matter which at most~$\kappa$ votes are contained in~$W'$,~$p$ beats all candidates in~$X\cup Y$. This implies that~$q$ is the only candidate which is capable of preventing~$p$ from winning. In other words,~$q$ is the amendment winner of~$\elec$ with respect to~$\rhd$. It follows that~$q$ beats all the other candidates in~$\elec$. Then, observe that~$\abs{W'}=k$, since otherwise~$p$ is not beaten by~$q$ in~$\elec$. Let~$S$ be the set of vertices corresponding to~$W'$. Similar to the analysis for the other direction, as for every $y_i\in Y$ we have $2\kappa+1$ votes in~$V$ ranking~$q$ before~$y_i$, there is at least one vote in~$W'$ ranking~$q$ before~$y_i$. By the construction of the unregistered votes, we know that there is at least one $u_j\in S$ such that $y_i\in N_Y[u_j]$, and hence $u_i\in N_G[u_j]$. As~$y_i$ can be any candidate in~$Y$, this implies that~$S$ is a dominating set of~$G$. Now, we show that for every two vertices $\vere_i, \vere_{i'}\in S$, it holds that $N_G[\vere_i]\cap N_G[\vere_{i'}]=\emptyset$. Assume, for the sake of contradiction, that there are $\vere_i, \vere_{i'}\in S$ and a vertex~$\vere_{j'}$ such that $\vere_{j'}\in N_G[\vere_i]\cap N_G[\vere_{i'}]$. Then, as the two votes~$\succ_{i}$ and~$\succ_{i'}$ corresponding to~$\vere_i$ and~$\vere_{i'}$ both rank~$x_{j'}$ before~$q$, there can be at most $\kappa-2$ votes in~$W'$ ranking~$q$ before~$x_{j'}$, resulting in at most $\kappa+3+(\kappa-2)=2\kappa+1$ votes in $V\muplus W'$ ranking~$q$ before~$x_{j'}$, contradicting that~$q$ beats~$x_{j'}$ in~$\elec$. We can conclude now that~$S$ is a perfect code of~$G$.

To prove for the case where $h>1$, we replace each~$x_i$ and each~$y_i$ with an ordered block of size~$h$ in the above reduction.
\end{proof}

For control by deleting voters, we have similar results as encapsulated in the following two theorems. 

\begin{theorem}
\label{thm-dcdv-amd-np}
For every constant~$h$, {\prob{DCDV}}-$h$-Amendment is {\memph\wbh} with respect to the number of deleted votes, even if the distinguished candidate is the last one in the agenda.
\end{theorem}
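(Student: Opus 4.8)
The plan is to mimic the structure of the reduction used for Theorem~\ref{thm-ccdv-amd-np} (\prob{CCDV}-$h$-Amendment), but reducing from {\prob{RBDS}} to obtain {\wbhns} for the \emph{destructive} variant. The key conceptual difference is that now~$p$ starts as the amendment winner of the registered election, and we must \emph{delete} at most~$k$ votes so that~$p$ loses. As in the proof of Theorem~\ref{thm-dcav-amd-np}, the candidate~$q$ placed first on the agenda will be the designated ``spoiler'': we want to engineer the vote multiset so that, after deleting the votes corresponding to a red-blue dominating set, $q$ beats every candidate from~$R$ and also beats~$p$, thereby making~$q$ the amendment winner instead of~$p$.

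First I would set up the candidate set $C = R \cup \{p, q\}$ and the agenda $\rhd = (q, \overrightarrow{R}, p)$, exactly as in Theorems~\ref{thm-ccav-amd-np} and~\ref{thm-ccdv-amd-np}. I would then design the registered votes so that: (i)~in the full election, $p$ beats $q$ and also $q$ beats every $r \in R$ is \emph{false}—specifically $p$ beats everyone so $p$ is the amendment winner; (ii)~every $r \in R$ always beats all of its successors regardless of which $\le k$ votes are deleted (guaranteed by adding enough ``forward-block'' votes among $R$, as in the earlier proofs); and (iii)~for each blue vertex $b \in B$ there is one vote $\succ_b$ whose structure encodes $N_G(b)$, so that deleting $\succ_b$ helps $q$ overtake exactly the red candidates dominated by $b$, while deleting votes is what tips $q$ past $p$ as well. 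The threshold arithmetic should be tuned (using the standard assumptions $|B| > \kappa > 1$, all red vertices of equal degree~$\ell$, and $\ell + \kappa \le |B|$) so that: deleting the $\kappa$ votes $\{\succ_b : b \in B'\}$ for a dominating set $B'$ makes $q$ beat each $r \in R$ and beat $p$; conversely, any deletion set of size $\le \kappa$ that dethrones $p$ must (by the forward-block property) work by promoting $q$, must have size exactly (or at least) $\kappa$ or $\kappa-1$, must consist only of the $\succ_b$ votes, and the corresponding blue set must dominate~$R$ — if some $r$ were undominated, too few votes ranking $q$ before $r$ would be removed and $q$ would fail to beat $r$, so the leftmost red candidate beating $q$ would win instead, contradicting that $p$ loses to $q$ specifically. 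Set $k = \kappa$.

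For the correctness argument I would split into the two directions in the usual way. $(\Rightarrow)$: given a dominating set $B'$ of size $\kappa$, delete $V' = \{\succ_b : b \in B'\}$; count, for each $r \in R$, the votes ranking $q$ before $r$ that survive (using domination to ensure enough $\succ_b$ with $q$ before $r$ were removed), and separately verify $p$ no longer beats/ties $q$; conclude $q$ is the amendment winner, so $p$ is not. $(\Leftarrow)$: given $V'$ with $|V'| \le \kappa$ such that $p$ is not the amendment winner of $(C, V \setminus V')$, first use the forward-block property to argue every $r \in R$ still beats its successors, hence the only candidate that can beat $p$ in the last round is $q$ (everyone else is eliminated before reaching $p$ or loses to $p$), so $q$ must be the amendment winner; this forces $q$ to beat every $r \in R$ and beat~$p$, which in turn forces $V' \subseteq V_B$, forces $|V'|$ large enough, and forces the associated blue set to dominate~$R$. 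Finally, for general constant~$h$, replace the single candidate created for each $r \in R$ by an ordered block of $h$ candidates in every vote, exactly as in the proofs of Theorems~\ref{thm-ccav-amd-np} and~\ref{thm-ccdv-amd-np}; the block moves as a unit in all head-to-head comparisons, so the $h$-amendment dynamics coincide with the $h=1$ analysis on the ``collapsed'' profile.

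The main obstacle I anticipate is pinning down the exact vote multiplicities so that \emph{two} things happen simultaneously under the \emph{same} deletion budget $\kappa$: the spoiler $q$ must overtake all of $R$ (a domination-type constraint that wants deletions spread across the $\succ_b$'s) \emph{and} $q$ must overtake $p$ (a counting constraint on how many $q$-before-$p$ votes are removed). In the constructive proofs this coupling was handled by having all unregistered $\succ_b$ votes rank $p$ before $q$ (so adding any of them helps $p$ over $q$ automatically); here, in the deletion setting, I expect to need the $\succ_b$ votes to rank $q$ before $p$ while the ``bulk'' registered votes rank $p$ before $q$, and then to choose the number of bulk votes so that deleting exactly $\kappa$ of the $\succ_b$ votes is simultaneously the threshold for $q > p$ and (via domination) sufficient for $q > r$ for all $r$. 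Verifying that no ``cheaper'' or structurally different deletion of size $\le \kappa$ can dethrone $p$ — in particular ruling out deletions that let some $r \in R$ rather than $q$ become the winner — is the delicate part and will drive the precise choice of constants; the assumption $\ell + \kappa \le |B|$ should be exactly what prevents any red candidate from being overtaken by $q$ ``for free''.
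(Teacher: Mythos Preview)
Your proposal is essentially the same approach as the paper's proof: reduce from {\prob{RBDS}}, use $C=R\cup\{p,q\}$ with agenda $\rhd=(q,\overrightarrow{R},p)$, one vote $\succ_b$ per blue vertex encoding $N_G(b)$, set $k=\kappa$, argue that after deletion $q$ must become the Condorcet winner (since $p$ continues to beat every $r\in R$), and extend to $h>1$ by replacing each $r$ with an ordered block of size~$h$. The correctness structure you sketch matches the paper's.

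One small slip to fix: you write that you ``expect to need the $\succ_b$ votes to rank $q$ before $p$ while the bulk registered votes rank $p$ before $q$.'' That orientation is backwards. If $\succ_b$ ranks $q$ before $p$, then deleting $\succ_b$ removes a vote favouring $q$ over $p$ and hence \emph{hurts} $q$ in the $q$-versus-$p$ comparison; you would never be able to tip $q$ past $p$ by deleting exactly those votes. The paper does the opposite: each $\succ_b$ has the form $p\Succ(\overrightarrow{R}[N_G(b)])\Succ q\Succ(\overrightarrow{R}\setminus N_G(b))$, so $\succ_b$ ranks $p$ before $q$, and the bulk votes (of types $\overrightarrow{R}\Succ q\Succ p$ and $q\Succ p\Succ\overrightarrow{R}$) rank $q$ before $p$. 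Deleting $\kappa$ of the $\succ_b$'s then simultaneously removes $p$-over-$q$ votes (flipping $q>p$) and, for each $r\in R$, removes at least one $r$-over-$q$ vote (ensuring $q>r$). Once you invert that one detail, your counting plan goes through exactly as in the paper.
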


\begin{proof}
We consider first the amendment procedure.
We give a reduction from {\prob{RBDS}} to {\prob{DCDV}}-Amendment. Let $(G, \kappa)$ be an instance of {\prob{RBDS}}, where $G$  is a bipartite graph with the bipartition $(R, B)$. Similar to the proof of Theorem~\ref{thm-ccdv-amd-np}, we assume that each red vertex in~$G$ has the same degree~$\ell>0$, $\abs{B}>\kappa>1$, and $\ell+\kappa\leq \abs{B}$. We create an instance  of {\prob{DCDV}}-Amendment as follows.
The candidates and the agenda are the same as in the proof of Theorem~\ref{thm-ccdv-amd-np}, i.e., $C=R\cup \{p, q\}$, where~$p$ serves as the distinguished candidate, and $\rhd=(q, \overrightarrow{R}, p)$.
We create the following $2\abs{B}+1-\kappa$ votes in~$V$:
\begin{itemize}
\item $\abs{B}+1-\ell-\kappa$ votes with the same preference $\overrightarrow{R} \Succ q\Succ p$;
\item $\ell$ votes with the same preference $q\Succ p\Succ \overrightarrow{R}$; and
\item for each blue vertex $b\in B$, one vote $\succ_b$ with the preference $p\Succ (\overrightarrow{R}[N_G(b)]) \Succ q \Succ (\overrightarrow{R}\setminus N_G(b))$.
\end{itemize}
Figure~\ref{fig-amd-dcdv-hard} illustrates the weighted majority graph of $(C, V)$ and the agenda.
\begin{figure}
    \centering
    \includegraphics[width=0.35\textwidth]{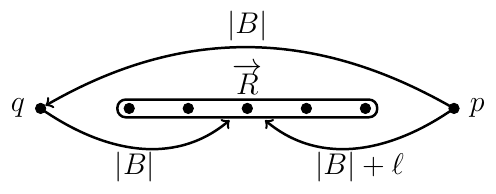}
    \caption{An illustration of the weighted majority graph of $(C, V)$ and the agenda as used in the proof of Theorem~\ref{thm-dcdv-amd-np}. All arcs among the vertices in $R$ are forward with a weight of at least $2\abs{B} - \kappa - \ell + 1$. The agenda is represented by the left-to-right ordering of the vertices.}
    \label{fig-amd-dcdv-hard}
\end{figure}
To complete the reduction, we let $k=\kappa$. The instance of {\prob{DCDV}}-Amendment is $(C, p, V, \rhd, k)$. It is easy to check that~$p$ beats everyone else, and hence is the amendment winner of $(C, V)$ with respect to~$\rhd$. We show the correctness of the reduction as follows.

$(\Rightarrow)$ Suppose that there is a subset $B'\subseteq B$ of cardinality~$\kappa$ which dominates~$R$ in~$G$. Let $V'=\{\succ_b\, \setmid b\in B'\}$ be the set of votes corresponding to~$B'$. Let $\elec=(C, V\setminus V')$. Clearly,~$\elec$ contains exactly $2\abs{B}+1-2\kappa$ votes. We show below that~$q$ is the amendment winner of~$\elec$ with respect to~$\rhd$  by proving that~$q$ beats all the other candidates. First, as all votes in~$V'$ rank~$p$ before~$q$, the number of votes in $V\setminus V'$ ranking~$q$ before~$p$ remains as $\abs{B}+1-\kappa$, implying that~$q$ beats~$p$ in~$\elec$. We consider candidates in~$R$ now. Let~$r$ be any arbitrary candidate from~$R$. As~$B'$ dominates~$R$, there is at least one vertex $b\in B'$ which is adjacent to~$r$ in~$G$. Then, the vote~$\succ_b$ corresponding to~$b$ ranks~$r$ before~$q$, meaning that there can be at most $\kappa-1$ votes in~$V'$ ranking~$q$ before~$r$. This further implies that in $V\setminus V'$, at least $\ell+(\abs{B}-\ell-(\kappa-1))=\abs{B}+1-\kappa$ votes rank~$q$ before~$r$, and hence~$q$ also beats~$r$ in~$\elec$. As this holds for all $r\in R$, we know that~$q$ beats all candidates from~$R$ in~$\elec$.

$(\Leftarrow)$ Assume that there exists a subset $V' \subseteq V$ such that $\abs{V'} \leq k$ and $p$ is not the amendment winner of the election $\elec = (C, V \setminus V')$ with respect to~$\rhd$. First, observe that regardless of which at most $k$ votes are included in~$V'$, the candidate~$p$ beats all candidates from~$R$ in~$\elec$. This implies that the amendment winner of~$\elec$ with respect to~$\rhd$ must be~$q$. In other words,~$q$ beats all other candidates in~$\elec$.  
Furthermore, note that $V'$ must consist of exactly~$k$ votes, all of which correspond to vertices in~$B$. Otherwise,~$p$ remains the winner. Thus, we have $\abs{V \setminus V'} = 2\abs{B} + 1 - 2\kappa$. Now, for every $r \in R$, since there are precisely $\ell + (\abs{B} - \ell) = \abs{B}$ votes in~$V$ ranking~$q$ before~$r$, at most $\kappa - 1$ votes from~$V'$ can rank~$q$ before~$r$. 
This ensures that for every $r \in R$, there exists at least one vote $\succ_b \in V'$ where~$r$ is ranked before~$q$. By the definition of~$\succ_b$, it follows that~$b$ dominates~$r$ in~$G$. Therefore, the subset of blue vertices corresponding to~$V'$ dominates~$R$ in~$G$.

To establish hardness for every constant $h > 1$, we replace each candidate corresponding to a red vertex with an ordered block of size~$h$ in the above-constructed election. The correctness analysis remains similar.
\end{proof}

\begin{theorem}
\label{thm-dcdv-amd-wbh-remaining}
For every constant~$h$, {\prob{DCDV}}-$h$-Amendment is {\memph\wbh} with respect to the number of remaining votes, even if the distinguished candidate is the last one in the agenda.
\end{theorem}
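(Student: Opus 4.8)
The plan is to reduce from \prob{RBDS}, adapting the construction behind Theorem~\ref{thm-ccdv-amd-wbh-remaining-votes} so that the target becomes \emph{destructive}. As in Theorems~\ref{thm-ccdv-amd-np}, \ref{thm-ccdv-amd-wbh-remaining-votes} and \ref{thm-dcdv-amd-np}, I would first treat the amendment procedure ($h=1$) and then lift to arbitrary constant $h$ by the usual ordered‑block blow‑up: replace every candidate built from a red vertex by an ordered block of $h$ candidates, after which the $h$-amendment procedure behaves exactly like the amendment procedure on the contracted instance.

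For $h=1$: given an \prob{RBDS} instance $(G,\kappa)$ with bipartition $(R,B)$ (again assuming, w.l.o.g., $|B|>\kappa+1>2$), set $C=R\cup\{q,p\}$ with agenda $\rhd=(q,\overrightarrow{R},p)$, so that $p$ is last. Create $\kappa+1$ ``gadget'' votes --- $\kappa$ copies of $q\Succ p\Succ\overrightarrow{R}$ and one copy of $\overrightarrow{R}\Succ q\Succ p$ --- and, for each $b\in B$, one vote $\succ_b = p\Succ(\overrightarrow{R}\setminus N_G(b))\Succ q\Succ\overrightarrow{R}[N_G(b)]$; set the deletion budget to $|V|-(2\kappa+1)$, so that every feasible solution retains at least $2\kappa+1$ votes, and the parameter (number of votes not deleted) is $2\kappa+1$. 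Since every $\succ_b$ ranks $p$ first, $p$ is the Condorcet winner of $(C,V)$, hence the $h$-amendment winner, so the instance starts in the required state. In the forward direction, if $B'$ is a size-$\kappa$ dominating set of $R$, retain the $\kappa+1$ gadget votes together with $\{\succ_b:b\in B'\}$; a short head-to-head count (using that a retained $\succ_b$ ranks $q$ before $r$ exactly when $b$ is adjacent to $r$, and that every $r\in R$ has a neighbour in $B'$) shows $q$ then beats every candidate in $R$ and beats $p$, so $q$ becomes the amendment winner and $p$ does not. For the backward direction I would show that for \emph{any} retained set $U$ with $|U|\ge 2\kappa+1$ in which $p$ is not the winner, $U$ must consist of all $\kappa+1$ gadget votes plus exactly $\kappa$ of the $\succ_b$ votes whose blue vertices dominate $R$: if $U$ retains $\kappa+1$ or more $\succ_b$ votes (in particular, if $|U|>2\kappa+1$, since only $\kappa+1$ gadget votes exist), then $p$ beats or ties every candidate and hence still wins; and if $U$ has the ``right shape'' but its $\kappa$ blue vertices miss some $r$, then all retained $\succ_b$ rank $r$ before $q$, so $r$ beats $q$, $q$ is eliminated, whichever $R$-candidate survives the $\overrightarrow{R}$-block is beaten by $p$ (in this profile $p$ beats all of $R$), and $p$ wins again. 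Hence the constructed instance is a \yesins{} iff $(G,\kappa)$ is.

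The main obstacle is exactly this backward calibration, which behaves differently than in the constructive case. In the proof of Theorem~\ref{thm-ccdv-amd-wbh-remaining-votes} the auxiliary candidate $q'$ serves as a trap: retaining too many $\succ_b$ votes makes $q'$ the winner, defeating the controller. For destructive control that same trap would \emph{help} the controller, so it has to be discarded and replaced by the opposite device --- ranking $p$ first in every $\succ_b$, so that retaining too many of them makes $p$ the winner --- while the gadget multiplicities must simultaneously guarantee that (i) $p$ is the initial winner, (ii) retaining the gadget together with a dominating set lets $q$ overtake $p$, and (iii) every ``wrong'' retained profile of size $\ge 2\kappa+1$ (too few gadget votes, too many $\succ_b$ votes, or a non-dominating choice of blue vertices) leaves $p$ as the amendment winner. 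Getting all three to hold at once while keeping the retained-vote count linear in $\kappa$ is the delicate step; once the multiplicities are pinned down the rest is routine majority-margin bookkeeping, plus the ordered-block argument for $h>1$.
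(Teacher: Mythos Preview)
Your proposal is correct and is essentially identical to the paper's proof: the candidate set, agenda, gadget votes ($\kappa$ copies of $q\Succ p\Succ\overrightarrow{R}$ plus one copy of $\overrightarrow{R}\Succ q\Succ p$), the votes $\succ_b$, and the deletion budget $|B|-\kappa$ all coincide with the paper's construction, and your forward/backward arguments match the paper's reasoning. One minor point: the ordered-block blow-up for $h>1$ is not actually needed here---since in every ``bad'' retained profile $p$ is unbeaten (hence wins for any $h$, being last) and in every ``good'' retained profile $q$ is the Condorcet winner (hence wins for any $h$ by Lemma~\ref{lem-a}), the base construction already works uniformly for all~$h$, which is how the paper presents it.
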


\begin{proof}
We prove the theorem by a reduction from {\prob{RBDS}}. Let $(G, \kappa)$ be an instance of {\prob{RBDS}}, where~$G$ is a bipartite graph with the bipartition $(R, B)$. Without loss of generality, we assume that $\kappa<\abs{B}$. Let $C=R\cup \{q, p\}$, with~$p$ as the distinguished candidate, and let $\rhd=(q, \overrightarrow{R}, p)$.
We create the following votes:
\begin{itemize}
\item a multiset~$V_1$ of $\kappa$ votes with the preference $q\Succ p\Succ \overrightarrow{R}$;
\item a singleton~$V_2$ of one vote with the preference $\overrightarrow{R}\Succ q\Succ p$; and
\item for each blue vertex $b\in B$, one vote $\succ_b$ with the preference
\[p\Succ \left(\overrightarrow{R}\setminus N_G(b)\right) \Succ q\Succ \left(\overrightarrow{R}[N_G(b)]\right).\]
\end{itemize}
For a given $B'\subseteq B$, we use $V_{B'}=\{\succ_b\, \setmid b\in B'\}$ to denote the multiset of votes created for the blue vertices in~$B'$. Let $V=V_1\muplus V_2\muplus V_B$. It holds that $\abs{V}=\abs{B}+\kappa+1$. Let $k=\abs{B}-\kappa$. The instance of {\prob{DCDV}}-$h$-Amendment is $(C, p, V, \rhd, k)$.
Obviously, under the assumption that $\kappa<\abs{B}$, the $h$-amendment winner of $(C, V)$ with respect to~$\rhd$ is~$p$.
It remains to show the correctness of the reduction.

$(\Rightarrow)$ Assume that there is a subset $B'\subseteq B$ such that~$\abs{B'}=\kappa$ and~$B'$ dominates~$R$ in~$G$. Let $V'=V_1\muplus  V_2\muplus V_{B'}$, and let $\elec=(C, V')$. Obviously, $\abs{V'}=2\kappa+1$. We claim that~$q$ is the $h$-amendment winner of~$\elec$ with respect to the agenda~$\rhd$. As~$q$ is the first candidate in~$\rhd$, we need to show that~$q$ beats all the other candidates in~$\elec$. It is clear that~$q$ beats~$p$ in~$\elec$. For each red vertex $r\in R$, there exists at least one $b\in B'$ dominating~$r$ in~$G$. From the construction of the votes, we know that~$q$ is ranked before~$r$ in the vote~$\succ_b$. Therefore, there are in total at least $\abs{V_1}+1=\kappa+1$ votes ranking~$q$ before~$r$ in~$V'$, implying that~$q$ beats~$r$. As this holds for all $r\in R$, the correctness for this direction follows.

$(\Leftarrow)$ Assume that there is a subset $V'\subseteq V$ of at least $2\kappa+1$ votes such that~$p$ is not the $h$-amendment winner of $(C, V')$ with respect to~$\rhd$. Observe that as $\abs{V_1}+\abs{V_2}=\kappa+1$ and all votes in~$V_B$ rank~$p$ in the first place, it must be that $(V_1\muplus V_2)\subseteq V'$, and~$V'$ contains exactly~$\kappa$ votes from~$V_B$ (since otherwise~$p$ remains as the $h$-amendment winner of $(C, V')$). Let $V_{B'}=V'\cap V_B$, where $B'\subseteq B$. As just discussed, it holds that $\abs{V_{B'}}=\kappa$. We also observe that~$p$ beats all candidates in~$R$ in $(C, V')$ no matter which~$\kappa$ votes from~$V_B$ are contained in~$V'$. Therefore, it must be that~$q$ is the $h$-amendment winner of $(C, V')$, i.e.,~$q$ beats all the other candidates. We claim that~$B'$ dominates~$R$. Suppose, for the sake of contradiction, that there exists at least one red vertex $r\in R$ not dominated by any vertex from~$B'$ in~$G$. Then, from the construction of the votes,~$r$ is ranked before~$q$ in all votes of~$V_{B'}$. Together with the vote in~$V_2$, there are $\kappa+1$ votes in~$V'$ ranking~$r$ before~$q$, implying that~$r$ beats~$q$ in~$(C, V')$. However, this contradicts that~$q$ is the $h$-amendment winner of $(C, V')$ with respect to~$\rhd$.
\end{proof}

We now investigate election control through the addition or deletion of candidates. Unlike voter control operations, we demonstrate that candidate control operations generally remain solvable in polynomial time, regardless of the position of the distinguished candidate in the agenda. These results extend the findings for the amendment procedure stated in Corollary~\ref{cor-ccdc-dcac-amd-suc-immune}.

We first consider {\prob{CCAC}-Amendment}.
We need the following notions for our study.

Let $\elec=(C, V)$ be an election, and let~$\rhd=(c_1, c_2, \dots, c_m)$ be an agenda on~$C$. For a subset $C'\subseteq C$ and two candidates $c, c'\in C'$ such that $c\rhd c'$, a $(c\leftarrow c')$-beating path with respect to $(C', V)$ and~$\rhd$ is a sequence of candidates $(c_{\pi(0)}, c_{\pi(1)}, \dots, c_{\pi(t)})$, where $\{\pi(0), \dots, \pi(t)\} \subseteq [m]$, satisfying the following conditions:
\begin{itemize}
\item $c_{\pi(i)}\in C'$ for all $i\in [t]\cup \{0\}$, i.e., all candidates in the $(c\leftarrow c')$-beating path belong to~$C'$;
\item $c_{\pi(0)}=c'$ and $c_{\pi(t)}=c$; 
\item $\pi(i-1)>\pi(i)$ for all $i\in [t]$, i.e., $c_{\pi(t)}\rhd c_{\pi(t-1)}\rhd \cdots \rhd c_{\pi(0)}$; 
\item for all $i\in [t]$, $c_{\pi(i)}$ is either beaten by or ties with~$c_{\pi(i-1)}$; and
\item for all $i\in [t]$, $c_{\pi(i)}$ beats all of its successors from $C'$ that lie between $c_{\pi(i)}$ and $c_{\pi(i-1)}$ in the agenda $\rhd$, i.e., $c_{\pi(i)}$ beats~$\rhd[j]$ for all $\rhd[j]\in C'$ such that $\pi(i)<j<\pi(i-1)$.
\end{itemize}
In particular, for any candidate $c \in C$, we define the trivial $ (c \leftarrow c)$-beating path as the sequence consisting solely of $(c)$. 
We refer to~Figure~\ref{fig-beating-path} for an illustration. 

\begin{figure}
    \centering
    \includegraphics[width=0.8\textwidth]{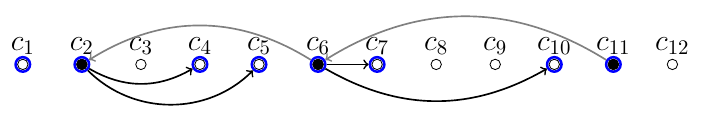}
    \caption{The sequence $(c_{11}, c_6, c_2)$ (highlighted as dark vertices) represents a $(c_{2} \leftarrow c_{11})$-beating path with respect to the subset $C' = \{c_1, c_2, c_4, c_5, c_6, c_7, c_{10}, c_{11}\}$ (vertices enclosed in blue circles), the election $(C', V)$, and the agenda $\rhd = (c_1, c_2, \dots, c_{12})$. A dark arc $\arc{a}{b}$ signifies that $a$ beats $b$, while a gray arc $\arc{a}{b}$ indicates that $a$ either beats or ties with $b$.}
    \label{fig-beating-path}
\end{figure}

Checking whether there is a $(c\leftarrow c')$-beating path with respect to $(C', V)$ and~$\rhd$ can be clearly done in polynomial time. 
Moreover, a candidate~$c$ is the amendment winner of $(C,V)$ with respect to~$\rhd$ if and only if~$c$ beats all her successors, and there is a~$(\rhd[1] \leftarrow c)$-beating path with respect to $(C, V)$ and~$\rhd$. 

We are now prepared to present our algorithm for {\prob{CCAC}-Amendment}.

\begin{theorem}
\label{thm-ccac-amd-p}
{\prob{CCAC}-Amendment} is polynomial-time solvable.
\end{theorem}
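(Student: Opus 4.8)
The plan is to reduce {\prob{CCAC}}-Amendment to a minimum-cost path computation in an acyclic digraph, after normalizing the structure of a candidate solution.

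\emph{Preliminary reductions.} An amendment winner must beat all of its successors, and registered candidates cannot be deleted, so the instance is a \noins\ unless $p$ beats every registered successor of $p$ in~$\rhd$; assume henceforth that this holds. Adding an unregistered successor of $p$ affects neither this fact nor any $(\rhd[1]\leftarrow p)$-beating path, since such paths involve only predecessors of $p$; hence we may assume that $D'$ consists solely of predecessors of~$p$. I would then invoke the stated characterization — for $C\subseteq S\subseteq C\cup D$, $p$ is the amendment winner of $(S,V)$ with respect to the restricted agenda if and only if $p$ beats all of its successors in~$S$ and there is a $(\rhd[1]\leftarrow p)$-beating path in the agenda restricted to~$S$ — to normalize any solution. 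Concretely: given a solution $S$ with a witnessing beating path whose vertex set is $P$, I claim $C\cup(P\cap D)$ is also a solution that adds no more candidates. The point is that an added candidate outside $P$ is either strictly before the first candidate of the restricted agenda (impossible), a successor of $p$ (already ruled out), or a candidate strictly between two consecutive nodes of the path; discarding candidates of the last type only relaxes the requirement that the earlier of the two path nodes beat all chosen candidates between them, so the beating path remains valid, and $p$ still beats all of its (now purely registered) successors.

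\emph{The digraph.} Let $A$ be the set containing $p$ and all predecessors of $p$ in~$\rhd$, and let $D_A=A\cap D$. Build a digraph on vertex set $A$ with an arc from $a$ to $b$ whenever $a\rhd b$, $b$ beats or ties with~$a$ with respect to~$V$, and $a$ beats every registered member of $A$ that lies strictly between $a$ and $b$ in~$\rhd$; give each vertex of $D_A$ cost $1$ and each other vertex cost $0$. All arcs go forward along~$\rhd$, so the digraph is acyclic. After the normalization above, the original instance is a \yesins\ if and only if $p$ beats all registered successors of $p$ and there is a directed path ending at $p$, of total vertex cost at most~$k$, whose start vertex can legitimately be the first candidate of the restricted agenda. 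Writing $r_1$ for the first registered predecessor of $p$ (if any), the legitimate start vertices are exactly $r_1$ together with the members of $D_A$ that precede $r_1$; if $p$ has no registered predecessor the instance is trivial (then $p$ already wins precisely when it beats all registered successors). The algorithm enumerates the $\bigo{\abs{C}+\abs{D}}$ candidate start vertices, computes for each a minimum-cost path to $p$ by dynamic programming over a topological order of the digraph, and accepts if and only if $p$ beats all registered successors of $p$ and some start yields cost at most~$k$. Every step is polynomial.

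\emph{Main obstacle.} I expect the delicate part to be the normalization lemma and the accompanying bookkeeping about which candidate opens the restricted agenda: one must argue carefully that the added candidates may be taken to be precisely the unregistered candidates on a beating path, that the path must start at the first candidate of the restricted set, and that discarding every other added candidate keeps both the beating path valid and $p$ a winner over all its successors. Once this structural claim is in place, the digraph construction and the minimum-cost-path computation are routine.
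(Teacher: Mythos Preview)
Your proposal is correct and takes essentially the same approach as the paper: both reduce {\prob{CCAC}}-Amendment to a dynamic-programming/shortest-path computation over the predecessors of~$p$ in the agenda, using the beating-path characterization of the amendment winner. The only cosmetic difference is that the paper indexes its DP table by the \emph{unregistered} predecessors~$D_1$ only (each transition between consecutive table entries checks for a beating path through the registered candidates in between), whereas you place all predecessors into a single DAG with $0/1$ vertex costs and solve one min-cost-path problem; your normalization lemma---that the added candidates may be taken to be exactly the unregistered vertices on the witnessing beating path---is precisely the structural fact that underlies the correctness of both formulations.
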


\begin{proof}
We derive a dynamic programming algorithm for the {\prob{CCAC}-Amendment} problem as follows. Let $I=(C,  p, D, V, \rhd, k)$ be an instance of {\prob{CCAC}-Amendment}, where~$C$ and~$D$ are respectively the set of registered candidates and the set of unregistered candidates, $p\in C$ is the distinguished candidate,~$V$ is a multiset of votes,~$\rhd$ is an agenda, and~$k$ is the number of unregistered candidates that we can add. Let~$c^{\star}$ be the leftmost candidate from~$C$ in the agenda~$\rhd$. That is, $c^{\star}\rhd c$ holds for all $c\in C\setminus \{c^{\star}\}$. If~$p$ does not beat all her successors in~$C$, we directly conclude that~$I$ is a {\noins}. So, let us assume that~$p$ beats all her successors from~$C$ in~$\rhd$. If~$p$ is already the amendment winner,~$I$ is a {\yesins}; we are done. Otherwise, 
%a $(c^{\star} \leftarrow p)$-beating path with respect to $(C, V)$ and~$\rhd$ does not exist. The question now is
we determine whether we can add at most~$k$ candidates from~$D$ into~$C$ to make~$p$ the amendment winner. Let $D_1\subseteq D$ be the set of predecessors of~$p$ contained in~$D$. Observe that it does not make any sense to add any candidates from~$D\setminus D_1$.

We maintain a table~$H(d)$ indexed by candidates~$d\in D_1$. We define~$H(d)$ as the minimum number of successors of~$d$ from~$D_1$ that need to be added, under the condition that~$d$ is added into~$C$, such that a $(d \leftarrow p)$-beating path exists with respect to the resulting election and~$\rhd$. 
If such candidates do not exist, we define $H(d)=+\infty$. We process the candidates in~$D_1$ sequentially, following the decreasing order of their positions in the agenda~$\rhd$ (i.e., from the rightmost to the leftmost). In particular, the table is computed as follows.

First, let~$d\in D_1$ be the rightmost predecessor of~$p$ from~$D_1$. We let $H(d)=1$ if there is a $(d\leftarrow p)$-beating path with respect to $(C\cup \{d\}, V)$ and~$\rhd$. Otherwise, we let $H(d)=+\infty$. 

Next, we recursively compute other entries. To this end, let~$d\in D_1$ denote the currently considered candidate. Assuming that~$H(d')$ has been computed for all $d'\in D_1$ such that $d \rhd d'$, we compute~$H(d)$ as follows.
\begin{itemize}
\item If there is a $(d\leftarrow p)$-beating path with respect to $(C\cup \{d\}, V)$ and~$\rhd$, we let $H(d)=1$.
\item Otherwise, we define~$H(d) = +\infty$ if no $(d \leftarrow d')$-beating path exists with respect to 
$(C \cup \{d, d'\}, V)$ and~$\rhd$ for any successor~$d' \in D_1$ of~$d$.
\item Otherwise, we let
\[H(d)= 1+\min_{\substack{d'\in D_1,\, d\rhd d'\\ \exists (d\leftarrow d')\text{-beating path w.r.t.\ }(C\cup \{d, d'\}, V)~\text{and}~\rhd}}H(d').\]
\end{itemize}
After computing all entries of the table, we conclude that the given instance~$I$ is a {\yesins} if and only if one of the following two cases occurs:
\begin{itemize}
\item there is a candidate $d\in D_1$ after~$c^{\star}$ in the agenda~$\rhd$ such that $H(d)\leq k$, and there is a $(c^{\star}\leftarrow d)$-beating path with respect to $(C\cup \{d\},V)$ and~$\rhd$.
\item there is a candidate $d\in D_1$ before~$c^{\star}$ in the agenda~$\rhd$ such that $H(d)\leq k$.
\end{itemize}

The algorithm runs in polynomial time since the table has at most~$\abs{D_1}$ entries and computing each entry can be done in polynomial time as shown above.
\end{proof}

For constructive control by deleting candidates, we have the same result.

\begin{theorem}
\label{thm-ccdc-amd-p}
{\prob{CCDC}}-Amendment is polynomial-time solvable.
\end{theorem}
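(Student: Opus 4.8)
The plan is to design a dynamic programming algorithm analogous to the one for \prob{CCAC}-Amendment in Theorem~\ref{thm-ccac-amd-p}, but now over the set of registered candidates we choose to \emph{retain}. Let $I=(C,p,V,\rhd,k)$ be an instance of \prob{CCDC}-Amendment. First I would note the easy necessary condition: if $p$ does not beat all of its successors in $C$, then no matter which candidates we delete (we are only allowed to delete registered candidates other than $p$, and deleting a successor of $p$ only removes a comparison, but the remaining successors are still there), $p$ can be the amendment winner only if $p$ already beats all successors it cannot delete — actually here we \emph{can} delete successors, so the condition is that among the successors of $p$, those we keep must all be beaten by $p$; since deletion is free in that direction, we should delete every successor of $p$ that $p$ does not beat (up to budget) — this needs a small accounting argument but is clearly forced. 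Then the remaining task concerns only the predecessors of $p$: we must retain a subset of predecessors so that $p$ beats all retained successors and there is a $(\rhd[1]' \leftarrow p)$-beating path in the retained election, where $\rhd[1]'$ is the leftmost retained candidate, while deleting at most $k$ candidates in total.

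The key structural tool is the beating-path characterization recalled before Theorem~\ref{thm-ccac-amd-p}: a candidate $c$ is the amendment winner of $(C',V)$ with respect to $\rhd$ iff $c$ beats all its successors in $C'$ and there is a $(\rhd\text{-first candidate of }C' \leftarrow c)$-beating path. So I want to track, scanning predecessors of $p$ from right (closest to $p$) to left, a quantity $H(d)$ = the minimum number of \emph{deletions among candidates strictly between $d$ and $p$} needed so that, \emph{with $d$ retained}, a $(d\leftarrow p)$-beating path exists in the surviving election. The recurrence mirrors the \prob{CCAC} case: $H(d)$ is small (base: zero extra deletions among the in-between candidates, which happens exactly when $(d\leftarrow p)$-beating path exists once we delete every in-between candidate not beaten/tied appropriately — more precisely we must delete exactly those in-between candidates that would block the direct path, and the path may also route through an intermediate retained predecessor $d'$ with $d\rhd d'\rhd p$, giving $H(d)=\bigl(\text{deletions forced between }d\text{ and }d'\bigr)+H(d')$). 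The number of deletions forced in a gap $(d,d')$ is the number of candidates strictly between them that are \emph{not} beaten by $d$ (those would break the beating-path condition "$c_{\pi(i)}$ beats all its successors in $C'$ between consecutive path vertices"), provided $d$ is beaten by or ties with $d'$ — otherwise that edge is unusable. Finally, after filling the table, $I$ is a \yesins{} iff there is a retained leftmost candidate: either some predecessor $d$ of $p$ with a valid beating path to $p$ such that (deletions forced left of $d$, to make $d$ the leftmost, i.e. delete all predecessors of $d$) plus (deletions between $d$ and $p$ captured by $H$) plus (forced deletions among successors of $p$) is at most $k$; or $p$ itself is leftmost, meaning we delete all predecessors of $p$.

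The main obstacle I anticipate is getting the bookkeeping exactly right, because here \emph{every} registered candidate is a potential deletion, so the "cost" is two-sided: candidates to the left of the chosen beating-path start must all be deleted, candidates inside a beating-path gap that are not beaten by the gap's left endpoint must be deleted, and candidates that are successors of $p$ and not beaten by $p$ must be deleted — and all of these count against the single budget $k$, whereas in \prob{CCAC} only \emph{added} predecessors contributed cost and one could be more cavalier about the others. In particular I must be careful that candidates sitting between two consecutive beating-path vertices which \emph{are} beaten by the left endpoint may be freely kept (cost $0$) — they do not need deletion — so the DP cost function is genuinely the count of "bad" in-between candidates, not the count of all in-between candidates. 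I also need to double-check that retaining extra candidates never helps beyond what the beating path demands (it does not: retaining a candidate $x$ between path vertices $c_{\pi(i)}$ and $c_{\pi(i-1)}$ with $c_{\pi(i)}$ not beating $x$ destroys the amendment-winner condition, and retaining predecessors left of the path start changes the leftmost candidate), which justifies that the greedy/DP captures an optimal solution. Once this accounting is pinned down, polynomial running time is immediate: $O(|C|)$ table entries, each computed in polynomial time by examining $O(|C|)$ candidate gaps and checking beating-path existence, exactly as in Theorem~\ref{thm-ccac-amd-p}.
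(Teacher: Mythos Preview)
Your proposal is correct and shares the paper's framework: forcibly delete the successors of $p$ that $p$ does not beat, then fill a table $H(c_i)$ over the predecessors of $p$ recording the minimum number of deletions strictly between $c_i$ and $p$ needed for a $(c_i\leftarrow p)$-beating path, and accept iff $(i-1)+H(c_i)$ plus the forced successor deletions is at most $k$ for some $i$ (or all predecessors can be deleted outright). The one substantive difference is the transition cost. The paper's recurrence charges $j-i-1$ for the step from $c_i$ to $c_j$, i.e.\ it deletes \emph{every} candidate in between; you charge only for the in-between candidates that $c_i$ fails to beat, keeping the rest at cost zero. Your refinement is not cosmetic---it is actually required for correctness. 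Take the agenda $(c_1,c_2,c_3,p)$ with $c_1$ beating $c_2$ and $p$, $c_2$ beating $c_3$ and $p$, $c_3$ beating $c_1$, and $p$ beating $c_3$. Here $p$ is already the amendment winner (so $k=0$ suffices); yet the paper's recurrence gives $H(c_1)=1$, $H(c_2)=+\infty$, $H(c_3)=0$ and wrongly rejects at $k=0$, whereas your cost correctly yields $H(c_1)=0$ because the sole in-between candidate $c_2$ is beaten by $c_1$ and may be retained for free. So your more careful bookkeeping is exactly what the argument needs.
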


\begin{proof}

Let $I=(C, p, V, \rhd, k)$ be an instance of the {\prob{CCDC}}-Amendment problem, where~$C$ is a set of candidates, $p \in C$ is a distinguished candidate,~$V$ is a multiset of votes on~$C$,~$\rhd$ is an agenda on~$C$, and~$k$ is an integer.

A necessary condition for~$p$ to be the amendment winner is that~$p$ beats all her successors. Therefore, the first step of our algorithm is to remove all the successors of~$p$ not beaten by~$p$, and decrease~$k$ accordingly.

If $k < 0$ after doing so, we directly conclude that~$I$ is a {\noins}. Otherwise, if $p$ has at most $k$ predecessors, we immediately conclude that $I$ is a {\yesins}, because in this case, we can make $p$ the winner by deleting all its predecessors without exceeding the budget~$k$.

We assume now that $p$ has at least $k+1$ predecessors. Let $(c_1, c_2, \dots, c_t)$ be the agenda~$\rhd$ restricted to the predecessors of~$p$. It holds that $t \geq k+1$. For notational brevity, let $c_{t+1} = p$. For $i, j \in [t]$ such that $i < j$, let $C[i, j]$ denote the union of $\{c_i, c_j\}$ and the set of candidates between~$c_i$ and~$c_j$ in~$\rhd$. Additionally, for $i = j$, we let $C[i, j] = \{c_i\}$, and for $i > j$, we define $C[i, j] = \emptyset$. We maintain a table~$H(c_i)$ for~$i \in [t+1]$. Specifically,~$H(c_i)$ is defined as the minimum number of candidates that need to be removed from $C[i+1, t]$ such that there is a $(c_i \leftarrow p)$-beating path in the remaining election with respect to~$\rhd$. 

We compute the entries in the order of~$H(c_{t+1})$,~$H(c_t)$,~$\dots$,~$H(c_1)$. First, we set $H(c_{t+1}) = 0$. For each~$c_i$ where $i \in [t]$, if~$c_i$ is not beaten by any candidate from $C[i+1, t+1]$, we let $H(c_i) = +\infty$; otherwise, we compute
\[
H(c_i) = \min_{\substack{i < j \leq t+1,\\ c_j~\text{beats or ties with}~c_i}} \left(H(c_j) + j - i - 1\right).
\]
Here, $j - i - 1$ accounts for that all candidates in $C[i+1, j-1]$ are considered to be deleted.

The given instance~$I$ is a {\yesins} if and only if there exists a candidate~$c_i$ with~$i \in [t]$ such that~$H(c_i) \leq k - i + 1$. 
The existence of such a candidate~$c_i$ implies that we can delete all the~$i-1$ predecessors of~$c_i$, along with at most~$k - i + 1$ candidates between~$c_i$ and~$p$ in~$\rhd$, to ensure the presence of a~$(c_i \leftarrow p)$-beating path in the remaining election. Since~$p$ beats all her successors, this guarantees that~$p$ will become the winner after the deletion of the aforementioned candidates. Moreover, note that $p$ has at least $k+1$ predecessors, which ensures that at least one candidate $c_j$ with~$j \in [t]$ is not deleted.

Since the table contains polynomially many entries, and each entry can be computed in polynomial time, the overall algorithm runs in polynomial time.
\end{proof}

As {\prob{DCAC}}-Amendment and {\prob{DCDC}}-Amendment are respectively polynomial-time Turing reducible to {\prob{CCAC}-Amendment} and {\prob{CCDC}}-Amendment, we obtain the following corollary as a consequence of Theorem~\ref{thm-ccac-amd-p} and Theorem~\ref{thm-ccdc-amd-p}.

\begin{corollary}
\label{cor-dcac-dcdc-amd-p}
{\prob{DCAC}}-Amendment and {\prob{DCDC}}-Amendment are polynomial-time solvable.
\end{corollary}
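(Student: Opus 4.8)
The plan is to exploit the fact that the amendment procedure is resolute: for any fixed set of surviving candidates and any agenda it designates exactly one winner. Hence $p$ fails to be the amendment winner precisely when some candidate $c\neq p$ among the survivors is the amendment winner. Both destructive problems therefore reduce to ``guessing'' the would-be spoiler $c$ and then asking a \emph{constructive} control question with $c$ playing the role of the distinguished candidate. Since there are only polynomially many candidates to try and each resulting constructive instance is solvable in polynomial time by Theorem~\ref{thm-ccac-amd-p} or Theorem~\ref{thm-ccdc-amd-p}, the overall procedure runs in polynomial time.

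For {\prob{DCAC}}-Amendment with instance $(C,p,D,V,\rhd,k)$ I would iterate over every candidate $c$ that could end up being the winner. If $c\in C\setminus\{p\}$, I call the algorithm of Theorem~\ref{thm-ccac-amd-p} on $(C,c,D,V,\rhd,k)$; if $c\in D$, I instead move $c$ into the registered set and call it on $(C\cup\{c\},c,D\setminus\{c\},V,\rhd,k-1)$, so that $c$ is forced to be present while up to $k-1$ further candidates may still be added. In either case $p$ stays registered and is never removed, so a constructive solution making $c$ the winner is exactly a destructive solution against $p$, and conversely any destructive solution exhibits such a $c$; the instance is a {\yesins} iff at least one of these $\abs{C}+\abs{D}$ calls returns {\yes} (the degenerate case $k=0$ being handled by a single winner-check).

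For {\prob{DCDC}}-Amendment with instance $(C,p,V,\rhd,k)$ the same scheme applies — guess the spoiler $c\in C\setminus\{p\}$ and ask whether at most $k$ candidates can be deleted so that $c$ becomes the amendment winner — but here lies the one point I expect to be the main obstacle: in the destructive problem the deletion set must avoid $p$, whereas the algorithm of Theorem~\ref{thm-ccdc-amd-p} is free to delete $p$ and sometimes must (e.g.\ with agenda $(p,c)$ and $p$ beating $c$), which would produce false positives. To close this gap while still only invoking CCDC-Amendment as a black box, I would replace $p$ by an ordered block $(p_1,\dots,p_{k+1})$ of $k+1$ clones inserted at $p$'s position in the agenda, realized by substituting this block for $p$ inside every vote of $V$. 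Then each clone compares with every other candidate exactly as $p$ did, the clones beat one another in agenda order, no budget-$k$ deletion can eliminate all of them, and a surviving sub-block behaves in the sequential-knockout process exactly like the single candidate $p$; conversely, deleting only original candidates leaves all clones intact. Consequently, running the algorithm of Theorem~\ref{thm-ccdc-amd-p} on this modified instance with distinguished candidate $c$ and budget $k$ correctly decides whether $c$ can be made the amendment winner without ever deleting $p$, and {\prob{DCDC}}-Amendment is a {\yesins} iff some choice of $c$ yields {\yes}.

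The routine parts will be checking that substituting the clone block into the votes yields the claimed head-to-head majority relations, and that a block of clones — all equivalent to $p$ against outsiders and linearly ordered among themselves — is indistinguishable from $p$ for the amendment outcome (a short induction on the rounds of the knockout: the champion entering the block either beats $p$, hence sweeps the whole block, or loses/ties, in which case the leftmost surviving clone becomes champion and thereafter behaves as $p$ would). Apart from this clone gadget that ``pins'' $p$ in place, the argument consists only of the resoluteness observation together with polynomially many invocations of Theorems~\ref{thm-ccac-amd-p} and~\ref{thm-ccdc-amd-p}.
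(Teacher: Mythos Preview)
Your approach is essentially the same as the paper's: the paper's entire justification is the one-line remark that \prob{DCAC}-Amendment and \prob{DCDC}-Amendment are polynomial-time Turing reducible to \prob{CCAC}-Amendment and \prob{CCDC}-Amendment, respectively, and then invokes Theorems~\ref{thm-ccac-amd-p} and~\ref{thm-ccdc-amd-p}.

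You are, however, more careful than the paper on one point that the paper glosses over. For \prob{DCDC}, the naive ``guess the spoiler~$c$ and call \prob{CCDC} with distinguished candidate~$c$'' reduction is indeed not sound as a black-box call, because the \prob{CCDC} algorithm is permitted to delete~$p$. Your clone-block gadget (replace~$p$ by an ordered block of $k+1$ copies) correctly pins~$p$ in place while keeping the reduction a black-box call, and your verification that a surviving sub-block of clones is indistinguishable from~$p$ in the sequential knockout is accurate. An alternative, and arguably lighter, route is to observe that the explicit dynamic program in the proof of Theorem~\ref{thm-ccdc-amd-p} can be modified in the obvious way to forbid deleting the single candidate~$p$ (treat~$p$ as a predecessor/successor of~$c$ that must be matched on any beating path, and disallow the ``delete all predecessors'' shortcut when~$p$ is among them); this avoids the gadget but requires opening the box. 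Either way, your argument is correct and matches the paper's intent.
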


\subsection{The \texorpdfstring{$(m-h)$-Amendment}{Lg} Procedure for Constant~\texorpdfstring{$h$}{Lg}}
\label{sec-amd-m-h}
This section focuses on election control for the $(m-h)$-amendment procedures, where~$m$ denotes the number of candidates and~$h$ is a positive integer constant. We stipulate that for a fixed constant~$h$, when the $(m-h)$-amendment procedure is applied to an election with $m \leq h$ candidates and an agenda, the winner is the first candidate in the agenda. 

Our first result is stated as follows.

\begin{theorem}
\label{thm-ccav-suc-np}
For every constant~$h$, {\prob{CCAV}}-$(m-h)$-Amendment is {\memph\wah} with respect to the combined parameter of the number of added votes and the number of registered votes. This holds even when the distinguished candidate is the last candidate in the agenda.
\end{theorem}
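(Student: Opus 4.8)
The plan is to establish {\wahns} by a parameterized reduction from {\prob{Perfect Code}}, which is {\wac} with respect to the size~$\kappa$ of the sought code. I would first settle the case $h=1$ (the full-amendment procedure) and then lift the argument to the general $(m-h)$-amendment. Throughout, the distinguished candidate~$p$ is placed last in the agenda, and the structural fact I would exploit is that, for an agenda ending in~$p$, the candidate~$p$ is the full-amendment winner if and only if \emph{no} candidate other than~$p$ beats all of its successors; equivalently, every candidate preceding~$p$ must fail to beat at least one of its successors.

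From a {\prob{Perfect Code}} instance $(G,\kappa)$ with vertex set $\vset=\{\vere_1,\dots,\vere_n\}$ I would build an election $(C,V)$ with $C=\{p,q\}\cup Z\cup W$, where $Z=\{z_1,\dots,z_n\}$ and $W=\{w_1,\dots,w_n\}$ hold two copies of each vertex, together with the agenda $\rhd=(q,\overrightarrow{Z},\overrightarrow{W},p)$. The multiset~$V$ of $\bigo{\kappa}$ registered votes is designed so that: (i)~$q$ beats every other candidate, hence~$q$ rather than~$p$ is the full-amendment winner of $(C,V)$; (ii)~every head-to-head comparison not involving~$p$ is decided by a margin larger than~$\kappa$, with~$q$ beating every~$z_i$ and every~$w_i$, every~$z_i$ beating every~$w_j$, and the internal orders within~$Z$ and within~$W$ being $\overrightarrow{Z}$ and $\overrightarrow{W}$; (iii)~$q$ beats~$p$ by a margin of exactly~$\kappa$; and (iv)~the registered margins of~$p$ against~$z_i$ and against~$w_i$ are tuned so that, after the additions below,~$p$ beats or ties~$z_i$ exactly when~$\vere_i$ is dominated at most once, and~$p$ beats or ties~$w_i$ exactly when~$\vere_i$ is dominated at least once. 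For each vertex~$\vere_j$ I would create one unregistered vote which ranks~$p$ above~$q$ and arranges~$Z$ and~$W$ according to $N_G[\vere_j]$, so that including it shifts the~$p$-versus-$z_i$ comparison toward~$z_i$ and the~$p$-versus-$w_i$ comparison toward~$p$, precisely for those~$i$ with $\vere_i\in N_G[\vere_j]$. The budget is $\kav=\kappa$. Since every unregistered vote ranks~$p$ above~$q$ and the margins in~(ii) exceed~$\kappa$, adding at most~$\kappa$ votes can neither reverse those comparisons nor help~$q$ against~$p$.

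For the forward direction, a perfect code~$S$ of size~$\kappa$ yields~$\kappa$ unregistered votes whose addition reduces~$q$'s margin over~$p$ to~$0$, so~$q$ does not beat~$p$ and is eliminated in the first round; and since~$S$ dominates each vertex exactly once, no~$z_i$ and no~$w_i$ beats~$p$. As each~$z_i$ still beats all of its successors in~$Z$ and all of~$W$, and each~$w_i$ still beats all of its successors in~$W$, no candidate other than~$p$ beats all of its successors, so~$p$ is the full-amendment winner. Conversely, suppose at most~$\kappa$ added votes make~$p$ the full-amendment winner. Then~$q$, being first, must fail to beat some successor; since~$q$ beats all of $Z\cup W$ no matter which $\le\kappa$ votes are added,~$q$ must not beat~$p$, which, given the exact margin~$\kappa$ and that every added vote ranks~$p$ above~$q$, forces exactly~$\kappa$ added votes. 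The same reasoning applied to each~$z_i$ (which still beats all of its successors in~$Z$ and~$W$) shows~$z_i$ does not beat~$p$, hence~$\vere_i$ is dominated at most once; applied to each~$w_i$ it shows~$w_i$ does not beat~$p$, hence~$\vere_i$ is dominated at least once. Thus the~$\kappa$ vertices whose votes were added form a perfect code, and since $\abs{V}+\kav=\bigo{\kappa}$ this is a valid parameterized reduction for the combined parameter.

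For an arbitrary constant $h\ge 2$, I would turn the full-amendment instance into an $(m-h)$-amendment instance by appending $h-1$ dummy candidates at the end of the agenda, each ranked below every other candidate in every vote. On the resulting instance, with $m'=m+h-1$ candidates, the $(m'-h)$-amendment equals the $(m-1)$-amendment, and one checks that the dummies are eliminated harmlessly: whenever the current front candidate is matched against a window that reaches past the original candidates, it wins those extra comparisons automatically, so the procedure behaves exactly as the full-amendment on the original candidates. The main obstacle is controlling this full-amendment cascade. Unlike the ordinary amendment procedure, where placing~$p$ last reduces the problem to a single final pairwise contest, the full-amendment requires the \emph{entire} prefix of the agenda to ``fall through'' to~$p$, so the reduction must ensure that adding a set of votes that is not a perfect code leaves some earlier candidate beating all of its successors --- for example, a mere dominating set over-covers some vertex~$\vere_i$, which makes~$z_i$ beat~$p$ and hence~$z_i$ (or an even earlier candidate) the winner. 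Making the head-to-head margins land precisely on the ``at most once'' and ``at least once'' domination thresholds while keeping the comparisons in~(ii) insensitive to the $\le\kappa$ added votes is the delicate part of the construction.
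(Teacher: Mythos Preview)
Your plan for $h=1$ is essentially the paper's approach: a reduction from {\prob{Perfect Code}} with two candidate copies per vertex, one family enforcing ``dominated at most once'' and the other ``dominated at least once,'' so that $p$ (placed last) wins under full-amendment iff the selected $\kappa$ votes encode a perfect code. Your variant with the extra front candidate~$q$ (used to force exactly $\kappa$ additions via the $q$-versus-$p$ margin) is a harmless embellishment; the paper achieves the same effect directly through the $X$-candidates. The margin calculus you sketch is feasible with $\bigo{\kappa}$ registered votes, matching the paper's $3\kappa+2$.

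The gap is in your lift to $h\ge 2$. By appending $h-1$ dummies \emph{after}~$p$ in the agenda, you destroy the very restriction the theorem asserts: that hardness holds even when the distinguished candidate is the \emph{last} candidate. Your reduction no longer produces instances in that restricted class, so it does not establish the stated result for $h\ge 2$. (Your observation that the $(m'-h)$-procedure then simulates full-amendment on the original candidates is correct, but irrelevant once $p$ is not last.)

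The paper's fix is to insert the $h-1$ auxiliary candidates $p_1,\dots,p_{h-1}$ \emph{before}~$p$ in the agenda, while ranking them immediately \emph{after}~$p$ in every vote. This keeps~$p$ last and guarantees that every $x_i,y_i$ has the same head-to-head outcome against each $p_j$ as against~$p$; hence when any $x_i$ or $y_i$ sits at the front of the remaining agenda, the window (of size $m-1$) now ends at some $p_j$ rather than at~$p$, but the comparison is equivalent, so the full-amendment analysis carries over verbatim. You should replace your trailing dummies with this ``clones-before-$p$'' gadget.
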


\begin{proof}
We prove the theorem through a reduction from the {\prob{Perfect Code}} problem. Consider an instance $(G, \kappa)$ of {\prob{Perfect Code}}, where $G$ is a graph, and $\kappa$ is an integer. Let~$\vset$ denote the vertex set of~$G$. We first give a reduction for the case where $h=1$.
Particularly, we construct an instance  of {\prob{CCAV}}-{\Famend} as follows.

Let $(\vere_1, \vere_2, \dots, \vere_n)$ be any arbitrary linear order of~$\vset$. 
For each $\vere_i$, we create two candidates denoted by~$x_i$ and~$y_i$. In addition, we create one candidate~$p$. Let $X=\{x_i \setmid i\in [n]\}$, let $Y=\{y_i \setmid i\in [n]\}$, and let $C=X\cup Y\cup \{p\}$. Furthermore, let $\overrightarrow{X}=(x_1, x_2, \dots, x_n)$, and let $\overrightarrow{Y}=(y_1, y_2, \dots, y_n)$. The agenda is $\rhd=(\overrightarrow{X}, \overrightarrow{Y}, p)$. 
For each $\vere_i$ where $i\in [n]$, let
\[N_X[\vere_i]=\{x_j\in X \setmid j\in [n], \vere_j\in N_G[\vere_i]\}\]
be the set of candidates in~$X$ corresponding to the closed neighborhood of~$\vere_i$ in~$G$.
Analogously, let
\[N_Y[\vere_i]=\{y_j\in Y \setmid j\in [n], \vere_j\in N_G[\vere_{i}]\}.\]
 We create the following registered votes:
\begin{itemize}
\item $\kappa+2$ votes with the preference $p\Succ \overrightarrow{X} \Succ \overrightarrow{Y}$;  %%% for the second variant of amendment create one more vote with this preference
\item $\kappa-2$ votes with the preference  $\overrightarrow{X}\Succ p \Succ \overrightarrow{Y}$; and
\item $\kappa+2$ votes with the preference  $\overrightarrow{X}\Succ \overrightarrow{Y} \Succ p $.
\end{itemize}
Let~$V$ be the multiset of the above~$3\kappa+2$ registered votes. Figure~\ref{fig-ccav-suc-hard} shows the majority graph of $(C, V)$.
\begin{figure}
    \centering
    \includegraphics[width=0.45\textwidth]{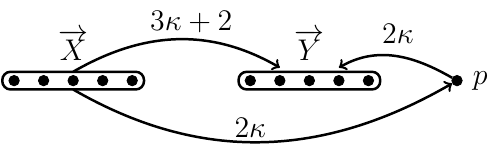}
    \caption{An illustration of the weighted majority graph of $(C, V)$ and the agenda as used in the proof of Theorem~\ref{thm-ccav-suc-np}. Arcs among the vertices in $X$ (respectively,~$Y$) are forward arcs with a uniform weight of $3\kappa + 2$. The agenda is represented by the left-to-right ordering of the vertices.}
    \label{fig-ccav-suc-hard}
\end{figure}
Now we construct the unregistered votes.
For each $\vere_i$, $i\in [n]$, we create one unregistered vote~$\succ_i$ with the preference
\[\left(\overrightarrow{X}[N_X[\vere_i]]\right) \Succ \left(\overrightarrow{Y}\setminus N_Y[\vere_i]\right) \Succ p \Succ \left(\overrightarrow{X}\setminus N_X[\vere_i]\right) \Succ \left(\overrightarrow{Y}[N_Y[\vere_i]]\right).\]
Let~$W$ denote the multiset of the above unregistered votes. 
Let $k=\kappa$, i.e., we are allowed to add at most~$\kappa$ unregistered votes. The {\prob{CCAV}}-{\Famend} instance is $(C, p, V, W, \rhd, k)$.

In the following, we show that the given {\prob{Perfect Code}} instance is a {\yesins} if and only if the above constructed {\prob{CCAV}}-{\Famend} instance is a {\yesins}. 
Observe that as for every $c\in C\setminus \{p\}$, all the $3\kappa+2$ registered votes rank~$c$ before all her successors except~$p$, no matter which at most~$\kappa$ unregistered votes are added,~$c$ still beats all her successors in~$C\setminus \{p\}$.

$(\Rightarrow)$ Assume that the graph~$G$ admits a perfect code $\vset'\subseteq \vset$ of size~$\kappa$. Let $W'=\{\succ_i\, \setmid \vere_i\in \vset'\}$ be the set of unregistered votes corresponding to~$\vset'$. Let $\elec=(C, V\muplus W')$. As~$p$ is the last candidate in the agenda, and every candidate from $C\setminus \{p\}$ beats all her successors from~$C\setminus \{p\}$,~$p$ wins~$\elec$ if and only if~$p$ is not beaten by anyone else in~$\elec$. Consider two candidates $x_i\in X$ and $y_i\in Y$ where $i\in [n]$. As~$\vset'$ is a perfect code of~$G$, there is exactly one $\vere_j\in \vset'$ such that $\vere_i\in N_{G}[\vere_j]$. Due to the definition of unregistered votes, this implies that in~$W'$, only the vote~$\succ_j$ created for~$\vere_j$ ranks~$x_i$ before~$p$ and, moreover, only the same vote~$\succ_j$ ranks~$p$ before~$y_i$. It follows that~$n_{V\muplus W'}(p, x_i)=(\kappa+2)+(\kappa-1)=2\kappa+1$ and $n_{V\muplus W'}(p, y_i)=(\kappa+2)+(\kappa-2)+1=2\kappa+1$. As there are exactly $3\kappa+2+\kappa=4\kappa+2$ votes in~$\elec$, we know that~$p$ ties with both~$x_i$ and~$y_i$. As~$i$ can be any integer in~$[n]$, we know that~$p$ ties with all candidates from $X\cup Y$ in~$\elec$.

$(\Leftarrow)$ Assume that there is a subset $W'\subseteq W$ of at most~$\kappa$ votes such that~$p$ is the {\famend} winner of $\elec=(C, V\muplus W')$ with respect to the agenda~$\rhd$. First, observe that every candidate from $X$ beats each of its successors in $\elec$, except~$p$, regardless of which at most $\kappa$ votes are contained in $W'$. This is because all the $3\kappa + 2$ registered votes rank every candidate from $X$ before all of its successors except~$p$. 

Moreover, if $W'$ contains at most $\kappa - 1$ votes, then there exists at least one candidate from~$X$ who also beats~$p$ with respect to $V \cup W'$. This follows because there are already~$2\kappa$ registered votes that rank every candidate from $X$ before~$p$. If no votes are added, all candidates from $X$ beat~$p$. Alternatively, if any unregistered vote, say $\succ_i$, corresponding to a vertex~$u_i$ in $G$ is added, any candidate from $X$, say $x_i$, that is ranked before~$p$ beats~$p$ in $\elec$. 

In other words, if $\abs{W'} \leq \kappa - 1$, there would exist at least one candidate in~$X$ who beats all of her successors. In this case, the leftmost such candidate in the agenda would be the winner of~$\elec$.  
However, this contradicts the assumption that~$p$ wins~$\elec$.

From the above analysis, we conclude that $\abs{W'} = \kappa$. It follows that $\abs{V \muplus W'} = 4\kappa + 2$. Let $\vset' = \{\vere_i \in \vset \setmid\; \succ_i \in W'\}$ be the set of vertices corresponding to~$W'$. We claim that~$\vset'$ is a perfect code of~$G$.
 Suppose, for the sake of contradiction, that this is not the case. Then, one of the following two cases occurs.
\begin{description}
\item[Case~1:] $\exists \vere_i\in \vset$ such that $\vere_i\not\in N_G[\vset']$. \hfill

In this case, all unregistered votes in~$W'$ rank~$y_i$ before~$p$. So, in~$\elec$ there are in total $\kappa+2+\kappa=2\kappa+2$ votes ranking~$y_i$ before~$p$, implying that~$p$ is beaten by~$y_i$ in~$\elec$, 
contradicting that~$p$ wins~$\elec$.

\item[Case~2:] $\exists \vere_i\in \vset$ and $\vere_j, \vere_{j'}\in \vset'$ such that $\vere_i\in N_G[\vere_j]\cap N_G[\vere_{j'}]$. \hfill

Due to the construction of the unregistered votes, the two votes~$\succ_j$ and~$\succ_{j'}$ in~$W'$ both rank~$x_i$ before~$p$. In this case, there are at least $2\kappa+2$ votes in $V\muplus W'$ ranking~$x_i$ before~$p$, implying that~$x_i$ beats~$p$ in~$\elec$. However, this contradicts that~$p$ wins~$\elec$.
\end{description}

Since both of the above cases lead to contradictions,~$\vset'$ is a perfect code of~$G$.

To show the {\wahns} of {\prob{CCAV}}-$(m-h)$-Amendment we add a set of $h-1$ additional candidates $p_1$, $p_2$, $\dots$, $p_{h-1}$ in the following way:
\begin{enumerate}
    \item[(1)] we add them before~$p$ with their relative order being $p_{h-1}$, $p_{h-2}$, $\dots$, $p_1$ in the agenda; and
    \item[(2)] in each vote, we rank them after~$p$ with their relative order being~$p_1$,~$p_2$,~$\dots$,~$p_{h-1}$.
\end{enumerate}
So, in the new election,~$p$ beats each~$p_i$ where $i\in [h-1]$, and the head-to-head comparison between every candidate~$x_i$ (respectively,~$y_i$) and each~$p_j$ where $j\in [h-1]$ is exactly the same as that between~$x_i$ (respectively,~$y_i$) and~$p$.
\end{proof}

For control by deleting voters, we also obtain a parameterized intractability result.

\begin{theorem}
\label{thm-ccdv-suc-np}
For every constant~$h$, {\prob{CCDV}}-$(m-h)$-Amendment is {\memph\wbh} with respect to the number of deleted votes. This holds even when the distinguished candidate is the last candidate in the agenda.
\end{theorem}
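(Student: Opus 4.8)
The plan is to reduce from \prob{RBDS}, reusing the template of the proof of Theorem~\ref{thm-ccdv-amd-np} but with a construction adapted to the $(m-1)$-amendment procedure; the case of an arbitrary constant $h$ then follows, as in the last paragraph of the proof of Theorem~\ref{thm-ccav-suc-np}, by inserting $h-1$ dummy candidates immediately before $p$ in the agenda and immediately after $p$ in every vote, so that $p$ beats each of them while they mimic $p$ in every other head-to-head comparison.

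For $h=1$, I would start from an \prob{RBDS} instance $(G,\kappa)$ with parts $(R,B)$ (using the standard assumptions $|B|>\kappa>1$, no isolated vertex, common red degree $\ell$, and $\ell+\kappa\le|B|$). Create one candidate per red vertex, an auxiliary candidate $q$ placed first, and the distinguished candidate $p$ placed last; the agenda is $\rhd=(q,\overrightarrow{R},p)$. The votes combine three ingredients: a large block of ``structure'' votes whose only net effect is to make the red candidates a forward chain along $\overrightarrow{R}$ that is \emph{robust}, i.e.\ survives any $k$ deletions; calibration votes making $q$ the Condorcet winner of the original election (hence, by Lemma~\ref{lem-a}, the $(m-1)$-amendment winner, so $p$ is not the winner) while keeping the margin by which each red candidate beats $p$, and the margin by which $q$ beats $p$, as small as possible; and, for each blue vertex $b$, a vote $\succ_b$ placing the red neighbours of $b$ above $p$ and the other red candidates below $p$.

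The design has to choose the budget $k$ (a fixed linear function of $\kappa$) and these margins so that deleting the $\kappa$ votes $\{\succ_b:b\in B'\}$ of a dominating set $B'$, paired with an equal-sized batch of ``uniform'' votes ranking all red candidates above $p$, simultaneously (i)~makes $p$ tie-or-beat $q$, disqualifying $q$, and (ii)~makes $p$ tie-or-beat every red candidate, disqualifying each of them (as $p$ is a successor of every red candidate); then $p$ becomes the $(m-1)$-amendment winner. The step I expect to be the main obstacle --- the reason this goes beyond Theorem~\ref{thm-ccdv-amd-np}, where it is enough that $p$ end up beating $q$ --- is that here $p$ must overtake \emph{every} red candidate, yet deleting $\succ_b$ for a blue vertex $b$ \emph{not} adjacent to a red vertex $r$ pushes the $p$-versus-$r$ comparison the wrong way, while a dominating set of size $\kappa$ need cover $r$ only once. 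The paired deletions of uniform votes are precisely what cancel this interference (a non-adjacent deletion becomes net-neutral for $p$-versus-$r$, an adjacent one stays strictly favourable), and shrinking the red-beats-$p$ margins to their minimum is what makes the arithmetic close; one must also verify that the structure and calibration votes are useless or harmful to delete, so that $p$ cannot be made to win trivially.

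For the converse, if at most $k$ deletions make $p$ the winner then every red candidate and $q$ are disqualified; robustness of the chain forces every red candidate to be disqualified through $p$, so afterwards $p$ ties-or-beats every red candidate. Tracking which deletions can actually help $p$ against a given red candidate $r$ --- only the uniform votes and the $\succ_b$ with $b\in N_G(r)$ do, and the budget is too tight to flip all of these comparisons unless, for every $r$, some $\succ_b$ with $b\in N_G(r)$ is deleted --- one obtains a blue set dominating $R$ of size at most $\kappa$, which is then padded up to size exactly $\kappa$. The only laborious part is the exhaustive bookkeeping of the pairwise margins under arbitrary (not merely the intended) deletions; everything else is routine.
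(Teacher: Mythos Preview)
Your high-level plan matches the paper's: the same \prob{RBDS} reduction, the same candidate set $C=R\cup\{p,q\}$ with agenda $(q,\overrightarrow{R},p)$, the same $\succ_b$ design (neighbours of $b$ above $p$, non-neighbours below), and the same dummy-block trick for $h>1$. Where you diverge is in the execution, and the paper's route is markedly simpler.

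Your ``main obstacle'' paragraph assumes that each red candidate $r$ initially \emph{beats} $p$, so that $p$ must \emph{overtake} every $r$; this is why you reach for paired deletions and a budget larger than~$\kappa$. The paper instead calibrates the non-$\succ_b$ votes so that $p$ already beats every $r$ (by exactly $\kappa-2$) while $q$ beats $p$ (by exactly $\kappa$). Concretely it uses $|B|-\kappa+1$ copies of $\overrightarrow{R}\,p\,q$, $\ell$ copies of $q\,p\,\overrightarrow{R}$, and $\ell-1$ copies of $p\,q\,\overrightarrow{R}$, together with the $|B|$ votes $\succ_b$. With $k=\kappa$, the intended solution deletes \emph{only} the $\kappa$ votes $\{\succ_b:b\in B'\}$. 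Your ``wrong-way'' worry then dissolves arithmetically: if among the $\kappa$ deleted $\succ_b$'s exactly $j$ have $b\sim r$, the margin $n(p,r)-n(r,p)$ changes from $\kappa-2$ to $2j-2$, which is nonnegative as soon as $j\ge 1$; domination is exactly the condition $j\ge 1$ for every $r$. No uniform-vote partners are needed.

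The converse is correspondingly tighter. Since $q$ beats $p$ by $\kappa$ and every $\succ_b$ has $q$ above $p$, any successful deletion set of size $\le\kappa$ must consist of $\kappa$ votes all placing $q$ above $p$; one then argues (by a harmless swap) that they may all be taken from $V_B$. Robustness of the $R$-chain and of $q$ over $R$ forces $p$ to tie-or-beat every $r$ after deletion, which by the $2j-2$ computation yields a dominating set. Your paired-deletion scheme, by contrast, inflates the budget and thereby opens extra deletion strategies (e.g.\ deleting only ``uniform'' votes) that you would have to rule out case by case --- this is the ``exhaustive bookkeeping'' you anticipate, and it is genuinely delicate to close without further constraints you have not specified. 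The paper's single-margin calibration avoids all of it.
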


\begin{proof}
We prove the theorem by reductions from the {\prob{RBDS}} problem. Let $(G, \kappa)$ be an instance of {\prob{RBDS}}, where $G$  is a bipartite graph with the bipartition $(R, B)$. We first provide a reduction for the case where $h=1$. Similar to the proof of Theorem~\ref{thm-ccdv-amd-np}, we assume that every red vertex has the same degree, which we denote by $\ell$. We also assume that $\kappa\geq 3$. The candidate set and the agenda are exactly the same as in the proof of Theorem~\ref{thm-ccdv-amd-np}. Precisely, we have that $C=R\cup \{p, q\}$ where~$p$ is the distinguished candidate, and $\rhd=(q, \overrightarrow{R}, p)$.
We create the following $2\abs{B}+2\ell-\kappa$ votes in~$V$:
\begin{itemize}
\item $\abs{B}-\kappa+1$ votes with the preference $\overrightarrow{R}\Succ p\Succ q$;
\item $\ell$ votes with the preference $q\Succ p\Succ \overrightarrow{R}$;
\item $\ell-1$ votes with the preference $p\Succ q\Succ \overrightarrow{R}$; and 
\item for each blue vertex $b\in B$, one vote~$\succ_b$ with the preference
\[q\Succ \left(\overrightarrow{R}[N_G(b)]\right) \Succ p \Succ \left(\overrightarrow{R}\setminus N_G(b)\right).\]
\end{itemize}
Let $k=\kappa$. The constructed instance of {\prob{CCDV-\Famend}} is $(C, p, V, \rhd, k)$.
The weighted majority graph of $(C, V)$ and the agenda~$\rhd$ are illustrated in Figure~\ref{fig-ccdv-suc-np}.

\begin{figure}
    \centering
    \includegraphics[width=0.45\textwidth]{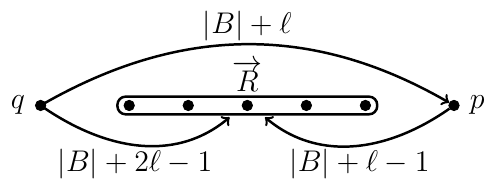}
    \caption{An illustration of the weighted majority graph of $(C, V)$ and the agenda as used in the proof of Theorem~\ref{thm-ccdv-suc-np}. All arcs between the vertices in $R$ are forward arcs with a weight of at least $2\abs{B} - \kappa + \ell$. The agenda is depicted by the left-to-right ordering of the vertices.}
    \label{fig-ccdv-suc-np}
\end{figure}

It is easy to see that~$q$ beats everyone else and is the {\famend} winner of~$(C, V)$ with respect to~$\rhd$.
The construction clearly can be created in polynomial time.
In the following, we prove the correctness of the reduction.

$(\Rightarrow)$ Assume that there is a subset $B'\subseteq B$ of cardinality~$\kappa$ such that~$B'$ dominates~$R$ in~$G$. Let $V'=\{\succ_b\, \setmid b\in B'\}$ be the set of votes corresponding to~$B'$. Let $\elec=(C, V\setminus V')$. Clearly, $\abs{V\setminus V'}=2\abs{B}-2\kappa+2\ell$. We shall show that~$p$ is not beaten by anyone else in~$\elec$ and hence is the {\famend} winner. As all votes in~$V'$ rank~$q$ before~$p$, it holds that $n_{V\setminus V'}(p, q)=(\abs{B}-\kappa+1)+(\ell-1)=\abs{B}-\kappa+\ell$, meaning that~$p$ ties with~$q$ in~$\elec$. Moreover, as~$B'$ dominates~$R$, for every $r\in R$, there is at least one blue vertex $b\in B'$ dominating~$r$. By the construction of the votes, we know that~$r$ is ranked before~$p$ in the vote $\succ_b\in V'$ corresponding to~$b$. It follows that at most $\kappa-1$ votes in~$V'$ rank~$p$ before~$r$. By the construction of the votes, we know that there are at least $\ell+(\ell-1)+(\abs{B}-\ell)-(\kappa-1)=\abs{B}+\ell-\kappa$ votes ranking~$p$ before~$r$ in~$V\setminus V'$, implying that~$p$ ties with~$r$ in~$\elec$. As this holds for all $r\in R$, we know that~$p$ ties with all candidates in~$R$. It follows that~$p$  is the {\famend} winner of~$\elec$.

$(\Leftarrow)$ Assume that we can make~$p$ the {\famend} winner by deleting at most $k=\kappa$ votes from~$V$, i.e., $\exists V'\subseteq V$ such that $\abs{V'}\leq k$ and~$p$ wins $\elec=(C, V\setminus V')$ with respect to~$\rhd$. It is easy to see that no matter which at most~$k$ votes are contained in~$V'$,~$q$ beats all candidates in~$R$, and every $r\in R$ beats all her successors in~$R$ in the election~$\elec$.  Therefore, it must be that~$p$ beats or ties with all the other candidates in~$\elec$. This implies that all votes in~$V'$ must rank~$q$ before~$p$ and, moreover, it must be that $\abs{V'}=k=\kappa$, since otherwise~$p$ is beaten by~$q$ in~$\elec$. There are two groups of votes ranking~$q$ before~$p$: those corresponding to the blue vertices, and those with the preference $q \Succ p \Succ \overrightarrow{R}$. We may assume that all votes in~$V'$ belong to the first group. This assumption holds because if~$V'$ contained any vote with the preference $q \Succ p \Succ \overrightarrow{R}$, we could construct another solution~$V''$ by replacing this vote with any vote from the first group that is not currently in~$V'$. 
 Let~$r$ be a candidate from~$R$. As $n_V(r, p)=\abs{B}-\kappa+\ell+1$ and $\abs{V\setminus V'}=2\abs{B}-2\kappa+2\ell$, we know that there is at least one vote $\succ_b\in V'$ which ranks~$r$ before~$p$. By the reduction, we know that the vertex~$b$ corresponding to~$\succ_b$ dominates~$r$ in~$G$. It is clear now that the subset $B'=\{b\in B \setmid\  \succ_b\in V'\}$ of blue vertices corresponding to~$V'$ dominates~$R$ in~$G$.

To prove the {\wbhns} of {\prob{CCDV}}-$(m-h)$-Amendment for other constant values of~$h$, we modify the reduction above in the same manner as in the proof of Theorem~\ref{thm-ccav-suc-np}. Specifically, we add $h-1$ additional candidates, denoted as~$p_1$,~$p_2$, $\dots$,~$p_{h-1}$, in the same way as in the proof of Theorem~\ref{thm-ccav-suc-np}.
\end{proof}

\begin{theorem}
\label{thm-ccdv-famend-wbh-remaining}
For every constant~$h$, {\prob{CCDV}}-$(m-h)$-Amendment is {\memph\wbh} with respect to the number of remaining votes. This result holds even when the distinguished candidate occupies the last position in the agenda. 
\end{theorem}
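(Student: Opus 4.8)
The plan is to prove the theorem by a reduction from {\prob{RBDS}}, obtained by fusing the ``surviving-votes'' gadget from the proof of Theorem~\ref{thm-ccdv-amd-wbh-remaining-votes} with the full-amendment construction from the proof of Theorem~\ref{thm-ccdv-suc-np}. As usual it suffices to give the reduction for the full-amendment procedure (the case $h=1$); the general constant~$h$ then follows by the padding trick already used in Theorems~\ref{thm-ccav-suc-np} and~\ref{thm-ccdv-suc-np}, namely inserting $h-1$ dummy candidates $p_1,\dots,p_{h-1}$ immediately before~$p$ in the agenda (relative order $p_{h-1},\dots,p_1$) and immediately after~$p$ in every vote (relative order $p_1,\dots,p_{h-1}$), so that~$p$ beats each~$p_i$ and every other candidate compares with each~$p_i$ exactly as it does with~$p$; under this padding the $(m-h)$-amendment procedure reproduces the full-amendment behaviour analysed below.

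For $h=1$, I would start from an {\prob{RBDS}} instance $(G,\kappa)$ with bipartition $(R,B)$, assuming the normalizations listed in Subsection~\ref{subsec-supporting-problems} (in particular $\abs{B}>\kappa>1$, no isolated vertices, and all red vertices of a common degree~$\ell$). Take $C=R\cup\{q',q,p\}$ with~$p$ distinguished and agenda $\rhd=(q',q,\overrightarrow{R},p)$, so that~$q'$ is first and~$p$ last. The vote multiset~$V$ consists of a block of $\Theta(\kappa)$ \emph{control} votes together with, for each $b\in B$, one vote~$\succ_b$; the control votes rank~$q'$ at the very bottom, and each~$\succ_b$ ranks~$q'$ first and~$q$ above all of~$\overrightarrow{R}$, and ranks~$p$ \emph{above} exactly the candidates of $\overrightarrow{R}\setminus N_G(b)$ and below those of $\overrightarrow{R}[N_G(b)]$. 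All arcs inside~$R$ are forward arcs of large weight, and the control-vote multiplicities are tuned (in the spirit of Theorem~\ref{thm-ccdv-suc-np}, using the uniform degree~$\ell$) so that: before any deletion, $q'$ beats every other candidate and is the full-amendment winner; in every relevant sub-election~$q$ beats every $r\in R$; and the balance of ``$p$ above~$r$'' versus ``$r$ above~$p$'' is set just below the tie, so that a single kept $\succ_b$-vote covering~$r$ tips it from ``$r$ beats~$p$'' to ``$p$ is not beaten by~$r$''. Finally set $k=\abs{B}-\kappa$, so that every sub-election obtained by $\le k$ deletions retains $2\kappa+\Theta(1)$ votes, which is the parameter. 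The conceptual point distinguishing this from Theorem~\ref{thm-ccdv-amd-wbh-remaining-votes} is that for~$p$ in the last position, $p$ is the full-amendment winner of a sub-election precisely when no candidate beats~$p$: in the relevant sub-elections $q'$ and~$q$ are eliminated in the first two rounds (since $q'$ does not beat~$p$ and~$q$ does not beat~$p$ while beating all of~$R$), and each $r\in R$ beats all its $R$-successors, so the winner is~$p$ iff no~$r$ beats~$p$.

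For the forward direction, given a size-$\kappa$ dominating set $B'\subseteq B$, I would \emph{keep} the $\succ_b$-votes for $b\in B'$ together with all control votes and delete the remaining $\abs{B}-\kappa$ votes $\succ_b$, $b\notin B'$; a direct count shows that in the resulting election $q'$ ties with or loses to~$p$, $q$ beats all of~$R$ but ties with or loses to~$p$, and every $r\in R$ ties with or loses to~$p$ (here we use that $r$ has at least one neighbour in~$B'$), so~$q'$, then~$q$, then each~$r$ fail to be Condorcet winners of their respective tails and~$p$ survives. For the backward direction, suppose $V'\subseteq V$ with $\abs{V'}$ equal to the parameter and~$p$ the full-amendment winner of $(C,V')$. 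Since~$q'$ heads the agenda, it cannot be a Condorcet winner of $(C,V')$; because the control votes rank~$q'$ last, this forces $\abs{V'\cap\{\succ_b:b\in B\}}\le\kappa+\Theta(1)$, and together with $\abs{V'}\ge 2\kappa+\Theta(1)$ and $\abs{\text{control votes}}=\kappa+\Theta(1)$ this pins~$V'$ down to contain essentially all control votes and exactly~$\kappa$ of the~$\succ_b$-votes, say for $b\in B'$. Since $q'$ and $q$ are eliminated and no $r\in R$ beats~$p$ in $(C,V')$, the count at each~$r$ forces~$r$ to have at least one neighbour in~$B'$; hence~$B'$ dominates~$R$.

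I expect the backward direction, and specifically the bookkeeping of vote multiplicities, to be the main obstacle. One has to exhibit a single choice of control-vote counts (depending on~$\kappa$ and on the common red-degree~$\ell$) that simultaneously makes (i)~$q'$ the winner before any deletion, (ii)~$p$ the winner after deleting precisely the ``wrong'' $\succ_b$-votes, (iii)~the $q'$-threshold land near~$\kappa$ so the surviving-votes counting goes through, and (iv)~the total number of surviving votes $\Theta(\kappa)$; and one must verify that the block-elimination dynamics peculiar to the full-amendment procedure really do eliminate~$q'$ and~$q$ in the first two rounds across all relevant sub-elections, so that the clean ``winner $=$ unbeaten last candidate'' characterization applies. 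Once a consistent set of multiplicities is fixed, the equivalence proof is a routine adaptation of the arguments in the proofs of Theorems~\ref{thm-ccdv-suc-np} and~\ref{thm-ccdv-amd-wbh-remaining-votes}.
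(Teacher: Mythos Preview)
Your approach is conceptually sound and could likely be made to work, but it is more complicated than necessary, and the vagueness you flag (the unspecified $\Theta(\kappa)$ multiplicities) is precisely where the paper's proof is cleaner. The paper uses only \emph{one} auxiliary candidate $q$ besides $p$ (so $C = R \cup \{q, p\}$ with agenda $(q, \overrightarrow{R}, p)$), not two. Concretely, it takes $V_1$ to be $\kappa - 1$ copies of $p \Succ q \Succ \overrightarrow{R}$, $V_2$ a single vote $\overrightarrow{R} \Succ p \Succ q$, and for each $b \in B$ one vote $\succ_b$ with preference $q \Succ (\overrightarrow{R} \setminus N_G(b)) \Succ p \Succ (\overrightarrow{R}[N_G(b)])$, setting $k = \abs{B} - \kappa$. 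Here $q$ simultaneously plays the role of your $q'$ (the initial winner, since all $\abs{B} > \kappa$ votes $\succ_b$ rank $q$ first) \emph{and} the role of your $q$ (the candidate that ties with $p$ in the target sub-election and is therefore eliminated in the first round). This collapses your two gatekeepers into one and makes all the bookkeeping you worry about completely explicit: the surviving election has exactly $2\kappa$ votes, $q$ ties with $p$, and a red candidate $r$ beats $p$ if and only if $r$ has no neighbour among the kept blue vertices.

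Your instinct to fuse the constructions of Theorems~\ref{thm-ccdv-amd-wbh-remaining-votes} and~\ref{thm-ccdv-suc-np} is reasonable, but the extra candidate $q'$ from Theorem~\ref{thm-ccdv-amd-wbh-remaining-votes} was needed there only because under $h$-amendment for constant $h$ the first agenda candidate compares with just its next $h$ successors and so cannot ``see'' $p$ directly; under full-amendment this obstruction disappears, and the second gatekeeper becomes redundant. The padding for general constant $h$ is exactly as you describe and matches the paper.
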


\begin{proof}
We prove the theorem via reductions from the {\prob{RBDS}} problem. First, we consider the case where $h=1$. Let $(G, \kappa)$ be an instance of {\prob{RBDS}}, where~$G$ is a bipartite graph with the bipartition $(R, B)$. Without loss of generality, assume that $\kappa < \abs{B}$. We construct an instance of {\prob{CCDV}}-Full-Amendment as follows.

For each red vertex, we create one candidate denoted still by the same symbol for notational simplicity. In addition, we create two candidates~$q$ and~$p$, where~$p$ is the distinguished candidate. Let $C=R\cup \{q, p\}$. The agenda is $\rhd=(q, \overrightarrow{R}, p)$.
We create the following votes:
\begin{itemize}
\item a multiset~$V_1$ of $\kappa-1$ votes with the preference $p\Succ q\Succ \overrightarrow{R}$;
\item a singleton~$V_2$ of one vote with the preference $\overrightarrow{R}\Succ p\Succ q$; and
\item for each blue vertex $b\in B$, one vote $\succ_b$ with the preference
\[q\Succ \left(\overrightarrow{R}\setminus N_G(b)\right) \Succ p\Succ \left(\overrightarrow{R}[N_G(b)]\right).\]
\end{itemize}
For a given $B'\subseteq B$, let $V_{B'}=\{\succ_b \setmid b\in B'\}$ be the submultiset of votes created for the blue vertices in~$B'$. Let $V=V_1\cup V_2\cup V_B$. It holds that $\abs{V}=\abs{B}+\kappa$. Let $k=\abs{B}-\kappa$. Obviously, under the assumption that $\kappa<\abs{B}$, the {\famend} winner of $(C, V)$ with respect to~$\rhd$ is~$q$.
The instance of {\prob{CCDV}}-{\Famend} is $(C, p, V, \rhd, k)$, which asks if there is a submultiset $V'\subseteq V$ of cardinality at least $\abs{V}-k=2\kappa$ such that~$p$ is the {\famend} winner of $(C, V')$ with respect to~$\rhd$. It remains to show the correctness of the reduction.

$(\Rightarrow)$ Assume that there is a subset $B'\subseteq B$ such that~$\abs{B'}=\kappa$ and~$B'$ dominates~$R$ in~$G$. Let $V'=V_1\cup  V_2\cup V_{B'}$, and let $\elec=(C, V')$. We claim that~$p$ is the {\famend} winner of~$\elec$ with respect to~$\rhd$. Since~$p$ is the last candidate in the agenda, it suffices to show that~$p$ is not beaten by any other candidate in~$\elec$. It is clear that~$p$ ties with~$q$ in~$\elec$. Let $r\in R$ be a red vertex. As~$B'$ dominates~$R$, there is at least one vote~$\succ_b$ in $V_{B'}$ such that~$b$ dominates~$r$. From the construction of the votes, we know that~$p$ is ranked before~$r$ in the vote~$\succ_b$. Therefore, there are in total at least $\abs{V_1}+1=\kappa$ votes ranking~$p$ before~$r$ in~$V'$, implying that~$p$ is not beaten by~$r$. As this holds for all $r\in R$, the correctness for this direction follows.

$(\Leftarrow)$ Assume that there exists a submultiset $V' \subseteq V$ of at least $2\kappa$ votes such that~$p$ is the {\famend} winner of $(C, V')$ with respect to~$\rhd$. Observe that, since $\abs{V_1} + \abs{V_2} = \kappa$ and all votes in~$V_B$ rank~$q$ in first place, it must hold that $(V_1 \cup V_2) \subseteq V'$, and~$V'$ contains exactly~$\kappa$ votes from~$V_B$. Otherwise,~$q$ would be the {\famend} winner of $(C, V')$, contradicting that~$p$ wins $(C, V')$. Define $V_{B'} = V' \cap V_B$, where $B' \subseteq B$. As noted, it follows that $\abs{V_{B'}} = \kappa$. 

We claim that~$B'$ dominates~$R$ in~$G$. Suppose, for the sake of contradiction, that there exists a red vertex $r \in R$ not dominated by any vertex from~$B'$. Then, by the construction of the votes,~$r$ is ranked before all its successors in all votes of~$V_{B'}$. Combined with the vote in~$V_2$, there are $\kappa + 1$ votes in~$V'$ ranking~$r$ before all its successors. In this case, the leftmost $r \in R$ not dominated by~$B'$ becomes the {\famend} winner of $(C, V')$, contradicting the winning of~$p$ in $(C, V')$. 

Therefore, the claim holds. Since $\abs{B'} = \kappa$, the {\prob{RBDS}} instance is a {\yesins}.

To prove the {\wbhns} of {\prob{CCDV}}-$(m-h)$-Amendment for other constant values of~$h$, we modify the above reduction in the same way as done  in the proof of Theorem~\ref{thm-ccav-suc-np}.
\end{proof}

Now we study destructive control by adding/deleting voters. Compared with previous results, the complexity transition of these problems for the $(m-h)$-amendment procedures is more sharp. Particularly, by Corollary~\ref{cor-dcav-dcdv-amd-suc-p}, these problems are polynomial-time solvable if the distinguished candidate is the first one in the agenda. However, when the distinguished candidate is the second, the third, or the last candidate in the agenda, we show that the problems become already intractable from the parameterized complexity point of view. This stands in contrast to the fixed-parameter tractability of the same problems for the $h$-amendment procedures with respect to the combined parameter of~$h$ and the number of predecessors of the distinguished candidate (Corollary~\ref{cor-dcav-dcdv-amd-fpt}).

\begin{theorem}
\label{thm-dcav-suc-np}
For every constant~$h$,
{\prob{DCAV}}-$(m-h)$-Amendment is {\memph\wah} with respect to the number of added votes plus the number of registered votes as long as the distinguished candidate is not the first one in the agenda.
\end{theorem}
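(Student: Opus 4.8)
The plan is to give a parameterized reduction from {\prob{Perfect Code}}, following closely the construction in the proof of Theorem~\ref{thm-dcav-amd-np} but relocating the distinguished candidate off the first position of the agenda. I would first settle the full-amendment case ($h=1$) with $p$ in the second position, and then lift the result to every constant $h$ and every admissible position by the padding devices already used elsewhere in the paper. (An essentially equivalent route is to reduce from {\prob{CCAV}}-Condorcet, which by Corollary~\ref{cor-many-np} is {\wah} with respect to $\abs{V}+k$; placing a fresh distinguished candidate~$p$ directly after a target candidate~$a$ in every vote makes "$p$ is dethroned'' coincide with "$a$ becomes the Condorcet winner'', and the parameter is preserved.)

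\emph{Core reduction ($h=1$, $p$ second).} Given an instance $(G,\kappa)$ of {\prob{Perfect Code}} with vertex set $\vset=\{\vere_1,\dots,\vere_n\}$, I would reuse the candidate set $C=X\cup Y\cup\{q,p\}$, the registered vote multiset~$V$, and the per-vertex unregistered votes $\succ_i$ from the proof of Theorem~\ref{thm-dcav-amd-np}, with $k=\kappa$, changing only the agenda to $\rhd=(q,p,\overrightarrow{X},\overrightarrow{Y})$. Since these votes make $p$ the Condorcet winner of $(C,V)$, in the first round of the full-amendment procedure $q$ fails to beat~$p$ and is eliminated, after which $p$ — still beating all of $X\cup Y$ — wins; hence $p$ is the current winner. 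The crucial robustness fact, already established in the proof of Theorem~\ref{thm-dcav-amd-np}, is that no matter which at most~$\kappa$ unregistered votes are added, $p$ still beats every candidate of $X\cup Y$ and every $x_i$ (respectively $y_i$) still beats all of its $X\cup Y$-successors (respectively $Y$-successors). Consequently, once $q$ is eliminated $p$ necessarily wins, so $p$ fails to be the full-amendment winner of the enlarged election if and only if $q$ survives round~$1$, i.e.\ $q$ beats every other candidate. By the same correspondence as in the proof of Theorem~\ref{thm-dcav-amd-np}, $q$ can be turned into the Condorcet winner by adding at most~$\kappa$ of the $\succ_i$'s exactly when $G$ has a perfect code of size~$\kappa$. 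As $\abs{V}+k=\bigo{\kappa}$, this is a valid parameterized reduction.

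\emph{Arbitrary position and arbitrary $h$.} To place $p$ at a position $i\ge 3$ I would prepend $i-2$ "dummy'' candidates $z_1,\dots,z_{i-2}$, listed first in the agenda in this order and ranked at the very bottom of every vote in the reverse order $z_{i-2}\succ\cdots\succ z_1$, so each $z_\ell$ is beaten by every other candidate. Because the comparison window of an $(m-h)$-amendment round always begins at the current second position, in each of the first $i-2$ rounds the leading dummy is beaten by the candidate immediately after it and is eliminated, and the remaining run coincides with the position-$2$ analysis, now with $p$ in position~$i$. For a constant $h>1$ I would additionally insert the $h-1$ auxiliary candidates $p_1,\dots,p_{h-1}$ exactly as in the proofs of Theorems~\ref{thm-ccav-suc-np} and~\ref{thm-ccdv-suc-np} — immediately before~$p$ in the agenda in the order $p_{h-1},\dots,p_1$ and immediately after~$p$ in every vote in the order $p_1,\dots,p_{h-1}$ — so that $p$ beats each $p_j$ while every $x_i$/$y_i$ compares with each $p_j$ exactly as it does with~$p$; combined with a correspondingly adjusted number of dummy predecessors (and, where needed, a few extra filler candidates that $p$ beats appended at the end of the agenda so that $m$ is large relative to~$h$), this keeps $p$ at the desired position while making each $(m-h)$-amendment round behave round-for-round like the full-amendment run on the core instance.

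\emph{Expected main obstacle.} The delicate part is controlling the length-$(m-h)$ comparison window throughout the procedure: one must check that the dummy predecessors and the $p_j$-block are always eliminated in the intended order — never surviving a round merely because the window has shrunk below the number of remaining candidates — and, symmetrically, that in the {\yesins} direction $q$ genuinely wins, its window (once the prefix is gone) still containing every remaining candidate. I expect the bookkeeping to be most awkward when the target position is small relative to~$h$, since the $h-1$ padding candidates alone already push~$p$ past that position; handling those few positions may call for splitting the $p_j$-block around~$p$ or a short dedicated argument, but the Perfect-Code correspondence driving the equivalence is unaffected.
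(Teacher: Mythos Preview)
Your core idea---reusing the {\prob{Perfect Code}} reduction from Theorem~\ref{thm-dcav-amd-np} and arguing that $p$ is dethroned if and only if $q$ becomes the Condorcet winner---is exactly what the paper does. The difference lies in how the two handle the position of~$p$. You fix $p$ at position~$2$, then prepend dummy candidates $z_1,\dots,z_{i-2}$ to push $p$ further back; this is what creates the window-bookkeeping obstacle you anticipate. The paper instead observes that no dummies are needed at all: keeping $q$ first and placing the $p_j$-block at the tail of the agenda (excluding~$p$), the candidate~$p$ may sit at \emph{any} position after~$q$. The reason is that every $(m-h)$-window seen by a non-$q$ leading candidate necessarily contains $p$ or some $p_j$ (there are $h$ of them occupying the last positions), so every $x_i,y_i$ is eliminated as soon as it becomes first, and $p$ wins whenever $q$ fails to be Condorcet. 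This single structural observation replaces your dummy machinery and dissolves the ``delicate part'' you flag.

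A minor caution on your parenthetical alternative route via {\prob{CCAV}}-Condorcet: inserting a fresh $p$ right after the target~$a$ makes $p$'s pairwise comparisons mirror $a$'s, but it does not by itself guarantee that $p$ is the current $(m-h)$-amendment winner, nor that $p$ survives precisely when $a$ fails to become Condorcet---you would still need the $p_j$-tail device (or an equivalent) to force that equivalence, at which point you are back to the paper's argument.
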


\begin{proof}
We first show that the reduction  in the proof of Theorem~\ref{thm-dcav-amd-np} directly applies here (with only the agenda being different). Recall that the reduction was from the {\prob{Perfect Code}} problem. We had the candidate set $C=X\cup Y\cup \{p, q\}\cup \{p_1, p_2, \dots, p_{h-1}\}$, where $X=\{x_1, \dots, x_m\}$ and $Y=\{y_1, \dots, y_m\}$ with~$x_i$ and~$y_i$ each being the two candidates created for a vertex~$\vere_i$ in the given graph~$G$ of the {\prob{Perfect Code}} instance, and $(p, p_1, p_2, \dots, p_{h-1})$ is an ordered block. We let~$\rhd$ be an agenda such that~$q$ is the first candidate, and, excluding~$p$, the last $h-1$ candidates in the agenda are~$p_1$,~$p_2$, $\dots$,~$p_{h-1}$. (Other candidates except~$p$ can be ordered arbitrarily between~$q$ and~$p_1$, and~$p$ can be in any position after~$q$ in the agenda.)  We created a multiset~$V$ of votes such that~$p$ beats all the other candidates and, moreover, except for~$q$, this holds even after we add at most~$k$ unregistered votes. This means that~$q$ is the only candidate which is able to replace~$p$ as the new winner. By the definition of the $(m-h)$-amendment procedure,~$q$ is the $(m-h)$-amendment winner if and only if~$q$ beats all the other candidates.
In the proof of Theorem~\ref{thm-dcav-amd-np}, we constructed a multiset of unregistered votes such that there exists a perfect code of $G$ if and only if at most $k$ unregistered votes can be added to ensure that $q$ beats all other candidates. The theorem follows.
\end{proof}

One might wonder why, in Theorem~\ref{thm-dcav-amd-np}, we require the distinguished candidate~$p$ to be the last candidate in the agenda, whereas in Theorem~\ref{thm-dcav-suc-np}, we only stipulate that~$p$ must not be the first candidate in the agenda. 
The distinction arises because, if~$p$ is not the last candidate in the agenda, preventing~$p$ from being an $h$-amendment winner does not necessarily require~$q$ to beat all other candidates. For example, suppose~$p$ immediately follows~$q$ in the agenda. In this case, to prevent~$p$ from being an amendment winner, it suffices for~$q$ to beat~$p$, even if~$q$ is beaten by a candidate appearing after~$p$ in the agenda. 
Our {\fpt}-result for {\prob{DCAV}}-Full-Amendment, stated in Corollary~\ref{cor-dcav-dcdv-amd-fpt}, also reflects this distinction from a different perspective.

For destructive control by deleting voters, we have a similar result.

\begin{theorem}
\label{thm-dcdv-suc-np}
For every constant~$h$, {\prob{DCDV}}-$(m-h)$-Amendment is {\memph\wbh} with respect to the number of deleted votes, as long as the distinguished candidate is not the first one of the agenda.
\end{theorem}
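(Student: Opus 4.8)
The plan is to reuse, essentially verbatim, the reduction from {\prob{RBDS}} built in the proof of Theorem~\ref{thm-dcdv-amd-np} for the case $h=1$, changing only the agenda. Starting from an {\prob{RBDS}} instance $(G,\kappa)$ with bipartition $(R,B)$ --- under the customary assumptions that every red vertex has the same degree $\ell$, that $\abs{B}>\kappa>1$, and that $\ell+\kappa\le\abs{B}$, and assuming, as we may since {\prob{RBDS}} stays {\wbh} for arbitrarily large $\abs{R}$, that $\abs{R}$ is large enough to realise the prescribed position $i$ --- we keep the candidate set $C=R\cup\{p,q\}$, the multiset $V$ of $2\abs{B}+1-\kappa$ votes, and the budget $k=\kappa$ of that proof, but now let $\rhd$ be an agenda in which $q$ is first, $p$ sits in the prescribed position $i\ge 2$, and the red candidates (in the fixed order $\overrightarrow{R}$) fill the remaining positions. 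As in Theorem~\ref{thm-dcdv-amd-np}, $p$ beats every other candidate with respect to $V$, so $p$ is the Condorcet winner, hence the $(m-h)$-amendment winner, of $(C,V)$.

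Correctness rests on the following reformulation of the winning condition, which takes the place of the analogous step in the proof of Theorem~\ref{thm-dcdv-amd-np}. The weight arguments there show that in every election $(C,V\setminus V')$ with $\abs{V'}\le k$ the candidate $p$ still beats every candidate other than $q$. Because $q$ heads the agenda and no red candidate lies among the last $h-1$ positions, the look-ahead window of $q$ in the first round of the $(m-h)$-amendment procedure consists precisely of all red candidates together with $p$; hence $q$ clears that window --- and therefore goes on to win --- if and only if $q$ beats every red candidate and $p$, equivalently (since $q$ has the same head-to-head outcome against every auxiliary candidate as against $p$) if and only if $q$ beats everybody. If, on the other hand, $q$ does not beat everybody, it is eliminated in the first round, and thereafter $p$ can never be discarded: the only candidate that can beat $p$ is $q$, each red candidate that comes to head the shrinking agenda is eliminated because its window still contains $p$ and $p$ beats it, and once all red candidates are gone $p$ heads the agenda and beats everything still present. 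Thus $p$ fails to be the $(m-h)$-amendment winner of $(C,V\setminus V')$ exactly when $q$ beats all other candidates there; with this equivalence the two directions of the reduction are the $h=1$ directions of the proof of Theorem~\ref{thm-dcdv-amd-np}: a $\kappa$-element $B'\subseteq B$ dominating $R$ yields, via $V'=\{\succ_b:b\in B'\}$, an election in which $q$ beats everybody, and conversely one forces $\abs{V'}=\kappa$, $V'\subseteq V_B$, and $\{b:\succ_b\in V'\}$ dominating $R$ by the same counting.

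For an arbitrary constant $h>1$ we adapt the construction exactly as in the proofs of Theorems~\ref{thm-ccav-suc-np} and~\ref{thm-dcav-suc-np}: add $h-1$ new candidates $p_1,\dots,p_{h-1}$, rank them immediately after $p$ in every vote so that $(p,p_1,\dots,p_{h-1})$ is an ordered block, and place $p_1,\dots,p_{h-1}$ in the last $h-1$ positions of the agenda, keeping $q$ first and $p$ in position $i$. Then $p$ beats each $p_j$, each $p_j$ has the same head-to-head outcome against every red candidate and against $q$ as $p$, and no red candidate occupies one of the last $h-1$ positions, so the identification of $q$'s first-round window and the fact that $p$ and the $p_j$'s are never prematurely eliminated both survive, and the reduction goes through unchanged. \textbf{The main obstacle} is precisely the verification of that reformulation for the $(m-h)$-amendment procedure: unlike the ordinary amendment procedure its look-ahead window spans all but $h-1$ of the current candidates, so one must track carefully how the window relates to the block $\{p,p_1,\dots,p_{h-1}\}$ round by round and confirm, uniformly over every admissible position of $p$, that $q$ can remove $p$ only by beating every candidate; everything else is a transcription of the computations already carried out for the amendment procedure.
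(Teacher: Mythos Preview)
Your proposal is correct and follows essentially the same approach as the paper: reuse the {\prob{RBDS}} reduction from Theorem~\ref{thm-dcdv-amd-np}, keep $q$ first, append the ordered block $p_1,\dots,p_{h-1}$ at the end of the agenda, and let $p$ occupy the prescribed position $i\ge 2$. The paper's own proof is very terse---it just records the three properties of the constructed election (that $(p,p_1,\dots,p_{h-1})$ is an ordered block, that $p$ beats everyone but possibly $q$ after any deletion of at most $k$ votes, and that the {\prob{RBDS}} instance is a {\yesins} iff $q$ can be made to beat everyone) and declares it ``fairly easy to verify'' that this suffices for the $(m-h)$-amendment procedure. Your write-up supplies precisely the verification the paper omits: the round-by-round argument that $q$'s first-round window is exactly $R\cup\{p\}$, and that if $q$ is eliminated then every subsequent front candidate has $p$ in its window and is therefore removed until $p$ itself reaches the front and wins.
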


\begin{proof}
We adopt the reduction in the proof of Theorem~\ref{thm-dcdv-amd-np} to prove this theorem. Recall that the reduction was from {\prob{RBDS}}. Specifically, given an instance $I=(G, \kappa)$ of {\prob{RBDS}}, where $G$ is a bipartite graph with the bipartition $(R, B)$, we created a set $C=R\cup \{q\}\cup \{p, p_1, p_2, \dots, p_{h-1}\}$ of candidates, where~$p$ is the distinguished candidate. Moreover, we created a multiset of votes such that
\begin{itemize}
\item $(p, p_1, \dots, p_{h-1})$ is an ordered block;
\item $p$ beats all the other candidates and, moreover, except for~$q$, this holds even if we delete at most~$k=\kappa$ votes; and
\item $I$ is a {\yesins} if and only if we can delete at most~$k$ votes such that~$q$ beats all other candidates.
\end{itemize}
It is fairly easy to verify that the reduction applies to {\prob{DCDV}}-$(m-h)$-Amendment for all agendas where~$q$ is the first candidate, and~$p_1$,~$p_2$, $\dots$, $p_h$ are the last~$h$ candidates in the agenda without~$p$.
\end{proof}

\begin{theorem}
\label{thm-dcdv-amend-wbh-remaining}
For every constant~$h$, {\prob{DCDV}}-$(m-h)$-Amendment is {\memph\wbh} with respect to the number of remaining votes, as long as the distinguished candidate is not the first one of the agenda.
\end{theorem}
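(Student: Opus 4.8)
The plan is to recycle the reduction from the proof of Theorem~\ref{thm-dcdv-amd-wbh-remaining}, adapting it to the $(m-h)$-amendment procedures in exactly the way the proof of Theorem~\ref{thm-dcdv-suc-np} adapted the reduction behind Theorem~\ref{thm-dcdv-amd-np}. For $h=1$, where the $(m-h)$-amendment procedure is the full-amendment procedure, I would keep the construction of Theorem~\ref{thm-dcdv-amd-wbh-remaining} untouched, with agenda $\rhd=(q,\overrightarrow{R},p)$: in that election $p$ beats every other candidate, so it is the Condorcet winner and hence the full-amendment winner, while in the ``yes'' profile $q$ becomes the Condorcet winner, so the same correctness argument applies verbatim. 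For $h>1$, I would, as in the proofs of Theorems~\ref{thm-ccav-suc-np} and~\ref{thm-dcdv-suc-np}, add $h-1$ dummy candidates $p_1,\dots,p_{h-1}$ and take any agenda in which $q$ is first, $p_1,\dots,p_{h-1}$ are the last $h-1$ candidates, and $p$ is in some position after $q$ (so $p$ is not first, which is all the statement needs); in every vote $(p,p_1,\dots,p_{h-1})$ is an ordered block. The {\prob{RBDS}}-instance and the budget $k=\abs{B}-\kappa$ are unchanged, so at least $2\kappa+1$ votes survive any admissible deletion --- exactly the parameter the theorem tracks.

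The crux is to check that for this construction, $p$ fails to be the $(m-h)$-amendment winner of a surviving profile $(C,V')$ if and only if $q$ beats every other candidate in $(C,V')$, which is precisely the equivalence the backward direction of Theorem~\ref{thm-dcdv-amd-wbh-remaining} relies on. The enabling structural facts are: $p$ beats every candidate of $R$ regardless of which votes are deleted (the $V_1$- and $V_B$-votes all rank $p$ above all of $R$), $p$ beats each $p_i$ and each $p_i$ beats each $p_j$ with $i<j$ (the block order), and $q$'s head-to-head comparison with any $p_i$ coincides with $q$ versus $p$. Given these, a short trace of the $(m-h)$-amendment procedure settles it: since $|C|=|R|+h+1$, the opponents of $q=\rhd[1]$ in the first round are exactly $R\cup\{p\}$; if $q$ beats all of them it survives, then only $q$ and the dummies remain and the procedure reverts to the $(h-1)$-amendment rule under which $q$ --- which beats $p$ and hence every dummy --- wins; if $q$ loses the first round it is eliminated, and then every $r\in R$ in turn is compared against a window still containing $p$, fails, and is eliminated, after which $p$ faces only candidates it beats and becomes the winner. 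Hence the winner is $q$ exactly when $q$ is the Condorcet winner (equivalently, by Lemma~\ref{lem-a} applied to $\rhd[1]=q$, when $q$ is the $(m-h)$-amendment winner) and is $p$ otherwise, so $p$ is dethroned iff $q$ beats everyone. From there the forward and backward correctness arguments are those of Theorem~\ref{thm-dcdv-amd-wbh-remaining} word for word: the budget of at least $2\kappa+1$ surviving votes, together with ``$q$ beats $p$'', forces $V'\supseteq V_1\muplus V_2$ and $|V'\cap V_B|=\kappa$, and ``$q$ beats all of $R$'' is equivalent to ``$B'$ dominates $R$'' just as there.

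The one step that calls for care is the bookkeeping inside this trace: one must follow how the number of surviving candidates shrinks, identify the round at which the $(m-h)$-amendment procedure falls back to the $(h-1)$-amendment rule, and confirm that throughout the elimination of the $R$-candidates the distinguished candidate $p$ always lies inside the length-$(m-h)$ comparison window, so that the current top candidate from $R$ can never beat its whole window. This is routine, but it is the single point where the $(m-h)$-amendment dynamics genuinely differ from the ordinary amendment dynamics used in Theorem~\ref{thm-dcdv-amd-wbh-remaining}, so it is worth isolating as the heart of the argument.
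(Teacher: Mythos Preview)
Your approach is correct and, in fact, more careful than the paper's own argument. The paper reuses the bare construction of Theorem~\ref{thm-dcdv-amd-wbh-remaining} (no dummy candidates) for every constant~$h$ and asserts that the reduction works for \emph{any} agenda with~$q$ first. You instead add the dummy block $p_1,\dots,p_{h-1}$ at the tail of the agenda for $h>1$, following the template of Theorems~\ref{thm-dcav-suc-np} and~\ref{thm-dcdv-suc-np}. This extra padding is not cosmetic: with the paper's bare construction and, say, $h=2$ and agenda $(q,p,r_1,\dots,r_{|R|})$, the first-round window of~$q$ is $\{p,r_1,\dots,r_{|R|-1}\}$; if $B'$ happens to dominate $r_1,\dots,r_{|R|-1}$ but not $r_{|R|}$, then $q$ survives round~1, is beaten by $r_{|R|}$ in round~2, and $r_{|R|}$ wins---so $p$ is dethroned even though $B'$ does not dominate~$R$, breaking the $(\Leftarrow)$ direction. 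Your dummies force $q$'s first window to be exactly $R\cup\{p\}$, which restores the clean dichotomy ``$q$ is the Condorcet winner or $p$ wins'' that the backward argument needs.

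One minor refinement: for $h=1$ you fix the agenda $(q,\overrightarrow{R},p)$, which only places~$p$ last, whereas the theorem asserts hardness for every position of~$p$ other than the first. Your own observation that, under full-amendment, ``$p$ is the Condorcet winner of the original profile and $q$ becomes the Condorcet winner in the Yes case'' shows the reduction is agenda-insensitive once~$q$ is first; you should simply say so and allow~$p$ at an arbitrary position after~$q$, exactly as you already do for $h>1$.
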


\begin{proof}
We prove the theorem via a reduction from the {\prob{RBDS}} problem. Let $(G, \kappa)$ be an instance of {\prob{RBDS}}, where $G$ is a bipartite graph with the bipartition $(R, B)$. Without loss of generality, we assume $\abs{B}\geq \kappa+2$. The candidate set is defined as $C = R \cup \{q, p\}$, where~$p$ is the distinguished candidate. The agenda $\rhd$ can be any ordering in which $q$ occupies the first position. We construct the following votes:  
\begin{itemize}
    \item a multiset $V_1$ of $\kappa$ votes, each with the preference $q \Succ p \Succ \overrightarrow{R}$;  
    \item a singleton $V_2$ containing one vote with the preference $\overrightarrow{R} \Succ q \Succ p$; and  
    \item for each blue vertex $b \in B$, one vote $\succ_b$ with the preference  
    \[
    p \Succ \left(\overrightarrow{R} \setminus N_G(b)\right) \Succ q \Succ \left(\overrightarrow{R}[N_G(b)]\right).
    \]
\end{itemize}

For each subset $B' \subseteq B$, let $V_{B'} = \{\succ_b\, \setmid b \in B'\}$ denote the multiset of votes corresponding to the blue vertices in $B'$. Define $V = V_1 \cup V_2 \cup V_B$. Clearly, $\abs{V} = \abs{B} + \kappa + 1$. Let $k = \abs{B} - \kappa$. Under the assumption $\abs{B}\geq \kappa+2$, it is evident that the $(m-h)$-amendment winner of $(C, V)$ with respect to the agenda~$\rhd$ is $p$.  

It remains to show the correctness of the reduction.  

$(\Rightarrow)$ Assume there exists a subset $B' \subseteq B$ such that $\abs{B'} = \kappa$ and $B'$ dominates $R$ in~$G$. Let $V' = V_1 \cup V_2 \cup V_{B'}$, and let $\elec = (C, V')$. We claim that $q$ is the $(m-h)$-amendment winner of $\elec$ with respect to the agenda $\rhd$. Since $q$ is in the first position of $\rhd$, it suffices to show that~$q$ beats every other candidate in $\elec$ (Lemma~\ref{lem-a}).  

Observe that $\abs{V'} = 2\kappa + 1$. As all the $\kappa + 1$ votes from $V_1 \cup V_2$ prefer $q$ to $p$, it follows that~$q$ beats $p$ in $\elec$. Let $r \in R$ be a red vertex. Since~$B'$ dominates $R$ in~$G$, there exists at least one vote $\succ_b$ in $V_{B'}$ such that~$b$ dominates~$r$, i.e., $r\in N_G(b)$. By the construction of~$\succ_b$, we know that~$q$ is ranked before~$r$ in this vote. Thus, there are at least $\abs{V_1} + 1 = \kappa + 1$ votes in~$V'$ ranking~$q$ before~$r$, implying that~$q$ beats $r$. Since this holds for all $r \in R$, the correctness of this direction is established.  

$(\Leftarrow)$ Assume there exists a submultiset $V' \subseteq V$ of at least $2\kappa + 1$ votes such that $p$ is not the $(m-h)$-amendment winner of $(C, V')$ with respect to~$\rhd$. As $\abs{V_1} + \abs{V_2} = \kappa + 1$ and all votes in $V_B$ rank $p$ in the first place, we know that $(V_1 \cup V_2) \subseteq V'$, and $V'$ must contain exactly $\kappa$ votes from~$V_B$, because otherwise $p$ remains the $(m-h)$-amendment winner of $(C, V')$, a contradiction. Let $V_{B'} = V' \cap V_B$, where $B' \subseteq B$. It follows that $\abs{V_{B'}} = \kappa$.  

Observe that $p$ beats all candidates from $R$ in $(C, V')$, regardless of which $\kappa$ votes from~$V_B$ are included in $V'$. Therefore, $q$ must be the $(m-h)$-amendment winner of $(C, V')$, meaning~$q$ beats all other candidates. We claim that $B'$ dominates $R$.  

Suppose, for the sake of contradiction, that there exists a red vertex $r \in R$ not dominated by any vertex in $B'$. Then, by the construction of the votes, $r$ is ranked before $q$ in all votes from $V_{B'}$. Together with the vote in $V_2$, there are $\kappa + 1$ votes in~$V'$ ranking $r$ before $q$, implying that $r$ beats $q$ in $(C, V')$. This contradicts the assumption that $q$ beats all candidates in $(C, V')$.  
Thus, $B'$ must dominate $R$ in~$G$. 

As $\abs{B'}=\kappa$, the given instance of the {\prob{RBDS}} problem is a {\yesins}.
\end{proof}

We now turn to candidate control operations. Unlike the $h$-amendment procedures, we first show that constructive control by adding candidates under the $(m-h)$-amendment procedures is hard to solve, for each positive integer constant $h$. Notably, this is our only hardness result for candidate control under the various variants of amendment procedures.

\begin{theorem}
\label{thm-ccac-suc-np}
For every constant~$h$, {\prob{CCAC}}-$(m-h)$-Amendment is {\memph\wbh} with respect to the number of added candidates. This holds even when the distinguished candidate is the last one in the agenda.
\end{theorem}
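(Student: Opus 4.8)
The plan is a parameterized reduction from {\prob{RBDS}}, which is {\wbc} with respect to~$\kappa$, with~$\kappa$ mapped directly onto the number~$k$ of added candidates. Let $(G,\kappa)$ be an instance with bipartition $(R,B)$; we may assume $|R|\ge h$, since otherwise {\prob{RBDS}} is decidable in polynomial time by enumerating dominating subsets of~$B$ of size at most $|R|$. I would take the registered candidates to be $C=R\cup\{p_1,\dots,p_{h-1}\}\cup\{p\}$: one candidate for each red vertex (denoted by the same symbol), the distinguished candidate~$p$, and $h-1$ auxiliary candidates (present only when $h\ge 2$); the unregistered candidates are $D=B$, one per blue vertex. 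The agenda is $\rhd=(\overrightarrow{R},\overrightarrow{B},p_{h-1},p_{h-2},\dots,p_1,p)$, so that~$p$ is last (as the theorem requires), any set of added blue candidates always lies strictly between the red candidates and the auxiliary candidates, and the auxiliary candidates occupy the positions immediately before~$p$ in the order $p_{h-1},p_{h-2},\dots,p_1$, which is the reverse of the strength order I will impose on them.

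The votes are chosen so that the weighted majority graph of $(C\cup D,V)$ realizes the following strict relations: the red candidates form a transitive tournament whose linear order is~$\overrightarrow{R}$ (each~$r_i$ beats every later red candidate); every red candidate beats~$p$; a blue candidate~$b$ beats a red candidate~$r$ exactly when $\{b,r\}\in E(G)$, and~$r$ beats~$b$ otherwise; $p$ beats every $b\in B$; and, for the auxiliary candidates, $p$ beats every~$p_i$, $p_i$ beats~$p_j$ iff $i<j$, and each~$p_j$ has the same head-to-head outcome against every red and every blue candidate as~$p$ does (so~$p_j$ loses to every red candidate and beats every blue candidate). The only pairs left unconstrained are blue--blue pairs, so such a profile exists — for instance by a McGarvey-type construction followed by one extra vote to break the blue--blue ties, or by an explicit construction in the spirit of the one in the proof of Theorem~\ref{thm-ccdv-suc-np}. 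Set $k=\kappa$.

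For correctness, first suppose $B'\subseteq B$, $|B'|=\kappa$, dominates~$R$. When $B'$ is added, the $(m-h)$-amendment with $m=|R|+|B'|+h$ behaves as follows. In the early rounds the leading candidate is always a red candidate~$r_i$; its comparison window $\rhd[2,m-h+1]$ contains all later red candidates, all of~$B'$, and a prefix of the auxiliary candidates, and $r_i$ beats all of these \emph{except} any blue neighbour it has in~$B'$; since $r_i$ is dominated, such a neighbour exists and lies in the window, so $r_i$ is eliminated. Thus the red candidates are removed one by one; once the surviving count has dropped to $m-h$ the procedure coincides with full-amendment on the survivors, which removes the blue candidates one by one (each is beaten by~$p$) and then peels off $p_{h-1},p_{h-2},\dots,p_1$ in this agenda order (each auxiliary candidate is beaten by the next one, and $p_1$ is beaten by~$p$), leaving~$p$ as the winner. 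Conversely, suppose adding some $B'\subseteq B$ with $|B'|\le k$ makes~$p$ the winner; I claim $B'$ dominates~$R$. If some red~$r_i$ had no neighbour in~$B'$, then when~$r_i$ first becomes the leading candidate it would beat every candidate in its comparison window (all later reds, all of~$B'$ since none is adjacent to~$r_i$, and the auxiliary candidates, which it beats because it beats~$p$); that window would be eliminated en bloc, and the remaining sub-election — then run as full-amendment — would have~$r_i$ beating all remaining auxiliary candidates and~$p$, so~$r_i$ would win, a contradiction. Hence every red is dominated, i.e.\ $B'$ dominates~$R$, so $(G,\kappa)$ is a {\yesins}. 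For $h=1$ there are no auxiliary candidates and this is simply the full-amendment argument.

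I expect the crux to be the careful bookkeeping of the $(m-h)$-amendment dynamics for $h\ge 2$: during the first rounds the comparison window $\rhd[2,m-h+1]$ is a proper subset of the current leader's successors, so I must verify that no premature batch elimination disrupts the intended sequence of removals, and that the auxiliary candidates — whose agenda positions are deliberately the reverse of their pairwise strength — are stripped off in exactly the order that leaves~$p$, invoking the convention that once at most $m-h$ candidates remain the rule degenerates to full-amendment. A secondary, more routine obstacle is exhibiting an explicit vote profile that simultaneously encodes the edge-dependent red--blue comparisons and all the remaining strict relations without introducing spurious ties.
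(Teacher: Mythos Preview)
Your proposal is correct and follows essentially the same approach as the paper: a reduction from \prob{RBDS} with $k=\kappa$, registered candidates corresponding to~$R$ (plus~$p$ and auxiliaries), unregistered candidates corresponding to~$B$, and a majority graph in which reds form a transitive tournament along the agenda, every red beats~$p$, $p$ beats every blue, and blue--red outcomes encode adjacency. The paper appeals to McGarvey's theorem to realize this majority graph, which you may wish to cite explicitly rather than gesturing at ``a McGarvey-type construction.''

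The one place where you are more careful than the paper is the treatment of $h\ge 2$. The paper simply says to ``replace the distinguished candidate by an ordered block of~$h$ candidates with the first one in the block being the distinguished candidate,'' whereas you spell out that the auxiliaries must sit in the agenda as $p_{h-1},\dots,p_1,p$ (reverse of their strength order) so that~$p$ is genuinely last, matching the ``even when the distinguished candidate is the last one in the agenda'' clause; this mirrors what the paper does explicitly in the proof of Theorem~\ref{thm-ccav-suc-np}. One small tightening: in the backward direction you should take~$r_i$ to be the \emph{leftmost} undominated red, so that $r_1,\dots,r_{i-1}$ are each eliminated in turn (being dominated) and~$r_i$ indeed becomes the leading candidate; your phrase ``when~$r_i$ first becomes the leading candidate'' presupposes this.
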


\begin{proof}
We prove the theorem via a reduction from {\prob{RBDS}}. Let $(G, \kappa)$ be an instance of {\prob{RBDS}}, where $G$ is a bipartite graph with the bipartition $(R, B)$. We first consider the {\famend} procedure, i.e., the case where $h=1$. We construct an instance of {\prob{CCAC}}-Full-Amendment as follows. 

For each vertex in $G$, we create one candidate, denoted by the same symbol for notational simplicity. Additionally, we create a distinguished candidate $p$. Let $C = R \cup \{p\}$, let $D = B$, let $k = \kappa$, and define the agenda as $\rhd = (\overrightarrow{R}, \overrightarrow{B}, p)$.
 We create a multiset~$V$ of votes such that
\begin{itemize}
\item every candidate from~$R$ beats~$p$;
\item $p$ beats every candidate from~$B$;
\item for each $r\in R$ and $b\in B$, if~$b$ dominates~$r$ in~$G$, then~$b$ beats~$r$; otherwise,~$r$ beats~$b$; and
\item every candidate $r\in R$ is beaten by all of its predecessors in the agenda~$\rhd$.
\end{itemize}
We refer to Figure~\ref{fig-ccac-amd-hard} for an illustration of the majority graph of $(C\cup D, V)$ and the agenda~$\rhd$. 
\begin{figure}
    \centering
    \includegraphics[width=0.45\textwidth]{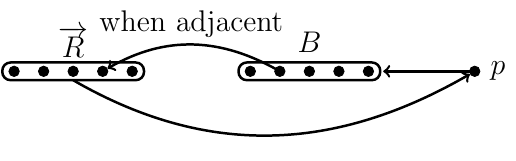}
    \caption{An illustration of the majority graph of $(C \cup D, V)$ and the agenda used in the proof of Theorem~\ref{thm-ccac-suc-np}. All arcs between the vertices in $R$ are forward. Arcs between the vertices in~$B$ are irrelevant as they have no impact on the correctness of the reduction. The agenda is depicted by the left-to-right ordering of the vertices.}
    \label{fig-ccac-amd-hard}
\end{figure}
By McGarvey's theorem~\cite{McGarvey1953}, such votes can be constructed in polynomial time. The corresponding {\prob{CCAC}}-Full-Amendment instance is given by $(C, p, D, V, \rhd, k)$.

The correctness of the reduction is easy to see.
In particular, if there is a subset $B'\subseteq B$ of~$\kappa$ vertices which dominate~$R$ in $G$, then after adding the candidates corresponding to~$B'$, for every~$r\in R$, there is at least one candidate from~$B'$ beating~$r$, excluding the winning of~$r$. Candidates from~$B'$ cannot win as they are beaten by~$p$. Therefore, after adding these candidates,~$p$ becomes the winner. If, however, the {\prob{RBDS}} instance is a {\noins}, then no matter which at most~$\kappa$ candidates from~$B$ are added, there is at least one candidate in~$R$ who beats all her successors in the agenda. As all candidates from~$R$ are before~$p$ in the agenda~$\rhd$, in this case, we cannot add at most~$\kappa$ candidates to make~$p$ the winner.

The above reduction can be modified in the following way to prove the hardness for every $h>1$: replace the distinguished candidate by an ordered block of~$h$ candidates with the first one in the block being the distinguished candidate.
\end{proof}

However, when the distinguished candidate has only a few predecessors, the problem becomes tractable from the perspective of parameterized complexity.

\begin{theorem}
\label{thm-ccac-suc-fpt}
For every constant~$h$, {\prob{CCAC}}-$(m-h)$-Amendment is {\memph\fpt} with respect to the number of predecessors of the distinguished candidate.
\end{theorem}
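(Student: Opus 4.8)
The plan is to let the parameter drive a bounded amount of guessing, after which the problem collapses to a covering task whose universe consists of the predecessors of~$p$ and thus has size at most the parameter.

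I would first isolate the combinatorial characterization of the $(m-h)$-amendment winner on which the algorithm rests. A single round of the procedure touches only the first candidate of the current agenda: it either deletes that candidate or deletes the block of candidates immediately following it (whereupon that first candidate has survived, and the procedure either terminates or continues with a strictly shorter agenda). Hence candidates essentially vanish from the front, and a candidate~$c$ can be the winner only if~$c$ beats all of its successors and every predecessor of~$c$ disappears before~$c$ is reached; tracking how the comparison windows shrink round by round, a predecessor $\rhd[i]$ is deleted when it reaches the front exactly when some candidate inside a window that always stretches up to position $n-h+1$ (with $n$ the current number of candidates) beats or ties it — otherwise it clears a block, which is dangerous only if the block contains~$p$. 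Distilling this (and disposing of the boundary situations — $p$ among the last $h-1$ candidates, $p$ with too few successors, $p$ first — directly, and noting that for $h=1$ there are no tail exceptions), one arrives at a characterization of roughly the following shape: $p$ is the $(m-h)$-amendment winner of $(C\cup D',V)$ if and only if $p$ beats all of its successors in $C\cup D'$, and for every predecessor $\rhd[i]$ of~$p$ in $C\cup D'$ there is a candidate strictly after $\rhd[i]$, not among the last $h-1$ positions of the agenda, that beats or ties $\rhd[i]$.

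With the characterization in hand the algorithm proceeds as follows. If~$p$ is already the winner of $(C,V)$, output \yes; this also settles every configuration in which the final agenda leaves~$p$ with no predecessor, since by Lemma~\ref{lem-a} that would force~$p$ to be the Condorcet winner of the final election, hence of $(C,V)$, hence its actual winner. Otherwise let $P_D$ be the set of unregistered predecessors of~$p$; since the parameter bounds $\abs{P_D}$, enumerate each of the at most $\bigo{2^{\abs{P_D}}}$ subsets $P_D'\subseteq P_D$ of predecessors to add. Fixing $P_D'$ fixes the whole prefix of the agenda up to~$p$, hence the set~$\Pi$ of predecessors of~$p$ in the final election together with all comparisons among $\Pi\cup\{p\}$, and therefore the subset $U\subseteq\Pi$ of predecessors not beaten or tied by any later member of $\Pi\cup\{p\}$; crucially $U$ is independent of which successors get added, because~$p$ and all of~$\Pi$ occupy positions strictly before $n-h+1$ in every completion. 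When $h\geq2$ additionally guess the set~$L$ of the last $h-1$ candidates of the final agenda (only $\bigo{(\abs{C}+\abs{D})^{h-1}}$ possibilities, as~$h$ is constant), check that the guess is realizable and that every registered successor of~$p$ lying before~$L$ is beaten by~$p$, and charge the unregistered members of~$L$ against the budget. What remains is to choose a set~$S'$ of unregistered successors of~$p$ lying before~$L$, each beaten by~$p$, so that every predecessor in~$U$ not already beaten or tied by a registered successor before~$L$ is beaten or tied by some member of~$S'$, with $\abs{S'}$ within the leftover budget. This is precisely \prob{Set Cover} with a universe contained in~$U$, and since its size is bounded by the parameter it is solvable in $\bigos{2^{\abs{U}}}$ time by dynamic programming over subsets of~$U$. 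The instance is a \yesins{} iff some choice of $P_D'$ (and, when $h\geq2$, of~$L$) admits such a cover within budget; the total running time is $2^{\abs{P_D}}\cdot(\abs{C}+\abs{D})^{h-1}\cdot\bigos{2^{\abs{U}}}\cdot n^{\bigo{1}}$, which is \fpt{} in the number of predecessors of~$p$ for every fixed~$h$.

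The main obstacle is the structural characterization together with its bookkeeping for $h\geq2$: one must verify that the comparison window of a predecessor always reaches position $n-h+1$, so that being ``internally defeated'' is insensitive to the added successors; rule out the scenarios in which a predecessor could drag~$p$ into a cleared block; handle the degenerate positions of~$p$; and argue that restricting attention to added successors that~$p$ beats, plus a constant-size tail~$L$, loses no solution. Once these points are settled, the reduction of the successor-selection step to a \prob{Set Cover} instance whose universe is bounded by the parameter is what delivers fixed-parameter tractability.
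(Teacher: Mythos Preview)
Your overall strategy matches the paper's: enumerate which unregistered predecessors to add (at most $2^\ell$ choices), for $h\geq 2$ additionally guess the last $h{-}1$ added successors (the paper calls this $S'$; polynomially many choices since $h$ is constant), and then reduce the remaining successor-selection to an {\prob{RBDS}}/{\prob{Set Cover}} instance whose universe is contained in the predecessors of~$p$ and hence has size bounded by the parameter. The paper phrases the last step as {\prob{RBDS}} and invokes the known {\fpt}-algorithm for {\prob{RBDS}} parameterized by the number of red vertices; your dynamic programming over subsets of~$U$ is an equivalent route.

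One point in your characterization does need correction, and it is precisely the bookkeeping you flag. The amendment parameter in the $(m-h)$-amendment procedure is fixed once and for all at $n_0-h$, where $n_0$ is the number of candidates in the \emph{final} election; it does \emph{not} shrink to $n_{\text{current}}-h$ as rounds proceed. Consequently, when predecessor $\rhd[i]$ reaches the front (after $\rhd[1],\dots,\rhd[i-1]$ have been eliminated one by one), its comparison window in the original indexing is $\rhd[i{+}1,\, i{+}n_0{-}h]$, so only the last $\max(0,h{-}i)$ positions are excluded, not the last $h-1$ uniformly. Thus for $i\geq 2$ a predecessor may legitimately be eliminated by a candidate sitting inside your guessed tail~$L$, and your Set Cover reduction as stated (which insists on coverage strictly from before~$L$) could incorrectly reject such instances. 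The paper handles this by making the red-vertex condition $i$-dependent: $\rhd'[i]$ is red iff it beats all of $\rhd'[i{+}1,\, i{+}m'{-}h]$ when $i\leq h{-}1$, or all of its successors when $i\geq h$, computed in the election $C\cup S\cup S'$. Since your~$L$ is guessed anyway, the repair is immediate: before the covering step, discharge any predecessor already handled by the appropriate prefix of~$L$ using the correct $i$-dependent window. With that adjustment the two arguments coincide.
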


\begin{proof}
Let $I=(C, p, D, V, \rhd, k)$ be an instance of {\prob{CCAC}}-$(m-h)$-Amendment, where $p\in C$ is the distinguished candidate. Let~$C_1$ (respectively,~$D_1$) be the set of predecessors of~$p$ contained in~$C$ (respectively,~$D$), and let $C_2=C\setminus (C_1\cup \{p\})$ (respectively, $D_2=D\setminus D_1$). Additionally, let $D_2'\subseteq D_2$ be the set of candidates from~$D_2$ beaten by~$p$ with respect to~$V$.

We first give an {\fpt}-algorithm for the {\famend} procedure, and then we show how to extend the algorithm for the $(m-h)$-amendment procedures. Let $\ell=\abs{C_1}+\abs{D_1}$ be the number of predecessors of~$p$ in $C\cup D$.

If at least one candidate from~$C_2$ is not beaten by~$p$ with respect to~$V$, we immediately conclude that~$I$ is a {\noins}.
So, let us assume that~$p$ beats all candidates from~$C_2$. 

We split~$I$ into at most~$2^{\ell}$ subinstances by enumerating all subsets~$S\subseteq D_1$. In particular, each subinstance takes~$I$ and a subset $S\subseteq D_1$ as input, and determines if there is a subset $D'\subseteq D_2$ such that $\abs{D'}\leq k-\abs{S}$ and~$p$ is the {\famend} winner of $(C\cup S\cup D', V)$ with respect to~${\rhd}$. Clearly,~$I$ is a {\yesins} if and only if at least one of the subinstances is a {\yesins}. To solve a subinstance associated with an enumerated subset~$S$, we create an instance of {\prob{RBDS}} as follows.
\begin{itemize}
\item Red vertices are those in~$C_1\cup S$ who beat their respective successors from~$C\cup S$.
\item Blue vertices are those in~$D_2'$.
\item We create an edge between a red vertex~$c$ and a blue vertex~$c'$ if and only if~$c'$ is not beaten by~$c$ with respect to~$V$.
\item Let~$\kappa=\min \{k-\abs{S}, \abs{D_2'}\}$.
\end{itemize}
It is straightforward to verify that the {\prob{RBDS}} instance is a {\yesins} if and only if the subinstance is a {\yesins}. Regarding the running time, we reiterate that {\prob{RBDS}} is {\fpt} with respect to the number of red vertices (see, e.g., the works of Dom, Lokshtanov, and Saurabh~\cite{DBLP:journals/talg/DomLS14}, Fomin, Kratsch, and Woeginger~\cite{DBLP:conf/wg/FominKW04}). Then, as each {\prob{RBDS}} instance constructed above has at most~$\ell$ red vertices, and we have at most~$2^{\ell}$ subinstances to consider, the whole algorithm runs in {\fpt}-time with respect to~$\ell$.

We now consider the $(m-h)$-amendment procedures, where $h$ is a positive constant integer. We enumerate all tuples $(S, S')$ of two subsets such that 
\begin{itemize}
    \item $S\subseteq D_1$,
    \item $S'\subseteq D_2'$, 
    \item $\abs{S'}\leq \min\{h-1, \abs{D_2'}\}$, and 
    \item $\abs{S}+\abs{S'}\leq k$.
\end{itemize} 
Candidates in~$S$ are assumed to be exactly those from~$D_1$ that are included in a desired feasible solution, while candidates in~$S'$ are also considered part of the solution. Moreover, candidates in~$S'$ are assumed to be the last candidates from~$D_2'$ included in the solution. Let $B$ denote the set of candidates from~$D_2'$ that appear before any candidate from~$S'$ in~$\rhd$. Formally, $B \subseteq D_2'$, and for all $b \in B$ and all $c \in S'$, it holds that $b \rhd c$. Let $\kappa = \min\{k - \abs{S \cup S'}, \abs{B}\}$. For each enumerated $\{S, S'\}$, we determine whether there exists $B' \subseteq B$ such that
\begin{itemize}
    \item $\abs{B'} \leq \kappa$, and
    \item $p$ is the $(m-h)$-amendment winner of $(C \cup S \cup S' \cup B', V)$ with respect to~${\rhd}$.
\end{itemize}
The original instance $I$ is a {\yesins} if and only if there is a {\yes}-answer to the above question for at least one enumerated $(S, S')$. 

In the following, we describe how to solve a subproblem associated with an enumerated tuple $(S, S')$. 
Let $\rhd'$ be the agenda $\rhd$ restricted to $C\cup S\cup S'$. Let $A$ be the set of the last $h-1$ candidates in~$\rhd'$. 
%, and let $(a_1, a_2, \dots, a_{h-1})$ be $\rhd$ restricted to $A$. 
We have two cases to consider. 
\begin{description}
    \item[Case:] $p\in A$. \hfill

    In this case, $B=\emptyset$ holds. Consequently, we conclude that $I$ is a {\yesins} if $p$ is the $(m-h)$-amendment winner of $(C\cup S\cup S', V)$ with respect to $\rhd'$, and discard $(S, S')$ otherwise.
    
    \item[Case:]  $p\not\in A$. \hfill 

   In this case, we construct an instance of the {\prob{RBDS}} problem. Let $p = \rhd'[\ell']$, where $\ell' \leq \ell$. The red vertices in the {\prob{RBDS}} instance are from $C_1 \cup S$. Specifically, a candidate~$\rhd'[i]$, where $i \in [\ell' - 1]$, is a red vertex if and only if at least one of the following conditions holds:  
   \begin{itemize}
    \item $i \leq h - 1$, and $\rhd'[i]$ beats all candidates from $\rhd'[i+1, i+m'-h]$ with respect to~$V$, where $m' = \abs{C \cup S \cup S'}$;
    \item $i \geq h$, and $\rhd'[i]$ beats all of its successors in $C \cup S \cup S'$.
   \end{itemize}
   Let $R$ denote the set of the red vertices. 
Blue vertices are those in~$B$. 
Edges are constructed as follows: an edge exists between a vertex $\rhd'[i] \in R$ and a vertex $b \in B$ if and only if~$b$ is not beaten by~$\rhd'[i]$ with respect to~$V$. The instance asks if there exists a subset of $\kappa$ blue vertices that dominate all red vertices. 
If the resulting {\prob{RBDS}} instance is a {\yesins}, we conclude that~$I$ is a {\yesins}; otherwise, $(S, S')$ is discarded.
\end{description} 

If all enumerated tuples are discarded, we conclude that~$I$ is a {\noins}. 

The algorithm runs in {\fpt}-time with respect to~$\ell$ for the following reasons: 
we enumerate at most $2^{\ell} \cdot \abs{D}^{h-1}$ tuples,~$h$ is a constant, 
the {\prob{RBDS}} problem is {\fpt} with respect to the number of red vertices, 
and each {\prob{RBDS}} instance generated contains at most $\ell$ red vertices.
\end{proof}

For the operation of deleting candidates, we present a natural polynomial-time algorithm for the full-amendment procedure, based on two reduction rules outlined below. 

Let $I = (C, p, V, \rhd, k)$ be an instance of the {\prob{CCDC}}-{\Famend} problem.

\begin{reductionrule} 
\label{rule-1}
If $p$ has a successor $q$ in $\rhd$ that is not beaten by $p$ with respect to $V$, remove $q$ from $C$ and $\rhd$, and decrease $k$ by one.
\end{reductionrule}

\begin{reductionrule}
\label{rule-2}
If the {\famend} winner $q$ of $(C, V)$ with respect to $\rhd$ precedes~$p$ in~$\rhd$, remove~$q$ from $C$ and $\rhd$, and decrease $k$ by one.
\end{reductionrule}

Since each application of these reduction rules removes one candidate, the rules can be applied at most $m-1$ times, where $m$ denotes the number of candidates. As the condition in each rule can be verified in polynomial time, exhaustively applying these reduction rules runs in polynomial time. 

A reduction rule is said to be sound for a problem if applying it to any instance of the problem results in an equivalent instance of the same problem.  
The soundness of Reduction Rule~\ref{rule-1} for the {\prob{CCDC}}-{\Famend} problem is evident: if a candidate $p$ can become a winner by deleting other candidates, and there exists a successor~$q$ of $p$ in $\rhd$ not beaten by $p$, then $q$ must be deleted. 
We now prove the soundness of Reduction Rule~\ref{rule-2}.

\begin{lemma}
    Reduction Rule~\ref{rule-2} for the {\prob{CCDC}}-{\Famend} problem is sound.
\end{lemma}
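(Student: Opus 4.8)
Let $I = (C, p, V, \rhd, k)$ be an instance of \prob{CCDC}-\Famend, and let $q$ be the \famend\ winner of $(C, V)$ with respect to $\rhd$, assumed to precede $p$ in $\rhd$. Let $I' = (C \setminus \{q\}, p, V, \rhd', k-1)$ be the instance obtained by applying Reduction Rule~\ref{rule-2}, where $\rhd'$ is $\rhd$ restricted to $C \setminus \{q\}$. I need to show $I$ is a \yesins\ if and only if $I'$ is a \yesins. The backward direction is trivial: any solution $C''$ to $I'$ (with $\abs{C''} \le k-1$) yields $C'' \cup \{q\}$ as a solution to $I$ of size at most $k$, since deleting the same candidates from $C$ versus from $C \setminus \{q\}$ produces the same final election and agenda. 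So the content is in the forward direction.

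For the forward direction, suppose $C' \subseteq C \setminus \{p\}$ with $\abs{C'} \le k$ makes $p$ the \famend\ winner of $(C \setminus C', V)$ with respect to $\rhd$ restricted to $C \setminus C'$. The key claim is that $q \in C'$. Suppose not, i.e., $q \notin C'$; I will derive a contradiction. Since $q$ is the \famend\ winner of $(C,V)$ and $q \rhd p$, by the remark in the excerpt the \famend\ winner is never a successor of another candidate's position in a way that helps $p$; more precisely, I want to argue that as long as $q$ survives in the remaining candidate set, $q$ still beats or is not eliminated in favor of $p$. The cleanest route: the \famend\ winner $q$ of $(C,V)$ is, by the discussion following Example~\ref{ex-a}, the Condorcet winner of $(C,V)$ restricted to $q$ and all successors of $q$ in $\rhd$. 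In particular $q$ beats every successor of $q$ in $\rhd$ with respect to $V$, including $p$ (since $q \rhd p$). Now in the election $(C \setminus C', V)$ restricted to the agenda, $q$ (being undeleted) still beats all of its surviving successors — these head-to-head comparisons are unchanged by candidate deletion. Under the full-amendment (that is, $(m'-1)$-amendment, where $m' = \abs{C \setminus C'}$) procedure, in the first round the first surviving candidate in $\rhd$ is compared against all others; and more generally, once we reach the round where $q$ is the first surviving candidate, $q$ beats every remaining candidate (all of which are successors of $q$), so $q$ survives and all others are eliminated, making $q$ the \famend\ winner of $(C \setminus C', V)$. Since $q \ne p$, this contradicts that $p$ is the winner. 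Hence $q \in C'$.

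Given $q \in C'$, set $C'' = C' \setminus \{q\}$. Then $\abs{C''} \le k-1$, $C'' \subseteq (C \setminus \{q\}) \setminus \{p\}$, and $(C \setminus C', V) = ((C \setminus \{q\}) \setminus C'', V)$ with the same restricted agenda, so $p$ is the \famend\ winner of $((C \setminus \{q\}) \setminus C'', V)$ with respect to $\rhd'$ restricted to that set. Thus $C''$ witnesses that $I'$ is a \yesins. This completes both directions.

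\textbf{Main obstacle.} The delicate step is the argument that an undeleted $q$ necessarily remains the \famend\ winner of the reduced election — one must be careful that under the full-amendment procedure $q$ is never eliminated by one of its \emph{predecessors} before $q$ ever gets to be the first candidate in the current agenda. This is where I would lean on the structural fact (stated in the excerpt, following Example~\ref{ex-a}) that the \famend\ winner is the Condorcet winner of the election restricted to itself and its successors, combined with the observation that in the full-amendment procedure, the first candidate $\rhd[1]$ of the current agenda is eliminated precisely when it fails to beat some successor — so a candidate is only ever eliminated by virtue of losing to a \emph{later} candidate, never a requirement involving earlier ones. Consequently, once all of $q$'s predecessors in the current agenda have been processed (each either eliminated or having eliminated its immediate block of successors; note $q$ itself can be eliminated by a predecessor only if that predecessor beats $q$, but no surviving successor-or-$q$ beats $q$... the subtlety is that a \emph{predecessor} of $q$ could beat $q$). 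I would handle this by noting that if any predecessor $r$ of $q$ (in $\rhd$, restricted to $C \setminus C'$) beats $q$ and reaches the front of the agenda while $q$ is still present, then in fact we can trace a chain showing the actual winner is some candidate beating $q$; but since in $(C,V)$ the true \famend\ winner is $q$, this chain could not have eliminated $q$ there — and candidate deletion only removes potential eliminators, it never creates new ones. Formalizing this monotonicity ("deleting candidates other than $q$ cannot cause $q$ to be eliminated if $q$ was not eliminated before") is the crux; I expect a short induction on the rounds of the full-amendment procedure, comparing the run on $C$ with the run on $C \setminus C'$, to settle it.
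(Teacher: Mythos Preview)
Your backward direction and the overall structure are fine, and you correctly identify the crux: you must rule out the possibility that $q \notin C'$. However, the route you propose for this---the monotonicity claim that ``deleting candidates other than $q$ cannot cause $q$ to be eliminated if $q$ was not eliminated before''---is simply \emph{false} for the full-amendment procedure, so no induction will save it. Here is a counterexample: take agenda $(a,c,q,p)$ where $a$ beats $q$ and $p$ but not $c$, $c$ does not beat $q$, and $q$ beats $p$. In the full election, $a$ is eliminated first (it fails to beat $c$), then $c$ is eliminated (it fails to beat $q$), and $q$ wins. But after deleting $c$, candidate $a$ now beats all its successors $\{q,p\}$ and wins immediately---so $q$ \emph{is} eliminated in the smaller election even though it was not eliminated in the larger one.

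The point the paper exploits, and which you are missing, is specific to the \emph{full}-amendment procedure: in each round the first candidate either is eliminated, or eliminates \emph{all} of its successors at once and is declared the winner. Hence if $q$ is eliminated while some predecessor $a$ is being considered in $(C\setminus C',V)$, then $a$ eliminates \emph{every} successor of $a$ in that round---in particular $p$, since $a \rhd q \rhd p$. That contradicts $p$ being the winner. So the case $q\notin C'$ is impossible, not because $q$ would survive to win, but because $q$'s elimination would drag $p$ down with it. Replace your monotonicity paragraph with this two-line observation and the proof is complete.
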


\begin{proof}
Let $I' = (C', p, V', \rhd', k-1)$ be the instance obtained after applying Reduction Rule~\ref{rule-2} to $I = (C, p, V, \rhd, k)$ and $q$, where
\begin{enumerate}
    \item[(1)] $q$ is the {\famend} winner of $(C, V)$ with respect to $\rhd$,
    \item[(2)] $q \rhd p$,
    \item[(3)] $C' = C \setminus \{q\}$,
    \item[(4)] $V'$ is obtained from $V$ by removing $q$ from every vote in $V$,
    \item[(5)] $\rhd'$ is the agenda $\rhd$ restricted to $C'$.
\end{enumerate}

We prove below that $I'$ is a {\yesins} of the {\prob{CCDC}}-Full-Amendment problem if and only if $I$ is a {\yesins} of the same problem.

\textbf{($\Rightarrow$)} Assume there exists $D \subseteq C'$ such that $\abs{D} \leq k-1$ and $p$ is the {\famend} winner of $(C' \setminus D, V)$ with respect to~$\rhd'$. It follows that $D \cup \{q\}$ is a witness for $I$. Thus, $I$ is a {\yesins}.

\textbf{($\Leftarrow$)} Now assume that $I$ is a {\yesins}, i.e., there exists $D \subseteq C$ such that $|D| \leq k$ and~$p$ is the {\famend} winner of $(C \setminus D, V)$ with respect to ${\rhd}$.

If $q \in D$, then $D \setminus \{q\}$ is witness for $I'$. 

Now consider the case where $q \notin D$. By Condition~(1) above, $q$ beats all its successors in~$\rhd$. According to the definition of the {\famend} procedure, if $q$ is not eliminated during the winner determination process, $q$ will remain the winner of $(C \setminus D, V)$, which contradicts $p$ being the winner of $(C \setminus D, V)$. Thus, $q$ must be eliminated when some candidate $a \in C \setminus D$ with $a \rhd q$ is considered during the winner determination procedure.  
By Condition~(2) above, it follows that $a \rhd p$. Note that at each round of the {\famend} procedure, either the considered candidate is eliminated, or all its successors are eliminated (in which case the considered candidate is declared the winner). Therefore, if $q$ is eliminated when $a$ is considered, $p$ must also be eliminated, which contradicts $p$'s status as the winner.

We can now conclude that the case where $q \notin D$ is impossible. 
\end{proof}

With the two reduction rules at hand, we are ready to present the following result.

\begin{theorem}
\label{thm-ccdc-suc-p}
{\prob{CCDC}}-{\Famend} is polynomial-time solvable.
\end{theorem}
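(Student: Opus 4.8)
The plan is to show that the algorithm which exhaustively applies Reduction Rule~\ref{rule-1} and Reduction Rule~\ref{rule-2} (decreasing~$k$ on each application, and halting with output \no\ the moment~$k$ turns negative) correctly decides {\prob{CCDC}}-{\Famend} in polynomial time.

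First I would record a convenient characterization of the full-amendment winner. Since the full-amendment procedure on an election with~$m'$ candidates is the $(m'-1)$-amendment procedure, in every round the first candidate in the current agenda is compared against \emph{all} of the remaining candidates, and it is eliminated unless it beats every one of them. Because eliminations only ever remove candidates from the front of the agenda, the candidate at a given position, once it reaches the front, is compared precisely against its (original) successors. Hence the full-amendment winner of $(C,V)$ with respect to~$\rhd$ is exactly the first candidate in~$\rhd$ that beats all of its successors; such a candidate always exists (the last candidate does so vacuously) and can be found in polynomial time. Consequently each reduction rule's applicability can be tested and the rule executed in polynomial time, and since every application deletes one candidate, the process terminates after at most $m-1$ steps.

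Next I would analyze the output. Neither rule ever deletes~$p$: Reduction Rule~\ref{rule-1} deletes a successor of~$p$, and Reduction Rule~\ref{rule-2} deletes a candidate strictly preceding~$p$. Suppose first that the budget never goes negative, so we terminate with a reduced instance $I^\star=(C^\star,p,V^\star,\rhd^\star,k^\star)$ with $k^\star\geq 0$ to which neither rule applies. I claim that in~$I^\star$ the candidate~$p$ is already the full-amendment winner. Indeed, pairwise head-to-head comparisons are invariant under deletion of other candidates, so ``$p$ beats~$q$'' holds in~$V^\star$ whenever it held in~$V$; since Reduction Rule~\ref{rule-1} is no longer applicable,~$p$ beats every successor it has in~$\rhd^\star$, so~$p$ is one of the candidates that beats all of their successors, and therefore the first such candidate occurs at a position no later than that of~$p$. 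Since Reduction Rule~\ref{rule-2} is no longer applicable, that first candidate does not strictly precede~$p$; hence it is~$p$ itself, i.e.,~$p$ is the full-amendment winner of $(C^\star,V^\star)$ with respect to~$\rhd^\star$. Thus~$I^\star$ is a \yesins\ (witnessed by the empty deletion set, which respects the nonnegative budget~$k^\star$), and by the soundness of the two rules (Reduction Rule~\ref{rule-1} being evidently sound and Reduction Rule~\ref{rule-2} by the preceding lemma) the original instance~$I$ is a \yesins\ as well. Conversely, if during the process the budget becomes negative, the instance reached at that moment admits no feasible deletion set and is thus a \noins, and again by soundness~$I$ is a \noins. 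Since the whole procedure runs in polynomial time, the theorem follows.

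The main obstacle is the claim that exhaustive reduction necessarily leaves~$p$ as the winner: one must handle the interaction and ordering of the two rules, since an application of Reduction Rule~\ref{rule-2} might appear to disturb the structure established by earlier applications of Reduction Rule~\ref{rule-1}. Making the invariance of pairwise comparisons under candidate deletion explicit is what resolves this — it guarantees that once~$p$ beats all of its surviving successors the property is never undone, so the only remaining obstruction to~$p$ winning is some predecessor being the full-amendment winner, which is precisely the trigger of Reduction Rule~\ref{rule-2}.
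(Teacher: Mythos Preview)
Your proposal is correct and follows essentially the same approach as the paper: exhaustively apply Reduction Rules~\ref{rule-1} and~\ref{rule-2}, rely on their soundness, and output \yes\ iff the residual budget is nonnegative. You supply more detail than the paper does---in particular the characterization of the full-amendment winner as the first agenda candidate beating all its successors, and the explicit argument why~$p$ must be the winner once neither rule applies---but the underlying strategy is identical.
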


\begin{proof}
Let $I = (C, p, V, \rhd, k)$ be an instance of the {\prob{CCDC}}-{\Famend} problem. 
We exhaustively apply Reduction Rule~\ref{rule-1} and Reduction Rule~\ref{rule-2}. The order of application does not matter, as the soundness of each rule is independent of the other. As previously discussed, the exhaustive application of these rules can be performed in polynomial time.

After the reduction rules have been applied exhaustively, candidate $p$ becomes the {\famend} winner. 
We conclude that $I$ is a {\yesins} if and only if $k \geq 0$.
\end{proof}

Since {\prob{DCDC}}-{\Famend} is polynomial-time Turing reducible to {\prob{CCDC}}-{\Famend}, we obtain the following corollary from Theorem~\ref{thm-ccdc-suc-p}.

\begin{corollary}
\label{cor-dcdc-suc-p}
{\prob{DCDC}}-{\Famend} is polynomial-time solvable.
\end{corollary}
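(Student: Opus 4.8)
The plan is to give a polynomial-time Turing reduction from {\prob{DCDC}}-{\Famend} to {\prob{CCDC}}-{\Famend} and invoke Theorem~\ref{thm-ccdc-suc-p}. Let $(C, p, V, \rhd, k)$ be the given {\prob{DCDC}}-{\Famend} instance. The starting observation is that, by the definition of the full-amendment procedure, the winner of an election with respect to an agenda is simply the first candidate in the (restricted) agenda that beats all of its successors; in particular, once $p$ beats all of its successors with respect to $V$, deleting candidates can never change that, since deletion only shrinks the set of successors of $p$. Consequently, if $p$ is not already the {\famend} winner of $(C, V)$ with respect to~$\rhd$, we answer {\yes} immediately (the empty deletion set works); otherwise $p$ beats all of its successors, and $p$ can be dethroned only by turning some \emph{predecessor} of $p$ in the agenda into the winner.

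This motivates the following reduction. First compute the {\famend} winner of $(C, V)$ with respect to~$\rhd$; if it is not $p$, output {\yes}. Otherwise, for every candidate $c$ with $c\rhd p$ such that $c$ beats $p$ with respect to $V$, query the {\prob{CCDC}}-{\Famend} oracle on the instance $(C, c, V, \rhd, k)$ (same election, agenda, and budget, but with $c$ as the new distinguished candidate); if at least one of these queries returns {\yes}, output {\yes}, and otherwise output {\no}. There are at most $\abs{C}$ queries and each is answered in polynomial time by Theorem~\ref{thm-ccdc-suc-p}, so the whole procedure runs in polynomial time.

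The main point to verify---and the step I expect to be the real obstacle---is that a {\yes} answer of the oracle for some $(C, c, V, \rhd, k)$ really yields a solution of the original {\prob{DCDC}}-{\Famend} instance, i.e.\ one that does \emph{not} delete $p$. The oracle returns some $D$ with $\abs{D}\leq k$ such that $c$ is the {\famend} winner of $(C\setminus D, V)$, but a priori $D$ could contain $p$. The key claim is that, because we only query candidates $c$ with $c\rhd p$ and $c$ beating $p$, we may assume $p\notin D$: if $p\in D$, then reinstating $p$ keeps $c$ the winner, since $p$ is a successor of $c$ that is beaten by $c$, so $c$ still beats all of its successors, and inserting $p$ only enlarges the successor sets of the predecessors of $c$ and hence cannot turn a previously eliminated predecessor into a surviving one; thus $c$ remains the first candidate in the agenda beating all of its successors. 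Conversely, any dethroning deletion set $D\subseteq C\setminus\{p\}$ for the original instance exposes a new winner $c\neq p$, and as argued above this $c$ must precede $p$ in the agenda and must beat all of its successors in $C\setminus D$, in particular $c$ beats $p$; hence the query for this $c$ returns {\yes}. Together with the trivial {\yes} case when $p$ is not already the winner, this proves correctness, and Theorem~\ref{thm-ccdc-suc-p} supplies the polynomial running time.

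As an aside, the same analysis can be carried out without an oracle: {\prob{DCDC}}-{\Famend} is a {\yesins} if and only if $p$ is not the {\famend} winner of $(C, V)$ with respect to~$\rhd$, or there is a candidate $c$ with $c\rhd p$ such that $c$ beats $p$ with respect to $V$ and at most $k$ successors of $c$ in~$\rhd$ are not beaten by $c$ with respect to $V$; both conditions are checkable in polynomial time. I would present the Turing-reduction version as the proof, mentioning this direct criterion as a remark.
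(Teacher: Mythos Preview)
Your proof is correct and follows exactly the approach the paper uses: the paper simply asserts that {\prob{DCDC}}-{\Famend} is polynomial-time Turing reducible to {\prob{CCDC}}-{\Famend} and then invokes Theorem~\ref{thm-ccdc-suc-p}, without spelling out the reduction. You supply the details of that reduction carefully, including the one nontrivial point (that a witness set returned by the oracle can be assumed not to contain~$p$), and your direct criterion at the end is also correct and a pleasant bonus.
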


It is important to note that the soundness of Reduction Rule~\ref{rule-2} fails for the {\prob{CCDC}}-$(m-h)$-Amendment problem, even when $h = 2$, as demonstrated in Example~\ref{ex-2}.

\begin{example}[Reduction Rule~\ref{rule-2} is not sound for {\prob{CCDC}}-$(m-2)$-Amendment]
\label{ex-2}
Consider an instance $(C, p, V, \rhd, k)$, where the election $(C, V)$  and the majority graph of $(C, V)$ along with the agenda $\rhd$ are depicted below. The election is illustrated on the left, while the majority graph appears on the right, with the agenda represented by a left-to-right ordering of the vertices. Let $k = 1$. The $(m-2)$-amendment winner of $(C, V)$ is $q$. Consequently, if Reduction Rule~\ref{rule-2} is applied, $q$ will be removed, and $k$ will be reduced to zero. However, it is straightforward to verify that $p$ is not the $(m-2)$-amendment winner of the remaining election. (Note that the number of candidates in the remaining election is three, not four. In this case, the $(m-2)$-amendment procedure becomes equivalent to the amendment procedure.) Nevertheless, the instance is a {\yesins}; specifically, the unique feasible solution is to delete candidate $b$.

\begin{minipage}{0.5\textwidth}
\begin{center}
vote 1: $a \Succ q \Succ b \Succ p$

vote 2: $q \Succ b \Succ p \Succ a$

vote 3: $p \Succ b \Succ a \Succ q$
\end{center}
\end{minipage}\begin{minipage}{0.5\textwidth}
    \begin{center}
    \includegraphics[width=0.65\linewidth]{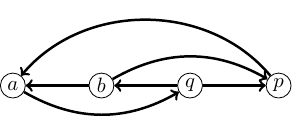}      
    \end{center}
\end{minipage}
\end{example}

Let us now consider the problem of destructive control by adding candidates. We first show that, for the {\famend} procedure, it suffices to consider adding at most one candidate. This result is formalized in the following lemma.

\begin{lemma}
    \label{lem-dcac-full-amendment-adding-one-candidate-enough}
Let $C$ and $D$ be two disjoint sets of candidates, $V$ a multiset of votes over $C \cup D$,~$\rhd$ an agenda on $C \cup D$, and $p \in C$ a candidate. If there exists $D' \subseteq D$ such that $p$ is not the {\famend} winner of $(C \cup D', V)$ with respect to $\rhd$, then there exists a subset $D'' \subseteq D'$ such that $\abs{D''}\leq 1$ and $p$ is not the {\famend} winner of $(C \cup D'', V)$ with respect to $\rhd$.
\end{lemma}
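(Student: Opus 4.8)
The plan is to exploit the simple structural characterization of the full-amendment winner that follows directly from the definition of the $(|S|-1)$-amendment procedure: for any subset $S\subseteq C\cup D$, the full-amendment winner of $(S,V)$ with respect to $\rhd$ restricted to $S$ is the first candidate in $\rhd$ restricted to $S$ that beats, with respect to $V$, every candidate of $S$ appearing after it in $\rhd$ (in each round the front candidate either beats all remaining candidates and is declared the winner, or is eliminated). I will also use throughout the elementary invariance of pairwise comparisons under restriction: for $a,b\in S_1\subseteq S_2$, $a$ beats $b$ with respect to $V|_{S_1}$ if and only if $a$ beats $b$ with respect to $V|_{S_2}$, since both equal $n_V(a,b)>n_V(b,a)$.

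First I would dispose of the trivial case: if $p$ is not the full-amendment winner of $(C,V)$ with respect to $\rhd$, then $D''=\emptyset$ already works. So assume $p$ is the full-amendment winner of $(C,V)$; by the characterization this means (a) $p$ beats every candidate of $C$ appearing after it in $\rhd$, and (b) no candidate of $C$ appearing before $p$ in $\rhd$ beats all of its successors within $C$. Using the hypothesis that $p$ is \emph{not} the full-amendment winner of $(C\cup D',V)$, I split into two cases. In Case~A, $p$ fails to beat some candidate $q$ of $C\cup D'$ appearing after $p$ in $\rhd$; by (a) we have $q\notin C$, so $q\in D'$, and taking $D''=\{q\}$ the candidate $q$ still appears after $p$ within $C\cup\{q\}$ and $p$ still does not beat $q$, so $p$ is not the full-amendment winner of $(C\cup\{q\},V)$. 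In Case~B, $p$ beats every candidate of $C\cup D'$ appearing after it in $\rhd$; then the full-amendment winner $w$ of $(C\cup D',V)$ is $p$ itself or appears before $p$ in $\rhd$, and since $w\neq p$ it appears before $p$ and beats every candidate of $C\cup D'$ appearing after $w$. If $w\in C$, then within $C$ the candidate $w$ appears before $p$ and beats all of its successors in $C$ (a subset of its successors in $C\cup D'$), contradicting (b); hence $w\in D'$. Taking $D''=\{w\}$, within $C\cup\{w\}$ the candidate $w$ still appears before $p$ and still beats all of its (now fewer) successors, so by the characterization the full-amendment winner of $(C\cup\{w\},V)$ appears at or before $w$ in $\rhd$ and in particular is not $p$. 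In every case $D''\subseteq D'$ and $\abs{D''}\leq 1$, which completes the proof.

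I expect the only real subtlety — the main obstacle — to lie in Case~B: one must observe that the candidate $w$ which "steals" the win in $(C\cup D',V)$ cannot be a registered candidate, since otherwise it would already have stolen the win in $(C,V)$, contradicting the assumption of the non-trivial case. Once this observation is in place, the verification that inserting the single candidate $w$ suffices is immediate from the monotonicity of the condition "$w$ beats all of its successors" under deleting other candidates, and the remaining steps are routine applications of the restriction-invariance of pairwise comparisons and of the first-candidate-beating-all-successors characterization of the full-amendment winner.
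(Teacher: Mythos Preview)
Your proof is correct and follows essentially the same approach as the paper's: both split on whether $p$ fails to beat some successor (your Case~A) or is displaced by a predecessor $w$ (your Case~B), and in each case take $D''$ to be the singleton containing the relevant candidate. The only cosmetic difference is organization: you separate out the trivial case $D''=\emptyset$ at the start and then argue by contradiction that $w\notin C$, whereas the paper uniformly writes $D''=D'\cap\{c\}$ (which yields $\emptyset$ automatically when $c\in C$) and phrases Case~B in terms of the procedure's elimination steps rather than via your explicit ``first candidate beating all successors'' characterization.
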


\begin{proof}
   Assume that there exists a subset $D' \subseteq D$ such that $p$ is not the {\famend} winner of $(C \cup D', V)$ with respect to~$\rhd$. 
   
   If $p$ has a successor $d \in C\cup D'$ that is unbeaten by $p$ with respect to~$V$, we are done. In this case, defining $D'' = D'\cap \{d\}$, $p$ cannot be the {\famend} winner of $(C \cup D'', V)$ due to the presence of~$d$.  

Otherwise, $p$ is eliminated when some candidate $c\in C\cup D'$, preceding $p$ in the agenda~$\rhd$, is considered during the winner determination process applied to $(C \cup D', V)$. In this scenario,~$c$ is declared the winner of $(C \cup D', V)$. Let $D'' = D' \cap \{c\}$. Clearly, $\abs{D''} \leq 1$.  
Now, consider the election $E = (C \cup D'', V)$. If $c$ is eliminated during the winner determination process when one of $c$'s predecessors in $E$ is considered, then, by the definition of the {\famend} procedure,~$p$ is also eliminated along with $c$, thereby preventing $p$ from winning.  
Otherwise, as $c$ wins $(C \cup D', V)$ and $D'' \subseteq D'$, the candidate $c$ beats all of its successors in~$E$. In this case, $c$ also wins $(C \cup D'', V)$.
\end{proof}

Lemma~\ref{lem-dcac-full-amendment-adding-one-candidate-enough} implies that {\prob{DCAC}}-{\Famend} is solvable in polynomial time. We generalize this result to all $(m-h)$-amendment procedures, where~$h$ is a constant, as stated in the following theorem.

To facilitate a better understanding of the algorithm presented below, let us first establish some convenient terminology. Consider an election $E = (C, V)$ and an agenda $\rhd$. Recall that candidates are considered in the order specified by $\rhd$. Assume that some candidate $c \in C$ is currently being considered. If $c$ beats all of its next $m-h$ successors with respect to $V$, we say that $c$ eliminates each of these $m-h$ candidates (or equivalently, that each of these candidates is eliminated by $c$). Otherwise, there exists a candidate $c'$ among these $m-h$ successors who is unbeaten by $c$. In this case, we say that $c$ is eliminated by $c'$.  

Note that a candidate $c$ may be considered multiple times. For example, if $c$ eliminates its next $m-h$ successors, it will remain the next candidate to be considered. We say that $c$ is directly eliminated if it is eliminated by one of its successors during its first consideration. Nevertheless, when $m \geq 2h-1$, each candidate can be considered at most twice. Specifically, if $C'$ denotes the set of the first $h-1$ candidates in $\rhd$, then every candidate in $C'$ can be considered at most twice, while every other candidate can be considered at most once.  
However, this is not the case when $m < 2h-1$. For instance, consider $m=5$ and $h=4$. In this scenario, $m-h=1$, and the first candidate in the agenda can be considered up to four times.

\begin{theorem}
\label{thm-dcac-suc-p}
For every constant $h$, 
{\prob{DCAC}}-$(m-h)$-Amendment is polynomial-time solvable.
\end{theorem}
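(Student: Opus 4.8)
The plan is to reduce the search space to polynomial size by bounding the number of candidates we need to add. The key structural insight—mirroring Lemma~\ref{lem-dcac-full-amendment-adding-one-candidate-enough}—is that if $p$ can be dethroned by adding some subset $D' \subseteq D$, then $p$ is already dethroned by adding a small subset $D'' \subseteq D'$ whose size depends only on $h$. Concretely, if $p$ is not the $(m-h)$-amendment winner of $(C \cup D', V)$, then either (i) $p$ has a successor in $C \cup D'$ that is not beaten by $p$ when $p$ is first (or subsequently) considered—in which case a single added candidate suffices—or (ii) $p$ is eliminated when some predecessor $c$ of $p$ is considered. In case~(ii), the candidate $c$ must be compared against $p$ together with at most $h-1$ candidates sandwiched between $c$ and $p$ in the current agenda, so the elimination of $p$ is ``caused'' by a block of at most $h$ candidates (one being $p$ itself). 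The added candidates relevant to this causal block number at most $h-1$. The subtlety is that $c$ itself may then survive or be eliminated later; as in Lemma~\ref{lem-dcac-full-amendment-adding-one-candidate-enough}, if $c$ is eliminated by one of its own predecessors in the reduced election $C \cup D''$, then $p$ is eliminated along with it (since when a candidate is eliminated, so are all its $m-h$ successors, and $p$ is among them once the agenda is short enough), and if $c$ survives it wins $C \cup D''$, so $p$ still loses. I would formalize this into a lemma: it suffices to consider adding at most $h-1$ candidates (or, more conservatively, $O(h)$ candidates), all of them predecessors of $p$ in the relevant portion of the agenda.

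Given this bound, the algorithm is brute force over the $O(|D|^{h-1})$ subsets $D'' \subseteq D$ of size at most $h-1$: for each such subset we form $(C \cup D'', V)$, restrict $\rhd$ accordingly, run the $(m-h)$-amendment procedure (which is polynomial-time computable), and check whether $p$ fails to win. If any subset yields a non-win for $p$, the instance is a {\yesins}; otherwise, by the lemma, it is a {\noins}. Since $h$ is a fixed constant, $|D|^{h-1}$ is polynomial, so the whole algorithm runs in polynomial time. One must also correctly handle the edge case where the number of candidates after addition is at most $h$, in which case, per the stipulation in the paper, the winner is simply the first candidate in the agenda—this is trivially checkable.

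The main obstacle I expect is proving the sufficiency lemma cleanly when a candidate can be considered more than once, i.e., when $m < 2h-1$ (the regime flagged in the paragraph preceding the theorem, e.g., $m=5$, $h=4$). In that regime, the candidate $c$ that directly eliminates $p$ might itself be eliminated in a later pass, and its eliminator might itself be re-considered, so the ``causal chain'' of eliminations leading to $p$'s demise could in principle be longer than $h$. I would handle this by arguing that once we pass to the subset $D''$ consisting only of the direct eliminator $c$ of $p$ (together with possibly one more candidate to witness case~(i)), the relative order and head-to-head outcomes among the surviving candidates are unchanged, so the same cascade of eliminations up to the point where $p$ is removed still occurs; the only way adding fewer candidates could \emph{save} $p$ is if removing an added candidate made $p$'s eliminator disappear too early, but then $p$'s eliminator is itself eliminated, which (since $p$ is always among a candidate's $m-h$ successors in this short-agenda regime) drags $p$ down with it. Making this last step rigorous—tracking which candidates survive in $C \cup D''$ versus $C \cup D'$ and confirming $p$ never becomes the winner—is the technical heart of the argument, but it is essentially a more careful version of the case analysis already carried out in the proof of Lemma~\ref{lem-dcac-full-amendment-adding-one-candidate-enough}.
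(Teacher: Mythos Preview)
Your high-level strategy---prove that a yes-instance always admits a feasible solution of size $O(h)$, then brute-force over all such subsets---is exactly the route the paper takes; its proof yields the bound $3h-2$. However, your justification for the bound contains a factual slip and a genuine gap.

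The slip: in case~(ii) you write that $c$ ``must be compared against $p$ together with at most $h-1$ candidates sandwiched between $c$ and $p$.'' That is the picture for $h$-amendment, not $(m-h)$-amendment. Here the window has size $m-h$, so there can be up to $m-h-1$ candidates between $c$ and $p$ when $p$ is eliminated; the $h-1$ candidates that matter are those \emph{outside} the window, namely the last $h-1$ positions of the current agenda.

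The gap: your Lemma~\ref{lem-dcac-full-amendment-adding-one-candidate-enough}-style argument (``if $c$ is eliminated by some predecessor $a$ in $C\cup D''$, then $p$ goes down with it; if $c$ survives, $c$ wins'') fails precisely when $p$ sits among the last $h-1$ positions of the agenda in $C\cup D''$. In that tail zone $p$ is protected: when $a$ wipes out its $m''-h$ successors, $p$ is not among them, and likewise if $c$ reaches the front it need not face $p$ at all. In the original solution $D'$ there may well have been added candidates \emph{after} $p$ in $\rhd$ that pushed $p$ out of this protected tail; by keeping only the eliminator $c$ (a predecessor of $p$) you may have moved $p$ back into it. The paper handles this by first guessing, in addition to the middle candidates, the last $h-1$ added candidates $S^{\text{R}}$ (and, for a symmetric boundary reason at the head of the agenda, the first $h-1$ added candidates $S^{\text{L}}$), thereby pinning down whether $p$ lies in the tail; only then does the Lemma~\ref{lem-dcac-full-amendment-adding-one-candidate-enough} reasoning go through cleanly. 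This is why the bound is $3h-2$ rather than $h-1$, and why the obstacle you flag (the $m<2h-1$ regime) is not the real difficulty---the real difficulty is controlling whether $p$ lands in the last $h-1$ slots after thinning $D'$.
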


\begin{proof}
Let $I = (C, p, D, V, \rhd, k)$ be an instance of the {\prob{DCAC}}-$(m-h)$-Amendment problem. 
Our algorithm first addresses the case where $I$ admits a constant-sized feasible solution, followed by the case where no such feasible solution exists.
\medskip

\noindent{\textbf{Step~1: Handling instances with a constant-sized feasible solution.}}

In this step, we determine whether there exists a subset $D' \subseteq D$ with $|D'| \leq \min\{2h-1, k\}$ such that $p$ is not the $(m-h)$-amendment winner of $(C \cup D', V)$ with respect to~$\rhd$. Since $h$ is a constant, this can be achieved in polynomial time. 

If such a subset exists, we conclude that $I$ is a {\yesins}. Otherwise, we consider the following two cases:
\begin{itemize}
    \item If $|D| \leq 2h-1$ or $k \leq 2h-1$, we directly conclude that $I$ is a {\noins}.
    \item If $|D| \geq 2h$ and $k \geq 2h$, the algorithm proceeds to the next step.
\end{itemize}
\medskip

\noindent{\textbf{Step~2: Handling instances without a constant-sized feasible solution.}}

At this point, we know that $I$ does not admit a feasible solution $D' \subseteq D$ with at most~$2h-1$ candidates. 
Our algorithm proceeds by enumerating all tuples $(S^{\text{L}}, S^{\text{R}})$ of subsets of~$D$ such that $S^{\text{L}} \cap S^{\text{R}} = \emptyset$ and $\abs{S^{\text{L}}} =\abs{S^{\text{R}}} = h-1$. (For clarity and ease of reference, we note that the superscripts~L and~R denote ``left'' and ``right'', respectively.) The candidates in~$S^{\text{L}}$ (respectively,~$S^{\text{R}}$) are intended to represent the first (respectively, last) $h-1$ candidates among all added candidates in the agenda~$\rhd$. 

Let $k' = k-\abs{S^{\text{L}}}-\abs{S^{\text{R}}}= k - 2h + 2$. As $k\geq 2h$, $k'$ is positive. For each enumerated  $(S^{\text{L}}, S^{\text{R}})$, we define $S^{\text{L} \cup \text{R}} = S^{\text{L}} \cup S^{\text{R}}$, and construct a subinstance to determine whether there exists a subset $S \subseteq D \setminus S^{\text{L} \cup \text{R}}$ such that:
\begin{itemize}
    \item $\abs{S} \leq k'$,
    \item $p$ is not the $(m-h)$-amendment winner of $(C \cup S^{\text{L} \cup \text{R}} \cup S, V)$ with respect to $\rhd$,
    \item $S^{\text{L}} \rhd S \rhd S^{\text{R}}$, meaning no candidate from $S^{\text{L}\cup \text{R}}$ appears between candidates from $S$ in the agenda $\rhd$.% i.e., every candidate in $S$ is a successor of every candidate in $S^{\text{L}}$ and a predecessor of every candidate in $S^{\text{R}}$.
\end{itemize}
Since~$h$ is a constant, the number of subinstances is at most polynomial in the size of $I$. Furthermore, $I$ is a {\yesins} if and only if at least one of the subinstances is a {\yesins}.

Next, we present a polynomial-time algorithm for solving a subinstance specified by an enumerated tuple $(S^{\text{L}}, S^{\text{R}})$.

Let $d^{\text{LR}}$ be the rightmost candidate from $S^{\text{L}}$ in $\rhd$, and let $d^{\text{RL}}$ be the leftmost candidate from $S^{\text{R}}$ in $\rhd$. To aid our exposition, we define two disjoint subsets $X$ and $Y$ of $D \setminus S^{\text{L} \cup \text{R}}$:
\begin{itemize}
    \item $X = \{d \in D \setmid d^{\text{LR}} \rhd d \rhd d^{\text{RL}}, d \rhd p\}$, which consists of all predecessors of $p$ from $D$ that lie between $d^{\text{LR}}$ and $d^{\text{RL}}$ in the agenda~$\rhd$.
    \item $Y = \{d \in D \setmid d^{\text{LR}} \rhd d \rhd d^{\text{RL}}, p \rhd d\}$, which consists of all successors of $p$ from $D$ that lie between $d^{\text{LR}}$ and $d^{\text{RL}}$ in the agenda~$\rhd$.
\end{itemize}

%We observe that,  that $p$ is the $(m-h)$-amendment winner of $(C \cup S^{\text{L}{{\cup}} \text{R}}, V)$ with respect to~$\rhd$, since otherwise we have conclude that $I$ is a {\yesins} in Step~1. 
We observe that $p$ beats all its successors in $C \cup S^{\text{L} \cup \text{R}} \cup Y$ with respect to $V$. The reasoning is as follows: if $p$ has a successor $d \in C \cup S^{\text{L} \cup \text{R}} \cup Y$ that is not beaten by $p$, then $p$ is not the $(m-h)$-amendment winner of $(C \cup \{d\}, V)$. In this case, the instance would already have been resolved in Step~1.

Let $\ell$ denote the number of successors of $p$ from $C \cup S^{\text{L} \cup \text{R}}$ with respect to the agenda $\rhd$. We analyze the problem by considering the following two cases:
\begin{description}
    \item[Case 1:]  $\ell \geq h-1$. \hfill
    
    In this case, using reasoning similar to the proof of Lemma~\ref{lem-dcac-full-amendment-adding-one-candidate-enough}, we can show that if the subinstance is a {\yesins}, there must exist a candidate $d \in X$ such that $p$ does not win $(C \cup S^{\text{L} \cup \text{R}} \cup \{d\}, V)$. Since $\abs{S^{\text{L} \cup \text{R}} \cup \{d\}} = 2h-1$, this implies that the instance would already have been resolved in Step~1.
    
    \item[Case 2:]  $\ell < h-1$. \hfill
    
    Let $k'' = \min\{h - \ell, k'\}$. In this case, we determine whether there exists a subset $Z \subseteq (X \cup Y)$ such that $\abs{Z} \leq k''$ and $p$ is not the $(m-h)$-amendment winner of $(C \cup S^{\text{L} \cup \text{R}} \cup Z, V)$ with respect to $\rhd$. Since $h$ is a constant, this can be done in polynomial time. 
    If such a subset~$Z$ exists, we conclude that $I$ is a {\yesins}; otherwise, we discard the enumerated~$(S^{\text{L}}, S^{\text{R}})$.

Regarding the correctness of this step, the result is evident when $h - \ell > k'$.  

For the case where $h - \ell \leq k'$, we claim that if the subinstance is a {\yesins}, then adding a set $Z$ of at most $k''=h-\ell$ candidates from $X \cup Y$ to the election $(C \cup S^{\text{L} \cup \text{R}}, V)$ is sufficient to prevent $p$ from winning. We prove this claim as follows. Assume that there exists a subset $Z' \subseteq (X \cup Y)$ such that $p$ is not the $(m-h)$-amendment winner of $(C \cup S^{\text{L} \cup \text{R}} \cup Z', V)$. Consequently, $p$ is eliminated when one of its predecessors, say $c^{\star}$, is considered during the winner determination process applied to $(C \cup S^{\text{L} \cup \text{R}} \cup Z', V)$. Let $\rhd'$ denote the agenda $\rhd$ restricted to $C \cup S^{\text{L} \cup \text{R}} \cup Z'$.

We construct the desired set $Z$ as follows: For each candidate $c \in \rhd'[1, h-1-\ell]$---that is, each candidate within the first $h-1-\ell$ positions of $\rhd'$---which does not beat at least one of its successors in $\rhd'$, we identify the leftmost successor of $c$ in $\rhd'$ that is not beaten by $c$ with respect to $V$. Let $C'$ denote the set of these ``leftmost'' candidates. We then define  
\[
Z = \left(C'\cup \{c^\star\}\right) \cap Z'.
\]
Clearly, $\abs{Z} \leq \abs{C'}+\abs{\{c^{\star}\}}\leq  (h-1-\ell)+1=h-\ell$. Let $E = (C \cup S^{\text{L}\cup \text{R}} \cup Z, V)$, and let~$\rhd''$ be $\rhd$ restricted to $C \cup S^{\text{L}\cup \text{R}} \cup Z$. Note that due to the presence of candidates in~$S^{\text{L}}$, the first $h-1$ candidates in $\rhd'$ are identical to those in $\rhd''$. Similarly, the presence of candidates in $S^{\text{R}}$ ensures that the last $h-1$ candidates in $\rhd'$ are also the same as those in $\rhd''$. 
 Using similar reasoning as in the proof of Lemma~\ref{lem-dcac-full-amendment-adding-one-candidate-enough}, we prove below that $p$ does not win $E$. Our proof proceeds by considering the following three cases.

\begin{description}
    \item[Case 1:] $c^{\star}$ is within the first $h-1-\ell$ positions of $\rhd'$. \hfill 

     In this case, $c^{\star}$ is not eliminated in $(C\cup S^{\text{L}\cup \text{R}}\cup Z', V)$ by any of its predecessors from $\rhd'[1,h-1-\ell]$. Consider now the winner determination process applied to~$E$ and $\rhd''$. By the definition of $C'$, all predecessors of~$c^{\star}$ will be directly eliminated. Since $c^{\star}$ eliminates $p$ in $(C \cup S^{\text{L}\cup \text{R}} \cup Z', V)$ and $Z \subseteq Z'$, it follows that $c^{\star}$ will also eliminate~$p$.

     \item[Case 2:] $c^{\star}$ is within neither the first $h-1-\ell$ positions nor the last $h-1$ positions of~$\rhd'$. \hfill

As $c^{\star}$ is not eliminated in $(C\cup S^{\text{L}\cup \text{R}}\cup Z', V)$ by any of its predecessors from $\rhd'[1,h-1-\ell]$ in the election $(C\cup S^{\text{L}\cup \text{R}}\cup Z', V)$, by the definition of $C'$, $c^{\star}$ is not eliminated by any of its predecessors from $\rhd''[1,h-1-\ell]=\rhd'[1,h-1-\ell]$ in $E$. It follows that all the first $h-1-\ell$ candidates in~$\rhd''$ are directly eliminated in $E$. Consequently, if $c^{\star}$ is eliminated when one of its predecessors, say $a$, is considered in $E$, then $a$ must not belong to the first $h-1-\ell$ candidates in~$\rhd''$. In this case, all the next $(m-h)$ successors of~$a$, including both~$c^{\star}$ and $p$, are eliminated. Otherwise,~$p$ will be eliminated by $c^{\star}$ in $E$.

\item[Case 3:] $c^{\star}$ is within the last $(h-1)$ positions of $\rhd'$. \hfill

Consider the winner determination process applied to $E$ and $\rhd''$. If all the first $h-1-\ell$ candidates in $\rhd''$ are directly eliminated, the proof proceeds similarly to that of Case~2. Otherwise, assume that some candidate $c  \in \rhd''[h-1-\ell]$ is not directly eliminated. In this case, after $c$ is considered for the first time, at most $h$ candidates remain. It follows that either~$p$ is eliminated together with $c^{\star}$ by some predecessors of $c^{\star}$, or $p$ is eliminated by $c^{\star}$.
\end{description}
%It remains to consider the case where $c^{\star}$ is within the last $(h-1)$ positions of $\rhd'$. In this scenario, $c^{\star}$ and $p$ are either eliminated simultaneously by some predecessor of $c^{\star}$, or $p$ is eliminated by $c^{\star}$.
\end{description}

Finally, if all enumerated tuples are discarded, we conclude that~$I$ is a {\noins}.
\end{proof}

The proof of Theorem~\ref{thm-dcac-suc-p} in fact implies that if an instance of {\prob{DCAC}}-$(m-h)$-Amendment is a {\yesins}, it must have a feasible solution of size at most $\min\{k, 3h-2\}$. 
%We conjecture that, with a more in-depth analysis, this bound can be improved to $\min\{k, h\}$. We leave this as an open question.

%%%%%%%%%%%%%%%%%%%%%%%%%%%%%%%%%%
%%%%%%%%%%%%%%%%%%%%%%%%%%%%%%%%%%%
%%%%%%%%%%%%%%%%%%%%%%%%%%%%%%%%%%
%%%%%%%%%%%%%%%%%%%%%%%%%%%%%%%%%

\subsection{The Successive Procedure}
\label{sec-suc}
In this section, we study the successive procedure. We first consider constructive control by adding/deleting voters. A slight modification of the reduction presented in the proof of Theorem~\ref{thm-ccav-suc-np} leads to the following result.

\begin{theorem}
\label{thm-ccav-suc-wah-last}
{\prob{CCAV}}-{\memph{Successive}} is {\memph\wah} with respect to the number of added votes plus the number of registered votes. This holds even when the distinguished candidate is the last one in the agenda.
\end{theorem}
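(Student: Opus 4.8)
The plan is to reuse, essentially verbatim, the reduction from {\prob{Perfect Code}} constructed in the proof of Theorem~\ref{thm-ccav-suc-np} for the case $h=1$ (the full-amendment procedure), and to argue that in that particular construction the distinguished candidate is the successive winner of every sub-election obtained by adding unregistered votes if and only if it is the full-amendment winner of the same sub-election. Concretely, given an instance $(G,\kappa)$ of {\prob{Perfect Code}} with $\vset=\{\vere_1,\dots,\vere_n\}$ -- and assuming without loss of generality that $\kappa\geq 3$, which leaves {\prob{Perfect Code}} {\wah} w.r.t.\ $\kappa$ -- I would take the candidate set $C=X\cup Y\cup\{p\}$ with $X=\{x_i\}$ and $Y=\{y_i\}$, the agenda $\rhd=(\overrightarrow{X},\overrightarrow{Y},p)$, the $3\kappa+2$ registered votes of the three prescribed types, the unregistered votes $\succ_j$ (one per vertex $\vere_j$), and the budget $k=\kappa$, exactly as in the proof of Theorem~\ref{thm-ccav-suc-np}. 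Since $\abs{V}+k=(3\kappa+2)+\kappa=4\kappa+2$ is bounded by a computable function of the {\prob{Perfect Code}} parameter $\kappa$ (the number of available unregistered votes $\abs{W}=n$ is irrelevant, as it is not part of the parameter), and the construction is polynomial-time computable, this is a valid parameterized reduction with respect to the combined parameter ``number of added votes plus number of registered votes''.

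The core of the proof is a structural observation about this construction. First, every candidate $z\in X\cup Y$ beats all of its successors lying in $X\cup Y$, no matter which at most $\kappa$ unregistered votes are added, because the $3\kappa+2$ registered votes -- which remain a strict majority of the electorate -- all rank $\overrightarrow{X}$ and $\overrightarrow{Y}$ in the prescribed internal order and all of $X$ before all of $Y$. Second -- and this is the point that must be checked carefully -- in each individual vote a candidate $z\in X\cup Y$ is ranked before all of its successors if and only if it is ranked before $p$; for $x_i$ this holds because, in exactly the votes that place $x_i$ before $p$ (the registered ones of the last two types, and those $\succ_j$ with $\vere_i\in N_G[\vere_j]$, in which $x_i$ sits in the first block), $x_i$ also precedes every candidate that follows it, and symmetrically for $y_i$ (placed before $p$ precisely in $\overrightarrow{X}\Succ\overrightarrow{Y}\Succ p$ and in those $\succ_j$ with $\vere_i\notin N_G[\vere_j]$, where it sits in the second block). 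Consequently $z\in X\cup Y$ majority-dominates its successors if and only if $n_{V\muplus W'}(z,p)>\abs{V\muplus W'}/2$, i.e.\ if and only if $z$ beats $p$. Since $p$ is the last candidate in $\rhd$, it follows that $p$ is the successive winner of $(C,V\muplus W')$ exactly when no candidate of $X\cup Y$ beats $p$ -- which is precisely the condition ``$p$ is not beaten by anyone'', i.e.\ the very condition under which $p$ is the full-amendment winner in the proof of Theorem~\ref{thm-ccav-suc-np}.

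Having established this equivalence, both directions of correctness are inherited from the proof of Theorem~\ref{thm-ccav-suc-np}. Adding the $\kappa$ unregistered votes corresponding to a perfect code makes each $x_i$ and each $y_i$ ranked before $p$ in exactly $2\kappa+1$ of the $4\kappa+2$ votes, so none of them beats $p$ and hence none majority-dominates its successors, making $p$ the successive winner; conversely, if adding at most $\kappa$ votes makes $p$ the successive winner, then for every $i$ neither $x_i$ nor $y_i$ may beat $p$, and translating these two inequalities into counts of added votes forces the added set, for each vertex $\vere_i$, to contain exactly one vote $\succ_j$ with $\vere_j\in N_G[\vere_i]$ -- i.e.\ the underlying vertex set is a perfect code of size $\kappa$; here the assumption $\kappa\geq 3$ is exactly what rules out spurious solutions that use fewer than $\kappa$ added votes. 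I expect the main obstacle to be the verification of the ``ranked before all successors $\iff$ ranked before $p$'' equivalence for the votes $\succ_j$, since this is the only place where the difference between the successive procedure's set-domination test and the full-amendment procedure's pairwise test could break the argument; everything else is bookkeeping over the vote counts already carried out in the proof of Theorem~\ref{thm-ccav-suc-np}. The modification for general constant $h$ would, if needed, append the block $p_1,\dots,p_{h-1}$ of candidates exactly as in the proof of Theorem~\ref{thm-ccav-suc-np}, but the present theorem is stated only for the successive procedure and requires no such gadget.
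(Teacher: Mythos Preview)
Your proposal is correct and follows essentially the same approach as the paper: both reuse the Perfect Code reduction from Theorem~\ref{thm-ccav-suc-np} unchanged and argue that, in this particular construction, $p$ is the successive winner if and only if no candidate in $X\cup Y$ majority-dominates its successors, which coincides with the full-amendment winning condition. Your formulation via the vote-level equivalence ``$z$ is ranked before all its successors $\iff$ $z$ is ranked before $p$'' makes the link to Theorem~\ref{thm-ccav-suc-np} more transparent than the paper's direct case analysis (which re-counts the majority-domination thresholds for $x_i$ and $y_i$ separately), but the underlying argument is the same.
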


\begin{proof}
We show that the reduction established in the proof of Theorem~\ref{thm-ccav-suc-np} can be used to prove the theorem. 
Recall that our reduction is from the {\prob{Perfect Code}} problem, and we have the following $3\kappa+2$ registered votes in~$V$:
\begin{itemize}
\item a multiset $V_1$ of $\kappa+2$ votes with the preference $p\Succ \overrightarrow{X} \Succ \overrightarrow{Y}$;
\item a multiset $V_2$ of $\kappa-2$ votes with the preference $\overrightarrow{X} \Succ p\Succ \overrightarrow{Y}$; and
\item a multiset $V_3$ of $\kappa+2$ votes with the preference $\overrightarrow{X} \Succ \overrightarrow{Y} \Succ p$.
\end{itemize}
In addition, for each vertex $\vere_i$ in~$G$, we have an unregistered vote~$\succ_i$ with the preference
\[\left(\overrightarrow{X}[N_X[\vere_i]]\right) \Succ \left(\overrightarrow{Y}\setminus N_Y[\vere_i]\right) \Succ p \Succ \left(\overrightarrow{X}\setminus N_X[\vere_i]\right) \Succ \left(\overrightarrow{Y}[N_Y[\vere_i]]\right).\]

The agenda is $\rhd=(\overrightarrow{X}, \overrightarrow{Y}, p)$. The proof for the correctness is similar.

$(\Rightarrow)$ In the proof of Theorem~\ref{thm-ccav-suc-np}, we showed that if there is a perfect code, then after adding the unregistered votes corresponding to the perfect code,~$p$ ties with everyone else. This means that none of the predecessors of~$p$ is the winner after the deletion of the votes. As~$p$ is the last one in the agenda,~$p$ wins the resulting election.

$(\Leftarrow)$ Suppose that there is a $W'\subseteq W$ such that $\abs{W'}\leq \kappa$ and~$p$ is the successive winner of $\elec=(C, V\muplus W')$ with respect to~$\rhd$. Let $U'=\{u_i\in \vset \setmid\; \succ_i\in W'\}$ be the set of vertices corresponding to~$W'$. We claim that~$U'$ is a perfect code. Assume, for the sake of contradiction, that this is not the case. Then, one of the following cases occurs.
\begin{description}
\item[Case~1:] $\exists \vere_i\in \vset$ such that $\vere_i\not\in N_G[\vset']$. \hfill

In this case, all unregistered votes in~$W'$ rank~$y_i$ before all her successors. Thus, in~$\elec$, there are a total of  $\abs{V_3}+\kappa=\kappa+2+\kappa=2\kappa+2$ votes ranking~$y_i$ before the set of all her successors. However, in this case, either some predecessor of~$y_i$ or~$y_i$ herself wins~$\elec$, contradicting the assumption that~$p$ wins~$\elec$.

\item[Case~2:] $\exists \vere_i\in \vset$ and $\vere_j, \vere_{j'}\in \vset'$ such that $\vere_i\in N_G[\vere_j]\cap N_G[\vere_{j'}]$. \hfill

By the construction of the unregistered votes, the two votes~$\succ_j$ and~$\succ_{j'}$ in~$W'$ both rank~$x_i$ before all the successors of~$x_i$. So, in~$\elec$ there are at least $\abs{V_2\cup V_3}+2=2\kappa+2$ votes ranking~$x_i$ before all the successors of~$x_i$. However, in this case, either some predecessor of~$x_i$ or~$x_i$ herself wins~$\elec$, contradicting that~$p$ wins~$\elec$.
\end{description}

Since both cases lead to contradictions, we conclude that~$U'$ is a perfect code of~$G$.
\end{proof}

For constructive control by deleting voters, we have two {\wbhns} results.

\begin{theorem}
\label{thm-ccdv-suc-wah-last}
{\prob{CCDV}}-{\memph{Successive}} is {\memph\wbh} with respect to the number of deleted votes. This holds even when the distinguished candidate is the last one in the agenda.
\end{theorem}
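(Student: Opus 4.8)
The plan is to reuse, essentially verbatim, the reduction from {\prob{RBDS}} constructed in the proof of Theorem~\ref{thm-ccdv-suc-np} for the case $h=1$ (i.e.\ the {\famend} procedure), and simply re‑interpret the winner‑determination step for the successive procedure. Recall that there $C=R\cup\{q,p\}$ with $p$ distinguished, $\rhd=(q,\overrightarrow{R},p)$, the budget is $k=\kappa$, and the registered votes are the $\abs{B}-\kappa+1$ votes $\overrightarrow{R}\Succ p\Succ q$, the $\ell$ votes $q\Succ p\Succ\overrightarrow{R}$, the $\ell-1$ votes $p\Succ q\Succ\overrightarrow{R}$, and, for each $b\in B$, the vote $\succ_b$ with preference $q\Succ\bigl(\overrightarrow{R}[N_G(b)]\bigr)\Succ p\Succ\bigl(\overrightarrow{R}\setminus N_G(b)\bigr)$. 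Since $p$ is the last candidate in $\rhd$, $p$ is the successive winner of a sub‑election precisely when no candidate preceding $p$ in $\rhd$ (that is, no candidate in $\{q\}\cup R$) majority‑dominates the set of its successors.

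The key observation that makes the analysis of Theorem~\ref{thm-ccdv-suc-np} transfer is the following: in every one of the constructed votes, a candidate $c\in\{q\}\cup R$ is ranked before \emph{all} of its successors in $\rhd$ if and only if $c$ is ranked before $p$ in that vote. For $q$ this holds because $q$ is ranked first exactly in the votes that rank $q$ before $p$ (the $q\Succ p\Succ\overrightarrow{R}$ votes and all the $\succ_b$ votes), and in every other vote $q$ follows $p$. For $r\in R$ it holds because the votes ranking $r$ before $p$ — the $\overrightarrow{R}\Succ p\Succ q$ votes and the votes $\succ_b$ with $r\in N_G(b)$ — are precisely the votes in which the $R$‑candidates appearing after $r$ in $\rhd$ either keep their $\overrightarrow{R}$‑order relative to $r$ or have been pushed behind $p$ (hence behind $r$); in all remaining votes $r$ follows $p$. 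Consequently ``$c$ majority‑dominates its successor set'' is equivalent to ``$c$ beats $p$ with respect to the current multiset of votes'' (a pairwise tie counting against $c$), which is exactly the condition on $q$ and on each $r\in R$ that was analysed in the proof of Theorem~\ref{thm-ccdv-suc-np}.

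With this equivalence in hand the correctness argument becomes a line‑for‑line copy of the earlier one. If $B'\subseteq B$ dominates $R$ and $\abs{B'}=\kappa$, then deleting $\{\succ_b\setmid b\in B'\}$ leaves $q$ tied with $p$ and no $r\in R$ beating $p$, so nobody before $p$ majority‑dominates its successors and $p$ becomes the successive winner. Conversely, if deleting at most $\kappa$ votes makes $p$ the successive winner, then $q$ must fail to majority‑dominate, which (exactly as before) forces the deleted set to consist of exactly $\kappa$ votes among $\{q\Succ p\Succ\overrightarrow{R}\}\cup\{\succ_b\setmid b\in B\}$; the standard exchange argument then lets us assume these $\kappa$ votes are all of the form $\succ_b$, and if the corresponding blue vertices failed to dominate some $r\in R$, that $r$ would majority‑dominate its successors, contradicting $p$ winning. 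The only part genuinely needing care is the structural equivalence of the second paragraph, namely verifying for each of the four vote types that ``ranked before every $\rhd$‑successor'' collapses to ``ranked before $p$''; once that is settled, and since $h$ does not occur in the statement, no block gadget or further modification is required.
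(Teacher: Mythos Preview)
Your approach is correct and the equivalence you identify---that in every vote of the Theorem~\ref{thm-ccdv-suc-np} construction, a candidate $c\in\{q\}\cup R$ is ranked ahead of \emph{all} its $\rhd$-successors if and only if it is ranked ahead of~$p$---does the job cleanly. With that equivalence, ``$c$ majority-dominates its successor set'' collapses to ``$c$ beats~$p$'', and the counting and exchange arguments from Theorem~\ref{thm-ccdv-suc-np} carry over verbatim.

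The paper, however, does \emph{not} reuse that reduction. It builds a fresh one with an additional candidate~$p'$ inserted just before~$p$ in the agenda (so $\rhd=(q,\overrightarrow{R},p',p)$) and four groups of auxiliary votes. The role of~$p'$ is to avoid your exchange argument entirely: the votes are arranged so that deleting anything outside $V_B=\{\succ_b\mid b\in B\}$ makes~$p'$ beat~$p$, which immediately blocks~$p$ from winning. Thus the paper forces $V'\subseteq V_B$ structurally rather than via a swap. Your route is arguably more economical---you get the result with one fewer candidate and fewer vote types, at the price of the extra exchange step and the structural observation of your second paragraph. The paper's route trades that observation for a slightly larger gadget but a more direct backward-direction argument. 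Both are valid reductions from {\prob{RBDS}} with $k=\kappa$, so both yield the claimed {\wbhns}.
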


\begin{proof}
We prove the theorem by a reduction from the {\prob{RBDS}} problem. Let $(G, \kappa)$ be an instance of {\prob{RBDS}}, where~$G$ is a bipartite graph with the bipartition $(R, B)$. Without loss of generality, we assume that $\abs{B}\geq \kappa$. Similar to the proof of Theorem~\ref{thm-ccdv-amd-np}, we assume that every red vertex has the same degree, which we denote by $\ell$. We create an instance of {\prob{CCDV}}-Successive as follows. For each red vertex we create one candidate denoted still by the same symbol for simplicity. In addition, we create three candidates~$q$,~$p$, and~$p'$, where~$p$ is the distinguished candidate. Let $C=R\cup \{q, p, p'\}$. The agenda is $\rhd=(q, \overrightarrow{R}, p', p)$.
We create the following $2\abs{B}+2\ell-\kappa$ votes:
\begin{itemize}
\item a multiset $V_1$ of $\abs{B}-\kappa$ votes with the preference $\overrightarrow{R}\Succ p\Succ p'\Succ q$;
\item a multiset $V_2$ of $\ell-1$ votes with the preference $q\Succ p\Succ p'\Succ \overrightarrow{R}$;
\item a singleton $V_3$ of one vote with the preference $q\Succ \overrightarrow{R}\Succ p\Succ p'$;
\item a multiset $V_4$ of $\ell$ votes with the preference $p'\Succ p\Succ q\Succ \overrightarrow{R}$; and
\item for each blue vertex $b\in B$, one vote~$\succ_b$ with the preference
\[q\Succ \left(\overrightarrow{R}[N_G(b)]\right) \Succ p'\Succ p \Succ \left(\overrightarrow{R}\setminus N_G(b)\right).\]
\end{itemize}
Let~$V_B$ be the multiset of votes corresponding to the blue vertices in~$B$. Let $V=V_1\cup V_2\cup_3\cup V_4\cup V_B$.
Let $k=\kappa$.  The {\prob{CCDV}}-Successive instance is $(C, p, V, \rhd, k)$.

The weighted majority graph of $(C, V)$ and the agenda~$\rhd$ are illustrated in Figure~\ref{fig-ccdv-suc-hard-a}. We prove the correctness of the reduction as follows.

\begin{figure}
\centering
{\includegraphics[width=0.6\textwidth]{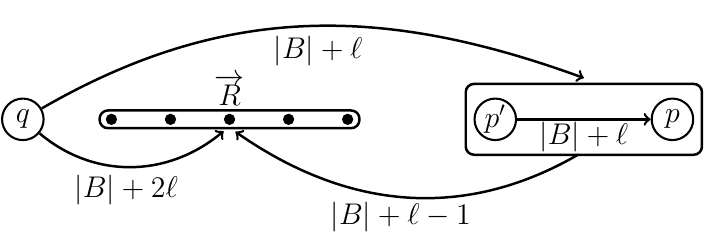}}
\caption{An illustration of the weighted majority graph of $(C, V)$ and the agenda as used in the proof of Theorem~\ref{thm-ccdv-suc-wah-last}. All arcs between the vertices in $R$ are forward arcs with a weight of at least $2\abs{B} - \kappa + \ell$. The agenda is depicted by the left-to-right ordering of the vertices.}
\label{fig-ccdv-suc-hard-a}
\end{figure}

$(\Rightarrow)$ Suppose that there is a $B'\subseteq B$ such that $\abs{B'}=\kappa$ and~$B'$ dominates~$R$ in~$G$. Let $V'=\{\succ_b\, \setmid b\in B'\}$ be the set of votes corresponding to~$B'$. Clearly, $\abs{V'}=\kappa=k$. Let $\elec=(C, V\setminus V')$.  We shall show that~$p$ is the successive winner of~$\elec$ with respect to~$\rhd$. First, as there are exactly~$(\ell-1)+1+(\abs{B}-\kappa)=\abs{B}+\ell-\kappa$ votes in $V\setminus V'$ ranking~$q$ in the top, $\abs{V\setminus V'}=2\abs{B}+2\ell-2\kappa$, and~$q$ is the first one in the agenda, we know that~$q$ is not the successive winner of~$\elec$. Let $r$ be any candidate from~$R$. Let~$S(r)$ be the set of~$r$'s successors in~$\rhd$. Clearly, all votes in $V_1\cup V_3$ rank~$r$ before~$S(r)$. Let $b\in B'$ be a blue vertex dominating~$r$. It holds that $r \succ_b p$. As exactly~$\ell$ votes in~$V_B$ rank~$r$ before~$p$, this implies that $V_B\setminus V'$ contains at most $\ell-1$ votes ranking~$r$ before~$p$. To summarize, there can be at most $\abs{V_1}+\abs{V_3}+\ell-1=\abs{B}-\kappa+\ell$ votes in $V\setminus V'$ ranking~$r$ before~$S(r)$, excluding the winning chance of~$r$. Observe that~$p$ ties with~$p'$ in~$\elec$, which excludes the winning chance of~$r$. We now know that none of the predecessors of~$p$ wins~$\elec$, and that~$p$ is the successive winner of~$\elec$ with respect to the agenda~$\rhd$, as it is the last candidate in~$\rhd$. 

$(\Leftarrow)$ Suppose that there exists $V'\subseteq V$ such that $\abs{V'}\leq k$ and~$p$ is the successive winner of $\elec=(C, V\setminus V')$ with respect to~$\rhd$. Observe that all votes in~$V'$ must rank~$q$ in the first place and $\abs{V'}= \kappa$, since otherwise~$q$ majority-dominates the set of all her successors in~$\elec$, which contradicts that~$p$ wins~$\elec$. There are three groups of votes ranking~$q$ in the first place: those in~$V_2$,~$V_3$, and~$V_B$. However, observe that it must hold that $V' \subseteq V_B$, since otherwise~$p'$ would beat~$p$ in~$\elec$. Given that~$p$ is the only successor of~$p'$, this contradicts the fact that~$p$ is the successive winner of~$\elec$. Let $B'=\{b\in B\, \setmid\; \succ_b\in V'\}$ be the set of blue vertices corresponding to~$V'$. We claim that~$B'$ dominates~$R$. Assume, for the sake of contradiction, that this is not the case. Let~$r$ be the leftmost red vertex in the agenda~$\rhd$ not dominated by any vertex from~$B'$. Then, all votes in~$V'$ rank~$p$ before~$r$. This implies that in $V\setminus V'$ there are at least $\abs{V_1}+\abs{V_3}+\ell=(\abs{B}-\kappa)+1+\ell$ votes ranking~$r$ before all successors of~$r$ (the last term~$\ell$ corresponds to the~$\ell$ votes in~$\{\succ_b\, \setmid b\in B, r\in N_G(b)\}$ which is a subset of $V\setminus V'$). However, as $\abs{V\setminus V'}= 2\abs{B}-2\kappa+2\ell$, this implies that~$r$ is the successive winner of~$\elec$, contradicting the winning of~$p$. So, we know that~$B'$ dominates~$R$. This implies that the {\prob{RBDS}} instance is a {\yesins}.
\end{proof}

We can show the {\wbhns} of {\prob{CCDV}}-Successive with respect to the dual parameter of the number of deleted votes.

\begin{theorem}
\label{thm-ccdv-suc-wah-remainning-last}
{\prob{CCDV}}-{\memph{Successive}} is {\memph\wbh} with respect to the number of remaining votes. Moreover, this holds even when the distinguished candidate is the last one in the agenda.
\end{theorem}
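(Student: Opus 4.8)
The plan is to reduce from \prob{RBDS} (which is \wbh\ with respect to its parameter~$\kappa$, and remains so under the standard assumptions recalled in Subsection~\ref{subsec-supporting-problems}) in such a way that the number of votes surviving the deletion is bounded by a function of~$\kappa$, following the template of the proofs of Theorems~\ref{thm-ccdv-amd-wbh-remaining-votes} and~\ref{thm-ccdv-famend-wbh-remaining} but adapted to the successive procedure by reusing the auxiliary ``anchor'' candidate~$p'$ introduced in the proof of Theorem~\ref{thm-ccdv-suc-wah-last}. Given an instance $(G,\kappa)$ of \prob{RBDS} with bipartition $(R,B)$, where we may assume $\abs{B}>\kappa>1$, that~$G$ has no isolated vertices, that every red vertex has the same degree~$\ell$, and that $\ell+\kappa\leq\abs{B}$, I would set $C=R\cup\{q,p',p\}$ with~$p$ distinguished, and take the agenda $\rhd=(q,\overrightarrow{R},p',p)$, so that~$q$ is first and~$p$ is last.

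The votes would consist of a ``core'' multiset of~$\Theta(\kappa)$ votes (together with one or two singleton votes) engineered so that they rank~$p$ above~$q$, keep~$p$ ahead of~$p'$, and rank~$q$ near the bottom --- hence all of them must be retained --- and, for each blue vertex $b\in B$, one vote~$\succ_b$ of the form $q\Succ\bigl(\overrightarrow{R}[N_G(b)]\bigr)\Succ p'\Succ p\Succ\bigl(\overrightarrow{R}\setminus N_G(b)\bigr)$, which places~$q$ on top, the red vertices dominated by~$b$ just below~$q$, and~$p$ together with the red vertices \emph{not} dominated by~$b$ at the bottom. Setting $k=\abs{B}-\kappa$, after the deletion at least (and, as the analysis will force, essentially exactly) $2\kappa+O(1)$ votes remain, and this quantity --- the parameter of the constructed \prob{CCDV}-Successive instance --- depends only on~$\kappa$. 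The reduction is clearly computable in polynomial time.

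For soundness, in the forward direction I would, from an \prob{RBDS} solution $B'$ of size~$\kappa$, retain the core votes together with the $\kappa$ votes $\{\succ_b:b\in B'\}$, and check in the resulting profile that: (i) no strict majority of the retained votes rank~$q$ above all of its successors, so~$q$ is not the successive winner; (ii) for each $r\in R$, picking $b\in B'$ with $r\in N_G(b)$, the vote~$\succ_b$ ranks the successor~$p$ before~$r$, which together with the core votes pushes the number of retained votes ranking~$r$ above its \emph{entire} successor set below a majority, so no red vertex wins; and (iii)~$p$ is not beaten by~$p'$ among the retained votes, so~$p'$ (whose only successor is~$p$) is not the successive winner; hence~$p$, being last in~$\rhd$, is the successive winner. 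In the backward direction, from any deletion of at most~$k$ votes after which~$p$ is the successive winner, I would argue from the ranking pattern --- exactly as in the proof of Theorem~\ref{thm-ccdv-suc-wah-last}, using~$p'$ --- that all core votes survive and the deleted votes are all of the form~$\succ_b$; then by counting (using that $\abs{V}=\abs{B}+O(\kappa)$ and $k=\abs{B}-\kappa$) that exactly~$\kappa$ of the~$\succ_b$ votes survive; and finally that the corresponding set $B'\subseteq B$ dominates~$R$, since if some $r\in R$ were undominated then every surviving~$\succ_b$ ranks~$r$ above all its successors, which together with the core votes gives~$r$ --- or the leftmost undominated red vertex --- a majority and makes it the successive winner, contradicting the winning of~$p$. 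Thus the constructed instance is a {\yesins} if and only if the \prob{RBDS} instance is.

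The delicate point, and the main obstacle, is tuning the multiplicities of the core votes so that all of the successive-procedure thresholds flip simultaneously at~$\kappa$. Unlike the amendment reductions, where only pairwise comparisons matter, here ``$r$ is not the successive winner'' asserts that no majority ranks~$r$ above the \emph{whole} set of~$r$'s successors, rather than above a single rival; the core votes must therefore be chosen so that the only way the retained profile can break~$r$'s majority is by ranking~$p$ before~$r$, which is exactly the effect of a vote~$\succ_b$ with $r\in N_G(b)$. At the same time~$p'$ must pin down, via the forced inequality ``$p$ is not beaten by~$p'$'', that no core-type vote may be deleted, and the two counting arguments --- one bounding from above the number of surviving~$\succ_b$ votes so that~$q$ loses, the other bounding it from below so that~$p$ wins --- must meet exactly at~$\kappa$. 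Reconciling these constraints is the hard part; the remainder is a routine adaptation of the earlier reductions.
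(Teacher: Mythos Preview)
Your proposal contains a concrete inconsistency between the vote $\succ_b$ you write down and the analysis you apply to it. You define
\[
\succ_b \;=\; q\Succ\bigl(\overrightarrow{R}[N_G(b)]\bigr)\Succ p'\Succ p\Succ\bigl(\overrightarrow{R}\setminus N_G(b)\bigr),
\]
placing the red vertices \emph{dominated} by~$b$ just below~$q$. But then in the forward direction you assert that for $r\in N_G(b)$ the vote~$\succ_b$ ranks~$p$ before~$r$ --- false under your definition, since such an~$r$ sits in the second block, above~$p$. Symmetrically, in the backward direction you assert that if~$r$ is undominated by~$B'$ then every surviving~$\succ_b$ ranks~$r$ above all its successors --- again false, since such an~$r$ lands in the last block, below~$p$. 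Both directions of your correctness argument are thus inverted relative to the vote you wrote. The fix is to swap the two red blocks,
\[
q\Succ\bigl(\overrightarrow{R}\setminus N_G(b)\bigr)\Succ p'\Succ p\Succ\bigl(\overrightarrow{R}[N_G(b)]\bigr),
\]
which is the pattern from Theorem~\ref{thm-ccdv-famend-wbh-remaining}, not the one from Theorem~\ref{thm-ccdv-suc-wah-last}. (Theorem~\ref{thm-ccdv-suc-wah-last} treats the \emph{deleted}-votes parameter: there one \emph{removes} the~$\succ_b$ with $b\in B'$, so the blocks are oriented the other way.)

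Beyond this slip, your route is more involved than necessary. The paper simply reuses the reduction from Theorem~\ref{thm-ccdv-famend-wbh-remaining} verbatim: $C=R\cup\{q,p\}$ (no anchor~$p'$), $V_1$ consisting of $\kappa-1$ copies of $p\Succ q\Succ\overrightarrow{R}$, one vote $\overrightarrow{R}\Succ p\Succ q$, and the~$\succ_b$ as just above (without~$p'$). Since $\abs{V_1\cup V_2}=\kappa$ and every~$\succ_b$ ranks~$q$ first, keeping more than~$\kappa$ of the~$\succ_b$ votes already gives~$q$ a strict majority at position one and hence the successive win; this alone forces $V_1\cup V_2\subseteq V'$ and $\abs{V'\cap V_B}=\kappa$. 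The anchor~$p'$ and the ``delicate tuning'' you flag as the main obstacle are simply not needed here.
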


\begin{proof}
The reduction for the {\wbhns} of {\prob{CCDC}}-{\Famend} in the proof of Theorem~\ref{thm-ccdv-famend-wbh-remaining} directly applies here. The correctness argument is also the same (one needs only to replace ``{\famend}'' with ``successive'').
\end{proof}

However, if the distinguished candidate has only a few predecessors, the problem becomes tractable from the parameterized complexity point of view. As a matter of fact, we have a  result for a constructive control problem to which both {\prob{CCAV}}-$\tau$ and {\prob{CCDV}}-$\tau$ are polynomial-time Turing reducible.

\EP
{\prob{Exact Constructive Control by Editing Voters} \textnormal{for~$\tau$} (\prob{E-CCEV}-$\tau$)}
{A set~$C$ of candidates, a distinguished candidate $p\in C$, a multiset~$V$ of registered votes over~$C$, a multiset $W$ of unregistered votes over~$C$, an agenda~$\rhd$ on~$C$, and two nonnegative integers~$k$ and~$k'$.}
{Are there $V'\subseteq V$ and $W'\subseteq W$ such that $\abs{V'}=k$, $\abs{W'}=k'$, and~$p$ is the~$\tau$ winner of $(C, V'\muplus W')$ with respect to~$\rhd$?}

\begin{theorem}
\label{thm-e-ccev-suc-fpt}
{\prob{E-CCEV}}-{\memph{Successive}} is {\memph\fpt} with respect to the number of predecessors of the distinguished candidate in the agenda.
\end{theorem}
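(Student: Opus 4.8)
The plan is to exploit the structural characterization of the successive winner: a candidate $c$ is the successive winner of $(C,V)$ with respect to $\rhd$ if and only if $c$ majority-dominates all of its successors, and no predecessor of $c$ in $\rhd$ majority-dominates all of its own successors. Crucially, whether $p$ itself majority-dominates its successors depends only on the multiset of votes restricted to $\{p\}\cup(\text{successors of }p)$ — more precisely, on how many votes rank $p$ above the entire successor set — and this is a ``downward-closed'' condition in the agenda. Let $\ell$ denote the number of predecessors of $p$, so $p=\rhd[\ell+1]$. For $p$ to win, we need, for each $i\in[\ell]$, that $\rhd[i]$ does \emph{not} majority-dominate its successors; whether $\rhd[i]$ majority-dominates its successors is determined by the number of votes ranking $\rhd[i]$ above $\rhd[i+1,m]$, i.e. above $\{\rhd[j]:j>i\}$. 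The key observation is that a vote ranks $\rhd[i]$ above $\rhd[i+1,m]$ if and only if, in that vote, every candidate strictly preferred to $\rhd[i]$ lies in $\rhd[1,i-1]$; so the relevant ``type'' of a vote is captured by, for each prefix position $i\le\ell+1$, a single bit indicating whether the vote ranks $\rhd[i]$ above all of $\rhd[i+1,m]$. This gives at most $2^{\ell+1}$ vote types that matter.

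First I would formalize the vote-type abstraction. For a vote $\succ$ over $C$, define its \emph{signature} to be the vector $\sigma(\succ)=(b_1,\dots,b_{\ell+1})\in\{0,1\}^{\ell+1}$ where $b_i=1$ iff $\{\rhd[i]\}\succ\rhd[i+1,m]$. (Note the signatures need not be monotone, but only the bit $b_i$ is consulted for position $i$.) Partition $V$ and $W$ into classes by signature; there are at most $2^{\ell+1}$ classes, computable in polynomial time. Observe that whether $p$ is the successive winner of $(C,V'\muplus W')$ depends only on the \emph{count}, for each signature, of how many selected votes carry that signature: $\rhd[i]$ for $i\le\ell$ fails to majority-dominate its successors iff at most $\lfloor(\abs{V'}+\abs{W'})/2\rfloor$ selected votes have $b_i=1$ wait — more carefully, $\rhd[i]$ majority-dominates its successors iff \emph{strictly more than half} the selected votes have $b_i=1$; and $p$ majority-dominates its (possibly empty) successor set iff strictly more than half have $b_{\ell+1}=1$, with the convention that $p$ trivially majority-dominates $\emptyset$ when $p=\rhd[m]$.

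Second I would set up an integer linear program. For each signature $s\in\{0,1\}^{\ell+1}$ introduce variables $x_s$ (number of registered votes of signature $s$ kept, bounded by the available count) and $y_s$ (number of unregistered votes of signature $s$ added, similarly bounded); impose $\sum_s x_s=k$, $\sum_s y_s=k'$; for each $i\in[\ell]$ impose $\sum_{s:s_i=1}(x_s+y_s)\le \big\lfloor (k+k')/2\big\rfloor$, i.e. $2\sum_{s:s_i=1}(x_s+y_s)\le k+k'$; and, if $p$ is not last in the agenda, impose $2\sum_{s:s_{\ell+1}=1}(x_s+y_s)\ge k+k'+1$ so that $p$ itself majority-dominates its successors (if $p$ is last, drop this constraint). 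The number of variables is at most $2^{\ell+2}$, a function of $\ell$ alone, so by Lemma~\ref{lem-ilp-fpt} the ILP is solvable in \fpt-time in $\ell$. The instance is a \yes-instance iff the ILP is feasible.

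The main obstacle, and the step requiring the most care, is verifying that the winner-determination condition is \emph{exactly} captured by these signature counts — in particular that nothing about the relative order of candidates \emph{among} $\rhd[1,\ell]$, or about how successor candidates beyond $p$ compare to one another, can affect whether $p$ wins, beyond the per-position bits already recorded. I expect this to go through cleanly because the successive procedure consults, for position $i$, only the single yes/no question ``does $\rhd[i]$ majority-dominate $\rhd[i+1,m]$,'' and the set $\rhd[i+1,m]$ is an \emph{agenda suffix}, so membership of a vote in the majority-dominating set for position $i$ is precisely the bit $b_i$. One should also double-check the boundary conventions: when $p=\rhd[m]$ its successor set is empty and $p$ automatically majority-dominates it, so the $b_{\ell+1}$ constraint is vacuous; and one must confirm $\abs{V'}=k$, $\abs{W'}=k'$ are handled as exact (not ``at most'') counts, matching the problem statement, which the ILP does directly. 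Finally, since \prob{CCAV}-Successive and \prob{CCDV}-Successive reduce to \prob{E-CCEV}-Successive by trying all $(k,k')$ with one of them fixed to a bound and the other ranging over $[0,\abs{V}]$ or $[0,\abs{W}]$, the corollary on predecessors follows.
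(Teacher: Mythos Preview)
Your proposal is correct and takes essentially the same approach as the paper: both reduce the problem to an ILP with $2^{O(\ell)}$ variables indexed by vote ``signatures'' recording, for each position $i\le \ell+1$, whether the vote places $\rhd[i]$ above all of $\rhd[i+1,m]$, and then invoke Lemma~\ref{lem-ilp-fpt}. The only cosmetic difference is that the paper encodes the complementary bit (``some successor of $\rhd[i]$ is ranked above $\rhd[i]$'') and hence flips the direction of the inequalities, but the formulations are equivalent.
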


\begin{proof}
Let $(C, p, V, W, \rhd, k, k')$ be an instance of {\prob{E-CCEV}}-Successive. Let $m=\abs{C}$. Let $\rhd[\ell]=p$ for some $\ell\in [m]$. Therefore, the distinguished candidate~$p$ has $\ell-1$ predecessors. We present a natural ILP formulation for the given instance as follows.

We partition~$V$ (respectively,~$W$) into~$2^{\ell}$ submultisets, each identified by an $\ell$-length $1$-$0$ vector~${\bf{s}}$ and denoted by~$V_{\bf{s}}$ (respectively,~$W_{\bf{s}}$). Precisely, a vote $\succ$ from $V$ (respectively,~$W$) belongs to~$V_{\bf{s}}$ (respectively,~$W_{\bf{s}}$) if and only if  the following condition holds: 
\begin{itemize}
    \item For each $i\in [\ell]$, ${\bf{s}}[i]=1$ if and only if there exists some $j\in [m]$ with $j>i$ such that $\rhd[j] \succ \rhd[i]$, i.e., at least one of $\rhd[i]$'s successors is ranked before $\rhd[i]$ in the vote $\succ$.
\end{itemize}

For each submultiset~$V_{\bf{s}}$, we introduce a nonnegative integer variable $x_{\bf{s}}$, and for each submultiset~$W_{\bf{s}}$, we introduce a nonnegative integer variable $y_{\bf{s}}$. The variables $x_{\bf{s}}$ and $y_{\bf{s}}$ indicate, respectively, the number of votes from $V_{\bf{s}}$ and $W_{\bf{s}}$ that are contained in a certain feasible solution.  Let~${\bf{S}}$ be the set of all $\ell$-length $1$-$0$ vectors.
The constraints are as follows.
\begin{itemize}
    \item As we aim to find two submultisets $V'\subseteq V$ and $W'\subseteq W$ of cardinalities respectively~$k$ and~$k'$, we have that
    $\sum_{{\bf{s}}\in {\bf{S}}}x_{\bf{s}}=k$ and $\sum_{{\bf{s}}\in {\bf{S}}}y_{\bf{s}}=k'$.
    \item For each ${\bf{s}}\in {\bf{S}}$ we have that $0\leq x_{\bf{s}}\leq \abs{V_{\bf{s}}}$ and $0\leq y_{\bf{s}}\leq \abs{W_{\bf{s}}}$.
    \item For each $i\in [\ell-1]$, in order to ensure that $\rhd[i]$ is not the successive winner of $(C, V'\muplus W')$ with respect to~$\rhd$, we have that
    \[\sum_{\bf{s}\in {\bf{S}}} {\bf{s}}[i] \cdot (x_{\bf{s}}+y_{\bf{s}})\geq \frac{k+k'}{2}.\]
    \item To ensure that~$p$ is the successive winner of $(C, V'\muplus W')$ with respect to~$\rhd$, we introduce
    \[\sum_{\bf{s}\in {\bf{S}}} {\bf{s}}[\ell] \cdot (x_{\bf{s}}+y_{\bf{s}})< \frac{k+k'}{2}.\]
\end{itemize}

As we have~$2^{\ell+1}$ variables, by Lemma~\ref{lem-ilp-fpt}, the above ILP can be solved in {\fpt}-time in~$\ell$.
\end{proof}

From Theorem~\ref{thm-e-ccev-suc-fpt}, we have the following corollary.

\begin{corollary}
\label{cor-ccav-ccdv-suc-fpt-predecessors}
{\prob{CCAV}}-{\memph{Successive}} and {\prob{CCDV}}-{\memph{Successive}} are {\memph\fpt} with respect to the number of predecessors of the distinguished candidate in the agenda.
\end{corollary}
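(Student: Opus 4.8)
The plan is to derive the corollary directly from Theorem~\ref{thm-e-ccev-suc-fpt} by exhibiting polynomial-time Turing reductions from {\prob{CCAV}}-Successive and {\prob{CCDV}}-Successive to {\prob{E-CCEV}}-Successive, and observing that these reductions do not change the candidate set or the agenda, hence leave the number of predecessors of the distinguished candidate unchanged.

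First I would handle {\prob{CCAV}}-Successive. Given an instance $(C, p, V, W, \rhd, k)$, the controller may add \emph{at most} $k$ unregistered votes while retaining all registered votes. I would iterate over the number $j \in \{0, 1, \dots, k\}$ of votes actually added, and for each $j$ invoke the {\fpt}-algorithm of Theorem~\ref{thm-e-ccev-suc-fpt} on the {\prob{E-CCEV}}-Successive instance $(C, p, V, W, \rhd, \abs{V}, j)$; forcing the first budget to equal $\abs{V}$ compels $V' = V$, and the second budget requires $\abs{W'} = j$. The original instance is a {\yesins} if and only if at least one of these $k+1$ calls answers {\yes}. Symmetrically, for {\prob{CCDV}}-Successive, given $(C, p, V, \rhd, k)$ (here $W = \emptyset$), I would iterate over $j \in \{0, 1, \dots, k\}$ and invoke Theorem~\ref{thm-e-ccev-suc-fpt} on $(C, p, V, \emptyset, \rhd, \abs{V} - j, 0)$, which keeps exactly $\abs{V} - j$ registered votes and adds nothing; again the original instance is a {\yesins} iff some call answers {\yes}.

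Correctness of each reduction is immediate from the definitions: a feasible solution of the constructed {\prob{E-CCEV}}-Successive instance with parameter value $j$ is precisely a feasible solution of the original control instance that adds (respectively, deletes) exactly $j$ votes, and conversely every feasible solution of the original instance uses some number $j \le k$ of additions (respectively, deletions). Since $C$ and $\rhd$ are copied verbatim, the number $\ell$ of predecessors of $p$ is preserved, so each of the $O(k)$ calls runs in time $f(\ell)\cdot \abs{I}^{O(1)}$ by Theorem~\ref{thm-e-ccev-suc-fpt}, where $\abs{I}$ is the input size; as $k$ is polynomially bounded in $\abs{I}$, the overall procedure runs in {\fpt}-time with respect to~$\ell$.

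I do not expect a substantial obstacle here. The only point requiring a moment's care is bridging the ``at most $k$'' formulation of the standard control problems and the ``exactly $k$ and $k'$'' formulation of {\prob{E-CCEV}}; this is resolved by simply trying all admissible values of the number of modified votes, which inflates the running time by only a polynomial factor and does not affect the parameter~$\ell$.
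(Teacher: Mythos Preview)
Your proposal is correct and matches the paper's approach: the paper simply states the corollary as an immediate consequence of Theorem~\ref{thm-e-ccev-suc-fpt}, relying on the earlier remark that both {\prob{CCAV}}-$\tau$ and {\prob{CCDV}}-$\tau$ are polynomial-time Turing reducible to {\prob{E-CCEV}}-$\tau$. You have spelled out precisely this Turing reduction (iterating over the number of added/deleted votes and invoking the {\fpt} algorithm for {\prob{E-CCEV}}-Successive on each), which is exactly what the paper leaves implicit.
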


Now we move on to destructive control by adding/deleting voters. We show that both problems become polynomial-time solvable for  the successive procedure.

\begin{theorem}
\label{thm-e-dcev-suc-p}
{\prob{E-DCEV}}-{\memph{Successive}} is polynomial-time solvable.
\end{theorem}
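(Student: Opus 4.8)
The plan is to reduce \prob{E-DCEV}-Successive to a polynomial number of feasibility checks, each of which can be phrased and solved directly. Let $(C, p, V, W, \rhd, k, k')$ be an instance, and let $p = \rhd[\ell]$. The key observation is that $p$ fails to be the successive winner of an election $(C, V' \muplus W')$ (with respect to $\rhd$) if and only if \emph{some predecessor of $p$} majority-dominates the set of all its own successors in that election; equivalently, there is some index $i \in [\ell-1]$ such that $\rhd[i]$ majority-dominates $\rhd[i+1, m]$ with respect to $V' \muplus W'$. (Indeed, if $p$ is not the successive winner then the successive winner lies strictly before $p$ in $\rhd$, since any candidate after $p$ cannot be the successive winner when $p$ itself could be; and for $p$ to be ``skipped'' it must be that the procedure already stopped at an earlier candidate.) So the algorithm iterates over each candidate $c = \rhd[i]$ with $i \in [\ell-1]$, and asks: \emph{can we choose $V' \subseteq V$ with $\abs{V'} = k$ and $W' \subseteq W$ with $\abs{W'} = k'$ such that more than half of the votes in $V' \muplus W'$ rank $c$ above all of $\rhd[i+1,m]$?}

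For a fixed target candidate $c = \rhd[i]$, classify each registered vote in $V$ and each unregistered vote in $W$ as ``good'' (it ranks $c$ before every candidate in $\rhd[i+1,m]$) or ``bad'' (otherwise). The question becomes purely combinatorial: choosing $a$ good and $k-a$ bad votes from $V$, and $b$ good and $k'-b$ bad votes from $W$, is it possible that $a + b > (k+k')/2$, subject to the obvious capacity constraints $0 \le a \le \min(k, g_V)$, $0 \le k - a \le \abs{V} - g_V$, and analogously for $W$, where $g_V$ and $g_W$ are the numbers of good votes in $V$ and $W$ respectively? Since we want to maximize $a + b$, we simply take $a$ as large as possible and $b$ as large as possible subject to the capacity constraints, and then check whether the resulting maximum exceeds $(k+k')/2$. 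This maximum is $a^{\max} + b^{\max}$ where $a^{\max} = \min\{k,\ g_V,\ ?\}$ — more precisely $a^{\max} = \min\{g_V, k\} - \max\{0, k - (\abs{V}-g_V) - (g_V - \min\{g_V,k\})\}$; rather than fuss with the closed form, it suffices to note $a^{\max}$ is $\min\{g_V, k\}$ capped so that $k - a^{\max} \le \abs{V} - g_V$, i.e. $a^{\max} = \min\{g_V,\ k\}$ and we must additionally require $k - a^{\max} \le \abs{V} - g_V$ (if not, the whole choice with this $k$ is infeasible, but $k \le \abs{V}$ together with the cap already handles this). All of this is a constant number of arithmetic comparisons.

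The algorithm returns ``yes'' if for at least one $i \in [\ell - 1]$ the corresponding feasibility check succeeds, and ``no'' otherwise. Correctness in one direction is immediate: if such $V', W'$ exist for some $i$, then $\rhd[i]$ (or an even earlier candidate) becomes the successive winner, so $p$ is not the winner. For the other direction: if $p$ is not the successive winner of $(C, V' \muplus W')$ for some valid choice, let $c = \rhd[i]$ be the successive winner; then $i < \ell$ and $c$ majority-dominates $\rhd[i+1,m]$, so the feasibility check for index $i$ succeeds (witnessed by the good/bad split induced by $V', W'$, whose good-count is at least the threshold and hence at most our computed maximum is too). The running time is polynomial: there are at most $m$ choices of $i$, and for each we scan $V$ and $W$ once to compute $g_V, g_W$ and then perform $O(1)$ arithmetic. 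The only mildly delicate point — and the step I would spell out most carefully — is the reformulation of ``$p$ is not the successive winner'' as ``some strict predecessor of $p$ majority-dominates its successor set'', since one must rule out the possibility of the successive winner being a successor of $p$; this follows from the fact that the successive procedure scans the agenda in order and stops at the \emph{first} candidate that majority-dominates the rest, so if no strict predecessor of $p$ does, and $p$ itself does not either, then no candidate after $p$ can be reached as the stopping point without $p$ having already stopped it — more directly, if $p$ is not the winner, the scan stopped before reaching $p$, hence at a strict predecessor.
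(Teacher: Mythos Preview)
Your characterization of ``$p$ is not the successive winner'' is incomplete, and this is a genuine gap. You claim that $p$ fails to win if and only if some strict predecessor $\rhd[i]$ (with $i<\ell$) majority-dominates $\rhd[i+1,m]$. But there is a second way for $p$ to lose: the scan may reach $p$, find that $p$ does \emph{not} majority-dominate its own successor set $\rhd[\ell+1,m]$, and then continue past $p$ to declare some later candidate the winner. Your final sentence (``if $p$ is not the winner, the scan stopped before reaching $p$'') is simply false; the successive procedure does not stop at $p$ just because it reaches $p$. A concrete counterexample: take $C=\{p,q\}$ with agenda $(p,q)$, a single registered vote $q\Succ p$, $W=\emptyset$, $k=1$, $k'=0$. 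Here $p$ has no predecessors, so your loop is empty and returns \no, yet the instance is a \yesins\ because $p$ does not majority-dominate $\{q\}$.

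The fix is exactly what the paper does: before (or in addition to) iterating over predecessors, separately check whether one can pick $V'\subseteq V$, $W'\subseteq W$ of the required sizes so that $p$ fails to majority-dominate $\rhd[\ell+1,m]$. This is the same kind of greedy ``good/bad'' count you already describe, just with the roles reversed (now you want to \emph{minimize} the number of votes ranking $p$ above all its successors). Once that case is added, your argument and the paper's coincide.
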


\begin{proof}
Let $I=(C, p, V, W, \rhd, k, k')$ be an instance of {\prob{E-DCEV}}-Successive. Let $m=\abs{C}$ be the number of candidates. Let $p=\rhd[\ell]$ for some $\ell\in [m]$. Moreover, let~$S$ be the set of the successors of~$p$ in the agenda~$\rhd$.
We solve~$I$ as follows.

We first check whether there exist $V' \subseteq V$ and $W' \subseteq W$ of cardinalities~$k$ and~$k'$, respectively, such that~$p$ does not majority-dominate $S$ with respect to $V' \cup W'$. This can be achieved using the following procedure. 
Let $V_p$ (respectively,~$W_p$) denote the submultiset of~$V$ (respectively,~$W$) that ranks~$p$ before all candidates in $S$, i.e., $V_p = \{\succ \in V \setmid \{p\} \succ S\}$ (respectively, $W_p = \{\succ \in W \setmid \{p\} \succ S\}$). Furthermore, let $V_{\overline{p}} = V \setminus V_p$ and let $W_{\overline{p}} = W \setminus W_p$. 
If $\abs{V_{\overline{p}}} \geq k$, we let $V'$ be any arbitrary submultiset of $V_{\overline{p}}$ of cardinality~$k$. Otherwise, we let~$V'$ be the union of $V_{\overline{p}}$ and any submultiset of exactly $k - \abs{V_{\overline{p}}}$ votes from~$V_p$. The submultiset~$W'$ is defined analogously. 

Now, if~$p$ does not majority-dominate $S$ with respect to $V' \cup W'$, we conclude that the given instance~$I$ is a {\yesins}.

Otherwise, to prevent~$p$ from winning, some predecessor of~$p$ in the agenda~$\rhd$ must win. 
In this case, we determine, for each predecessor~$c = \rhd[i]$ of~$p$ where $i < \ell$, whether there exist $V' \subseteq V$ and $W' \subseteq W$ with $\abs{V'}=k$ and $\abs{W'}=k'$ such that~$c$ majority-dominates $\rhd[i+1, m]$ with respect to $V' \cup W'$. 
This can be accomplished in polynomial time using a greedy algorithm. Specifically, let~$V_c$ denote the submultiset of votes in~$V$ that rank~$c$ before $\rhd[i+1, m]$, and let~$W_c$ denote the submultiset of votes in~$W$ that rank~$c$ before $\rhd[i+1, m]$. Then, we construct~$V'$ as a submultiset of~$V$ with cardinality~$k$, including as many votes from~$V_c$ as possible. Similarly, we construct~$W'$ as a submultiset of~$W$ with cardinality~$k'$, including as many votes from~$W_c$ as possible. 
If~$c$ majority-dominates $\rhd[i+1, m]$ in $(C, V' \cup W')$, we conclude that~$I$ is a {\yesins}.

If none of the predecessors of~$p$ provides a {\yes}-answer, we conclude that~$I$ is a {\noins}.
\end{proof}

From Theorem~\ref{thm-e-dcev-suc-p}, we obtain the following corollary.

\begin{corollary}
\label{cor-dcav-dcdv-suc-p}
{\prob{DCAV}}-{\memph{Successive}} and {\prob{DCDV}}-{\memph{Successive}} are polynomial-time solvable.
\end{corollary}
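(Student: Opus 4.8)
The plan is to show that both \prob{DCAV}-Successive and \prob{DCDV}-Successive are polynomial-time Turing reducible to \prob{E-DCEV}-Successive, and then invoke Theorem~\ref{thm-e-dcev-suc-p}. The only point requiring care is that \prob{E-DCEV} imposes \emph{exact} cardinality constraints ($\abs{V'}=k$ and $\abs{W'}=k'$), whereas the destructive control problems allow \emph{at most} a given number of additions or deletions; this gap is bridged by trying all admissible values via polynomially many oracle calls.

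For \prob{DCAV}-Successive, consider an instance $(C, p, V, W, \rhd, k_{\text{AV}})$ (no candidates are added, so all votes are over $C$). It is a {\yesins} exactly when some $W'\subseteq W$ with $\abs{W'}\leq k_{\text{AV}}$ makes $p$ fail to be the successive winner of $(C, V\muplus W')$ with respect to $\rhd$. For each $j\in\{0,1,\dots,k_{\text{AV}}\}$ I would query the \prob{E-DCEV}-Successive instance $(C, p, V, W, \rhd, \abs{V}, j)$: forcing $\abs{V'}=\abs{V}$ pins $V'=V$ (i.e.\ no registered vote is removed), and $\abs{W'}=j$ selects exactly $j$ added votes. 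The \prob{DCAV}-Successive instance is a {\yesins} iff at least one of these $k_{\text{AV}}+1$ queries returns \yes. The loop over $j$ is genuinely needed, since adding more votes may restore $p$ as the winner, so no single value of $j$ suffices in general; this is also why the reduction is Turing rather than many-one. The case of \prob{DCDV}-Successive is symmetric: given $(C, p, V, \rhd, k_{\text{DV}})$ with $W=\emptyset$, deleting at most $k_{\text{DV}}$ votes is the same as retaining a subset $V'\subseteq V$ with $\abs{V'}\geq\abs{V}-k_{\text{DV}}$, so for each admissible $j$ with $\abs{V}-k_{\text{DV}}\leq j\leq\abs{V}$ I would query the \prob{E-DCEV}-Successive instance $(C, p, V, \emptyset, \rhd, j, 0)$, and the original instance is a {\yesins} iff some query returns \yes.

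Each reduction performs at most a linear number of oracle calls, and producing each queried instance is clearly polynomial time, so combining this with Theorem~\ref{thm-e-dcev-suc-p} yields polynomial-time algorithms for both problems. There is essentially no combinatorial obstacle beyond the cardinality bookkeeping just described; all the real work is already contained in the greedy algorithm underlying Theorem~\ref{thm-e-dcev-suc-p}.
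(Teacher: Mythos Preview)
Your proposal is correct and takes essentially the same approach as the paper: the paper simply states that the corollary follows from Theorem~\ref{thm-e-dcev-suc-p} without spelling out any details, and what you have written is precisely the natural polynomial-time Turing reduction (iterating over the admissible exact cardinalities) that this implicit step requires. The only cosmetic point is that the range of~$j$ should be capped at $\min\{k_{\text{AV}},\abs{W}\}$ for \prob{DCAV} and floored at $\max\{0,\abs{V}-k_{\text{DV}}\}$ for \prob{DCDV}, but this does not affect the argument.
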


Now we study control by adding/deleting candidates for the successive procedure.
The following result is easy to see.

\begin{corollary}
\label{cor-immune-suc-first}
{\memph{Successive}} is immune to~{\prob{CCAC}} if the distinguished candidate is the first one in the agenda.
\end{corollary}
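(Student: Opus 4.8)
The claim is that the successive procedure is immune to \prob{CCAC} when the distinguished candidate~$p$ occupies the first position in the agenda. By the definition of immunity, I need to show it is impossible to turn a nonwinning candidate into a winner by adding candidates; since immunity is about \emph{any} election, I must argue that if~$p$ is the first candidate in the agenda and~$p$ is \emph{not} the successive winner of the registered election, then~$p$ remains a non-winner no matter which unregistered candidates are added. So the real statement to prove is: for any $C$, $D$, $V$, agenda $\rhd$ on $C\cup D$ with $\rhd[1]=p$, if $p$ is not the successive winner of $(C,V)$ with respect to $\rhd$ restricted to $C$, then $p$ is not the successive winner of $(C\cup D', V)$ with respect to $\rhd$ restricted to $C\cup D'$ for every $D'\subseteq D$.

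\textbf{Key step.} The crux is the characterization of when the first candidate of the agenda is the successive winner. By the definition of the successive procedure, $\rhd[1]$ is the successive winner of an election $(C'',V)$ with respect to an agenda exactly when $\rhd[1]$ majority-dominates the set of all its successors in that agenda, i.e., when a strict majority of voters rank $p$ above \emph{every} other candidate. (This matches Lemma~\ref{lem-a}, since a candidate who is ranked first by a majority is in particular the Condorcet winner.) So $p=\rhd[1]$ is the successive winner of $(C,V)$ iff more than $|V|/2$ of the votes in $V$ rank $p$ in the top position among the candidates of $C$; and $p$ is the successive winner of $(C\cup D',V)$ iff more than $|V|/2$ of the votes rank $p$ in the top position among the candidates of $C\cup D'$.

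\textbf{The argument.} The monotone observation is that the set of votes ranking $p$ first among $C\cup D'$ is a \emph{subset} of the set of votes ranking $p$ first among $C$: adding more candidates can only demote $p$ (or leave it first), never promote it. Formally, if a vote $\succ$ ranks $p$ above every candidate in $C\cup D'$, then in particular it ranks $p$ above every candidate in $C$. Hence
\[
\bigl|\{\succ\in V \setmid \{p\}\succ (C\cup D')\setminus\{p\}\}\bigr|
\;\le\;
\bigl|\{\succ\in V \setmid \{p\}\succ C\setminus\{p\}\}\bigr|.
\]
If $p$ is not the successive winner of $(C,V)$, the right-hand side is at most $|V|/2$, so the left-hand side is also at most $|V|/2$, meaning $p$ does not majority-dominate its successors in $(C\cup D',V)$ and therefore is not the successive winner there. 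Since $D'$ was arbitrary, no addition of candidates can make $p$ the winner, which is exactly immunity.

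\textbf{Expected obstacle.} There is essentially no technical obstacle here — the proof is a one-line monotonicity observation about top-ranked sets combined with the successive winner's characterization for the first agenda position. The only care needed is to phrase it correctly in terms of the \prob{CCAC}-Successive problem as defined (where the position of~$p$ is counted with respect to $C\cup D$, per the remark preceding this corollary) and to note that adding candidates is the only operation available, so the registered vote multiset $V$ is unchanged; the conclusion then follows immediately. I would also remark that this is a direct analogue of the classical result of Bartholdi, Tovey, and Trick that Condorcet is immune to \prob{CCAC}, which is unsurprising given Lemma~\ref{lem-a}.
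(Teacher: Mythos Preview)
Your argument is correct and is exactly the monotonicity observation the paper has in mind when it presents this corollary without proof as ``easy to see.'' One minor quibble: your parenthetical references to Lemma~\ref{lem-a} are misplaced, since that lemma concerns the $h$-amendment procedures (where $\rhd[1]$ wins iff it is the Condorcet winner), whereas for the successive procedure $\rhd[1]$ wins iff it is ranked first by a strict majority---a strictly stronger condition; these remarks are tangential and do not affect your core argument, so you may simply drop them.
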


However, as long as the distinguished candidate is not in the first place of the agenda, the problem becomes intractable from the parameterized complexity perspective, and this holds even when there are only two registered candidates.

\begin{theorem}
\label{thm-ccac-suc-wbh-solution-size}
{\prob{CCAC}}-{\memph{Successive}} is {\memph\wbh} with respect to the number of added candidates, as long as the distinguished candidate is not in the first place of the agenda. Moreover, this holds even when there are only two registered candidates.
\end{theorem}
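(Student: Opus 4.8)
The plan is to reduce from \prob{RBDS}, which is \wbc{} with respect to the parameter $\kappa$. Let $(G,\kappa)$ be an instance of \prob{RBDS} with bipartition $(R,B)$; we may assume $\abs{B}>\kappa>1$, that $G$ has no isolated vertices, that every red vertex has the same degree $\ell$, and that $\ell+\kappa\le\abs{B}$, as discussed in Subsection~\ref{subsec-supporting-problems}. We will construct an instance of \prob{CCAC}-Successive with exactly two registered candidates. The idea is to have the distinguished candidate~$p$ occupy the second position in the agenda, with a single ``blocker'' candidate~$q$ before it, and to place the unregistered candidates (one per blue vertex) after~$p$. The registered votes will be designed so that, in the registered election, $q$ majority-dominates the set of all its successors, and hence $q$ (not $p$) is the successive winner; adding a blue candidate~$b$ corresponds to ``covering'' the red vertices dominated by~$b$. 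Concretely, for each $b\in B$ we introduce an unregistered candidate, also called $b$, and put the agenda as $\rhd=(q,p,\overrightarrow{D})$ where $D=B$. The registered candidates are $C=\{q,p\}$, and we craft a multiset~$V$ of votes over $C\cup D$ so that: (i) without adding anyone, $q$ ranks above $p$ in more than half the votes, so $q$ is the successive winner; (ii) $p$ beats every candidate in $D$; (iii) for each $b\in D$, the votes that rank $b$ before $p$ are precisely controlled by the ``degree'' structure, so that $b$ can only ``pull down'' $q$ if a set of blue candidates jointly dominating~$R$ is added. Since only two candidates are registered and $D$ may be large, I would use McGarvey-type constructions (as in the proof of Theorem~\ref{thm-ccac-suc-np}) to realize the desired pairwise margins in polynomial time, while also keeping careful track of the actual number of votes (the successive procedure is sensitive to exact majority thresholds, not just pairwise comparisons).

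The core of the reduction is the following encoding, mirroring the \prob{CCDV} reductions for the successive procedure (e.g.\ Theorem~\ref{thm-ccdv-suc-wah-last}). For each red vertex $r\in R$ we will want a ``certificate'' candidate placed after all blue candidates in the agenda; adding blue candidate~$b$ should, via the votes, move $q$ below $p$ exactly on those votes associated with the red neighbours of~$b$. More precisely, I plan to create, besides $q,p$, and the blue candidates $D=B$, an additional set of registered dummy candidates is \emph{not} allowed (we are constrained to two registered candidates), so all ``red-vertex gadgetry'' must be pushed into the vote structure over $\{q,p\}\cup D$ alone. The trick: the successive winner only depends, at position~$1$, on whether $q$ majority-dominates $\{p\}\cup D$, i.e.\ on whether a strict majority of votes rank $q$ above \emph{all} of $p$ and the added blue candidates. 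So I make the registered votes such that $q$ is ranked above everything in exactly $\frac{|V|}{2}+1$ votes when no candidate is added, but each added blue candidate~$b$ ``spoils'' some of these top-$q$ votes (namely one vote for each red vertex it dominates, appropriately bookkept), and $\kappa$ blue candidates covering all of~$R$ spoil enough votes to push $q$ below threshold, after which $p$ (being the only remaining candidate before all of $D$, and beating all of $D$) becomes the successive winner. Conversely, if the added blue set fails to dominate some $r\in R$, there remain a strict majority of votes with $q$ on top, so $q$ still wins.

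The forward direction ($\prob{RBDS}$ yes $\Rightarrow$ \prob{CCAC}-Successive yes) is then a direct count: add the $\kappa$ blue candidates of a dominating set; show $q$ no longer majority-dominates its successors (using that every vote that had $q$ on top is spoiled by at least one added candidate), and that $p$ majority-dominates all of~$D$ by construction, so $p$ is the successive winner. The backward direction ($\Leftarrow$) requires arguing that (a) one may assume the added set lies entirely in $D$ (trivially, $D$ is the only pool); (b) the added set has size exactly $\kappa$ (if fewer are added, $q$ still has a majority on top); and (c) the added blue vertices must dominate~$R$, because otherwise, letting $r$ be the leftmost uncovered red vertex—wait, $r$ is not a candidate here, so instead one argues directly: if some red vertex is uncovered, a strict majority of votes still rank $q$ above all of $p\cup(\text{added set})$, so $q$ wins, contradiction. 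I expect the main obstacle to be arranging the exact vote counts so that the ``spoiling'' is \emph{monotone and additive enough}: a single added candidate must spoil few enough top-$q$ votes that sub-$\kappa$ additions never suffice, yet a dominating family of size $\kappa$ must collectively spoil enough. This is exactly the role of the uniform-degree assumption $\ell$ and the inequality $\ell+\kappa\le\abs{B}$; making these numerics line up (as in Theorems~\ref{thm-ccdv-amd-np} and~\ref{thm-ccdv-suc-wah-last}) while simultaneously realizing all needed pairwise margins over only $\{q,p\}\cup D$ via McGarvey, and keeping the total vote count's parity consistent, is the delicate bookkeeping step. Once the vote multiset is pinned down, checking (i)–(iii) and both directions of correctness is routine.
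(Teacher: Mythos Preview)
Your high-level approach matches the paper's: reduce from \prob{RBDS}, take $C=\{q,p\}$ with $q$ first in the agenda, set $D=B$, and design votes so that $q$ majority-dominates its successors unless the added blue candidates dominate~$R$. However, you are substantially overcomplicating the construction and, in one place, reasoning incorrectly.

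The paper's vote multiset is very simple: one vote $p\Succ\overrightarrow{B}\Succ q$; then $\abs{R}$ copies of $q\Succ p\Succ\overrightarrow{B}$; and for each red vertex $r\in R$ one vote $\succ_r$ with preference $\overrightarrow{B}[N_G(r)]\Succ q\Succ p\Succ(\overrightarrow{B}\setminus N_G(r))$. That is $2\abs{R}+1$ votes total. No McGarvey construction is needed (indeed McGarvey only controls pairwise margins, which is not what the successive procedure cares about), and none of the uniform-degree or $\ell+\kappa\le\abs{B}$ assumptions are used. The counting is immediate: the $\abs{R}+1$ votes of the first two types rank $p$ above all of~$B$, so $p$ always majority-dominates any added subset of~$B$; and after adding $B'\subseteq B$, the vote $\succ_r$ ranks $q$ first if and only if $B'\cap N_G(r)=\emptyset$, so $q$ majority-dominates its successors if and only if some $r$ is uncovered. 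There is no ``delicate bookkeeping step.''

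Your claim (b), that in the backward direction one must argue $\abs{B'}=\kappa$ because ``if fewer are added, $q$ still has a majority on top,'' is both false and unnecessary. It is false because a dominating set of size $<\kappa$, if one exists, already strips $q$ of its majority. It is unnecessary because the backward direction only needs: if adding $B'$ with $\abs{B'}\le k$ makes $p$ win, then $B'$ dominates~$R$ (and then pad $B'$ to size exactly~$\kappa$ using $\abs{B}>\kappa$). Your worry that ``sub-$\kappa$ additions never suffice'' is therefore a red herring; drop it, and the argument becomes as short as the paper's.
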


\begin{proof}
We prove the theorem by a reduction from {\prob{RBDS}}. Let $(G, \kappa)$ be an instance of {\prob{RBDS}}, where $G$ is a bipartite graph with the bipartition $(R, B)$. We construct an instance $(C, p, D, V, \rhd, k)$ of {\prob{CCAC}}-Successive as follows. We create only two registered candidates, denoted by~$p$ and~$q$, where~$p$ is the distinguished candidate. Let $C=\{p, q\}$. Then, for each blue vertex $b\in B$, we create one unregistered candidate denoted still by~$b$ for simplicity. Let $D=B$. Let~$\rhd$ be any agenda on $C\cup D$ such that~$q$ is in the first place (the relative order of other candidates  can be set arbitrarily).
We create the following votes:
\begin{itemize}
    \item one vote with the preference $p\Succ \overrightarrow{B} \Succ q$;
    \item $\abs{R}$ votes with the preference $q\Succ p\Succ \overrightarrow{B}$; and
    \item for each red vertex $r\in R$, one vote $\succ_r$ with the preference
    \[\left(\overrightarrow{B}[N_G(r)]\right) \Succ q\Succ p\Succ \left(\overrightarrow{B}\setminus N_G(r)\right).\]
\end{itemize}
Let $V_R=\{\succ_r\, \setmid r\in R\}$ denote the multiset of votes corresponding to the red vertices, and let~$V$ denote the multiset of all the $2\abs{R}+1$ votes created above. Lastly, let $k=\kappa$. 

The construction of $(C, p, D, V, \rhd, k)$ can clearly be completed in polynomial time. We now proceed to prove the correctness of the reduction. 

$(\Rightarrow)$ Suppose there exists a subset $B' \subseteq B$ such that $\abs{B'} = \kappa$ and $B'$ dominates $R$ in~$G$. Let $\elec = (B' \cup \{p, q\}, V)$. We will show that~$p$ is the successive winner of~$\elec$ with respect to~$\rhd$. 
First, observe that there are $\abs{R} + 1$ votes ranking~$p$ before all candidates in $B$. Hence, it suffices to prove that $q$ is not the successive winner of $\elec$. 
Since~$B'$ dominates $R$, for every vote $\succ_r$ corresponding to a red vertex $r \in R$, there exists at least one $b \in B'$ such that $b$ dominates $r$. Consequently, $b$ is ranked before $q$ in $\succ_r$. This implies that the $\abs{R}$ votes with the preference $q \Succ p \Succ \overrightarrow{B}$ are all votes ranking $q$ before $B'$.  
However, since there are a total of $2\abs{R} + 1$ votes, $q$ cannot be the successive winner of $\elec$. Thus, we are done.

$(\Leftarrow)$ Suppose there exists a subset $B' \subseteq B$ such that $\abs{B'} \leq k$ and $p$ is the successive winner of the election $\elec = (B' \cup \{p, q\}, V)$ with respect to the agenda $\rhd$. We show below that~$B'$ dominates $R$ in~$G$. Assume, for the sake of contradiction, that exists a red vertex $r \in R$ such that $r \not\in N_G(B')$. Then, the vote~$\succ_r$ corresponding to $r$ ranks $q$ before $B'$.  
Together with the $\abs{R}$ votes with the preference $q \Succ p \Succ \overrightarrow{B}$, there are at least $\abs{R} + 1$ votes ranking $q$ in the first place in the election $\elec$, implying that $q$ is the successive winner of $\elec$, a contradiction.
%
%The above correctness arguments hold regardless of the relative order of candidates from $C \cup D \setminus \{q\}$ in the agenda.
\end{proof}

For constructive control by deleting candidates, we first present a intractability result.

\begin{theorem}
\label{thm-ccdc-suc-wah-solution-size}
{\prob{CCDC}}-{\memph{Successive}} is {\memph\wah} with respect to the number of deleted candidates. This holds even when the distinguished candidate is the first one in the agenda.
\end{theorem}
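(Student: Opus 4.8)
The plan is to give a parameterized reduction from the \wac{} problem \prob{Clique} (or equivalently from \prob{RBDS} with the parameter being the number of red vertices), parameterized by the solution size, to \prob{CCDC}-Successive with the distinguished candidate placed \emph{first} in the agenda. The key structural observation to exploit is that, under the successive procedure, if the distinguished candidate $p$ is the first candidate in the agenda, then $p$ wins the election on candidate set $C'\subseteq C$ if and only if $p$ majority-dominates $C'\setminus\{p\}$ with respect to $V$; that is, a strict majority of voters must rank $p$ above \emph{every} remaining candidate simultaneously. Deleting candidates can therefore only help $p$ by removing candidates that ``spoil'' otherwise-favorable votes. So the combinatorial task becomes: choose a set $S$ of at most $k$ candidates to delete so that the number of votes ranking $p$ above \emph{all} of $C\setminus(S\cup\{p\})$ exceeds half the electorate. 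This is naturally a covering-type condition, which is what lets us encode a \wah{} problem.

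First I would set up candidate gadgets: introduce a candidate for each vertex of the input graph $G$ (plus $p$ and possibly a small number of auxiliary ``padding'' candidates needed to balance vote counts), and put $p$ in the first position of the agenda with the vertex-candidates arranged in an arbitrary order after $p$. Second, I would design the voter set in two groups. One group of votes ranks $p$ at the top and is ``neutral'' — these are the votes we want to count toward $p$'s majority. A second group of votes is indexed by the ``blocking structure'' of $G$ (for \prob{RBDS}, one vote per red vertex; for \prob{Clique}, one vote per non-edge): each such vote ranks some particular small set of vertex-candidates above $p$, so that this vote counts for $p$ \emph{only if} all those vertex-candidates are deleted. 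Third, I would tune the cardinalities so that $p$ becomes the successive winner precisely when the deleted set of $\le k$ vertex-candidates ``covers'' all blocking votes, i.e., corresponds to a dominating set / clique of the required size; the parameter $k$ of the control instance equals (a function of) the parameter $\kappa$ of the source instance, which gives the \wah{} parameterized reduction. I would then verify both directions: a solution to the source instance yields a deletion set making $p$ win, and conversely any successful deletion set, restricted to the vertex-candidates, must hit every blocking vote and hence encodes a solution of the source instance — here the assumptions on \prob{RBDS} recorded in Subsection~\ref{subsec-supporting-problems} (uniform red-degree $\ell$, $\abs{B}>\kappa>1$, $\ell+\kappa\le\abs{B}$) are what make the counting argument go through cleanly.

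The main obstacle I anticipate is the \emph{vote-counting bookkeeping}: unlike the amendment constructions earlier in the paper, the successive winner condition at the first position is an all-or-nothing majority condition over \emph{all} surviving candidates at once, so I must ensure that (i) deleting candidates never accidentally lets some \emph{other} candidate be the successive winner before $p$ is even reached — but this is automatic since $p$ is first — and (ii) the ``padding'' candidates and padding votes are arranged so that no blocking vote can be neutralized except by deleting exactly the intended vertex-candidates, and so that the majority threshold $\lfloor\abs{V}/2\rfloor+1$ lands exactly where the covering condition bites. A secondary subtlety is making sure the deleted set cannot cheat by deleting auxiliary candidates instead of vertex-candidates; the cleanest fix is to make every auxiliary candidate appear below $p$ in all ``good'' votes and below $p$ in all blocking votes too, so deleting an auxiliary candidate never changes whether a vote counts for $p$, hence is wasteful. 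Once these balance conditions are pinned down, the correctness proof and the bound $k'=g(\kappa)$ follow routinely, and membership in the relevant class together with \npcns{} is inherited as noted in the ``Some Remarks'' subsection.
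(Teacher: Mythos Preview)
Your high-level plan matches the paper's approach exactly: reduce from \prob{Clique}, place $p$ first in the agenda, and exploit that $p$ wins if and only if a strict majority of voters rank $p$ above every surviving candidate. The paper implements this with one ``blocking'' vote per \emph{edge} $\ede=\{\vere,\vere'\}$, ranking $\vere,\vere'$ above $p$ and everyone else below; padding with $m-\kappa(\kappa-1)+1$ top-$p$ votes then forces the deleted set to be a clique of size~$\kappa$.

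However, your concrete instantiation has two errors that would derail the reduction. First, ``one vote per non-edge'' is backwards: if each non-edge $\{\vere,\vere'\}$ contributes a vote with $\vere,\vere'$ above $p$, then deleting a clique $K$ recovers \emph{zero} such votes (no non-edge has both endpoints in $K$), whereas deleting an independent set could recover many. You need one vote per \emph{edge}, so that deleting $K$ converts exactly $\binom{|K|}{2}$ blocking votes into $p$-votes, and this count is maximised over all $\kappa$-sets precisely when $K$ is a clique. Second, your parenthetical ``or equivalently from \prob{RBDS} with the parameter being the number of red vertices'' is incorrect: \prob{RBDS} is {\fpt} with respect to $|R|$ (see the references to Dom--Lokshtanov--Saurabh and Fomin--Kratsch--Woeginger in the proof of Theorem~\ref{thm-ccac-suc-fpt}), so that route cannot yield \wahns. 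The \prob{RBDS} assumptions from Subsection~\ref{subsec-supporting-problems} you invoke at the end are therefore irrelevant here; stick with \prob{Clique}.

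Once you switch to edges, your worries about auxiliary candidates evaporate: the paper needs none beyond $p$ and the vertex-candidates, and your observation that deleting anything other than vertex-candidates is wasteful becomes vacuous.
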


\begin{proof}
We prove the theorem by a reduction from the {\prob{Clique}} problem. Let $(G, \kappa)$ be an instance of {\prob{Clique}}, where $G=(\vset, \eset)$ is a graph. Let $n=\abs{\vset}$ and $m=\abs{\eset}$ be the number of vertices and the number of edges of~$G$, respectively. Without loss of generality, we assume that $m\geq \kappa \cdot (\kappa-1)$. For each vertex in~$G$, we create one candidate denoted by the same symbol for simplicity. In addition, we create a distinguished candidate~$p$. Let $C=\vset\cup \{p\}$. Let~$\rhd$ be an agenda where~$p$ is the first candidate. We create the following votes.
\begin{itemize}
    \item First, we create a multiset of $m-\kappa\cdot (\kappa-1)+1$ votes, each with the preference $p\Succ \overrightarrow{\vset}$. 
    \item Then, for each edge $\ede\in \eset$ between two vertices~$\vere$ and~$\vere'$, we create one vote, denoted~$\succ_{\ede}$, with the preference  $\overrightarrow{\ede} \Succ p\Succ (\overrightarrow{\vset}\setminus \ede)$. Let~$V_2$ be the set of these~$m$ votes.
\end{itemize}
Let~$V=V_1\muplus V_2$ be the multiset of the above $2m-\kappa \cdot (\kappa-1)+1$ created votes.
Let $k=\kappa$, i.e., we are allowed to delete at most~$\kappa$ candidates.
We have completed the construction of an instance  $(C, p, V, \rhd, k)$ of {\prob{CCDC}}-Successive, which clearly can be done in polynomial time. In the following, we prove the correctness of the reduction.

$(\Rightarrow)$ Assume that there is a clique $K\subseteq\vset$ of size~$\kappa$ in the graph~$G$. We prove below that~$p$ is the successive winner of $(C\setminus K, V)$ with respect to~$\rhd$. Since~$p$ is the first candidate in the agenda, this amounts  to proving that there are at least $m-\frac{\kappa\cdot (\kappa-1)}{2}+1$ votes ranking~$p$ before $\vset\setminus K$. Let $\eset(K)$ be the set of edges whose endpoints are both contained in~$K$. As~$K$ is a clique of~$\kappa$ vertices,~$\eset(K)$ consists of exactly $\frac{\kappa\cdot (\kappa-1)}{2}$ edges. Let $\succ_{\ede}$ be a vote corresponding to an edge $\ede\in \eset(K)$. From the definition of $\succ_{\ede}$, only the two candidates corresponding to the two endpoints of~$\ede$ are ranked before~$p$ in the vote. This implies that all the $\frac{\kappa\cdot (\kappa-1)}{2}$ votes corresponding to $\eset(K)$ rank~$p$ before $\vset\setminus K$. Then, as all votes in~$V_1$ rank~$p$ in the top, in total there are at least $\abs{V_1}+\frac{\kappa\cdot  (\kappa-1)}{2}=m-\frac{\kappa\cdot (\kappa-1)}{2}+1$ votes in~$V$ ranking~$p$ before $\vset\setminus K$, we are done.

$(\Leftarrow)$ Let $K \subseteq \vset$ be an arbitrary subset of at most $k = \kappa$ vertices (candidates). If~$\abs{K} < \kappa$ or~$K$ is not a clique, then there can be at most $\frac{\kappa \cdot (\kappa - 1)}{2} - 1$ edges with both endpoints in~$K$. 
Following a similar reasoning as above, one can verify that at most $\abs{V_1} + \frac{\kappa \cdot (\kappa - 1)}{2} - 1 = m - \frac{\kappa \cdot (\kappa - 1)}{2}$ votes rank~$p$ before $\vset \setminus K$. This implies that~$p$ cannot be the successive winner of $(C \setminus K, V)$ with respect to the agenda~$\rhd$. 
Consequently, if~$G$ does not contain a clique of size~$\kappa$, the instance of {\prob{CCDC}}-Successive constructed above is a {\noins}.
\end{proof}

Additionally, we show that the problem remains {\wah} with respect to the number of candidates not deleted.

\begin{theorem}
\label{label-ccdc-suc-wah-dual-ssize}
{\prob{CCDC}}-{\memph{Successive}} is {\memph\wah} with respect to the number of remaining candidates. This holds even when the distinguished candidate is the first one in the agenda.
\end{theorem}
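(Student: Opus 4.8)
The plan is to reduce from \prob{Biclique} restricted to bipartite graphs, which the excerpt records as \wac\ with respect to~$\kappa$. The guiding observation is that, since the distinguished candidate~$p$ is to be the first candidate in the agenda, $p$ is the successive winner of an election restricted to $\{p\}\cup R$ precisely when a strict majority of votes rank~$p$ above every candidate of~$R$. Hence deleting candidates down to a small surviving set~$R$ makes~$p$ win exactly when~$R$ is ``dominated from above'' by a large set of votes; reading this through the bipartite incidence picture (votes on one side, candidates on the other), that is exactly the statement that~$R$ together with that majority of votes induces a large complete bipartite subgraph.

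Concretely, given a bipartite graph~$H$ with parts~$L$ and~$R_H$ and an integer~$\kappa$ (asking whether $K_{\kappa,\kappa}\subseteq H$), I would first assume $\kappa\ge 2$ (the case $\kappa=1$ is trivial), $|L|\ge\kappa$ (otherwise output a trivial \noins), and $|R_H|\ge 2\kappa-1$ (enforce this by adding isolated vertices to~$R_H$, which creates no new $K_{\kappa,\kappa}$). Create one candidate for every vertex of~$L$, plus the distinguished candidate~$p$, so $C=L\cup\{p\}$; let~$\rhd$ be any agenda with~$p$ first. For each $b\in R_H$ build one vote~$v_b$ with preference $\left(\overrightarrow{L\setminus N_H(b)}\right)\Succ p\Succ\left(\overrightarrow{N_H(b)}\right)$, so that~$v_b$ ranks~$p$ above a set $R\subseteq L$ if and only if $R\subseteq N_H(b)$. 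Add a block~$V_1$ of $P:=|R_H|-2\kappa+1\ge 0$ votes ranking~$p$ first. Set the deletion budget to $k:=|L|-\kappa$; note $|C|-k=\kappa+1$, so every solution leaves at least $\kappa+1$ candidates and the parameter ``number of remaining candidates'' equals $\kappa+1$.

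For correctness: if~$H$ contains $K_{\kappa,\kappa}$, choose $R_0\subseteq L$ with $|R_0|=\kappa$ having at least~$\kappa$ common neighbours and delete $L\setminus R_0$; then the number of votes ranking~$p$ above~$R_0$ equals $P$ plus the number of common neighbours of~$R_0$, which is at least $P+\kappa=|R_H|-\kappa+1>|V|/2$, so~$p$ wins. Conversely, if deleting at most~$k$ candidates makes~$p$ the successive winner, the surviving non-$p$ set $R\subseteq L$ satisfies $|R|\ge|L|-k=\kappa$, and the number of votes ranking~$p$ above~$R$, namely $P+\bigl|\bigcap_{c\in R}N_H(c)\bigr|$, exceeds $|V|/2=|R_H|-\kappa+\tfrac12$; this forces $\bigl|\bigcap_{c\in R}N_H(c)\bigr|\ge\kappa$, and any $\kappa$-subset of~$R$ together with any~$\kappa$ of these common neighbours induces a $K_{\kappa,\kappa}$ in~$H$. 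Monotonicity (enlarging~$R$ can only lose supporting votes) makes ``keep exactly $\kappa$ candidates'' without loss of generality and absorbs the fact that the budget only caps the number of deletions.

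The one genuinely non-routine point is recognising that the \prob{Clique}-based reduction used for the primal parameter in Theorem~\ref{thm-ccdc-suc-wah-solution-size} does not transfer: when one keeps only a small set~$R$ of vertex-candidates, the number of supporting votes decomposes as ``base $+$ (edges avoiding~$R$)'', whose second term is of order $|E|$ with only an $O(\kappa^2)$ fluctuation driven by~$e(R)$, so no nonnegative number of base votes can calibrate the majority threshold. \prob{Biclique} avoids this because a ``good'' surviving set needs only~$\kappa$ supporting votes (its~$\kappa$ common neighbours), a quantity that is balanced against $|R_H|$ by the $P=|R_H|-2\kappa+1$ base votes; once this is in place, the construction and both directions are a short calculation.
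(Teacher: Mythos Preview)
Your proposal is correct and follows essentially the same approach as the paper: both reduce from \prob{Biclique} on bipartite graphs, create one candidate per vertex of one side plus~$p$ (placed first in the agenda), create one vote per vertex of the other side that ranks~$p$ above exactly that vertex's neighbours, and pad with $|R_H|-2\kappa+1$ votes ranking~$p$ first so that the majority threshold is met precisely when the surviving $\kappa$ candidates have at least~$\kappa$ common neighbours. Your additional paragraph explaining why the \prob{Clique}-based primal reduction fails to transfer is a nice piece of intuition not present in the paper.
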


\begin{proof}
We prove the theorem via a reduction from the {\prob{Biclique}} problem. Let $(G, \kappa)$ be an instance of the {\prob{Biclique}} problem, where~$G$ is a bipartite graph with the bipartition $(X, Y)$. Let $m=\abs{X}$, and let $n=\abs{Y}$. We assume that $m>\kappa$ and $n>2\kappa$, which does not change the {\wahns} of the  {\prob{Biclique}}  problem.  (Otherwise, the problem can be solved in {\fpt}-time with respect to $\kappa$.)
We construct a {\prob{CCDC}}-Successive instance  as follows. For each vertex $x\in X$, we create one candidate denoted still by~$x$. In addition, we create a candidate~$p$ which is the distinguished candidate. Let $C=X\cup \{p\}$. Let~$\rhd$ be an agenda on~$C$ such that~$p$ is the first candidate. We create the following votes.
\begin{itemize}
\item For each vertex $y\in Y$, we create one vote~$\succ_y$ with the preference
\[\left(\overrightarrow{X}\setminus N_G(y)\right) \Succ p\Succ \left(\overrightarrow{X}[{N_G(y)}]\right).\] 
For a given $Y'\subseteq Y$, let~$V_{Y'}$ be the multiset of votes created for vertices in~$Y'$.
\item We create a multiset $V'$ of $n-2\kappa+1$ votes, each with the preference $p\Succ \overrightarrow{X}$.
\end{itemize}
Let~$V=V_Y\muplus V'$ denote the multiset of the $2n-2\kappa+1$ votes constructed above.
Let $k=m-\kappa$. The instance of {\prob{CCDC}}-Successive is $(C, p, V, \rhd, k)$.

%The above reduction clearly can be done in polynomial time. It remains to show its 
We prove the correctness of the reduction as follows.

$(\Rightarrow)$ Assume that $G$ contains a biclique $(X', Y')$ such that $\abs{X'}=\abs{Y'}=\kappa$. Let $\elec=(X'\cup \{p\}, V)$. We show that~$p$ is the successive winner of~$\elec$ with respect to~$\rhd$. First, as $(X', Y')$ is a biclique in~$G$, due to the above construction, every vote $\succ_y$ where $y\in Y'$ ranks~$p$ before all candidates in~$X'$. As all votes in~$V'$ rank~$p$ in the top, there are in total at least $\abs{Y'}+\abs{V'}=n-\kappa+1$ votes ranking~$p$ before~$X'$. As there are in total $2n-2\kappa+1$ votes,~$p$ is the successive winner of~$\elec$.

$(\Leftarrow)$ Assume that there is a subset $X'\subseteq X$ such that~$\abs{X'}\geq m-k=\kappa$ and~$p$ is the successive winner of $(X'\cup \{p\}, V)$ with respect to~$\rhd$. Observe that, since~$p$ is the first candidate in the agenda, for any subset $X''\subseteq X'$,~$p$ remains as the successive winner of $(X''\cup \{p\}, V)$. This observation allows us to assume that $\abs{X'}=\kappa$. Given this, we know that there are in total at least $n-\kappa+1$ votes in~$V$ ranking~$p$ before~$X'$. Furthermore, since $\abs{V'}=n-2\kappa+1$ and all votes in~$V'$ rank~$p$ in the top, it follows that there are at least~$\kappa$ votes in~$V_Y$ ranking~$p$ before~$X'$. Let $V_{Y'}\subseteq V_Y$, where $Y'\subseteq Y$ is a submultiset of~$\kappa$ votes, each of which ranks~$p$ before~$X'$. By the construction, every vertex in~$Y'$ is adjacent to all vertices in~$X'$. In other words, $(X', Y')$ forms a biclique of size~$\kappa$ in~$G$.
\end{proof}

However, if the distinguished candidate has only a constant number of successors, we can solve {\prob{CCDC}}-Successive in polynomial time. In particular, we show that the problem is {\fpt} with respect to the number of successors of the distinguished candidate.

\begin{theorem}
\label{thm-ccdc-suc-fpt-sucessors}
{\prob{CCDC}}-{\memph{Successive}} can be solved in time $\bigos{2^{\ell}}$, where~$\ell$ is the number of successors of the distinguished candidate in the agenda.
\end{theorem}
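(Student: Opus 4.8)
Let~$S$ be the set of the~$\ell$ successors of~$p$ in~$\rhd$, so that every other candidate is either a predecessor of~$p$ or lies in~$S$. The plan is to enumerate over the~$2^{\ell}$ subsets $S'\subseteq S$, interpreting each~$S'$ as the set of successors of~$p$ that are kept; deleting the remaining successors $S\setminus S'$ already costs~$\abs{S\setminus S'}$ from the budget. For a fixed~$S'$, I would then compute in polynomial time the minimum number~$d(S')$ of \emph{predecessors} of~$p$ that must additionally be removed for~$p$ to become the successive winner, and output {\yes} exactly when $\abs{S\setminus S'}+d(S')\leq k$ for some~$S'$. Since there are~$2^{\ell}$ choices of~$S'$ and every step below is polynomial, this gives the claimed $\bigos{2^{\ell}}$ running time.

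The correctness rests on two simple facts. First, once the deletions have been applied and~$S'$ is the set of retained successors of~$p$, the candidate~$p$ is the successive winner if and only if (i)~$p$ majority-dominates~$S'$, and (ii) no retained predecessor~$c$ of~$p$ majority-dominates its retained successor set, which is precisely (the retained predecessors of~$p$ lying after~$c$) together with $\{p\}\cup S'$; this is immediate since the successive winner is the first candidate in the agenda majority-dominating all of its successors. Second, majority-domination is antitone in its target set: if~$a$ majority-dominates~$T$, then~$a$ majority-dominates every subset of~$T$, and contrapositively, if~$a$ fails to majority-dominate~$T$ it fails to majority-dominate every superset of~$T$.

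Fixing~$S'$ and writing $T^{\star}=\{p\}\cup S'$, I would call a predecessor~$c$ of~$p$ \emph{safe} if~$c$ does not majority-dominate~$T^{\star}$ and \emph{risky} otherwise. By antitonicity, a retained safe predecessor can never majority-dominate its (superset) retained successor set, so retaining every safe predecessor is harmless and costs nothing; hence we lose nothing by keeping all of them. For the risky predecessors I would decide right to left: when processing a risky predecessor~$c$, its retained successor set is already determined under the running decision (keep all safe predecessors and all risky predecessors to the right of~$c$ not yet marked for deletion); we keep~$c$ if~$c$ does not majority-dominate that set, and otherwise declare~$c$ \emph{doomed} and mark it for deletion. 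One then shows, using antitonicity and a short exchange argument, that (a) in any configuration with retained successor set~$S'$ every doomed predecessor must be deleted --- indeed, if~$c$ is the rightmost retained doomed predecessor, its retained successors form a subset of the set that witnessed~$c$ being doomed, so~$c$ would majority-dominate them and block~$p$; and (b) conversely, provided~$p$ majority-dominates~$S'$, deleting exactly the doomed predecessors and keeping everything else yields a configuration in which~$p$ wins. Therefore $d(S')$ equals the number of doomed predecessors, and everything is computable in polynomial time for each~$S'$.

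The main obstacle is the optimality of the right-to-left greedy on the predecessors, i.e., points (a) and (b): one must argue that deleting a safe predecessor is never beneficial, and that keeping a risky predecessor whenever it is locally non-dominating is without loss of generality. Both reduce to careful bookkeeping of how deleting a predecessor shrinks the successor sets of the predecessors to its left, combined with the antitonicity of majority-domination; the remaining ingredients --- the characterization of~$p$ being the successive winner and the enumeration bound --- are routine.
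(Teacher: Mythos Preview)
Your proposal is correct and follows the same overall strategy as the paper: enumerate the $2^{\ell}$ subsets of successors of~$p$ to retain (equivalently, to delete), and for each such choice compute greedily, in polynomial time, the minimum number of predecessors that must additionally be removed; both arguments rest on the antitonicity of majority-domination that you state explicitly.

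The only difference lies in how the greedy on the predecessors is phrased. The paper's version is more direct: after fixing the deleted successors, it repeatedly computes the successive winner of the current election and, while that winner is a predecessor of~$p$, deletes it and iterates. Its (implicit) correctness argument is the same inductive use of antitonicity that you sketch---each such winner must be deleted in every feasible solution with the same retained successor set, because its retained successors in any such solution form a subset of those it currently majority-dominates. Your right-to-left pass with the safe/risky distinction computes the same minimum deletion set via a different traversal order; the paper's formulation sidesteps the case split and the exchange argument, while yours makes the structure (safe predecessors are always harmless, doomed ones are always mandatory deletions) more explicit. Either way the running time and the underlying idea are identical.
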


\begin{proof}
    Let $I = (C, p, V, \rhd, k)$ be an instance of \textsc{CCDC-Successive}. Let $C'$ denote the set of successors of $p$ in the agenda $\rhd$, and let $\ell = \abs{C'}$. To solve the problem, we enumerate all subsets $S \subseteq C'$ containing at most $k$ candidates. Each enumerated~$S$ represents a hypothesis that, in a desired feasible solution, precisely the candidates in $S$ are deleted from~$C'$. 

For a fixed subset $S \subseteq C'$, we determine whether it is possible to expand $S$ by including at most $k - \abs{S}$ candidates from $C \setminus (C' \cup \{p\})$ such that $p$ becomes the successive winner of $(C \setminus S, V)$ with respect to the restriction of $\rhd$ to $C \setminus S$. It follows that the given instance $I$ is a {\yesins} if and only if there exists a subset $S \subseteq C'$ for which the answer to the above question is {\yes}.

Our algorithm proceeds as follows. For each enumerated subset $S$, we execute the following steps:

\begin{enumerate}
    \item[(1)] If $p$ is the successive winner of $(C \setminus S, V)$ with respect to $\rhd$, we conclude that $I$ is a \textsc{Yes} instance.

\item[(2)] Otherwise, we exhaustively apply the following procedure:
\end{enumerate}

\noindent \textbf{Procedure:} If the successive winner of $(C \setminus S, V)$ with respect to $\rhd$ precedes $p$ in $\rhd$, add the successive winner to $S$.
\medskip

After the above procedure has been applied exhaustively, if $p$ is the successive winner of $(C \setminus S, V)$ and $\abs{S} \leq k$, we conclude that $I$ is a {\yesins}. Otherwise, we discard the current subset $S$.

If all subsets $S \subseteq C'$ are discarded, we conclude that the given instance $I$ is a {\noins}.

Since there are at most $2^{\ell}$ subsets $S$ to enumerate, and the above procedure for each subset~$S$ can be executed in polynomial time, the overall algorithm runs in time $\bigos{2^{\ell}}$. 
\end{proof}

Note that the {\wbhns} reduction of {\prob{CCAC}}-Successive in the proof of Theorem~\ref{thm-ccac-suc-wbh-solution-size} can be adapted to establish the {\wbhns} of {\prob{DCAC}}-Successive by designating~$q$ as the distinguished candidate. In the following, we present a simplified variant of this reduction to show a similar result for a more restrictive case.

\begin{theorem}
\label{thm-dcac-suc-wbh}
{\prob{DCAC}}-{\memph{Successive}} is {\memph\wbh} with respect to the number of added candidates. Moreover, this holds even when the distinguished candidate is the only registered candidate and is the first one in the agenda.
\end{theorem}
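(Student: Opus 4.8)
The plan is to prove the theorem by a parameterized reduction from \prob{RBDS}, reusing but substantially streamlining the construction in the proof of Theorem~\ref{thm-ccac-suc-wbh-solution-size}. Given an instance $(G, \kappa)$ of \prob{RBDS} with bipartition $(R, B)$ (we may assume $|B|\geq \kappa$ and $R\neq\emptyset$, as otherwise the instance is trivial), I would let $p$ be the \emph{only} registered candidate, create one unregistered candidate for each blue vertex (so $D = B$), take the agenda $\rhd = (p, \overrightarrow{B})$ so that $p$ occupies the first position, and set the budget $k = \kappa$. The votes would be: for each red vertex $r\in R$, one vote $\succ_r$ with preference $\bigl(\overrightarrow{B}[N_G(r)]\bigr)\Succ p\Succ \bigl(\overrightarrow{B}\setminus N_G(r)\bigr)$; together with $|R|$ ``padding'' votes each with preference $p\Succ\overrightarrow{B}$. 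Since $p = \rhd[1]$, the definition of the successive procedure gives that, after adding any $D'\subseteq B$, candidate $p$ is the successive winner of $(\{p\}\cup D', V)$ if and only if $p$ majority-dominates $D'$, i.e.\ if and only if a \emph{strict} majority of the $2|R|$ votes rank $p$ above every candidate of $D'$. Observe the basic equivalence: a vote $\succ_r$ ranks $p$ above all of $D'$ exactly when $D'\cap N_G(r)=\emptyset$, while the padding votes always rank $p$ above all of $D'$.

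Next I would check both directions of the correctness using this characterization. For the forward direction, if $B'\subseteq B$ with $|B'| = \kappa$ dominates $R$, then each $\succ_r$ has some blue neighbour of $r$ lying in $B'$ ranked above $p$, so the only votes ranking $p$ above all of $B'$ are the $|R|$ padding votes --- exactly half of the $2|R|$ votes --- whence $p$ does not majority-dominate $B'$ and is therefore not the successive winner; adding $B'$ thus witnesses a yes-instance of \prob{DCAC}-Successive. For the converse, if adding some $B'\subseteq B$ with $|B'|\leq \kappa$ prevents $p$ from winning, then at most $|R|$ votes rank $p$ above all of $B'$; as the $|R|$ padding votes already do, no $\succ_r$ can, which forces $B'\cap N_G(r)\neq\emptyset$ for every $r\in R$, so $B'$ dominates $R$; padding $B'$ up to size exactly $\kappa$ (possible since $|B|\geq\kappa$) yields a solution to \prob{RBDS}. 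The number of added candidates is bounded by $k=\kappa$, and \prob{RBDS} is \wbh with respect to $\kappa$, so this is a valid parameterized reduction; in the constructed instance $p$ is simultaneously the only registered candidate and the first candidate in the agenda, giving the claimed strengthening.

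The only delicate point --- and it is a mild one --- is calibrating the number of padding votes so that the majority threshold flips precisely at the moment $B'$ becomes a dominating set and not one step earlier. With $|R|$ padding votes and $|R|$ red-vertex votes the total $2|R|$ is even, so ``exactly half rank $p$ first'' already destroys the strict majority that \emph{majority-dominates} demands; and leaving even a single red vertex undominated pushes the count of votes ranking $p$ first up to $|R|+1$, a strict majority of $2|R|$, so $p$ wins again. Everything else is routine: the stated equivalence for $\succ_r$, the well-definedness of the successive winner (the last agenda position majority-dominates the empty set), and the polynomial-time computability of the construction. I would close by remarking that the variant with $p$ made the distinguished candidate was already noted informally before the theorem as an adaptation of the proof of Theorem~\ref{thm-ccac-suc-wbh-solution-size}; the construction above is what is needed to additionally force the single-registered-candidate restriction.
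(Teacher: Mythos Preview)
Your proposal is correct and follows essentially the same construction as the paper's own proof: the paper likewise takes $C=\{p\}$, $D=B$, $\rhd=(p,\overrightarrow{B})$, $k=\kappa$, and the same red-vertex votes $\succ_r$, the only difference being that the paper uses $|R|-1$ padding votes (for an odd total of $2|R|-1$) whereas you use $|R|$ (for an even total of $2|R|$ and a tie at the threshold). Both calibrations work for the same reason, and the correctness arguments are identical in structure.
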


\begin{proof}
We prove the theorem by slightly modifying the reduction for {\prob{CCAC}}-Successive presented in the proof of Theorem~\ref{thm-ccac-suc-wbh-solution-size}. 
Specifically, given an instance $(G, \kappa)$ of {\prob{RBDS}}, where~$G$ is a bipartite graph with the bipartition $(R, B)$, we construct an instance of {\prob{DCAC}}-Successive as follows. We create only one registered candidate~$p$. That is, $C=\{p\}$. Then, for each blue vertex $b\in B$, we create one unregistered candidate denoted still by the same symbol for simplicity. Let $D=B$ be the set of the unregistered candidates. The agenda is $\rhd=(p, \overrightarrow{B})$. We create the following votes:
\begin{itemize}
\item $\abs{R}-1$ votes with the preference $p\Succ \overrightarrow{B}$; and
\item for each $r\in R$,  one vote~$\succ_r$ with the preference
\[\left(\overrightarrow{B}[N_G(r)]\right) \Succ p \Succ \left(\overrightarrow{B}\setminus N_G(r)\right).\]
\end{itemize}
Let $V_R=\{\succ_r\, \setmid r\in R\}$ denote the multiset of votes corresponding to the red vertices, and let~$V$ denote the multiset of the above created $2\abs{R}-1$ votes. We set $k=\kappa$. The {\prob{DCAC}}-Successive instance is  $(C, p, D, V, \rhd, k)$.

The construction can clearly be completed in polynomial time. We prove its correctness as follows.

$(\Rightarrow)$ Suppose that there is a $B'\subseteq B$ such that $\abs{B'}=\kappa$ and~$B'$ dominates~$R$ in~$G$. We show that~$p$ is not the successive winner of $(B'\cup \{p\}, V)$ with respect to~$\rhd$. In fact, as~$B'$ dominates~$R$, for every vote $\succ_r$ corresponding to a red vertex $r\in R$, there exists $b\in B'$ such that~$b'$ dominates~$r$ in~$G$, and hence~$b$ is ranked before~$p$ in~$\succ_r$. This implies that the $\abs{R}-1$ votes in $V\setminus V_R$ are all those who rank~$p$ before~$B'$. As we have in total $2\abs{R}-1$ votes and $p$ is in the first place of the agenda~$\rhd$,~$p$ cannot be the successive winner of $(B'\cup \{p\}, V)$ with respect to~$\rhd$.

$(\Leftarrow)$ Suppose there exists a subset $B' \subseteq B$ such that $\abs{B'} \leq k$ and $p$ is not the successive winner of $(B' \cup \{p\}, V)$ with respect to $\rhd$. We aim to show that $B'$ dominates $R$ in $G$.

Assume, for the sake of contradiction, that there exists $r \in R$ such that $r \not\in N_G(B')$. Observe that the distinguished candidate $p$ is ranked before~$B'$ in $\succ_r$. 
Together with the votes from $V \setminus V_R$, this results in at least $\abs{R}$ votes ranking $p$ in the first position in the election $(B' \cup \{p\}, V)$. Since $p$ is the first candidate in the agenda $\rhd$, it follows that $p$ must be the successive winner of $(B' \cup \{p\}, V)$---a contradiction.
\end{proof}

Next, we show that the same problem is tractable if the distinguished candidate has a back seat in the agenda.

\begin{theorem}
\label{thm-dcac-suc-fpt-successors}
{\prob{DCAC}}-Successive can be solved in time~$\bigos{2^{\ell}}$, where~$\ell$ is the number of successors of the distinguished candidate in the agenda.
\end{theorem}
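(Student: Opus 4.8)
The plan is to exploit the characterization that a candidate is the successive winner of an election and an agenda exactly when it majority-dominates the set of all its successors and no earlier candidate does. Thus, for a candidate set $C\cup D'$ with $D'\subseteq D$, the distinguished candidate~$p$ fails to be the successive winner of $(C\cup D',V)$ if and only if either \textbf{(a)}~$p$ does not majority-dominate its successors in $C\cup D'$, or \textbf{(b)}~some candidate preceding~$p$ in~$\rhd$ majority-dominates all of its successors in $C\cup D'$, so that the winner precedes~$p$. As a preprocessing step, if~$p$ is not already the successive winner of $(C,V)$ (votes and agenda restricted to~$C$), we output {\yes} immediately with $D'=\emptyset$; hence we may assume~$p$ is the successive winner of $(C,V)$.

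For case~(b), I would first argue that adding candidates can never make an \emph{original} candidate $c\in C$ with $c\rhd p$ the winner: enlarging~$D'$ only enlarges the set of~$c$'s successors, so if~$c$ did not majority-dominate its successors in~$C$ it still does not in $C\cup D'$. Hence case~(b) can only be realized through an \emph{added} candidate $d\in D$ preceding~$p$, and, by the same monotonicity, for any such~$d$ the smallest useful addition is $D'=\{d\}$ itself (adding candidates after~$d$ only hurts~$d$, and adding candidates before~$d$ does not help~$d$ win). So case~(b) succeeds if and only if $k\geq 1$ and there is some $d\in D$ with $d\rhd p$ such that~$d$ is the successive winner of $(C\cup\{d\},V)$ with respect to $\rhd$ restricted to $C\cup\{d\}$; this is tested in polynomial time by trying all such~$d$.

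For case~(a), only the candidates of~$D$ that \emph{succeed}~$p$ matter, since adding predecessors of~$p$ leaves the successor set of~$p$ unchanged; let $\widehat D$ be this set, and note $\abs{\widehat D}\leq \ell$ because~$p$ has exactly~$\ell$ successors in $C\cup D$. Moreover, because~$p$ is the successive winner of $(C,V)$, adding any $D'\subseteq\widehat D$ keeps every predecessor of~$p$ non-dominating (again the successor set only grew and the added candidates all follow~$p$), so the winner of $(C\cup D',V)$ is~$p$ or a later candidate; hence~$p$ fails to win $(C\cup D',V)$ exactly when~$p$ does not majority-dominate (the successors of~$p$ in~$C$)~$\cup\,D'$. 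I would therefore enumerate all subsets $D'\subseteq\widehat D$ with $\abs{D'}\leq k$---at most~$2^{\ell}$ of them---and, for each, check in polynomial time whether~$p$ majority-dominates its successors in $(C\cup D',V)$; if some~$D'$ makes~$p$ fail, output {\yes}. If neither case~(a) nor case~(b) succeeds, output {\no}. Since no mixed solution is ever needed (predecessors are irrelevant to case~(a), and a single added predecessor suffices for case~(b), so restricting a general witness $D'$ to $D'\cap\widehat D$ or to some $\{d\}\subseteq D'$ loses nothing), this is correct, and the total running time is $\bigos{2^{\ell}}$.

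The main obstacle is getting the structural reductions right: one must carefully verify, using only the monotonicity of the predicate ``majority-dominates all of its successors'' under inserting candidates after a fixed point, that (i)~original predecessors of~$p$ never become winners, (ii)~a single added predecessor always suffices for case~(b), and (iii)~adding successor-candidates never rescues a predecessor of~$p$. The genuine source of the exponential factor is case~(a): deciding the fewest successor-candidates to add in order to block~$p$ is a set-cover-type problem (each added successor~$d$ spoils exactly those votes---among the votes currently ranking~$p$ above all of~$p$'s successors---in which $d\rhd p$), so brute-force enumeration over subsets of~$\widehat D$ appears unavoidable, and~$\ell$ is precisely the quantity that bounds this enumeration.
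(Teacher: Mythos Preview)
Your proposal is correct and follows essentially the same approach as the paper: both enumerate all subsets of the at most~$\ell$ unregistered successors of~$p$ and, separately, test whether a single unregistered predecessor can be made the winner. The paper nests the predecessor check inside the subset enumeration (for each enumerated~$S$ with $|S|<k$ it tries every $d\in D$ preceding~$p$), whereas you decouple the two cases and argue that $S=\emptyset$ already suffices for case~(b); your monotonicity arguments make the correctness explicit where the paper's proof is terse, but the underlying algorithm and running-time analysis are the same.
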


\begin{proof}
Let $I=(C, p, D, V, \rhd, k)$ be an instance of {\prob{DCAC}}-Successive. Define~$D'$ as the set of successors of~$p$ contained in~$D$ with respect to the agenda $\rhd$, and let $\ell = \abs{D'}$. We enumerate all $S\subseteq D'$ of up to~$k$ candidates and, for each enumerated~$S$, proceed as follows: 
If~$p$ is not the successive winner of $(C\cup S, V)$ with respect to ${\rhd}$, we conclude that~$I$ is a {\yesins}. 
Otherwise, two cases arise:
\begin{description}
    \item[Case~1:] $\abs{S}=k$. \hfill 
    
    In this case, we discard $S$.
    
    \item[Case~2:] $\abs{S}<k$. \hfill 
    
    In this case, we check if there is a candidate $d\in D\setminus D'$ such that~$d$ is the successive winner of $(C\cup S\cup \{d\}, V)$ with respect to  ${\rhd}$.  If such a candidate exists,  we conclude that~$I$ is a {\yesins}; otherwise, we discard $S$.
\end{description}

Since there are at most~$2^{\ell}$ possible  choices for~$S$, and for each~$S$, the above algorithm  can be executed  in polynomial time,  the overall running time of the algorithm is $\bigos{2^{\ell}}$.
\end{proof}

Now, we turn to the destructive control by deleting candidates. 
The following corollary is straightforward.

\begin{corollary}
\label{cor-suc-immune-dcdc-first}
Successive is immune to {\prob{DCDC}} if the distinguished candidate is the first in the agenda.
\end{corollary}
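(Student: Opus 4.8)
The plan is to argue directly from the definition of the successive procedure, with essentially no machinery needed. First I would recall what immunity to DCDC means in this setting: assuming $p$ is the successive winner of $(C,V)$ with respect to an agenda $\rhd$ in which $p=\rhd[1]$, one must show that for \emph{every} $C'\subseteq C\setminus\{p\}$, the candidate $p$ is still the successive winner of $(C\setminus C', V)$ with respect to $\rhd$ restricted to $C\setminus C'$. Since the distinguished candidate can never be among the deleted candidates, this covers all instances of DCDC in which $p$ starts first.

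The key observation is that when $p$ occupies the first position of the agenda, the successive winner equals $p$ if and only if $p$ majority-dominates $C\setminus\{p\}$ with respect to $V$; this is immediate from the definition, because $i=1$ is the smallest possible index, so no earlier candidate can pre-empt $p$. Hence from the hypothesis I obtain a submultiset $V^{\star}\subseteq V$ with $\abs{V^{\star}}>\abs{V}/2$ such that $\{p\}\succ (C\setminus\{p\})$ for every $\succ\in V^{\star}$; in plain words, strictly more than half of the votes rank $p$ at the top.

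Next I would record two elementary facts about deleting a set $C'\subseteq C\setminus\{p\}$ of candidates. First, the number of votes is unchanged, since deleting candidates removes no voter. Second, each vote in $V^{\star}$, once restricted to $C\setminus C'$, still ranks $p$ above every remaining candidate, because $\{p\}\succ (C\setminus\{p\})$ trivially implies $\{p\}\succ\bigl((C\setminus C')\setminus\{p\}\bigr)$. Therefore the same (restricted) submultiset witnesses that $p$ majority-dominates $(C\setminus C')\setminus\{p\}$ in the election $(C\setminus C', V)$; if $C'=C\setminus\{p\}$ this is simply the trivial domination of $\emptyset$. As $p$ is still the first candidate on the restricted agenda, $p$ remains the successive winner of $(C\setminus C', V)$. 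Since $C'$ was arbitrary, no deletion of candidates can turn $p$ from a winner into a nonwinner, which is exactly the assertion that Successive is immune to DCDC when the distinguished candidate is first.

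I do not anticipate any genuine obstacle here: the statement is an unpacking of the definition, essentially mirroring the reasoning already used for Corollary~\ref{cor-ccdc-dcac-amd-suc-immune}. The only point that warrants a word of care is the degenerate case in which every non-distinguished candidate is deleted, which is handled by the stated convention that every candidate majority-dominates the empty set.
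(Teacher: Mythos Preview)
Your proposal is correct and matches the paper's treatment: the paper states this corollary as ``straightforward'' without giving a proof, and your argument is precisely the direct unpacking of the successive definition that makes it straightforward. The only minor remark is that your careful handling of the degenerate case $C'=C\setminus\{p\}$ via the empty-set convention is a nice touch that the paper does not spell out.
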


Corollary~\ref{cor-suc-immune-dcdc-first} indicates that an election controller cannot take any effective action if they are limited to performing the candidate deletion operation and the distinguished candidate they aim to prevent from winning is the first in the agenda. However, if the distinguished candidate has at least one predecessor, the destructive election controller has an opportunity to influence the outcome. Nevertheless, as established in the following two theorems, the controller faces an intractable problem.

\begin{theorem}
\label{thm-dcdc-suc-wah-dis-last-solution}
{\prob{DCDC}}-{\memph{Successive}} is {\memph\wah} with respect to the number of deleted candidates, as long as the distinguished candidate is not the first one in the agenda.
\end{theorem}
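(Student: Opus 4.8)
The plan is to give a parameterized (in fact polynomial-time) reduction from \prob{Clique}, which is \wah{} with respect to the clique size~$\kappa$, re-casting the \prob{Clique}-based reduction used for \prob{CCDC}-Successive (Theorem~\ref{thm-ccdc-suc-wah-solution-size}) into a ``destructive'' form. Given a \prob{Clique} instance $(G,\kappa)$ with $G=(\vset,\eset)$ and $m=\abs{\eset}$, I would assume without loss of generality that $\kappa\geq 2$ and $m\geq\binom{\kappa}{2}$ (otherwise the instance is trivially decidable). I construct a \prob{DCDC}-Successive instance whose candidate set contains one candidate per vertex of~$G$, two special candidates~$q$ and~$p$ with~$p$ distinguished, and, when~$p$ is required to sit in position~$i\geq 2$ of the agenda, $i-2$ extra ``dummy'' candidates. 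The agenda is $(\text{dummies},\,q,\,p,\,\overrightarrow{\vset})$, the vertex-candidates being ordered arbitrarily after~$p$, and the budget is $k=\kappa$.

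The votes split into three blocks, totalling $2m+1$ votes. First, $\binom{\kappa}{2}$ votes with preference $p\Succ q\Succ(\text{dummies})\Succ\overrightarrow{\vset}$; second, $m+1-\binom{\kappa}{2}$ votes with preference $q\Succ p\Succ(\text{dummies})\Succ\overrightarrow{\vset}$; and third, for each edge $\ede=\{\vere,\vere'\}$ one vote ranking the two vertex-candidates $\vere,\vere'$ on top, then~$q$, then~$p$, then the dummies, then the remaining vertex-candidates. Crucially, in every vote the dummies lie immediately below~$p$, so no dummy is ever above~$p$. A short count shows that initially no dummy and not~$q$ majority-dominates its successor set ($q$ is on top of all of its successors in only $m+1-\binom{\kappa}{2}\leq m$ votes), whereas~$p$ is on top of all of its successors in exactly $m+1$ votes, a strict majority; hence~$p$ is the successive winner to begin with, so the instance is non-trivially destructive.

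For correctness I would first note that deleting candidates only shrinks~$p$'s successor set, so~$p$ still majority-dominates its successors after any deletion; therefore the only way to unseat~$p$ is to make some candidate preceding~$p$ the successive winner. The dummies can never win (they are below~$p$ in every vote), and deleting~$q$ merely makes~$p$ the first candidate and hence still the winner; so a successful deletion set~$S$ must keep~$q$, and (since dummies are predecessors of~$q$ and irrelevant to~$q$'s condition) may be assumed to consist of vertex-candidates only, with $\abs{S}\leq\kappa$. Then~$q$ is the successive winner iff~$q$ majority-dominates $\{p\}\cup(\vset\setminus S)$, and the votes witnessing this are exactly the $m+1-\binom{\kappa}{2}$ second-block votes together with the edge-votes whose edge has both endpoints in~$S$; so~$q$ wins iff~$S$ spans at least $\binom{\kappa}{2}$ edges, which for $\abs{S}\leq\kappa$ holds iff~$S$ is a clique of size~$\kappa$. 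This gives the desired equivalence, and since $k=\kappa$ the reduction is parameterized.

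The main obstacle I expect is the tension inside the initial election: I need~$q$'s ``on-top-of-all-successors'' count to stay strictly below the majority threshold while~$p$'s count stays strictly above it, even though the edge-votes---which carry the combinatorial content---help neither~$p$ initially nor~$q$'s condition except when a whole edge is deleted. The fix is precisely the first block of $\binom{\kappa}{2}$ votes, in which~$p$ beats~$q$ yet~$p$ is still ranked above all of its own successors; the delicate part is choosing the block sizes so that exactly a $\binom{\kappa}{2}$-edge (i.e.\ $\kappa$-clique) deletion tips~$q$ over the threshold while no smaller or non-clique deletion does. The remaining points---padding to arbitrary positions $i\geq 2$ via the harmless dummies, and verifying that spending budget on dummies or on~$q$ is never helpful---are routine.
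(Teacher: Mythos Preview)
Your proposal is correct and follows essentially the same approach as the paper's proof: both reduce from \prob{Clique} with a vertex-candidate per vertex, special candidates~$q$ and~$p$, three blocks of votes totalling $2m+1$ (sizes $\binom{\kappa}{2}$, $m+1-\binom{\kappa}{2}$, and~$m$ edge-votes), and the same counting argument that~$q$ becomes the successive winner iff the deleted set~$S\subseteq\vset$ spans at least $\binom{\kappa}{2}$ edges. The only cosmetic differences are that the paper places~$q$ last (rather than second) in the $\binom{\kappa}{2}$-block votes, and handles arbitrary positions of~$p$ via a flexible agenda plus a general padding remark rather than your explicit dummy candidates. One small slip: when $i>2$ and you delete~$q$, the dummies are still predecessors of~$p$, so~$p$ is not literally ``the first candidate''---but since you already argued dummies can never win, the conclusion that~$p$ remains the winner is unaffected.
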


\begin{proof}
We prove the theorem via a reduction from the {\prob{Clique}} problem. Let $(G, \kappa)$ be an instance of {\prob{Clique}}, where $G=(\vset, \eset)$. Let $m=\abs{\eset}$ be the number of edges in~$G$. Without loss of generality, we assume that $m\geq \frac{\kappa \cdot (\kappa-1)}{2}>0$. We create an instance $(C, p, V, \rhd, k)$ of {\prob{DCDC}}-Successive as follows. For each vertex in~$G$, we create one candidate denoted by the same symbol for notational brevity. In addition, we create two candidates~$q$ and~$p$, where~$p$ is the distinguished candidate.  Let $C=\vset\cup \{q, p\}$. Let $\rhd$ be an agenda on~$C$ where~$q$ is the first candidate. We create the following votes.
\begin{itemize}
\item First, we create a multiset $V_1$ of $m-\frac{\kappa \cdot (\kappa-1)}{2}+1$ votes, each with the preference
\[q\Succ p\Succ \overrightarrow{\vset}.\]
\item Second, we create a multiset $V_2$ of $\frac{\kappa \cdot (\kappa-1)}{2}$ votes, each with the preference
\[p\Succ \overrightarrow{\vset} \Succ q.\]
\item Third, for each edge $\ede\in \eset$, we create one vote $\succ_{\ede}$ with the preference
\[\overrightarrow{\ede} \Succ q\Succ p\Succ (\overrightarrow{\vset}\setminus \ede).\]
Let $V_{\eset}=\{\succ_{\ede}\, \setmid \ede\in \eset\}$ be the set of the votes corresponding to the edges in~$G$. We have that $\abs{V_{\eset}}=m$.
\end{itemize}
Let $V=V_1\cup V_2\cup V_{\eset}$ be the multiset of all the $2m+1$ votes created above. Let $k=\kappa$. 
%The instance of {\prob{DCDC}}-Successive is $(C, p, V, \rhd, k)$. 
%It is easy to check that the distinguished candidate~$p$ is the successive winner of $(C, V)$ with respect to~$\rhd$.
We prove the correctness of the reduction below.

$(\Rightarrow)$ Suppose that there is a clique~$K$ of~$\kappa$ vertices in~$G$. Defining $\elec=(C\setminus K, V)$, we claim that~$q$ is the successive winner of $\elec$ with respect to $\rhd$.  Let~$\eset(K)$ be the set of edges whose both endpoints are in~$K$, and let $V_{\eset(K)}=\{\succ_{\ede}\, \setmid \ede\in \eset(K)\}$ be the set of votes corresponding to~$\eset(K)$. Clearly, $\abs{V_{\eset(K)}}=\abs{\eset(K)}=\frac{\kappa \cdot (\kappa-1)}{2}$. According to the construction of the votes, after deleting candidates from~$K$, each vote in $V_{\eset(K)}$ ranks~$q$ in the first place. Therefore, in the election~$\elec$ there are at least $\abs{V_{\eset(K)}}+\abs{V_1}=m+1$ votes ranking~$q$ in the first place, implying that~$q$ is the successive winner of~$\elec$.

$(\Leftarrow)$ Suppose that there is a subset $C'\subseteq C\setminus \{p\}$ of at most~$k$ candidates such that~$p$ is not the successive winner of $(C\setminus C', V)$  with respect to $\rhd$. Let $\elec=(C\setminus C', V)$. Observe that~$p$ majority-dominates~$\vset$, the set containing all possible successors of~$p$. It follows that $q$ is the successive winner of~$\elec$, and $C'\subseteq U$. Then, as~$q$ is the first candidate in the agenda~$\rhd$ and $\abs{V}=2m+1$, there are at least $m+1$ votes ranking~$q$ in the first place in~$\elec$. This implies that, in the election~$\elec$, at least $\frac{\kappa \cdot (\kappa-1)}{2}$ votes in~$V_{\eset}$ rank~$q$ in the top. Let~$\eset'$ be the set of the edges corresponding to any arbitrary $\frac{\kappa  \cdot (\kappa-1)}{2}$ votes in~$\elec$ ranking~$q$ in the top. By the construction of the votes, we know that both endpoints of each edge in~$\eset'$ are contained in~$C'$. As $\abs{\eset'}=\frac{\kappa\cdot (\kappa-1)}{2}$ and $\abs{C'}\leq \kappa$, this is possible only if~$C'$ forms a clique of size~$\kappa$ in~$G$.
\end{proof}

\begin{theorem}
\label{thm-dcdc-suc-wah-remaining-one-predecessor}
{\prob{DCDC}}-{\memph{Successive}} is {\memph\wah} with respect to the remaining candidates. This result holds as long as the distinguished candidate is not the first in the agenda.
\end{theorem}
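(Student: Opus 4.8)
The plan is to prove this by a parameterized reduction from {\prob{Biclique}} restricted to bipartite graphs, which is {\wah} with respect to~$\kappa$. This parallels the reduction used for {\prob{CCDC}}-Successive in the proof of Theorem~\ref{label-ccdc-suc-wah-dual-ssize}, but is retuned so that the distinguished candidate~$p$ loses (rather than wins) precisely when a large biclique exists, and so that a dedicated candidate~$q$, placed first in the agenda, becomes a majority-first candidate exactly in that case.

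\textbf{The construction.} Given an instance $(G,\kappa)$ of {\prob{Biclique}} with bipartition $(X,Y)$, I would first assume (without loss of generality, since it does not affect the {\wahns}) that $\kappa\ge 1$ and that no vertex of~$Y$ is adjacent to every vertex of~$X$; the latter can be enforced by adding~$\kappa$ isolated vertices to the $X$-side, which cannot lie in any biclique of size~$\kappa$. Put $m=\abs{X}$ and $n=\abs{Y}$. The candidate set is $C=X\cup\{q,p\}$ with~$p$ distinguished, the agenda is $\rhd=(q,p,\overrightarrow{X})$, and the budget is $k=m-\kappa$; hence the number of surviving candidates is $\abs{C}-k=\kappa+2$. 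The votes are: a multiset~$V_1$ of $n-\kappa+1$ votes $q\Succ p\Succ\overrightarrow{X}$; a multiset~$V_2$ of~$\kappa$ votes $p\Succ q\Succ\overrightarrow{X}$; and for each $y\in Y$ one vote~$\succ_y$ with preference $(\overrightarrow{X}\setminus N_G(y))\Succ q\Succ p\Succ(\overrightarrow{X}[N_G(y)])$. There are $2n+1$ votes, so a strict majority means at least $n+1$ votes. I would record two invariants: \textbf{(i)} every vote of $V_1\cup V_2$ ranks~$p$ before all of~$X$, so after deleting any $D_X\subseteq X$ the candidate~$p$ majority-dominates its surviving successors (here $\abs{V_1}+\abs{V_2}=n+1$); \textbf{(ii)} after deleting $D_X\subseteq X$, a vote~$\succ_y$ ranks~$q$ first iff $X\setminus D_X\subseteq N_G(y)$, while $V_2$-votes never rank~$q$ first and all $V_1$-votes do. Consequently~$q$ is the successive winner iff at least $n+1$ votes rank~$q$ first, i.e.\ $\abs{\{y\in Y:X\setminus D_X\subseteq N_G(y)\}}\ge\kappa$; otherwise~$p$ is the successive winner by invariant~(i). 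With $D_X=\emptyset$ and the ``no $y$ dominates $X$'' assumption,~$p$ is the successive winner of $(C,V)$.

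\textbf{Correctness.} For the forward direction, if $(G,\kappa)$ is a {\yesins} with witness biclique $(X',Y')$, I would delete $X\setminus X'$ (exactly $m-\kappa$ candidates); since every $y\in Y'$ is adjacent to all of~$X'$, at least $(n-\kappa+1)+\kappa=n+1$ votes rank~$q$ first, so~$q$ wins and~$p$ does not. For the converse, suppose $D\subseteq C\setminus\{p\}$ with $\abs{D}\le k$ prevents~$p$ from winning. If $q\in D$, then the survivors preceding~$p$ in the agenda are at most dummy candidates (none here), which never majority-dominate, while~$p$ always majority-dominates its survivors by~(i); hence~$p$ would still win — contradiction (for the base case this is exactly Corollary~\ref{cor-suc-immune-dcdc-first}). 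So $q\notin D$; writing $D_X=D\cap X$ we get $\abs{X\setminus D_X}\ge\kappa$. Again by~(i), $p$ majority-dominates its successors, so the only way~$p$ can fail to win is that~$q$ majority-dominates its successors, which by~(ii) forces $\abs{\{y:X\setminus D_X\subseteq N_G(y)\}}\ge\kappa$; choosing any $\kappa$-subset $X''\subseteq X\setminus D_X$ and any~$\kappa$ of its common neighbours as~$Y''$ gives a biclique of size~$\kappa$ in~$G$. To obtain the claim for~$p$ in the $i$-th position for each fixed $i\ge 2$, I would insert $i-2$ extra candidates immediately after~$q$ in the agenda and rank them last in every vote; such candidates are never ranked first and are never the last candidate, so they never majority-dominate their (nonempty) successor sets and leave every argument above intact, while $\abs{C}-k$ increases only by the constant $i-2$.

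\textbf{Main obstacle.} The delicate part is the converse direction's case analysis over all deletion strategies — deleting~$q$, deleting fewer than~$k$ candidates, or deleting the dummy predecessors — and showing that none of these ``cheap'' options can dislodge~$p$, so that the controller is forced to make~$q$ a strict-majority-first candidate, which is exactly the step that needs the biclique. Invariant~(i), i.e.\ keeping~$p$ an unconditional majority-dominator of its successors, is the structural device that collapses all these cases into the single relevant condition on~$\{y:X\setminus D_X\subseteq N_G(y)\}$.
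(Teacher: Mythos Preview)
Your proposal is correct and follows essentially the same route as the paper's proof: a reduction from \prob{Biclique} with candidate set $X\cup\{q,p\}$, the same vote groups (up to the harmless placement of~$q$ in the $p$-first block), and the same key invariant that~$p$ always majority-dominates its successors, forcing the controller to make~$q$ a strict-majority-first candidate and hence to reveal a biclique. The only presentational difference is that you fix the agenda $(q,p,\overrightarrow{X})$ and then pad with dummy candidates to place~$p$ at an arbitrary position $i\geq 2$, whereas the paper simply allows any agenda with~$q$ first and relies on the observation that no $x\in X$ can majority-dominate a set containing~$p$; both devices achieve the same generality.
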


\begin{proof}
We prove the theorem by a reduction from {\prob{Biclique}}. Let $(G, \kappa)$ be an instance of {\prob{Biclique}}, where~$G$ is a bipartite graph with the bipartition $(X, Y)$. Let $m=\abs{X}$ and let $n=\abs{Y}$. Without loss of generality, we assume that $\min\{m, n\}>\kappa>1$. We create an instance of {\prob{DCDC}}-Successive as follows. First, for each vertex $x\in X$, we create one candidate denoted by the same symbol for notational brevity. Additionally, we introduce two candidates,~$q$ and~$p$, where~$p$ is the distinguished candidate. Let $C = X \cup \{q, p\}$. There are $m + 2$ candidates in total. Let~$\rhd$ be an agenda on~$C$ such that~$q$ is the first candidate. We create the following votes.
\begin{itemize}
\item First, we create a multiset~$V_1$ of~$\kappa$ votes, each with the preference
\[p\Succ \overrightarrow{X}\Succ q.\]
\item Second, we create a multiset~$V_2$ of~$n-\kappa+1$ votes, each with the preference
\[q\Succ p\Succ \overrightarrow{X}.\]
\item Third, for each vertex $y\in Y$, we create one vote~$\succ_y$ with the preference
\[\left(\overrightarrow{X}\setminus N_G(y)\right) \Succ q\Succ p\Succ \left(\overrightarrow{X}[N_G(y)]\right).\]
\end{itemize}
For a given $Y'\subseteq Y$, we use $V_{Y'}=\{\succ_y\, \setmid y\in Y'\}$ to denote the multiset of the votes corresponding to~$Y'$.
Let $V=V_1\cup V_2\cup V_Y$ be the multiset of all the~$2n+1$ votes created above. 
Let $k=m-\kappa$. The instance of {\prob{DCDC}}-Successive is $(C, p, V, \rhd, k)$. 

In the following, we prove the correctness of the reduction, i.e., the {\prob{Biclique}} instance is a {\yesins} if and only if there is a subset $C'\subseteq C$ such that $p\in C'$, $\abs{C'}\geq \kappa+2$, and~$p$ is not the successive winner of $(C', V)$ with respect to~${\rhd}$.

$(\Rightarrow)$ Assume that there are $X'\subseteq X$ and $Y'\subseteq Y$ such that $\abs{X'}=\abs{Y'}=\kappa$, and $X'\cup Y'$ induces a complete bipartite subgraph of~$G$. Let $C'=X'\cup \{p, q\}$ and let $\elec=(C', V)$. We claim that~$q$ is the successive winner of~$\elec$. Let $y$ be a vertex in~$Y'$. By the definition of~$\succ_y$,~$q$ is ranked before $X'\cup \{p\}$ in~$\succ_y$. Therefore, in total, there are at least $\abs{Y'}+\abs{V_2}=n+1$ votes ranking~$q$ before $X'\cup \{p\}$ in~$\elec$. Since $\abs{V}=2n+1$ and $X'\cup \{p\}$ is the set of successors of~$q$ in~$\elec$, we conclude that~$q$ is the successive winner of~$\elec$.

$(\Leftarrow)$ Observe that~$p$ majority-dominates the set of all its successors in $(C, V)$. Therefore, if~$p$ is not the winner after some candidates are deleted, it must be the case that~$q$ becomes the winner of the resulting election. As a consequence, let us assume that there is a subset $X'\subseteq X$ of at least~$\kappa$ candidates such that~$q$ is the successive winner of $(X'\cup \{q, p\}, V)$ with respect to ${\rhd}$. Similar to the proof for the~$\Rightarrow$ direction, we know that there are at least~$\kappa$ votes in~$V_Y$ ranking~$q$ before~$X'\cup \{p\}$. Let~$V_{Y'}$, where $Y'\subseteq Y$, be the multiset of such votes. 
By the construction of the votes, it holds that $X'\subseteq N_G(y)$ for all $y\in Y'$. This implies that $G[X'\cup Y']$ is a complete bipartite graph. As $\abs{X'}\geq \kappa$ and $\abs{Y'}\geq \kappa$, the instance of {\prob{Biclique}} is a {\yesins}.
\end{proof}

We have obtained numerous intractability results ({\nphns}, {\wahns}, or {\wbhns}) for the special cases where the distinguished candidate has only a constant number of predecessors or successors. These hardness results lead to the following corollaries:

\begin{corollary}
The following problems are {\memph{\paranph}} with respect to the number of predecessors of the distinguished candidate:
\begin{itemize}
\item {\prob{CCAV}}-$\tau$ and {\prob{CCDV}}-$\tau$ for each $\tau\in \{\text{Amendment}, \text{\Famend}\}$;
\item {\prob{X}}-Successive for each $\prob{X}\in \{\prob{CCAC}, \prob{DCAC}, \prob{CCDC}, \prob{DCDC}\}$; and
\item {\prob{X}}-Full-Amendment for each $\prob{X}\in \{\prob{DCAV}, \prob{DCDV}\}$.
\end{itemize}
\end{corollary}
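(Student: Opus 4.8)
The plan is to observe that a parameterized problem is {\paranph} as soon as it is {\nph} for some fixed value of the parameter, so that for each of the ten problems listed it suffices to exhibit a constant $c$ for which the restriction to instances whose distinguished candidate has exactly $c$ predecessors is {\nph}. First I would recall from the ``Some Remarks'' subsection that all eight standard control problems under the $h$-amendment and the successive procedures lie in {\np}; hence every {\wahns} or {\wbhns} statement already proved for such a problem, under a restriction fixing the position of the distinguished candidate, immediately yields the desired {\nphns} with a constant number of predecessors. The whole argument then reduces to matching each problem to the appropriate earlier theorem and noting which fixed value of the predecessor count it supplies.

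Concretely, for {\prob{CCAV}}-$\tau$ and {\prob{CCDV}}-$\tau$ with $\tau\in\{\text{Amendment},\text{\Famend}\}$ I would invoke Corollary~\ref{cor-many-np} directly for the amendment case, and for the full-amendment case the same argument via Lemma~\ref{lem-a} (applied with $h=m-1$) together with the known hardness of {\prob{CCAV}}-Condorcet and {\prob{CCDV}}-Condorcet recorded in Table~\ref{tab-resulst-summary}; in all four cases this gives {\nphns} already when the distinguished candidate is ranked first, i.e.\ with $c=0$. For the successive problems, {\prob{CCDC}}-Successive and {\prob{DCAC}}-Successive are {\nph} with the distinguished candidate first by Theorems~\ref{thm-ccdc-suc-wah-solution-size} and~\ref{thm-dcac-suc-wbh} (for {\prob{DCAC}}, the distinguished candidate is first within $C\cup D$ in that reduction, which is the convention in force), so $c=0$ works there too. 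For {\prob{CCAC}}-Successive and {\prob{DCDC}}-Successive the procedure is immune to the respective control type when the distinguished candidate is first (Corollaries~\ref{cor-immune-suc-first} and~\ref{cor-suc-immune-dcdc-first}), so $0$ is unavailable; instead I would use Theorems~\ref{thm-ccac-suc-wbh-solution-size} and~\ref{thm-dcdc-suc-wah-dis-last-solution}, whose reductions establish {\nphns} whenever the distinguished candidate is not first and in fact allow it to be placed in the second position (so $c=1$; for {\prob{CCAC}} this is again counted within $C\cup D$, using that the relative order of the non-first candidates may be chosen freely). Finally, {\prob{DCAV}}-{\Famend} and {\prob{DCDV}}-{\Famend} are the $(m-h)$-amendment problems with $h=1$; since Corollary~\ref{cor-dcav-dcdv-amd-suc-p} makes them polynomial-time solvable when the distinguished candidate is first, I would take $c=1$ and invoke Theorems~\ref{thm-dcav-suc-np} and~\ref{thm-dcdv-suc-np}, which give {\nphns} whenever the distinguished candidate is not first, in particular when it has exactly one predecessor.

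The hard part, such as it is, is purely bookkeeping: one must ensure that, for each of the ten problems, a constant number of predecessors is chosen for which a hardness result genuinely exists. The only subtlety is that this forces the value $1$ rather than $0$ precisely for those problems whose procedure is immune (or trivially tractable) when the distinguished candidate is first, and one must keep in mind that for {\prob{CCAC}} the position of the distinguished candidate is measured within $C\cup D$ rather than within $C$. Once these matches are made, the corollary is immediate.
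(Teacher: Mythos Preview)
Your proposal is correct and takes essentially the same approach as the paper, which simply states that the corollary follows from the previously established intractability results for instances where the distinguished candidate has a constant number of predecessors. Your bookkeeping is in fact more explicit than the paper's own treatment; the only minor imprecision is that the paper states the $C\cup D$ convention only for {\prob{CCAC}}, not {\prob{DCAC}}, though this does not affect the substance since in Theorem~\ref{thm-dcac-suc-wbh} the distinguished candidate is first in the full agenda anyway.
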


\begin{corollary}
The following problems are {\memph{\paranph}} with respect to the number of successors of the distinguished candidate:
\begin{itemize}
\item {\prob{CCAV}}-$\tau$ and {\prob{CCDV}}-$\tau$ for each $\tau\in \{\text{Amendment}, \text{\Famend}, \text{Successive}\}$;
\item {\prob{CCAC}}-$\tau$ for each $\tau\in \{\text{\Famend}, {\text{Successive}}\}$;
\item {\prob{DCAV}}-$\tau$ and {\prob{DCDV}}-$\tau$ for each $\tau\in \{\text{Amendment}, \text{\Famend}\}$; and
\item {\prob{DCDC}}-Successive.
\end{itemize}
\end{corollary}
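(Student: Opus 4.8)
The plan is to invoke the definition of para-{\np}-hardness: a parameterized problem is {\paranph} as soon as it is {\np}-hard for a single fixed value of the parameter. For every problem in the list I will point to a hardness reduction, already established in this section, that produces instances in which the distinguished candidate $p$ is the last candidate in the agenda, so that $p$ has exactly zero successors --- a fixed constant. As recorded in the remarks at the end of Section~\ref{sec-pre}, every {\wahns}/{\wbhns} result for these problems is in fact an {\npcns} result (the problems lie in {\np} and all the reductions run in polynomial time), and the sources are reductions from the {\np}-hard problems {\prob{RBDS}}, {\prob{Clique}}, {\prob{Biclique}}, and {\prob{Perfect Code}} of Subsection~\ref{subsec-supporting-problems}. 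Hence each cited reduction witnesses {\np}-hardness of the corresponding control problem restricted to instances with zero successors, which is exactly what {\paranph}-ness with respect to the number of successors of the distinguished candidate requires.

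Concretely, I would walk through the list and pin down the witnessing theorem in each case. The amendment cases {\prob{CCAV}}- and {\prob{CCDV}}-Amendment and their destructive counterparts {\prob{DCAV}}- and {\prob{DCDV}}-Amendment are handled by Theorems~\ref{thm-ccav-amd-np}, \ref{thm-ccdv-amd-np}, \ref{thm-dcav-amd-np}, and~\ref{thm-dcdv-amd-np}, each stated to hold when $p$ is the last candidate in the agenda. The full-amendment cases {\prob{CCAV}}- and {\prob{CCDV}}-{\Famend} follow from Theorems~\ref{thm-ccav-suc-np} and~\ref{thm-ccdv-suc-np} instantiated with $h=1$ (so that $(m-1)$-Amendment is the full-amendment procedure), again with $p$ last; {\prob{CCAC}}-{\Famend} follows from Theorem~\ref{thm-ccac-suc-np} with $h=1$, whose constructed agenda is literally $(\overrightarrow{R},\overrightarrow{B},p)$. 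The successive cases {\prob{CCAV}}- and {\prob{CCDV}}-Successive follow from Theorems~\ref{thm-ccav-suc-wah-last} and~\ref{thm-ccdv-suc-wah-last}, stated for $p$ last. The remaining four problems --- {\prob{CCAC}}-Successive (Theorem~\ref{thm-ccac-suc-wbh-solution-size}), {\prob{DCAV}}- and {\prob{DCDV}}-{\Famend} (Theorems~\ref{thm-dcav-suc-np} and~\ref{thm-dcdv-suc-np} with $h=1$), and {\prob{DCDC}}-Successive (Theorem~\ref{thm-dcdc-suc-wah-dis-last-solution}) --- are stated ``as long as the distinguished candidate is not the first candidate'', which by the convention fixed in Section~\ref{sec-pre} includes the case where $p$ occupies the last position; for {\prob{CCAC}} one also recalls that the position of $p$ is measured within $C\cup D$, which is exactly the quantity those reductions control.

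The only step needing genuine care is checking, for the four problems whose hardness is phrased ``$p$ not first'' rather than ``$p$ last'', that placing $p$ in the final agenda slot does not break the gadget. Here the argument is short: in each of those reductions the only candidate whose agenda position is pinned is the auxiliary candidate $q$ (fixed to be first), and the correctness analysis uses solely $q$'s position together with head-to-head comparisons or majority-domination relations determined by the votes. When $p$ is moved to the last position, every predecessor of $p$ other than $q$ is still strictly majority-dominated on its successor set --- because $p$ is majority-preferred to each such candidate and $p$ now lies among its successors --- so no such predecessor can become the successive / $(m-h)$-amendment winner, and thus $q$ remains the unique candidate through which $p$ can be promoted (constructive side) or demoted (destructive side). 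Consequently the $(\Rightarrow)$ and $(\Leftarrow)$ equivalences of those proofs transfer verbatim with $p$ last. Once this is verified --- essentially re-reading four proofs with $p$ in the last slot, which I expect to be the main though routine obstacle --- the corollary is immediate from the definition of {\paranph}; everything else is bookkeeping over the table of hardness theorems.
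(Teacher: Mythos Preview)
Your proposal is correct and matches the paper's own (implicit) argument: the paper simply states that the listed hardness theorems ``lead to'' this corollary, relying on exactly the observation you make---each cited reduction already yields {\np}-hardness with the distinguished candidate in the last agenda position (zero successors), which is the definition of {\paranph} for this parameter.

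One minor remark: your third paragraph is unnecessary and slightly at odds with what you correctly observe in the second. By the convention you cite from Section~\ref{sec-pre}, when a theorem is stated ``as long as~$p$ is not first'' the paper \emph{means} that the reduction works for every fixed position $i\geq 2$, including the last; the proofs of Theorems~\ref{thm-dcav-suc-np}, \ref{thm-dcdv-suc-np}, \ref{thm-ccac-suc-wbh-solution-size}, and~\ref{thm-dcdc-suc-wah-dis-last-solution} explicitly leave~$p$'s position free (only~$q$ is pinned to the first slot), so no re-verification is required. You may safely drop that paragraph.
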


We also note that if a problem is shown to be intractable when the distinguished candidate~$p$ has~$\ell$ predecessors (respectively, successors) for some constant~$\ell$, the same result can be extended to the case where~$p$ has $\ell+1$ predecessors (respectively, successors). This can be achieved by introducing a new candidate who is immediately before (respectively, after)~$p$ in the agenda and is ranked last in all votes. 

\section{Algorithmic Lower Bounds}
\label{sec-lowerbounds}
In this section, we discuss how the reductions established in Section~\ref{sec-main-results} offer lower bounds on kernelization algorithms, exact algorithms, and approximation algorithms. The following notations are used in our discussions:
\begin{itemize}
\item $m_{\text{r}}$: number of registered candidates, i.e., $m_{\text{r}}=\abs{C}$.
\item $m_{\text{u}}$: number of unregistered candidates, i.e., $m_{\text{u}}=\abs{D}$.
\item $m$: total number of candidates, i.e., $m=\abs{C}+\abs{D}$.
\item $n_{\text{r}}$: number of registered votes, i.e., $n_{\text{r}}=\abs{V}$.
\item $n_{\text{u}}$: number of unregistered votes, i.e., $n_{\text{u}}=\abs{W}$.
\item $n$: total number of votes, i.e., $n=\abs{V}+\abs{W}$.
\item $k$: a given upper bound on the cardinality of a desired feasible solution.
\end{itemize}

\subsection{Kernelization Lower Bounds}
It is known that {\prob{RBDS}} does not admit any polynomial kernels with respect to both~$\abs{R}+\kappa$ and~$\abs{B}$, assuming {$\textsf{PH}\neq \Sigma_{\textsf{P}}^3$} (see, e.g., the book chapter edited by Cygan~et~al.~\cite{Cygan2015}, and the paper by Dom, Lokshtanov, and Saurabh~\cite{DBLP:journals/talg/DomLS14}). In light of this result, many of our reductions imply the nonexistence of polynomial kernels for various parameters, as these reductions are, in fact, polynomial parameter transformations. To clarify our discussion, we reiterate the formal definitions of kernelization and polynomial parameter transformations.

\begin{definition}[Kernelization]
Let~$P$ be a parameterized problem. A kernelization algorithm (or kernelization) for~$P$ is an algorithm that takes as input an instance $(X, \kappa)$ of~$P$ and outputs an instance $(X',\kappa')$ of $P$ such that the following four conditions hold simultaneously.
\begin{itemize}
\item The algorithm runs in polynomial time in the size of $(X,\kappa)$.
\item $(X, \kappa)$ is a {\yesins} of~$P$ if and only if $(X', \kappa')$ is a {\yesins} of $P$.
\item $\abs{X'}\leq f(\kappa)$ for some computable function~$f$  depending only on~$\kappa$.
\item $\kappa'\leq g(\kappa)$ for some computable function~$g$  depending only on~$\kappa$.
\end{itemize}
\end{definition}

The new instance $(X', \kappa')$ in the above definition is called a {kernel}, and the size of~$X'$ is referred to as the {size of the kernel}. If~$f(\kappa)$ is a polynomial function of~$\kappa$, we say that $(X', \kappa')$ is a {polynomial kernel}, and the problem~$P$ is said to {admit a polynomial kernel}.

\begin{definition}[Polynomial Parameter Transformation]
Let~$P$ and~$Q$ be two parameterized problems. A polynomial parameter transformation from~$P$ to~$Q$ is an algorithm that takes as input an instance $(X, \kappa)$ of~$P$ and outputs an instance $(X', \kappa')$ of~$Q$ such that the following three conditions are satisfied:
\begin{itemize}
    \item The algorithm runs in polynomial time with respect to the size of $(X, \kappa)$.
    \item $(X, \kappa)$ is a {\yesins} of~$P$ if and only if $(X', \kappa')$ is a {\yesins} of~$Q$.
    \item $\kappa' \leq g(\kappa)$, where $g$ is a polynomial function of~$\kappa$.
\end{itemize}
\end{definition}

\begin{lemma}[\cite{DBLP:conf/esa/BodlaenderTY09,Cygan2015,DBLP:journals/talg/DomLS14}]
Let~$P$ and~$Q$ be two parameterized problems such that the unparameterized version of~$P$ is {\memph\npc} and the unparameterized version of~$Q$ is in {\memph\np}. Then, if there exists a polynomial parameter transformation from~$P$ to~$Q$, it follows that~$Q$ admitting a polynomial kernel implies that~$P$ also admits a polynomial kernel.
\end{lemma}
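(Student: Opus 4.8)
The plan is to prove the lemma by composing three polynomial-time maps into a single kernelization algorithm for $P$. Write $\tilde{P}$ and $\tilde{Q}$ for the classical (unparameterized) decision problems obtained from $P$ and $Q$ by encoding an instance $(X,\kappa)$ as a single string. Since $\tilde{Q}\in\np$ while $\tilde{P}$ is {\npc}, and hence {\nph}, there is a polynomial-time many-one reduction $\rho\colon \tilde{Q}\to \tilde{P}$; moreover we may assume without loss of generality that $\rho$ always outputs a syntactically valid encoding of some instance of $P$ (otherwise redirect the finitely-describable set of malformed outputs to a fixed trivial {\noins}). Let $\mathcal{K}$ be the assumed polynomial kernelization for $Q$, and let $\sigma$ be the given polynomial parameter transformation from $P$ to $Q$.

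Given an instance $(X,\kappa)$ of $P$, the algorithm proceeds as follows. First, apply $\sigma$ to obtain an instance $(X_1,\kappa_1)$ of $Q$ with $\kappa_1\leq g(\kappa)$ for a polynomial $g$, where $(X_1,\kappa_1)$ is a {\yesins} of $Q$ if and only if $(X,\kappa)$ is a {\yesins} of $P$. Second, apply $\mathcal{K}$ to $(X_1,\kappa_1)$, obtaining an equivalent instance $(X_2,\kappa_2)$ of $Q$ with $\abs{X_2}\leq f(\kappa_1)$ and $\kappa_2\leq g'(\kappa_1)$ for polynomials $f,g'$; composing the bounds, both $\abs{X_2}$ and $\kappa_2$ are bounded by a polynomial in $\kappa$. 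Third, encode $(X_2,\kappa_2)$ as a string and apply $\rho$, obtaining a string $z$ such that $z\in\tilde{P}$ if and only if the encoding of $(X_2,\kappa_2)$ lies in $\tilde{Q}$, and with $\abs{z}$ polynomial in $\abs{X_2}+\kappa_2$, hence polynomial in $\kappa$. Finally, decode $z$ as an instance $(X_3,\kappa_3)$ of $P$; by construction $\kappa_3\leq\abs{z}$ and $\abs{X_3}\leq\abs{z}$ are both polynomially bounded in $\kappa$, and chaining the three equivalences shows $(X_3,\kappa_3)$ is a {\yesins} of $P$ if and only if $(X,\kappa)$ is. Since $\sigma$, $\mathcal{K}$, and $\rho$ each run in polynomial time and are composed a constant number of times, the overall procedure runs in polynomial time, and is therefore a polynomial kernelization for $P$.

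The one point that requires care—and the only genuine obstacle—is the interface between the parameterized and unparameterized worlds in the last two steps: after reducing back via $\rho$ we obtain a plain string rather than a parameter-labelled instance, so we must argue that it can be read as an instance $(X_3,\kappa_3)$ of $P$ whose parameter is still polynomially bounded in $\kappa$. This is resolved by the standard observation that the new parameter $\kappa_3$ is recovered from the decoding of $z$ and hence is at most $\abs{z}\leq\mathrm{poly}(\kappa)$, together with the harmless normalization of $\rho$ to valid encodings; all remaining verifications (correctness of each step, polynomiality of the composed size and parameter bounds, polynomial running time) are routine. Since these facts are exactly what is established in the cited works~\cite{DBLP:conf/esa/BodlaenderTY09,Cygan2015,DBLP:journals/talg/DomLS14}, the routine details are omitted.
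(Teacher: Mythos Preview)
The paper does not supply its own proof of this lemma; it merely cites it from \cite{DBLP:conf/esa/BodlaenderTY09,Cygan2015,DBLP:journals/talg/DomLS14}. Your argument is the standard one from those references: compose the polynomial parameter transformation $P\to Q$, the assumed polynomial kernel for $Q$, and a many-one reduction $Q\to P$ supplied by the {\npc}ness of $P$, and observe that the resulting instance of $P$ has size (and hence parameter) polynomially bounded in the original parameter. This is correct and matches the cited proofs.
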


With these notions in mind, and recalling that 
(1) the unparameterized version of {\prob{RBDS}} is {\npc}, 
(2) all election control problems considered in the paper are in {\np}, 
(3) {\prob{RBDS}} does not admit any polynomial kernels with respect to both~$\abs{R}+\kappa$ and~$\abs{B}$ assuming {$\textsf{PH}\neq \Sigma_{\textsf{P}}^3$}, and 
(4) many of our reductions are polynomial parameter  transformations, 
we can summarize the nonexistence of polynomial kernels for various election control problems in Table~\ref{tab-kernelization-lower-bounds}.

\begin{table}
\renewcommand{\tabcolsep}{1.65mm}
\captionsetup{singlelinecheck=off}
\caption{Nonexistence of polynomial kernels with respect to various parameters, assuming {$\textsf{PH} \neq \Sigma_{\textsf{P}}^3$}. Entries marked with ``$\backslash$'' indicate that the corresponding problems are solvable in polynomial time, while entries marked with ``?'' signify that our reductions do not provide a conclusive result regarding the nonexistence of polynomial kernels for these problems, given the current state of known lower bound techniques.}
\centering
{
\begin{tabular}{lllll} \toprule
&{\prob{CCAV}}-$\tau$
&{\prob{CCDV}}-$\tau$
&{\prob{CCAC}}-$\tau$
&{\prob{CCDC}}-$\tau$
\\ \midrule

{amendment}
& $n$, $m+n_{\text{r}}+k$ (Thm.~\ref{thm-ccav-amd-np})
&  $m+k$ (Thm.~\ref{thm-ccdv-amd-np})
&$\backslash$
&$\backslash$
\\

&
&$n$, $m+n-k$ (Thm.~\ref{thm-ccdv-amd-wbh-remaining-votes})
&
& \\  \midrule

{\famend}
&?
&$m+k$ (Thm.~\ref{thm-ccdv-suc-np})
&$m_{\text{u}}$, $m_{\text{r}}+k$ (Thm.~\ref{thm-ccac-suc-np})
&$\backslash$
\\

&
&$n$, $m+n-k$ (Thm.~\ref{thm-ccdv-famend-wbh-remaining})
&
& \\ \midrule

{successive}
&?
& $n$, $m+k$ (Thm.~\ref{thm-ccdv-suc-wah-last})
& $m$, $n+k$ (Thm.~\ref{thm-ccac-suc-wbh-solution-size})
& ?
\\

&
& $m+n-k$ (Thm.~\ref{thm-ccdv-suc-wah-remainning-last})
&
&  \\ \bottomrule \toprule

%%%%%%%%%%%%%%%%%%%%%%%%%%%%
%%%%%%%%%%%%%%%%%%%%%%%%%%
%%%%%%%%%%%%%%%%%%%%%%%%%%
&\prob{DCAV}-$\tau$
&\prob{DCDV}-$\tau$
&\prob{DCAC}-$\tau$
&\prob{DCDC}-$\tau$
\\ \midrule

{amendment}
& ?
&$m+k$ (Thm.~\ref{thm-dcdv-amd-np})
&$\backslash$
&$\backslash$ \\

&
&$n$, $m+n-k$ (Thm.~\ref{thm-dcdv-amd-wbh-remaining})
&
& \\  \midrule

{\famend}
&?
&$n$, $m+k$ (Thm.~\ref{thm-dcdv-suc-np})
&$\backslash$
&$\backslash$ \\ \midrule

{successive}
&$\backslash$
&$\backslash$
& $m$, $n+k$ (Thm.~\ref{thm-dcac-suc-wbh})
& ? \\ \bottomrule
\end{tabular}
}
\label{tab-kernelization-lower-bounds}
\end{table}

\subsection{Exact Algorithm Lower Bounds}
Assuming the Strong Exponential Time Hypothesis (SETH), it is known that {\prob{RBDS}} cannot be solved in time $\bigos{(2-\epsilon)^{\abs{B}}}$ (see, e.g., the book chapter by Cygan et al.~\cite{Cygan2015}). Moreover, unless the Exponential Time Hypothesis (ETH) fails, {\prob{Clique}} cannot be solved in time $f(\kappa) \cdot z^{\smallo{\kappa}}$ (where $z$ denotes the number of vertices), and {\prob{RBDS}} cannot be solved in time $f(\kappa) \cdot \abs{B}^{\smallo{\kappa}}$ for any computable function $f$ in $\kappa$ (see, e.g., the work of Chen et al.~\cite{DBLP:journals/jcss/ChenHKX06}). These lower bounds, combined with several of our reductions, suggest that brute-force-based algorithms for many election control problems are essentially optimal. A summary of these results can be found in Table~\ref{tab-algorithm-lower-bounds}.

\begin{table}[ht!]
\renewcommand{\tabcolsep}{1.4mm}
\captionsetup{singlelinecheck=off}
\caption{Lower bounds for exact algorithms solving election control problems. The lower bound $\bigos{(2-\epsilon)^{n_{\text{u}}}}$ for {\prob{CCAV}}-Amendment is based on the SETH, while all other lower bounds are based on the ETH\@. Entries filled with ``$\backslash$'' indicate that the corresponding problems are solvable in polynomial time. Entries filled with ``?'' signify that, based on current lower bound techniques, our reductions do not imply lower bounds matching the running times of brute-force-based algorithms.
}
\centering
{
\begin{tabular}{lllll} \toprule
&{\prob{CCAV}}-$\tau$
&{\prob{CCDV}}-$\tau$
&{\prob{CCAC}}-$\tau$
&{\prob{CCDC}}-$\tau$
\\ \midrule

{amendment}
& $\bigos{(2-\epsilon)^{n_{\text{u}}}}$ (Thm.~\ref{thm-ccav-amd-np})
&  $f(k)\cdot n^{\smallo{k}}$ (Thm.~\ref{thm-ccdv-amd-np})
&$\backslash$
&$\backslash$
\\

& $f(k)\cdot n_{\text{u}}^{\smallo{k}}$ (Thm.~\ref{thm-ccav-amd-np})
&  $f(n-k)\cdot n^{\smallo{n-k}}$ (Thm.~\ref{thm-ccdv-amd-wbh-remaining-votes})
&
& \\  \midrule

{full-}
&?
& $f(k)\cdot n^{\smallo{k}}$ (Thm.~\ref{thm-ccdv-suc-np})
& $f(k)\cdot {m_{\text{u}}}^{\smallo{k}}$ 
&$\backslash$ \\ 

amendment&
& $f(n-k)\cdot n^{\smallo{n-k}}$ (Thm.~\ref{thm-ccdv-famend-wbh-remaining})
& (Thm.~\ref{thm-ccac-suc-np})
& \\
\midrule

{successive}
&?
& $f(k)\cdot n^{\smallo{k}}$ (Thm.~\ref{thm-ccdv-suc-wah-last})
& $f(k)\cdot {m_{\text{u}}}^{\smallo{k}}$ 
& $f(k)\cdot m^{\smallo{k}}$ \\ 

&
& $f(n-k)\cdot n^{\smallo{n-k}}$ (Thm.~\ref{thm-ccdv-suc-wah-remainning-last})
& (Thm.~\ref{thm-ccac-suc-wbh-solution-size})
& (Thm.~\ref{thm-ccdc-suc-wah-solution-size}) \\
\bottomrule \toprule

%%%%%%%%%%%%%%%%%%%%%%%%%%%%
%%%%%%%%%%%%%%%%%%%%%%%%%%
%%%%%%%%%%%%%%%%%%%%%%%%%%
&\prob{DCAV}-$\tau$
&\prob{DCDV}-$\tau$
&\prob{DCAC}-$\tau$
&\prob{DCDC}-$\tau$
\\ \midrule

{amendment}
& ?
& $f(k)\cdot n^{\smallo{k}}$ (Thm.~\ref{thm-dcdv-amd-np})
&$\backslash$
&$\backslash$ \\

& 
& $f(n-k)\cdot n^{\smallo{n-k}}$ (Thm.~\ref{thm-dcdv-amd-wbh-remaining})
&
& \\ \midrule

{\famend}
&?
& $f(k)\cdot n^{\smallo{k}}$ (Thm.~\ref{thm-dcdv-suc-np})
&$\backslash$
&$\backslash$ \\ \midrule

{successive}
&$\backslash$
&$\backslash$
&$f(k)\cdot m^{\smallo{k}}$  
& $f(k)\cdot m^{\smallo{k}}$ \\ 

&
&
& (Thm.~\ref{thm-dcac-suc-wbh})
& (Thm.~\ref{thm-dcdc-suc-wah-dis-last-solution}) \\ \bottomrule
\end{tabular}
}
\label{tab-algorithm-lower-bounds}
\end{table}

\subsection{Inapproximability Results}

\begin{table}[ht!]
\captionsetup{singlelinecheck=off}
\caption{Lower bounds for approximation algorithms based on the assumption $\poly\neq \np$. Entries filled with~``$\backslash$'' mean the corresponding problems are polynomial-time solvable. Entries filled with ``?'' mean that our reductions established in the paper do not imply meaningful inapproximability results.}
\centering
{
\begin{tabular}{lllll} \toprule
&{\prob{CCAV}}-$\tau$
&{\prob{CCDV}}-$\tau$
&{\prob{CCAC}}-$\tau$
&{\prob{CCDC}}-$\tau$
\\ \midrule

{amendment}
& $(1-\epsilon)\ln m$ (Thm.~\ref{thm-ccav-amd-np})
&   $(1-\epsilon)\ln m$ (Thm.~\ref{thm-ccdv-amd-np})
&$\backslash$
&$\backslash$ \\  \midrule

{\famend}
&?
&  $(1-\epsilon)\ln m$ (Thm.~\ref{thm-ccdv-suc-np})
& $(1-\epsilon)\ln m_{\text{r}}$ (Thm.~\ref{thm-ccac-suc-np})
&$\backslash$\\ \midrule

{successive}
&?
&  $(1-\epsilon)\ln m$ (Thm.~\ref{thm-ccdv-suc-wah-last})
&  $(1-\epsilon)\ln n$ (Thm.~\ref{thm-ccac-suc-wbh-solution-size})
& ? \\ \bottomrule \toprule

%%%%%%%%%%%%%%%%%%%%%%%%%%%%
%%%%%%%%%%%%%%%%%%%%%%%%%%
%%%%%%%%%%%%%%%%%%%%%%%%%%
&\prob{DCAV}-$\tau$
&\prob{DCDV}-$\tau$
&\prob{DCAC}-$\tau$
&\prob{DCDC}-$\tau$
\\ \midrule

{amendment}
& ?
&  $(1-\epsilon)\ln m$ (Thm.~\ref{thm-dcdv-amd-np})
&$\backslash$
&$\backslash$ \\  \midrule

{\famend}
&?
&  $(1-\epsilon)\ln m$ (Thm.~\ref{thm-dcdv-suc-np})
&$\backslash$
&$\backslash$ \\ \midrule

{successive}
&$\backslash$
&$\backslash$
& ?% $(1-\epsilon)(\ln n/2)$ (Thm.~\ref{thm-dcac-suc-wbh})
& ? \\ \bottomrule
\end{tabular}
}
\label{tab-poly-inapp}
\end{table}

This section is dedicated to the inapproximability consequences of the previously established reductions. For each of the eight standard election control problems, we examine its optimal version, where the goal is to add or delete the minimum number of voters or candidates such that the distinguished candidate becomes the winner (in the case of constructive control problems) or is not the winner (in the case of destructive control problems).

Let {\prob{Optimal RBDS}} be the optimal version of {\prob{RBDS}}, where the objective is to select a minimum number of blue vertices that dominate all red vertices.
It has long been known that unless $\np\subseteq \dtime(n^{\log\log n})$, {\prob{Optimal RBDS}} cannot be approximated in polynomial time within a factor of $(1-\epsilon)\ln \abs{R}$ (see, e.g., the work of Feige~\cite{DBLP:journals/jacm/Feige98}). This result has been recently improved by Dinur and Steurer~\cite{DBLP:conf/stoc/DinurS14} who showed that the same lower bound holds under the assumption $\poly\neq \np$.
Building on these results and our previous reductions, we obtain numerous inapproximability results for the optimal versions of the election control problems, as summarized in Table~\ref{tab-poly-inapp}.

\section{Conclusion}
\label{sec-conclusion}
We have investigated eight standard control problems under the $h$-amendment procedure and the successive procedure. The $1$-amendment procedure and the successive procedure are currently two of the most important sequential voting procedures used in practical parliamentary and legislative decision-making. The full-amendment procedure is a special case of $h$-amendment where~$h$ equals the number of candidates minus one. 
An advantage of the full-amendment procedure is that for the same election and the same agenda, if the full-amendment winner does not coincide with the amendment (respectively, successive) winner, then the former beats the latter. However, a disadvantage of the full-amendment procedure is that in the worst case it may need the head-to-head comparisons among all candidates to determine the winner, while the amendment procedure only needs voters to compare $m-1$ pairs of candidates, where~$m$ is the number of candidates.

Our study offers a comprehensive understanding of the parameterized complexity of these problems. In particular, we obtained both parameterized intractability results ({\wahns} results or {\wbhns} results) for the special cases where the distinguished candidate is the first candidate or the last candidate in the agenda, 
and many tractability results including some {\fpt}-algorithms and several polynomial-time algorithms. For a summary of our concrete results over the amendment procedure, the full-amendment procedure, and the successive procedure, we refer to Table~\ref{tab-resulst-summary}. Our study also yields a lot of algorithmic lower bounds, as summarized in Tables~\ref{tab-kernelization-lower-bounds}--\ref{tab-poly-inapp}.

Overall, our investigation conveys the following message.
\begin{itemize}
\item The amendment procedure and the successive procedure behave quite differently regarding their resistance to these control types: the amendment procedure demonstrates greater resistance to voter control, whereas the successive procedure proves more resistant to candidate control.
\item From a computational complexity perspective, the amendment procedure and the full-amendment procedure exhibit similar behavior concerning their resistance to the eight standard control types: both are resistant to the four voter control types and vulnerable to most candidate control types, with the exception that {\prob{CCAC}}-{\Famend} is computationally hard.
\item From the parameterized complexity perspective, the full-amendment procedure outperforms the amendment procedure in the sense that election control problems for the full-amendment procedure are at least as difficult to solve as the same problems for the amendment procedure. For instance, we showed that destructive control by adding/deleting voters for the amendment procedure is {\fpt} with respect to the number of predecessors of the distinguished candidate (Corollary~\ref{cor-dcav-dcdv-amd-fpt}). However, the same problems for the full-amendment procedure are
    {\nph}, even when the distinguished candidate has a constant number of predecessors (Theorem~\ref{thm-dcav-suc-np} and Theorem~\ref{thm-dcdv-suc-np}).
\item The position of the distinguished candidate on the agenda has a significant impact on the parameterized complexity of the problems. Specifically, from a complexity-theoretic perspective, most of these problems are more challenging to solve when the distinguished candidate~$p$ has a back position on the agenda than when~$p$ has a front position.
\end{itemize}

A summary of our results and those by Black~\cite{Black1958}, Farquharson~\cite{Farquharson1969}, and Bredereck~et~al.~\cite{DBLP:journals/jair/BredereckCNW17} reveals that there is no clear-cut conclusion on which procedure is the best one because these procedures all have their own advantages regarding resistance to different strategic behaviors. Nevertheless, in many practical applications, not all types of strategic behavior are likely to occur. If only one or two types of strategic behavior are likely to happen, our study might provide persuasive guidance on the choice of the procedures.

Finally, we touch upon some prominent topics for future research.
\begin{itemize}
\item First, as our study solely focuses on worst-case analysis, it is important to conduct experimental investigations to examine if the standard control problems are genuinely difficult to solve in practice.

\item Second, beyond the eight control types discussed in this paper, it is worthwhile to investigate additional control problems, such as constructive and destructive control by partitioning voters or candidates. For further insights on these topics, see the book chapter by Baumeister and Rothe~\cite{BaumeisterR2016}, the work of Erd\'{e}lyi, Hemaspaandra, and Hemaspaandra~\cite{DBLP:conf/aldt/ErdelyiHH15}, and the book chapter by Faliszewski and Rothe~\cite{handbookofcomsoc2016Cha7FR}.

\item Third, exploring the parameterized complexity of the problems with respect to structural parameters, such as the single-peaked width and the single-crossing width, is another avenue of investigation. Definitions of these structural parameters and related results can be found in the papers of Cornaz, Galand, and Spanjaard~\cite{DBLP:conf/ecai/CornazGS12,DBLP:conf/ijcai/CornazGS13}, and Yang and Guo~\cite{DBLP:conf/atal/YangG14a}. We note that since all $h$-amendment procedures are Condorcet-consistent, election control problems under these procedures can be solved in polynomial time when restricted to single-peaked or single-crossing domains, given certain mild restrictions (e.g., assuming an odd number of voters). This result leverages the techniques developed by Brandt~et~al.~\cite{DBLP:journals/jair/BrandtBHH15} and Magiera and Faliszewski~\cite{DBLP:journals/aamas/MagieraF17}.

\item Another interesting topic is investigating whether the {\poly}-results for destructive control problems can be extended to {\fpt}-results for the corresponding resolute control problems proposed first by Yang and Wang~\cite{DBLP:conf/atal/YangW17}. In this setting, there are multiple distinguished candidates whom the election controller would like to make nonwinners, and the parameter here is the number of the given distinguished candidates. 

\item Finally, numerous open problems remain to be addressed following our study. Beyond the unresolved questions concerning the parameterized complexity of various candidate control problems with respect to the number of voters, as discussed in Subsection~\ref{subsec-parameters}, the complexity of the following problems for constant $h \geq 2$ also remains open: {\prob{X}}-$h$-Amendment for $X \in \{\textsc{CCAC}, \textsc{CCDC}, \textsc{DCAC},$ $\textsc{DCDC}\}$, and {\prob{X}}-$(m-h)$-Amendment for $X \in \{\prob{CCDC}, \prob{DCDC}\}$. 
\end{itemize}

%although we have developed several polynomial-time algorithms for the candidate control problems under the $h$-Amendment rule and the $(m-h)$-Amendment rule for the specific value $h = 1$, it remains an open question whether these results extend to other constant values of $h$. In particular, 

\section*{Acknowledgment}
The author sincerely thanks the anonymous reviewers of AAMAS 2022 for their valuable feedback on an earlier version of this paper. Additionally, the author extends heartfelt gratitude to the anonymous reviewer of Algorithmica for their insightful and constructive comments, which significantly enhanced the quality of this work. In particular, the author acknowledges the anonymous reviewer of Algorithmica for their invaluable feedback on Reduction Rule~\ref{rule-2}. Their input helped correct a previously flawed statement and led to the development of the counterexample presented in Example~\ref{ex-2}.

%\section*{Declarations}

%\noindent{\bf Ethical Approval}
%
%not applicable
%\medskip

%\noindent{\bf Conflict of interest} The author declares no competing interests.
 %\medskip
 
%\noindent{\bf Authors' contributions}
%
%The author is solely responsible for this manuscript. All aspects of the research were conducted independently by the author.
 %\medskip
 %
%\noindent{\bf Funding}
%
%The work is not supported by any third-party funding.
%\medskip

 %
%\noindent{\bf Availability of data and materials}

%not applicable

%\section*{\refname}

%\bibliographystyle{plain}
%\bibliography{sociachoiceref,graphref}

%%%%%%%%%%%%%%%%%%%%%%%%%%%%%%%%%%%%%%%%%%%%%%%%%%%%%%%%%%%%%%%%%%%%%%%%

\end{document}